\documentclass{lmcs}
\pdfoutput=1

\usepackage{lastpage}
\lmcsdoi{15}{3}{5}
\lmcsheading{}{\pageref{LastPage}}{}{}%
{Jul.~18,~2016}{Jul.~26,~2019}{}

\usepackage[font=small,labelfont=it]{caption}
\setlength{\captionmargin}{-10pt}
\input{header.extra}
\pretolerance=2000

\title[Coherence for Frobenius pseudomonoids]{Coherence for Frobenius pseudomonoids and the geometry of linear proofs}

\author{Lawrence Dunn}
\address{North Florida Community College}
\email{dunnl@nfcc.edu}

\author{Jamie Vicary}
\address{University of Oxford}
\email{jamie.vicary@cs.ox.ac.uk}

\begin{document}

\begin{abstract}
We present a 3\-dimensional notation for proofs in nonsymmetric multiplicative linear logic with units, with a geometrical notion of equivalence, and without the need for a global correctness criterion or thinning links. We argue that traditional proof nets are the 2\-dimensional projections of these 3\-dimensional diagrams. These results rely on a coherence result for Frobenius pseudomonoids, for which we give a direct combinatorial proof.
\end{abstract}
\maketitle

\section{Introduction}

Frobenius pseudomonoids are higher-dimensional algebraic structures, first studied by Street~\cite{Street_2004}, which categorify the classical algebraic notion of Frobenius algebra~\cite{Kock_2003}. These higher algebraic structures have an important application to logic, since Frobenius pseudomonoids in the bicategory of categories, profunctors and natural transformations, for which the multiplication and unit have right adjoints, correspond to {$*$-autonomous categories}~\cite{Barr_1991, Barr_1995}, the standard categorical semantics for multiplicative linear logic. They also play a central role in topological quantum field theory~\cite{BDSV2, BDSV4, Kock_2003, Tillmann_1998}. Our main result is a combinatorial proof of a coherence theorem for Frobenius pseudomonoids, which does not rely on existing approaches in terms of Morse theory~\cite{KerlerLyubashenko, Lauda:2005}. In the second part of the paper, we apply this coherence theorem to the problem of geometrical proof representation in linear logic, giving a 3\-dimensional notation for proofs with a geometrical notion of equivalence. Our major proofs are formalized using the proof assistant \textit{Globular}~\cite{globular_tool, globular}, with online formal proofs hyperlinked directly from the text.

\subsection{Coherence for Frobenius pseudomonoids}

In category theory, a \textit{coherence result} is a method of identifying, in a uniform way, a broad class of identities which hold with respect to some algebraic object. They are the power-tools of higher-dimensional algebra; difficult to prove, but extremely useful. The goal of Section~\ref{sec:coherence} is to prove a variety of coherence theorems  concerning Frobenius pseudomonoids. Our main result follows in principle from work of Kerler and Lyubashenko~\cite{KerlerLyubashenko} as developed by Lauda~\cite{Lauda:2005}. However, that work has a heavily topological style, employing Morse-theoretical techniques that rely on the relationship between Frobenius pseudomonoids and manifold structures.  Our new proof is fundamentally different, with a purely combinatorial style that makes it more accessible to the logic community. It is also more adaptable, with the methods likely to be extendable to generalized structures that correspond to richer logical frameworks, but which no longer have a topological interpretation. (Indeed, important sub-results such as \autoref{prop:coherencesnake} do not follow immediately from \cite{KerlerLyubashenko, Lauda:2005}.)

Here we give the basic definitions, and the statement of the main coherence theorem. We begin with the definition of pseudomonoid and Frobenius pseudomonoid. These higher algebraic structures are defined in terms of \textit{presentations}, and thanks to the graphical calculus, these presentations can be understood informally: \textit{morphisms} are 2\-dimensional tiles from which we can build composite pictures; \textit{2\-morphisms} are rewrites that allow parts of pictures  to be transformed locally; and \textit{equations} tell us that certain sequences of rewrites are equal to certain other sequences. Formally, these presentations give the generating data for finitely-presented monoidal bicategories.

\begin{definition}
The \textit{pseudomonoid presentation} \P contains the following data:
\begin{itemize}
\item An object \cat C.
\item Morphisms $m: \cat C \boxtimes \cat C \to \cat C$ and $u: \cat 1 \to \cat C$:
\begin{calign}
\tikzpng[scale=3]{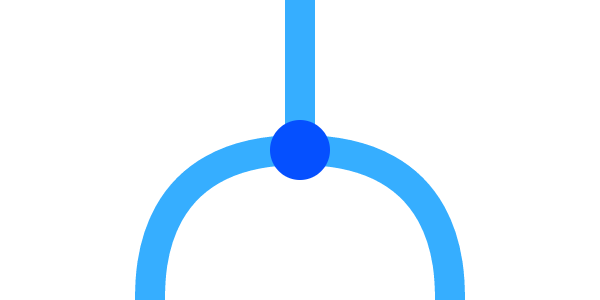}
&
\tikzpng[scale=3]{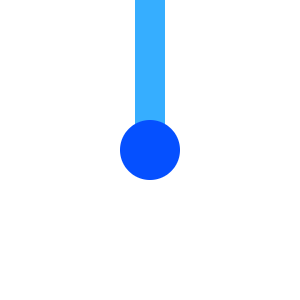}
\end{calign}
\item Invertible 2\-morphisms $\alpha$, $\lambda$, $\rho$:
\begin{calign}
\nonumber
\tikzpng[scale=2]{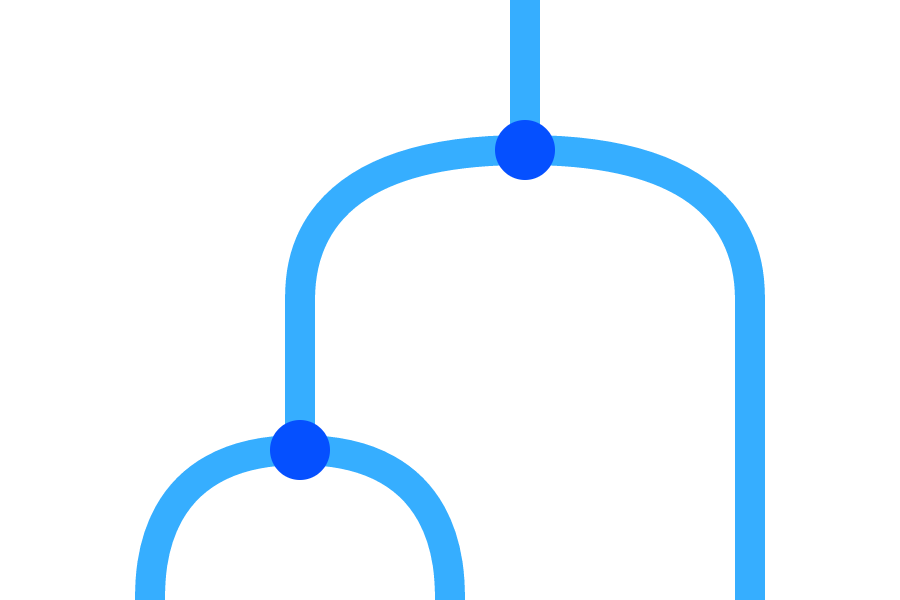}{\xto \alpha}\tikzpng[scale=2]{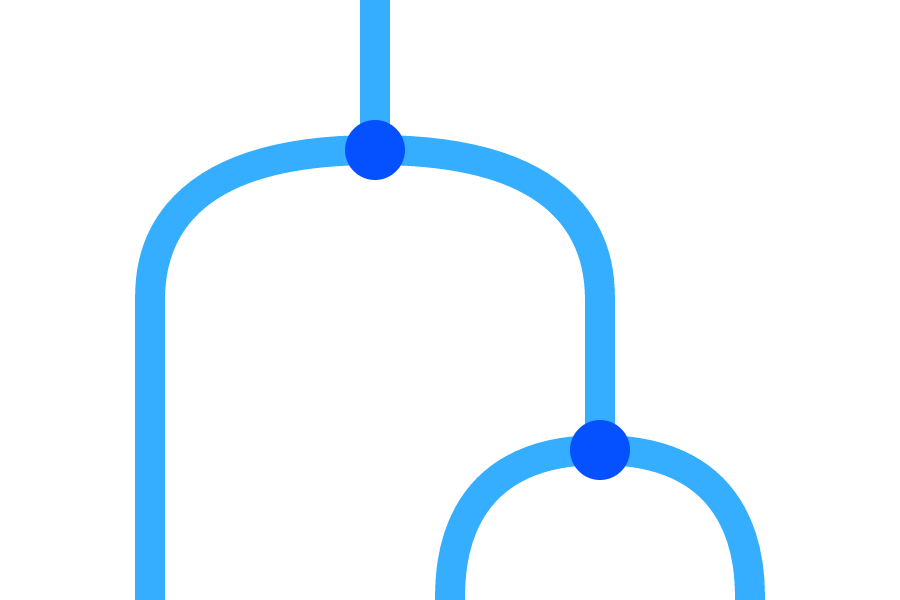}
&
\tikzpng[scale=2]{lambda_source} {\xto \lambda} \tikzpng[scale=2]{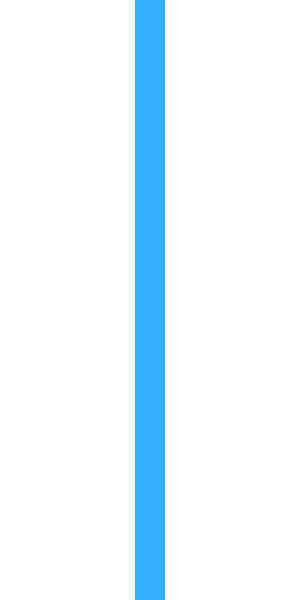}
&
\tikzpng[scale=2]{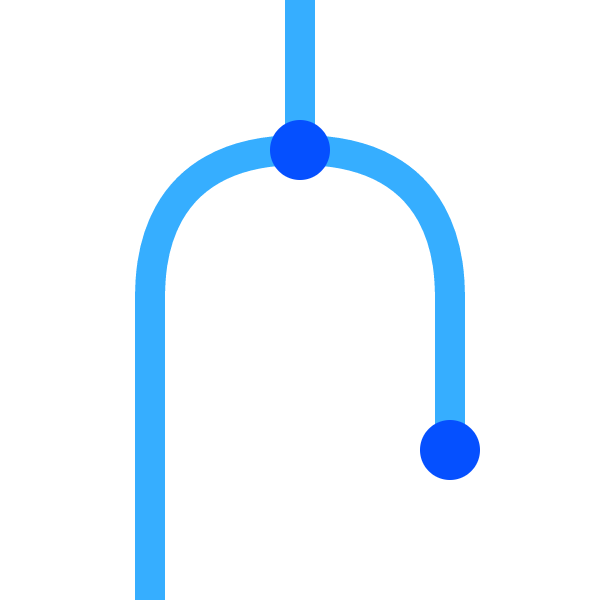} {\xto \rho} \tikzpng[scale=2]{bigidentity}
\end{calign}
\item Pentagon and unit equations:
\begin{calign}
\begin{tz}[xscale=1.7, yscale=1.4, scale=1.5, font=\scriptsize]
\node [inner sep=0pt] (1) at (0,0) {\tikzpng[scale=2]{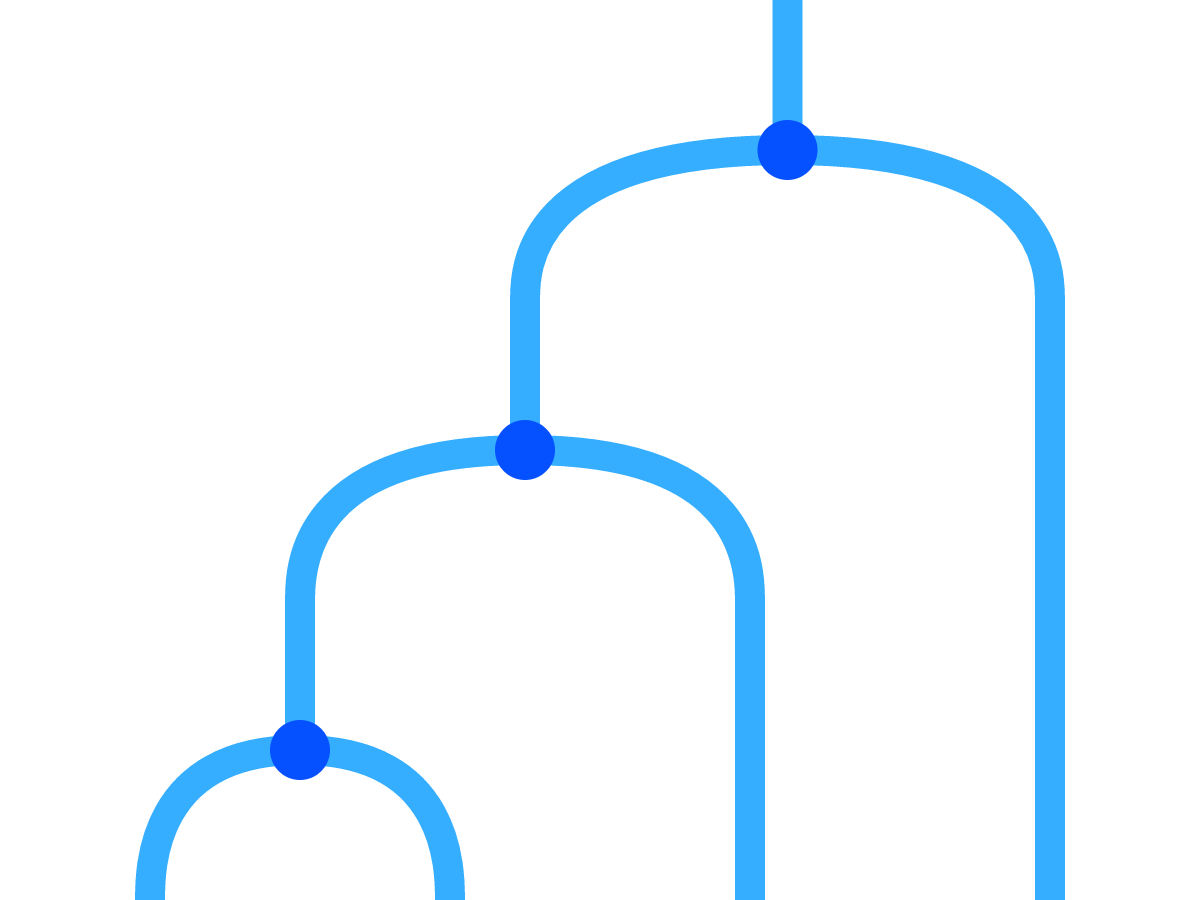}};
\node [inner sep=0pt] (2) at (1,0) {\tikzpng[scale=2]{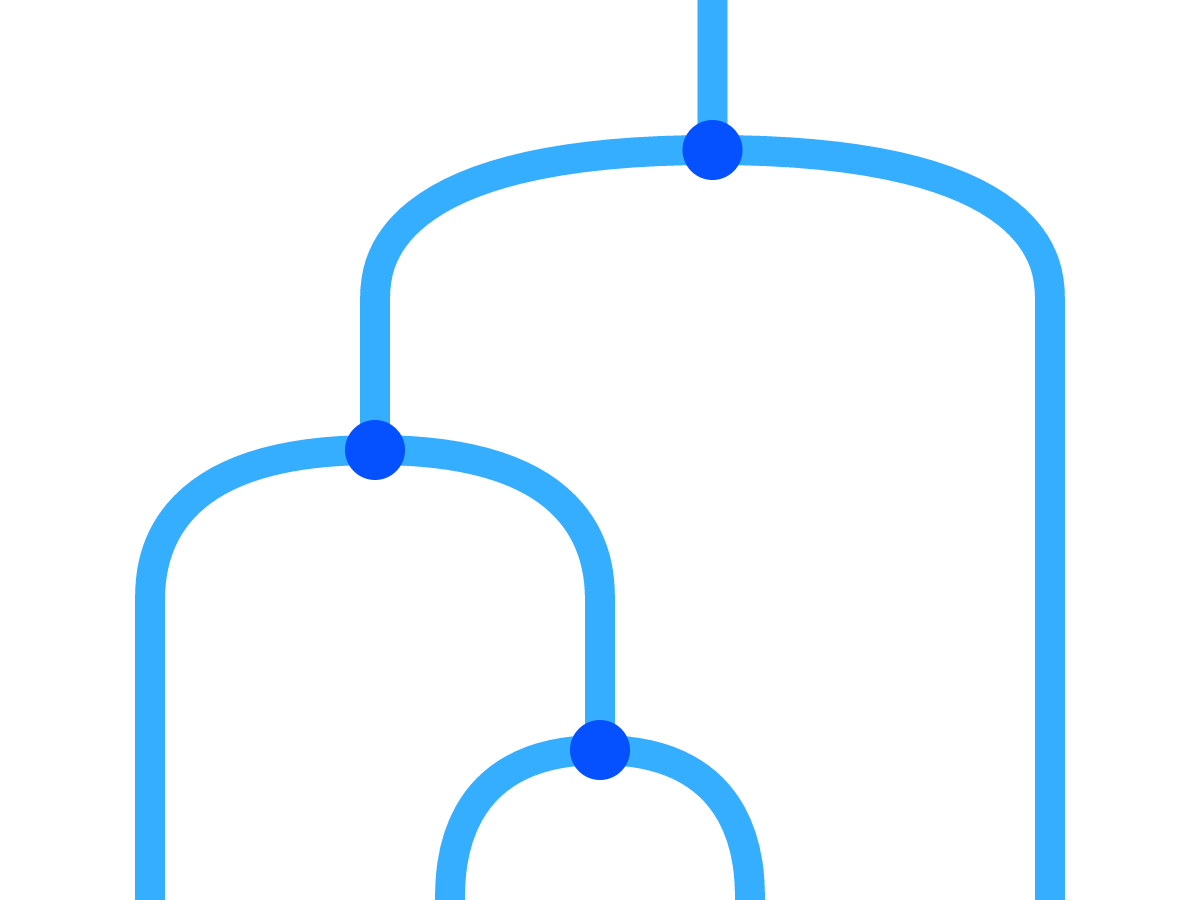}};
\node [inner sep=0pt] (3) at (2,0) {\tikzpng[scale=2]{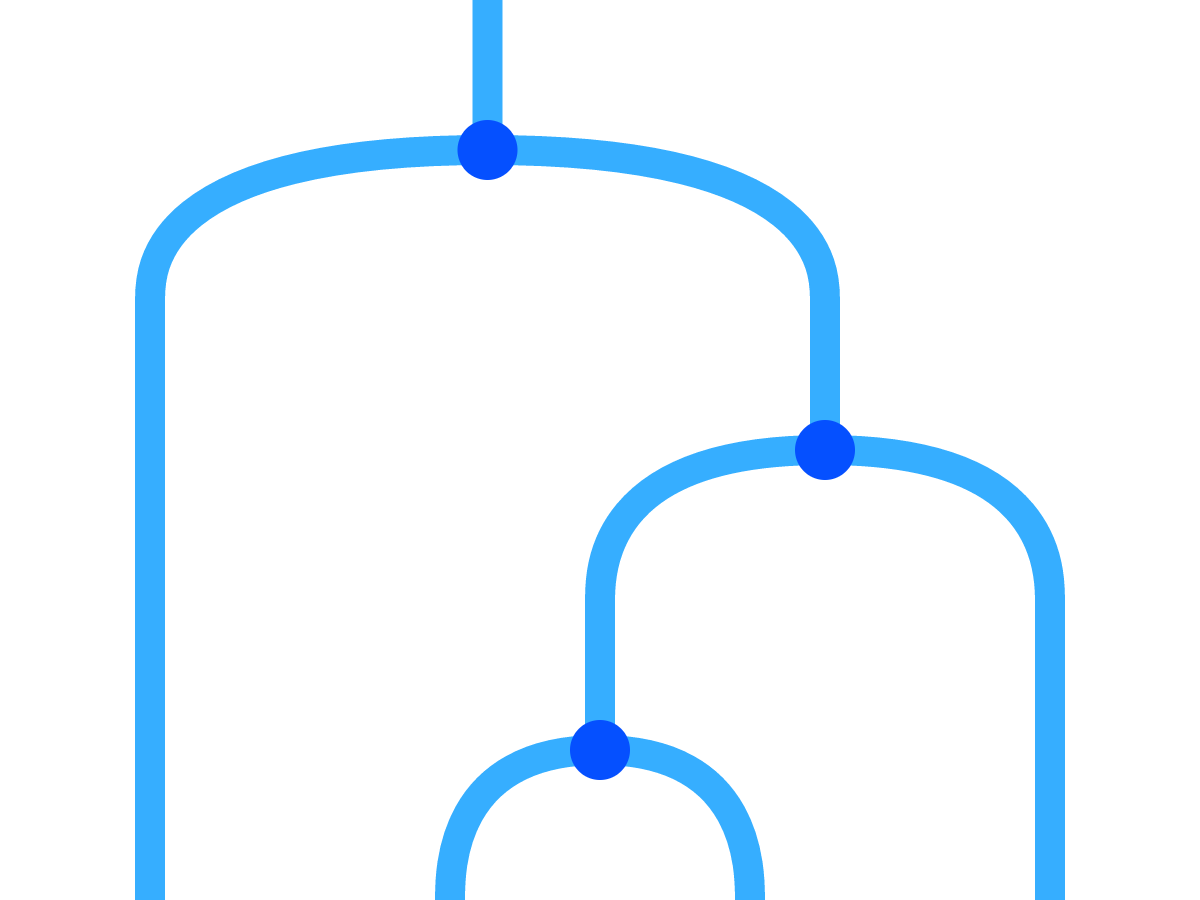}};
\node [inner sep=0pt] (4) at (2,-1) {\tikzpng[scale=2]{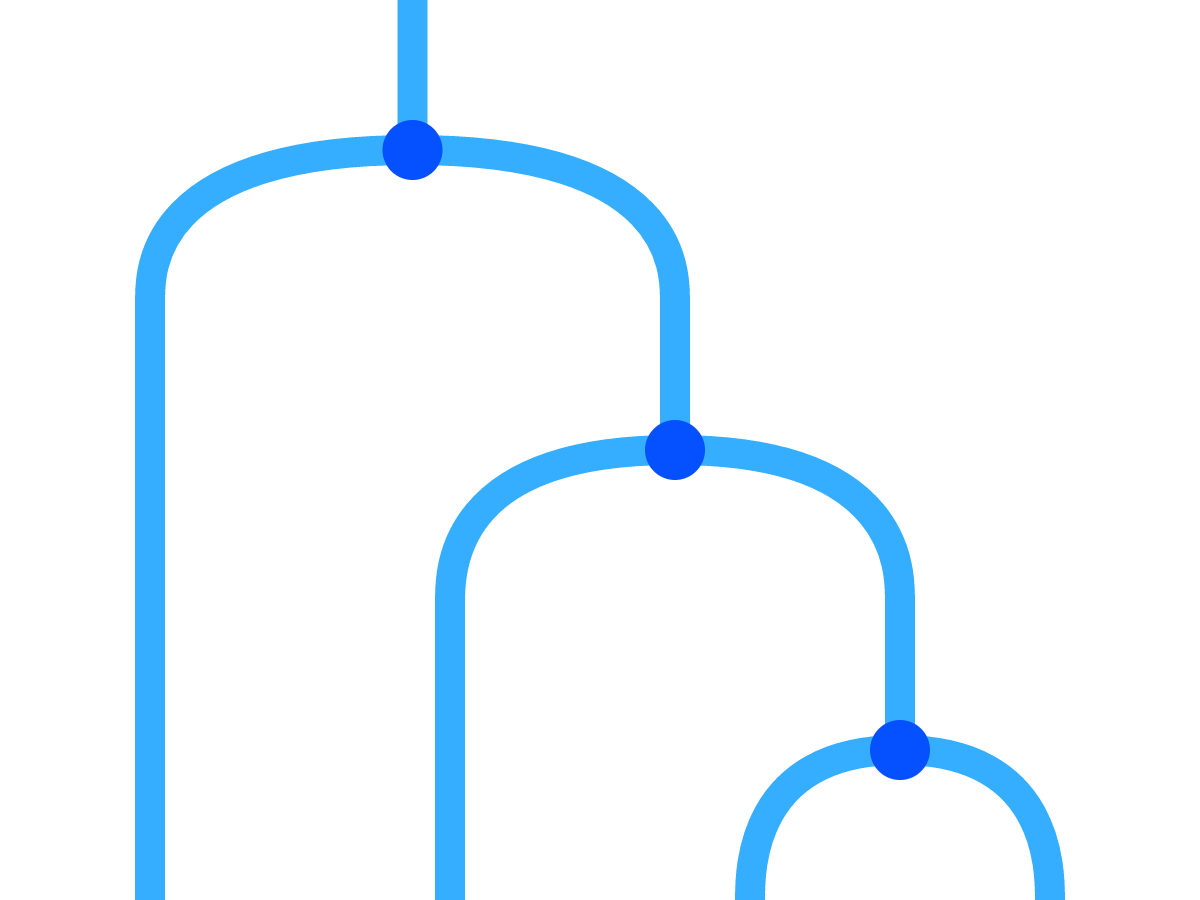}};
\node [inner sep=0pt] (5) at (0,-1) {\tikzpng[scale=2]{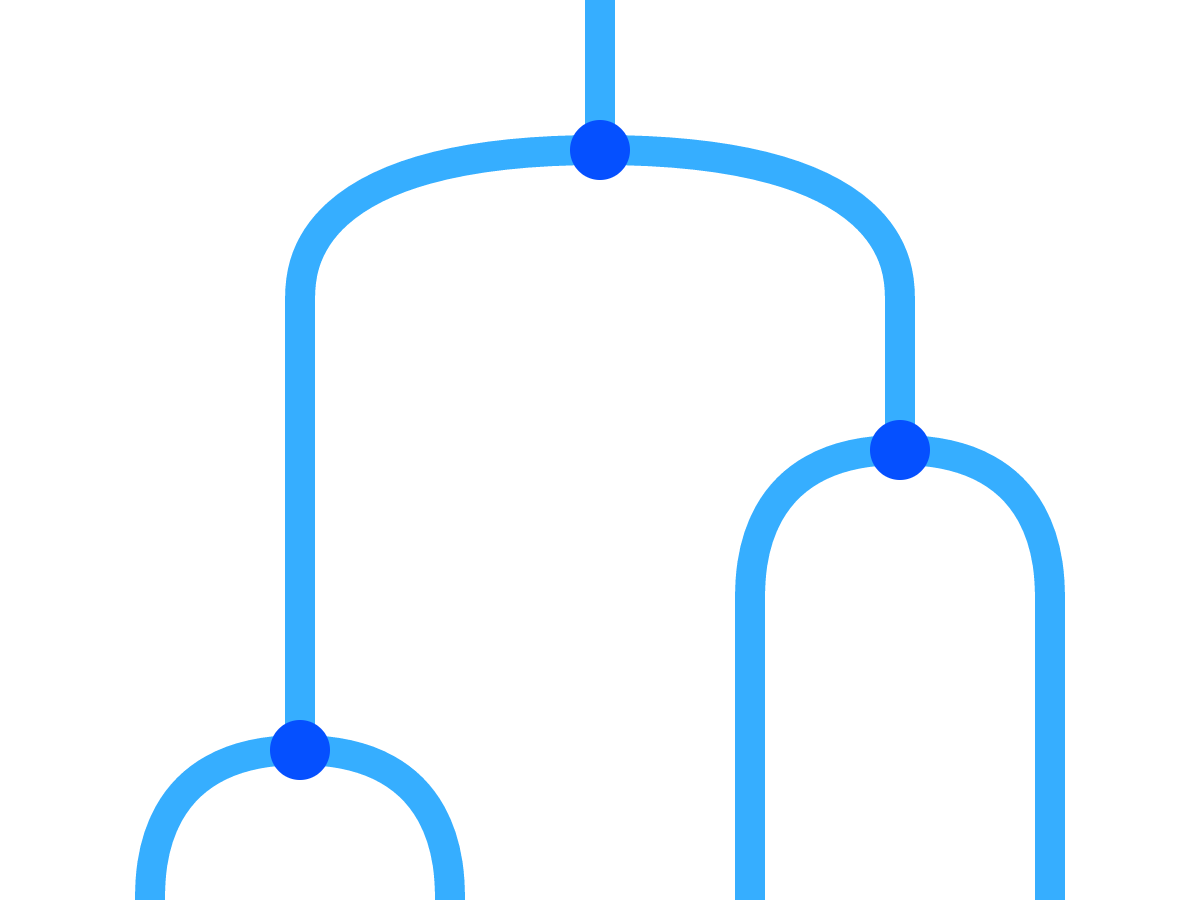}};
\node [inner sep=0pt] (6) at (1,-1) {\tikzpng[scale=2]{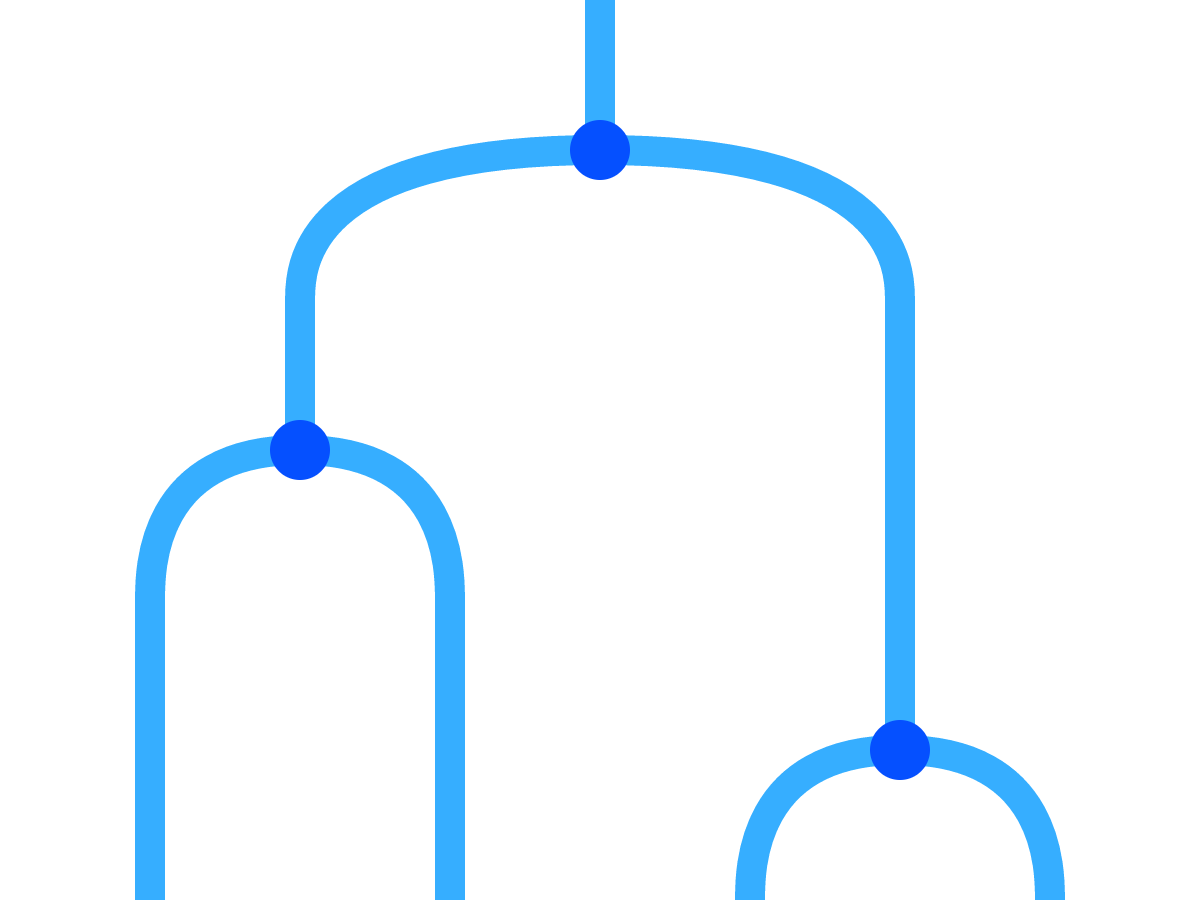}};
\draw [->, shorten <=-2pt, shorten >=-2pt] (1) to node [above] {$\alpha$} (2);
\draw [->, shorten <=-2pt, shorten >=-2pt] (2) to node [above] {$\alpha$} (3);
\draw [->, shorten <=-0pt, shorten >=1pt] (3) to node [right] {$\alpha$} (4);
\draw [->, shorten <=-0pt, shorten >=1pt] (1) to node [left] {$\alpha$} (5);
\draw [->, shorten <=-2pt, shorten >=-2pt] (5) to node [below] {$\sim$} (6);
\draw [->, shorten <=-2pt, shorten >=-2pt] (6) to node [below] {$\alpha$} (4);
\end{tz}
&
\begin{tz}[xscale=1.5, yscale=1.4, scale=1.5, font=\scriptsize]
\node [inner sep=0pt] (1) at (0,0) {\tikzpng[scale=2]{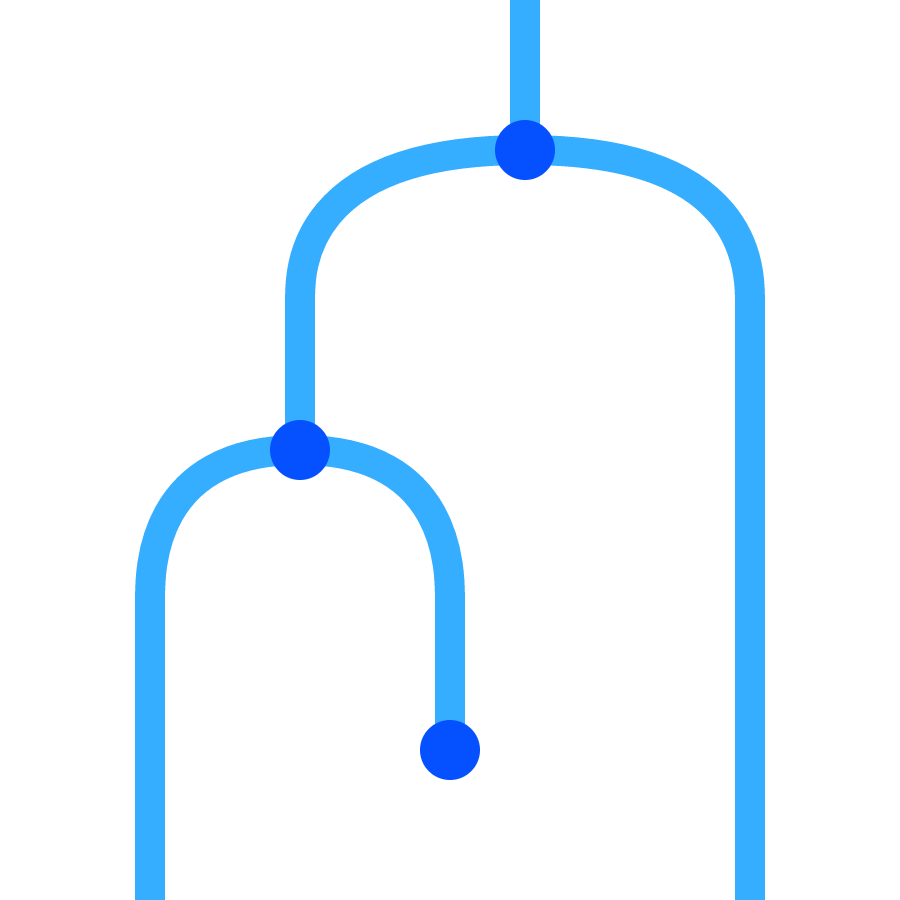}};
\node [inner sep=0pt] (2) at (1,0) {\tikzpng[scale=2]{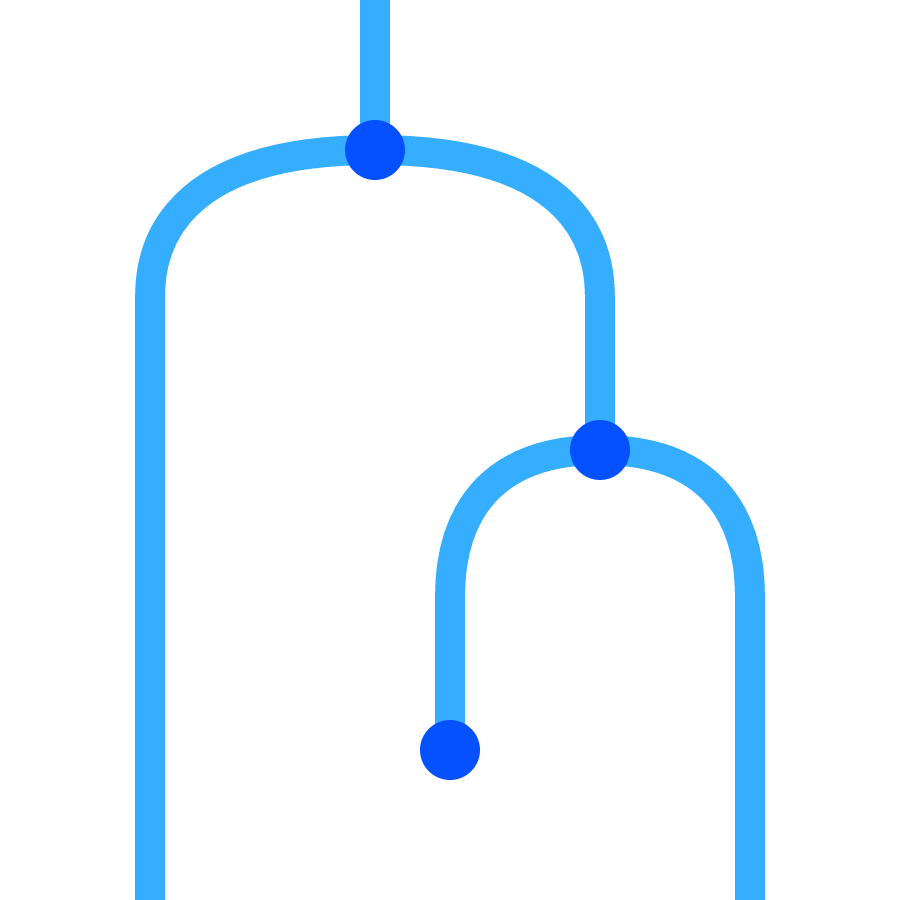}};
\node [inner sep=0pt, minimum height=0cm, inner sep=0pt] (3) at (0.5,1) {\tikzpng[scale=2]{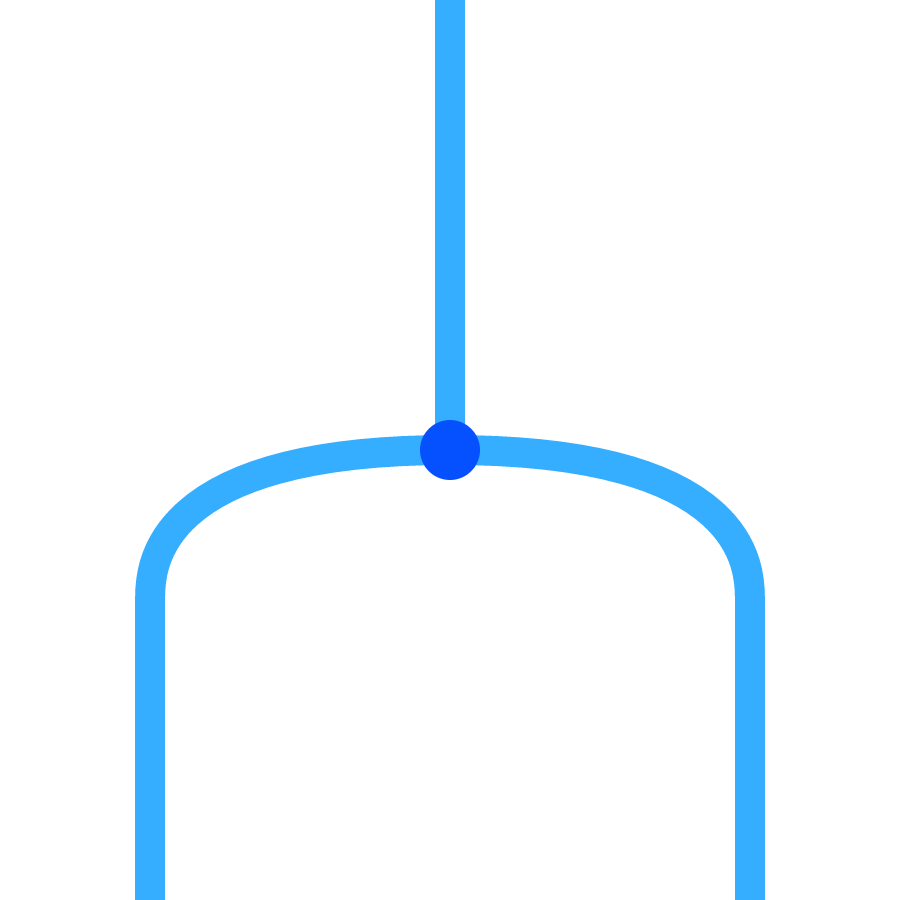}};
\draw [->] (1) to node [below] {$\alpha$} (2) {};
\draw [->] (1) to node [left] {$\rho$} (3);
\draw [->] (2) to node [right] {$\lambda$} (3);
\end{tz}
\end{calign}
\end{itemize}
\end{definition}

\noindent
Note that our pentagon equation has six sides, since we include the interchanger step explicitly. Also, when we say that the 2\-morphisms are \textit{invertible}, that implies the existence of additional generating 2\-morphisms $\alpha ^\inv$, $\lambda^\inv$ and $\rho^\inv$, along with the obvious additional equations.

\begin{definition}
\label{def:sdfrobenius}
The \textit{Frobenius pseudomonoid presentation} \F is the same as the pseudomonoid presentation \P, with the following additional data:
\begin{itemize}
\item Morphisms $\cup : \cat 1 \to \cat C \boxtimes \cat C$ and $f: \cat C \to \cat 1$:
\begin{calign}
\nonumber
\tikzpng[scale=3]{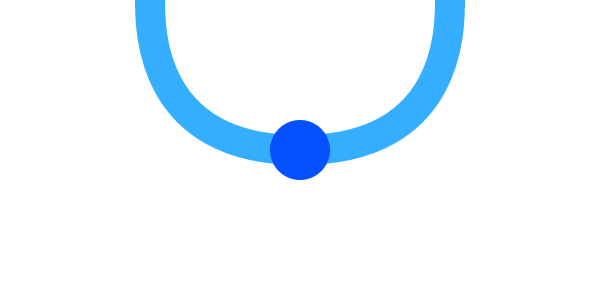}
&
\tikzpng[scale=3,yscale=-1]{unit}
\end{calign}
\item Invertible 2\-morphisms $\mu$ and $\nu$:
\begin{align*}
\tikzpng[scale=2]{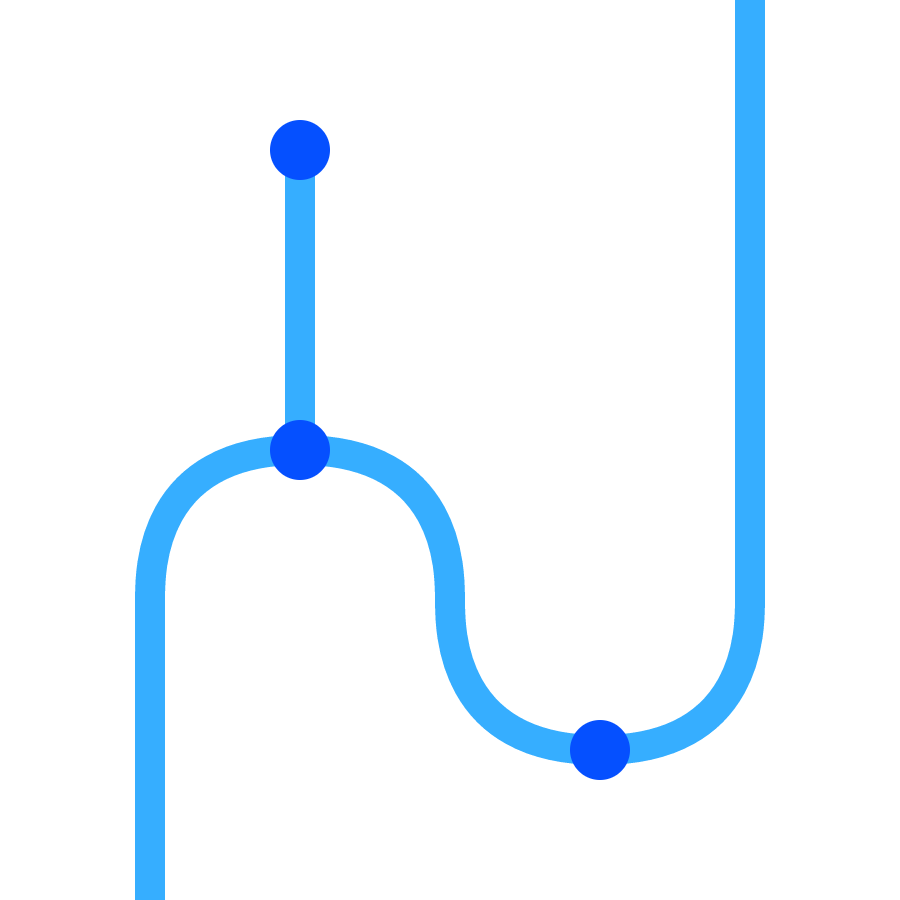} &\xto \mu \tikzpng[scale=2,yscale=1.5]{bigidentity}
&
\tikzpng[scale=2]{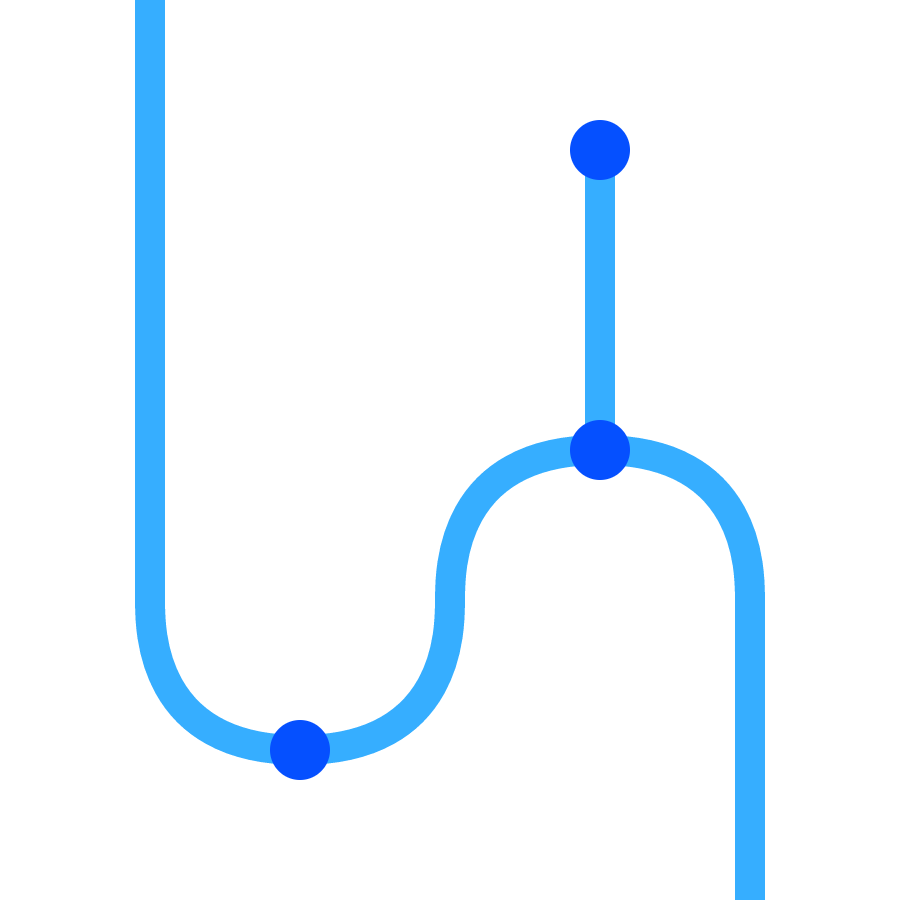} &\xto \nu \tikzpng[scale=2,yscale=1.5]{bigidentity}
\end{align*}
\item Swallowtail equations:
\begin{align*}
\id\quad&=\quad
\tikzpng[scale=2]{cup}
{\xto {\mu ^\inv}}
\tikzpng[scale=2]{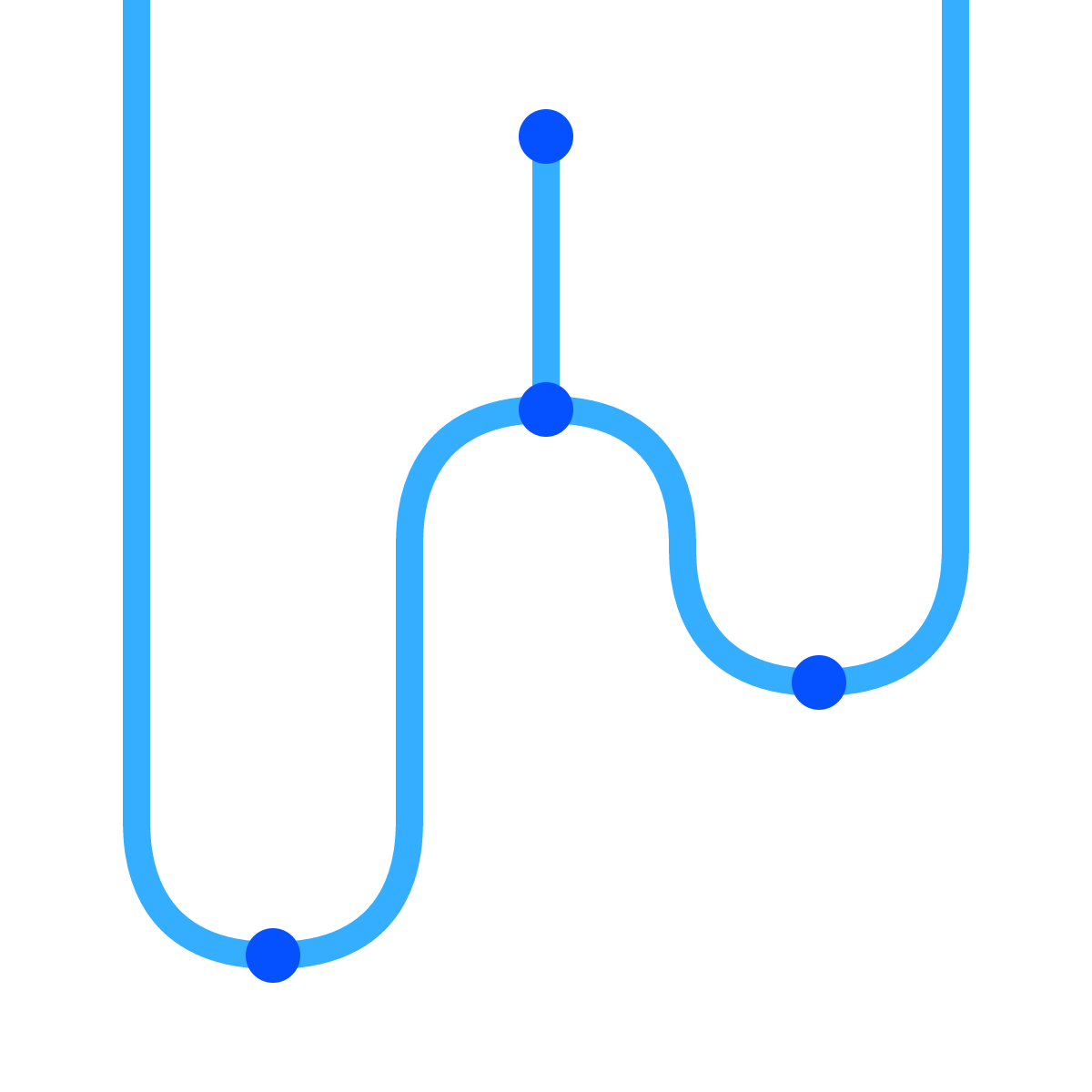}
{\xto \sim}
\tikzpng[scale=2,xscale=-1]{swallowtail1}
{\xto \nu}
\tikzpng[scale=2]{cup}
\\
\id\quad&=\quad
\tikzpng[scale=2]{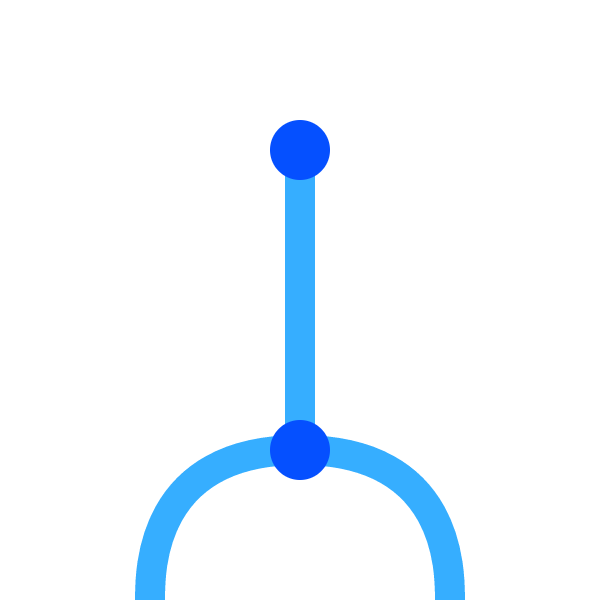}
{\xto {\mu ^\inv}}
\tikzpng[scale=2]{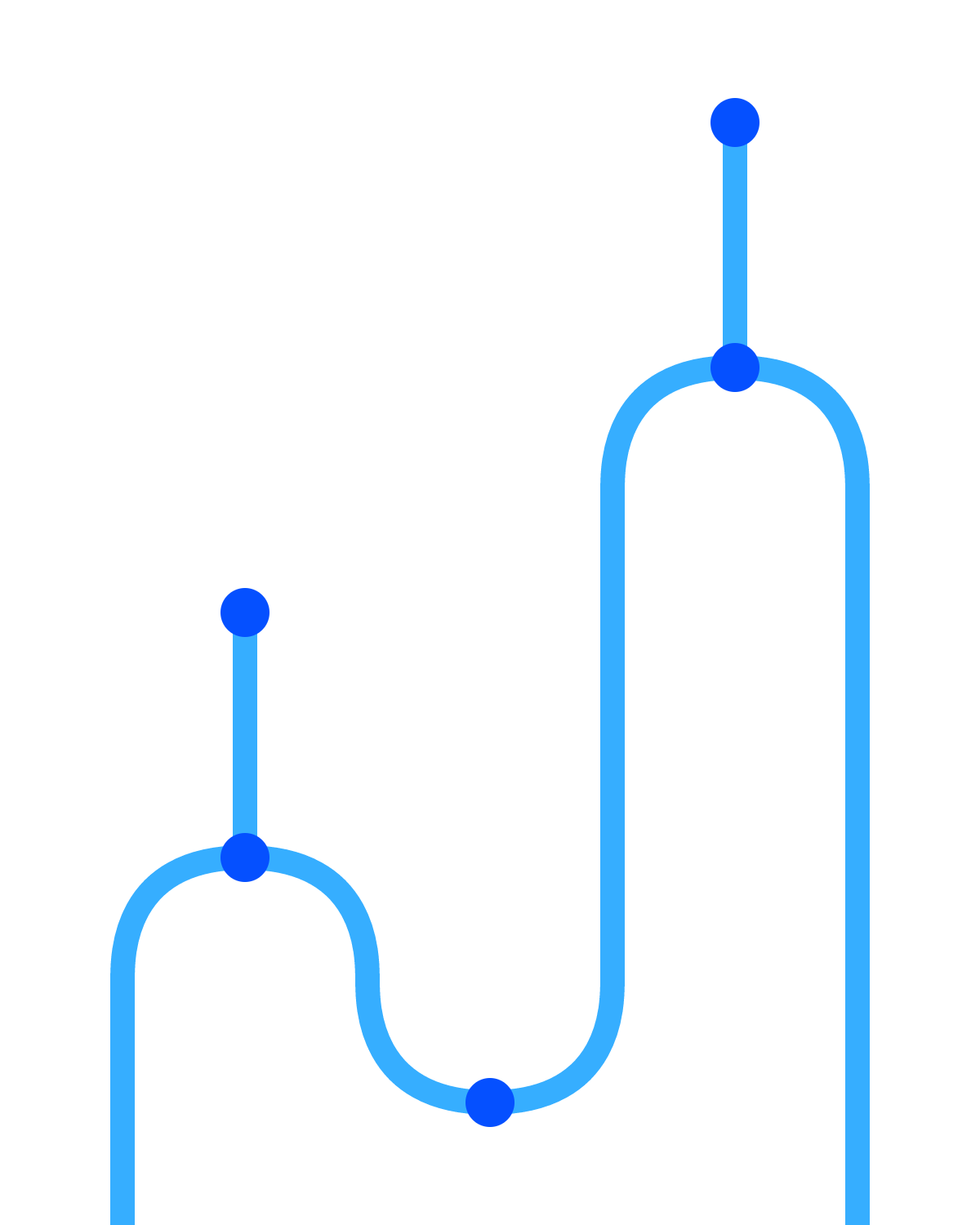}
{\xto \sim}
\tikzpng[scale=2,xscale=-1]{swallowtail2}
{\xto \nu}
\tikzpng[scale=2]{multform}
\end{align*}
\end{itemize}
\end{definition}

\noindent
The swallowtail equations are actually redundant, in the sense that any Frobenius pseudomonoid that does not satisfy them gives rise to one that does~\cite{GlobularSwallowtailCoherence, piotr-thesis}, but for simplicity we include them. This is equivalent to a presentation of Frobenius pseudomonoids in terms of an interacting pseudomonoid and pseudocomonoid; the presentation we have chosen is simpler, and easier to work with.

We now give the statement of the coherence theorem. Given a presentation $\S$, we write $\free \S$ for the free monoidal bicategory generated by this presentation~\cite{CSPthesis}. Given 2\-morphisms $P,Q$ in \free \S, we write $P=Q$ when they are equal as 2\-morphism in $\free \S$; that is, when they equivalent as composites of 2\-morphism generators modulo the equational structure.

\begin{theorem}[Coherence for Frobenius structures]
\label{thm:frobeniuscoherence}
Let $P, Q: X \to Y$ be 2\-morphisms in \free \F, such that $X$ is connected and acyclic, with nonempty boundary. Then $P = Q$.
\end{theorem}

\noindent
The proof strategy is to `rotate' such 2\-morphisms until they are in the image of the obvious embedding $\free \P \to \free \F$, and then to apply coherence for pseudomonoids to conclude the desired result. The central higher-categorical arguments are verified with the web-based proof assistant \href{https://ncatlab.org/nlab/show/Globular}{\emph{Globular}}~\cite{globular}, with direct links provided inline to the formalized proofs.

\subsection{3\-Dimensional proofs for linear logic}

In Section~\ref{sec:3dproofs} we apply these coherence results to \textit{multiplicative linear logic}, a  formal calculus for reasoning about resources~\cite{sep-ll, girard} which is similar to traditional logic, except that resources cannot be duplicated or neglected in the way that propositions can. A central problem is determining when two proofs should be considered equivalent. We describe a scheme for interpreting proofs as \textit{geometrical surfaces} embedded in 3\-dimensional space, and define two surfaces as \textit{equivalent} just when one can be deformed into the other, in sense we make precise. Our main theorem then reads as follows.

\begin{customthm}{\ref{thm:maintheorem}}\em
Two sequent proofs in multiplicative linear logic have equal interpretations in the free $*$-autonomous category just when their surfaces are equivalent.
\end{customthm}

\noindent
The theory of $*$-autonomous categories~\cite{mellies-categorical, seely-linearlogic} is a standard mathematical model for linear logic, so this theorem says that the notion of proof equality provided by the surface calculus agrees with the standard one.

This 3\-dimensional notation for linear proofs has a number of differences with respect to proof nets, the existing geometrical formalism for linear logic: correctness is local, with no long-trip criterion to be verified; there is no need for thinning links, and the associated non-local `jumps'; and it is close to categorical semantics.\footnote{Many of these desirable properties are shared by the \textit{sequent calculus}, the basic algebraic notation for proofs; in a sense, our scheme imports these properties from an algebraic setting to a geometrical one.} Furthermore, we argue that proof nets can be considered the 2\-dimensional `shadow' of the full 3\-dimensional geometry, with the correctness criterion and thinning links arising to compensate for the fact that this `shadow' has lost some essential geometrical data, such as the depth of individual sheets. We illustrate this in \autoref{fig:comparison}, which gives a sequent proof alongside its surface diagram and proof net representations; note that the thinning link, given in the proof net by a dotted line, is completely absent from the surface diagram.
 \begin{figure}
$$
\hspace{-5cm}
\begin{aligned}
\begin{tikzpicture}
\path [use as bounding box] (4.5,-0.75) rectangle +(3.7,5);
\node (4) [right] at (4-\xoff+0.2,1.5) {$\tree[$\otimes$-R]{}{A, B \vdash \bot, A \otimes B}$};
\node (1) [anchor=west] at (0,4.05 -| 4.west) {$\tree[AXIOM]{}{A \vdash A}$};
\node (2) [anchor=west] at (0,3.2 -| 4.west) {$\tree[$\bot$-INT]{}{A \vdash \bot, A}$};
\node [anchor=west] at (5.2,2.55) {$\tree[AXIOM]{}{B \vdash B}$};
\end{tikzpicture}
\end{aligned}
\quad\quad
\begin{aligned}
\begin{tikzpicture}
\path [use as bounding box] (-0.25,-0.75) rectangle +(4.5,5);
\draw [surface] (-\xoff,\yoff) to [out=right, in=145] (1,0) to (2,-0) to [out=30, in=left] (3,\yoff) to [out=25, in=left] (4-\xoff,2*\yoff) to (4-\xoff,2) to [out=up, in=up, looseness=0.5] (-\xoff,2) to (-\xoff,\yoff);
\node [label, below] at (-\xoff,\yoff) {$B$};
\draw [black] (4-\xoff,2*\yoff) to (4-\xoff,2) to [out=up, in=up, looseness=0.5] (-\xoff,2) to (-\xoff,\yoff);
\draw [surface] (\xoff,-\yoff) to [out=right, in=-150] (1,0) to [out=up, in=down] (2,1) to [out=up, in=left] (2.5,1.5) to [out=right, in=up] (3,1) to (3,\yoff) to [out=-35, in=left] (4+\xoff,0) to (4+\xoff,3.5) to [out=up, in=up, looseness=0.5] (\xoff,3.5) to (\xoff,-\yoff);
\draw [basic] (1,0) to [out=up, in=down] (2,1) to [out=up, in=left] (2.5,1.5);
\draw [adjoint] (2.5,1.5) to [out=right, in=up] (3,1) to (3,\yoff);
\draw [black] (4+\xoff,0) to (4+\xoff,3.5) to [out=up, in=up, looseness=0.5] (\xoff,3.5) to (\xoff,-\yoff);
\node [label, below, red] at (3,\yoff) {$\otimes$};
\draw [surface] (2,0) to [out=-35, in=left] (4+3*\xoff,-2*\yoff) to [out=up, in=-15, out looseness=1.8] (1,3.2) to (1,1) to [out=down, in=up] (2,0.0) to (2,0);
\draw [bare] (4+3*\xoff,-2*\yoff) to [out=up, in=-15, out looseness=1.8] (1,3.2);
\draw [basic] (4+3*\xoff,-2*\yoff) to [out=up, in=-15, out looseness=1.8] (1,3.2) to (1,1) to [out=down, in=up] (2,0.0) to (2,0);
\node [label, below] at (\xoff,-\yoff) {$A$};
\node [label, below] at (4+\xoff,0) {$A$};
\node [label, below] at (4+3*\xoff,-2*\yoff) {$\bot$};
\node [label, below] at (4-\xoff,2*\yoff) {$B$};
\end{tikzpicture}
\end{aligned}
\quad\quad
\begin{aligned}
\begin{tikzpicture}
\path [use as bounding box] (-0.25,-0.75) rectangle +(4.5,5);
\node [label, below] at (-\xoff,\yoff) {$B$};
\draw [black, thick] (3,1) to [out=60, in=down] (4-\xoff,2) to [out=up, in=up, looseness=0.5] (-\xoff,2) to (-\xoff,\yoff);
\draw [black, thick] (3,1) to [out=140, in=down] (4+\xoff,3.5) to [out=up, in=up, looseness=0.5] (\xoff,3.5) to (\xoff,-\yoff);
\draw [black, thick] (3,\yoff) to (3,1) node [circle, draw=black, text=black, fill=white, thick, inner sep=0.5pt] {$\otimes$};
\node [label, below] at (3,\yoff) {$A\!\otimes \!B$};
\draw [black, thick, densely dotted] (1,3.2) to (\xoff,3.5);
\draw [black, thick] (4+3*\xoff,-2*\yoff) to [out=up, in=-15, out looseness=1.8] (1,3.2) node [circle, draw=black, text=black, fill=white, thick, inner sep=0.5pt] {$\bot$};
\node [label, below] at (\xoff,-\yoff) {$A$};
\node [label, below] at (4+3*\xoff,-2*\yoff) {$\bot$};
\end{tikzpicture}
\end{aligned}
\hspace{-5cm}
$$
\caption{\label{fig:comparison}A deduction in the sequent calculus, and its
  surface calculus \\
  and proof net representations.}
\end{figure}
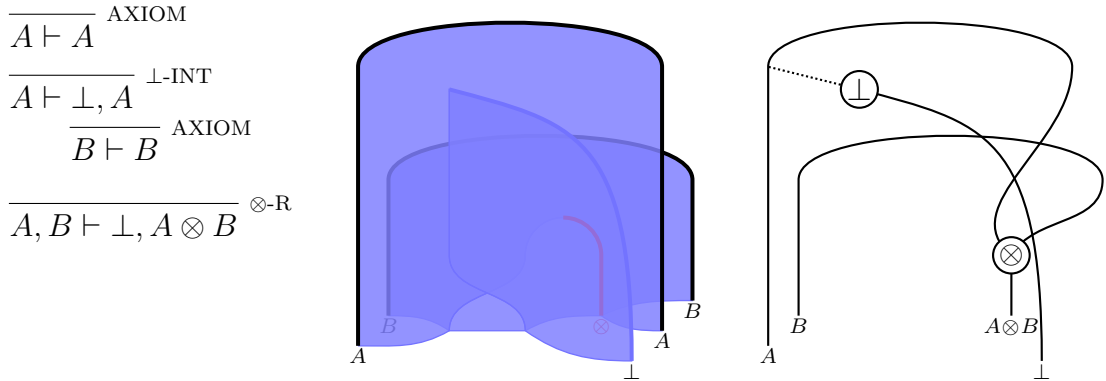

\subsection{Related work}
\label{sec:related}

Our combinatorial proof of coherence for Frobenius pseudomonoids can be considered a categorification of the normal form theorem for Frobenius monoids, described by Abrams~\cite{Abrams_1996}. Our coherence proofs do not make use of the formalism of rewrite systems and confluence, which has difficulties in the higher-dimensional setting~\cite{Mimram_2014} due to the emergence of infinite families of critical pairs from finitely-generated systems. When this formalism becomes better developed, recasting our proofs in such a formalism would likely be fruitful, and expose further interesting structure.

Regarding the application to logic, it was already shown by Street that Frobenius pseudomonoids with right adjoint data correspond to $*$\-autonomous categories~\cite{Street_2004} (up to Cauchy completeness), and from work on the graphical calculus for Gray categories~\cite{barrett-graydiagrams, bartlett-wire, hummon-thesis, CSPthesis}, it is already clear that a 3\-dimensional graphical calculus for $*$\-autonomous categories must therefore exist. However, the consequences of this for proof denotation have not previously been explored. Also, the coherence property of Frobenius pseudomonoids has not previously been discussed in this context; this is essential for the straightforward characterization of equivalence of the 3\-dimensional diagrams.

It is already well-recognized that ideas from topology are relevant to linear logic. The original proof nets of Girard~\cite{girard} are topological objects,  and Melli\`es has shown how the topology of ribbons gives a decision procedure for correctness of proof nets~\cite{mellies-ribbon}. Proof nets allow reasoning about proofs with units, but the formalism is complex, requiring a system of thinning links with moving connections~\cite{blute-coherence}. Hughes~\cite{hughes-simpleproof} gives a variant of proof nets which works well with units, but which lacks the local correctness properties of our notation. Our approach has a local flavour which is shared by the deep-inference model of proof analysis~\cite{guglielmi-deepinference} and the access to monoidal coherence that it allows~\cite{hughes-deepinference}; however, our coherence theorem is strictly more powerful, as it operates in a fragment that combines the $\otimes$ and $\parr$ connectives. We note also the work of Slavnov~\cite{slavnov} on linear logic and surfaces, which involves some similar ideas to the present article, but is technically quite unrelated.

\subsection{Acknowledgements}

 The authors are grateful to Samson Abramsky, Iliano Cervesato, Robin Cockett, Nick Gurski, Samuel Mimram, Robert Seely and Sam Staton for useful comments, and to the organizers of FMCS 2016,  HDRA 2016 and LINEARITY 2016 for the opportunities to present various aspects these results. 3\-dimensional graphics have been coded in Ti\textit{k}Z, and 2\-dimensional graphics have been produced by the proof assistant \textit{Globular}~\cite{globular}. The second author acknowledges funding from a Royal Society University Research Fellowship.

\section{Coherence for Frobenius pseudomonoids}
\def\myscale{2}
\def\miniscale{1.5}
\label{sec:coherence}

\subsection{Overview}

In this section we prove the coherence theorem for Frobenius pseudoalgebras, our main technical result. The mathematical setting is higher algebraic structures in finitely-presented monoidal bicategories. We keep technicalities to a minimum, emphasizing the main ideas behind the key proof steps.

The main technical results are formalized in the web-based proof assistant \emph{Globular}~\cite{globular}, and can be accessed at the address \glob. To our knowledge, this is one of the first uses of \textit{Globular} to prove a new result.

\newcommand\smashstackrel[2]{\ensuremath{\smash{\stackrel {\scriptsize\smash{#1}} #2}}}

Our string diagrams in this section run bottom-to-top. Also, it is convenient to only allow \textit{generic composites} of 1- and 2\-morphisms, meaning that composite 1- and 2\-morphisms have a definite length, and a linear order on their components from `first' to `last'; this is without loss of generality~(see~\cite{bartlett-wire} and~\cite[Section~2.2]{internalbicategories}.) As a point of notation, we write $\smashstackrel \sim \to$ to denote an interchanger 2\-morphism, and $\smashstackrel \sim \twoheadrightarrow$ to denote a composite of interchangers of length at least 0.

\subsection{Definitions}

Just as we can present a monoid by generators and relations, and monoidal category (such as a PRO) by a monoidal signature~\cite{lack-prop}, a similar approach can be used for monoidal bicategories, as developed in the thesis of Schommer-Pries~\cite[Section~2.10]{CSPthesis}. Informally, a presentation can be described as follows.

\begin{definition}
A \emph{presentation} of a monoidal bicategory is a list of generating objects, morphisms, 2\-morphisms and equations, along with appropriate source and target data. 
\end{definition}

\noindent
The sources and targets are given by composites of the data at lower levels. For example, the source and target of a generating 2\-morphism is a composite of the generating objects and morphisms, under tensor product, composition and taking identities.

\begin{definition}
Given a presentation \S, we write $\free \S$  for the monoidal bicategory generated by this data.
\end{definition}

\noindent
In $\free \S$, for 1\-morphisms $X,Y$ with the same source and target, we write $X \equiv Y$ if they are identical as composites, and $X \simeq Y$ if they are isomorphic using the 2\-morphism structure. For 2\-morphisms $P,Q$ with the same source and target, we write $P \equiv Q$ if they are identical as composites, and $P = Q$ if they are equivalent modulo the equational structure.

We can now state our main theorem, which makes reference to the Frobenius presentation \F given in the introduction.
\begin{definition}
A 1\-morphism in $\free \F$ is \emph{connected} or \emph{acyclic} when its string diagram graph is connected or acyclic respectively.
\end{definition}
\begin{theorem}[Coherence for Frobenius structures]
\label{thm:frobeniuscoherence}
Let $P, Q: X \to Y$ be 2\-morphisms in \free \F, such that $X$ is connected and acyclic, with nonempty boundary. Then $P = Q$.
\end{theorem}

\noindent
Thus, as long as one restricts to connected, acyclic diagrams with boundary, all diagrams commute. The proof strategy is to `rotate' 2\-morphisms in \free \F so that they are equal to 2\-morphisms in the image of the obvious inclusion $\free \P \to \free \F$, and then rely on coherence for pseudomonoids. 

\ignore{
\begin{definition}A \textit{$*$-autonomous category} is a monoidal category $(\cat C, \parr, \bot)$ equipped with an object $I$ and an equivalence $*: \cat C \to \cat C ^\mathrm{op}$, such that there is a natural isomorphism as follows:
\begin{equation}
\Hom(A,B^*) \simeq \Hom(I, A \parr B)
\end{equation}
\end{definition}
}

\subsection{Rotations and twistedness}

We begin by defining a new presentation, given by \F with some additional data added.

\begin{definition}
The \textit{extended Frobenius signature} $\E$ is the same as \F,  with the following additional data:
\begin{itemize}
\item Morphism $\cap: \cat C \boxtimes \cat C \to \cat 1$:
\[
\tikzpng[scale=3, yscale=-1]{cup}
\]
\item Invertible 2\-morphism $\pi$:
\[
\tikzpng[scale=\myscale, yscale=-1]{cup}
{\xto \pi}
\tikzpng[scale=\myscale]{multform}
\]
\item 2\-Morphisms $\sn_1$, $\sn_2$, $\sn_1^\inv$ and $\sn_2^\inv$, called the \textit{snakeorators} and \textit{inverse snakeorators}, equipped with additional equations as follows:
\begin{align*}
\hspace{-0.4cm}\sn_1 &=
\tikzpng[scale=\myscale, yscale=-1]{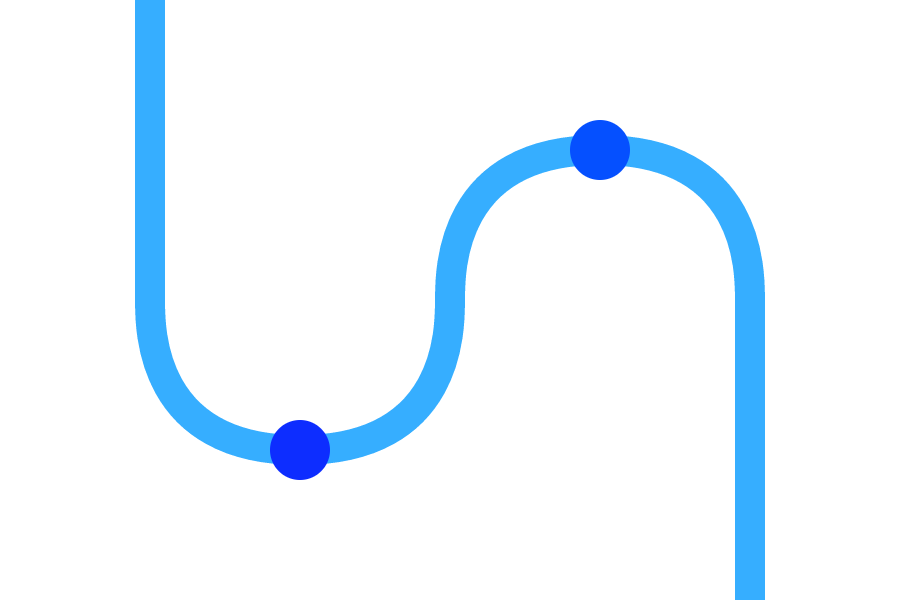}
{\xto \pi}
\tikzpng[scale=\myscale]{mu}
{\xto \mu}
\tikzpng[scale=\myscale,yscale=1.5]{bigidentity}
&
\sn_1^\inv &=
\tikzpng[scale=\myscale,yscale=1.5]{bigidentity}
{\xto {\mu^\inv}}
\tikzpng[scale=\myscale]{mu}
{\xto {\pi^\inv}}
\tikzpng[scale=\myscale, yscale=-1]{snake}
\\
\hspace{-1cm}\sn_2 &=
\tikzpng[scale=\myscale, xscale=1]{snake}
{\xto \pi}
\tikzpng[scale=\myscale, xscale=-1]{mu}
{\xto \nu}
\tikzpng[scale=\myscale,yscale=1.5]{bigidentity}
&
\sn_2^\inv &=
\tikzpng[scale=\myscale,yscale=1.5]{bigidentity}
{\xto {\nu^\inv}}
\tikzpng[scale=\myscale, xscale=-1]{mu}
{\xto {\pi^\inv}}
\tikzpng[scale=\myscale, xscale=1]{snake}
\end{align*}
\item 2\-Morphisms $R_m$, $L_m$, $R_u$, $L_u$, $R_f$, $L_f$, with additional equations as follows:
\begin{align*}
\hspace{-0.4cm}R_m &= \tikzpng[scale=\miniscale]{bigmult}
{\xto {\sn_1 ^\inv}} \tikzpng[xscale=-1, scale=\miniscale]{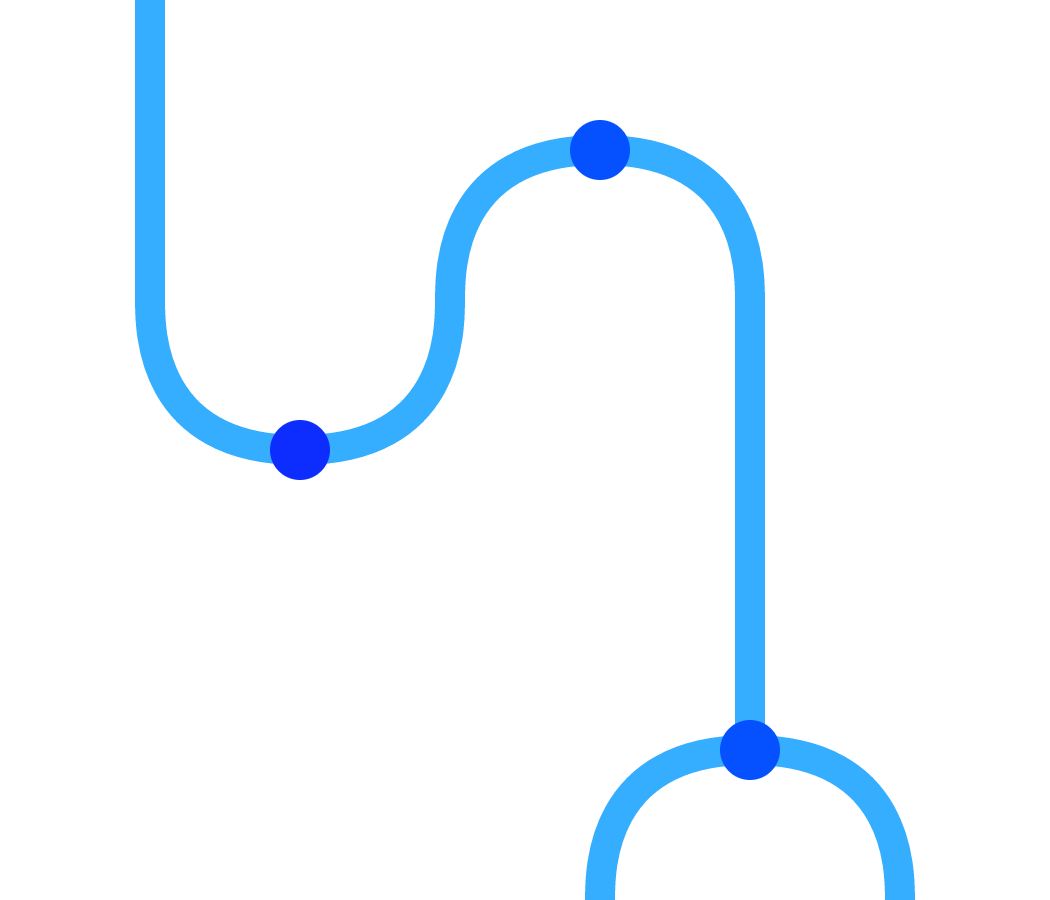}
{\xto {\sim}} \tikzpng[xscale=-1, scale=\miniscale]{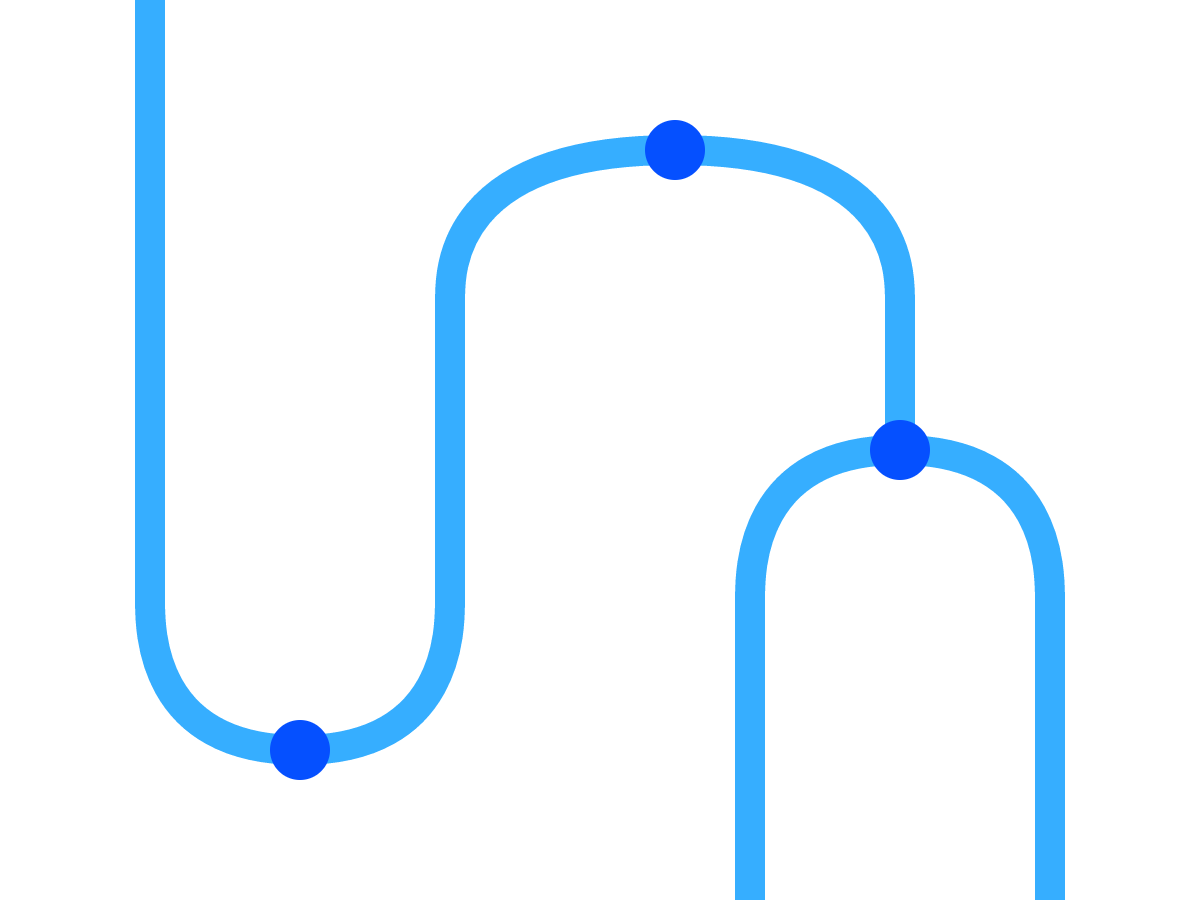}
{\xto \pi} \tikzpng[xscale=-1, scale=\miniscale]{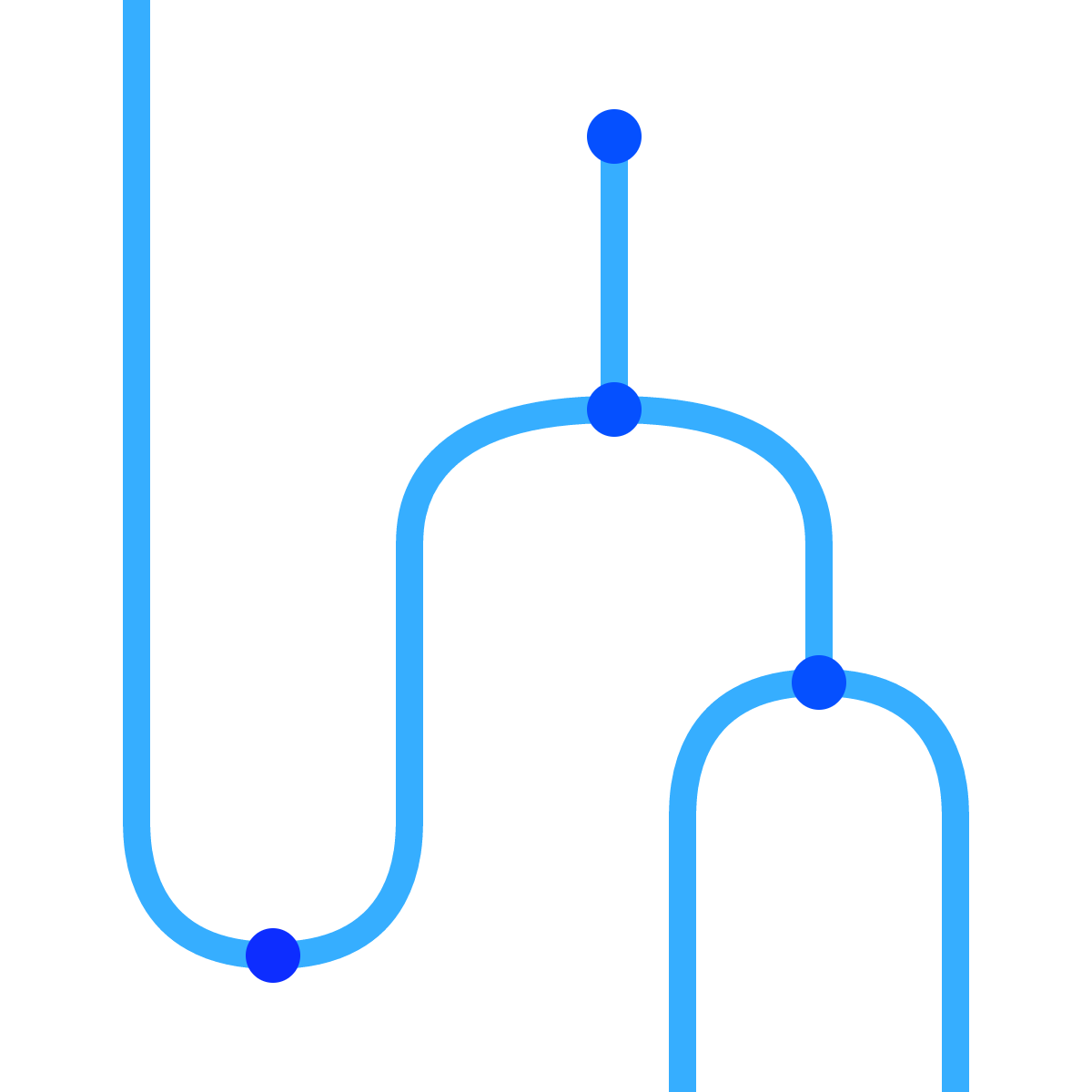}
{\xto {\alpha}} \tikzpng[xscale=-1, scale=\miniscale]{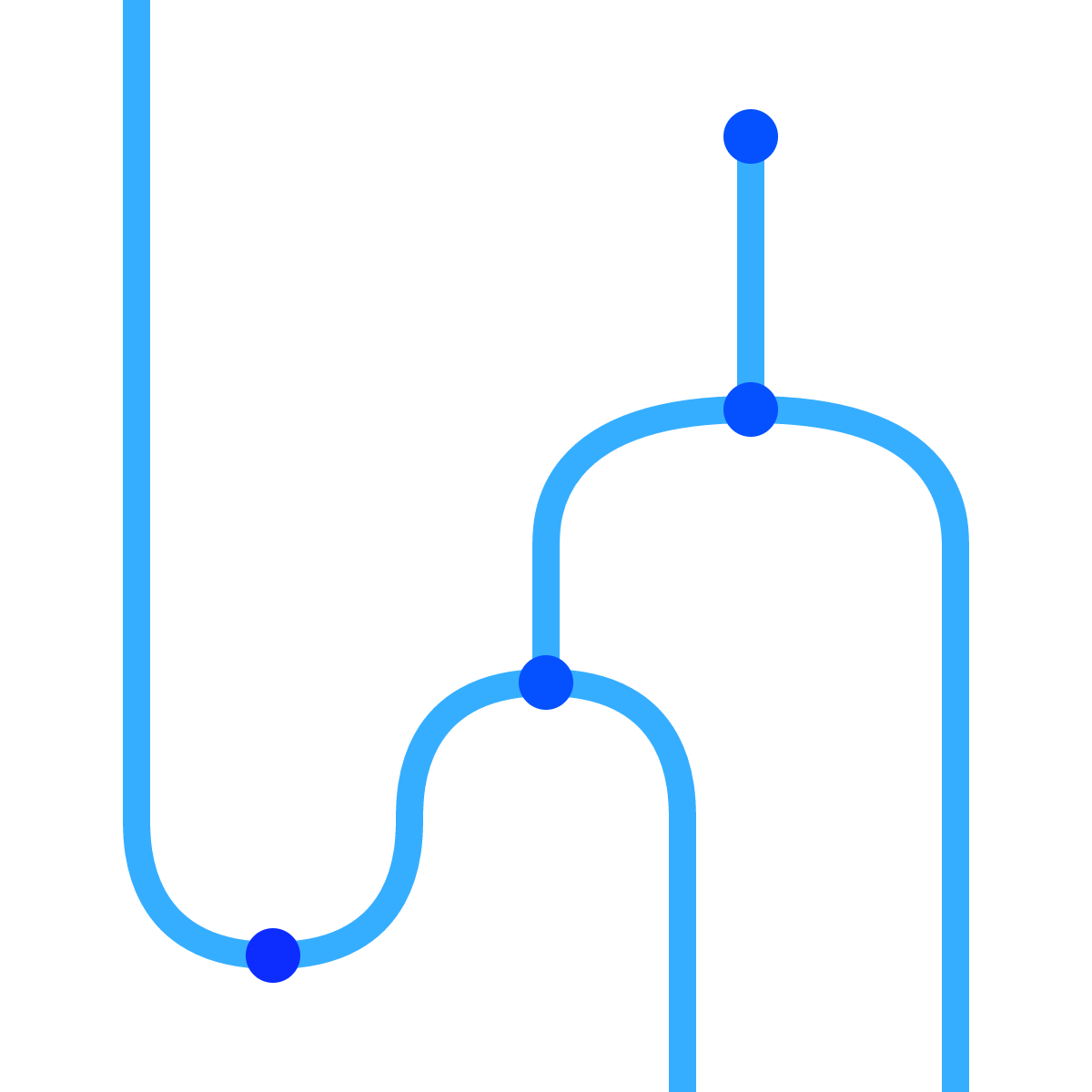}
{\xto {\pi ^\inv}} \tikzpng[xscale=-1, scale=\miniscale]{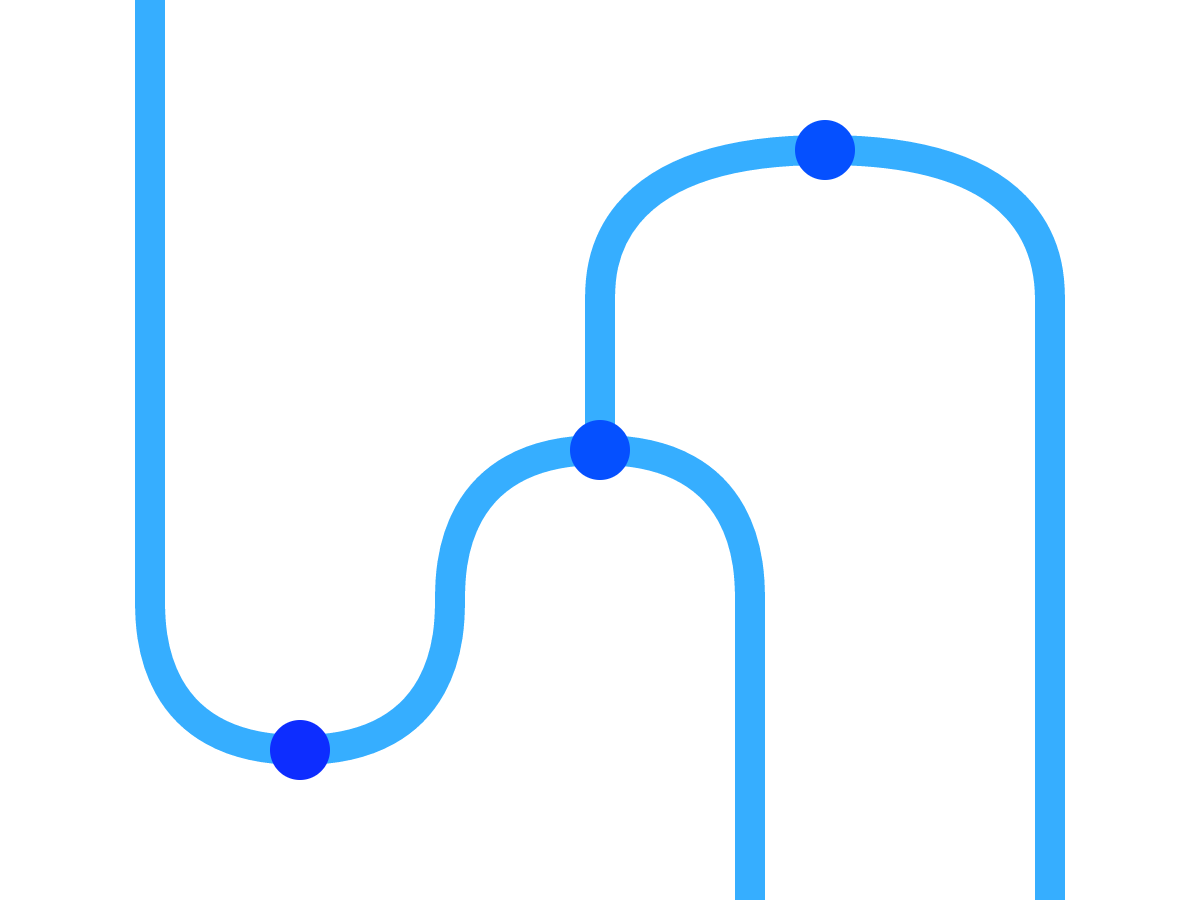}
\\
\hspace{-1cm}L_m &=
\tikzpng[scale=\miniscale]{bigmult}
{\xto {\sn_2^\inv}} \tikzpng[scale=\miniscale]{multL2}
{\xto \sim} \tikzpng[scale=\miniscale]{multL3}
{\xto \pi} \tikzpng[scale=\miniscale]{multL4}
{\xto {\alpha^\inv}} \tikzpng[scale=\miniscale]{multL5}
{\xto {\pi ^\inv}} \tikzpng[scale=\miniscale]{multL6}
\\
\hspace{-1cm}R_u &= \tikzpng[scale=\miniscale]{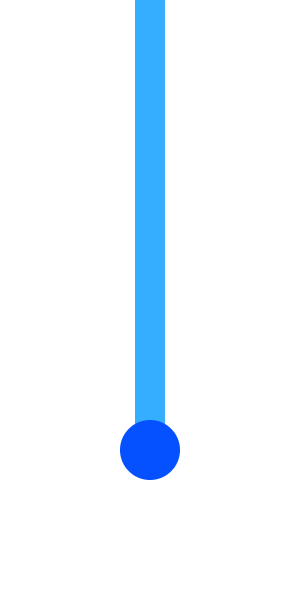}
{\xto {\sn_1^\inv}} \tikzpng[scale=\miniscale]{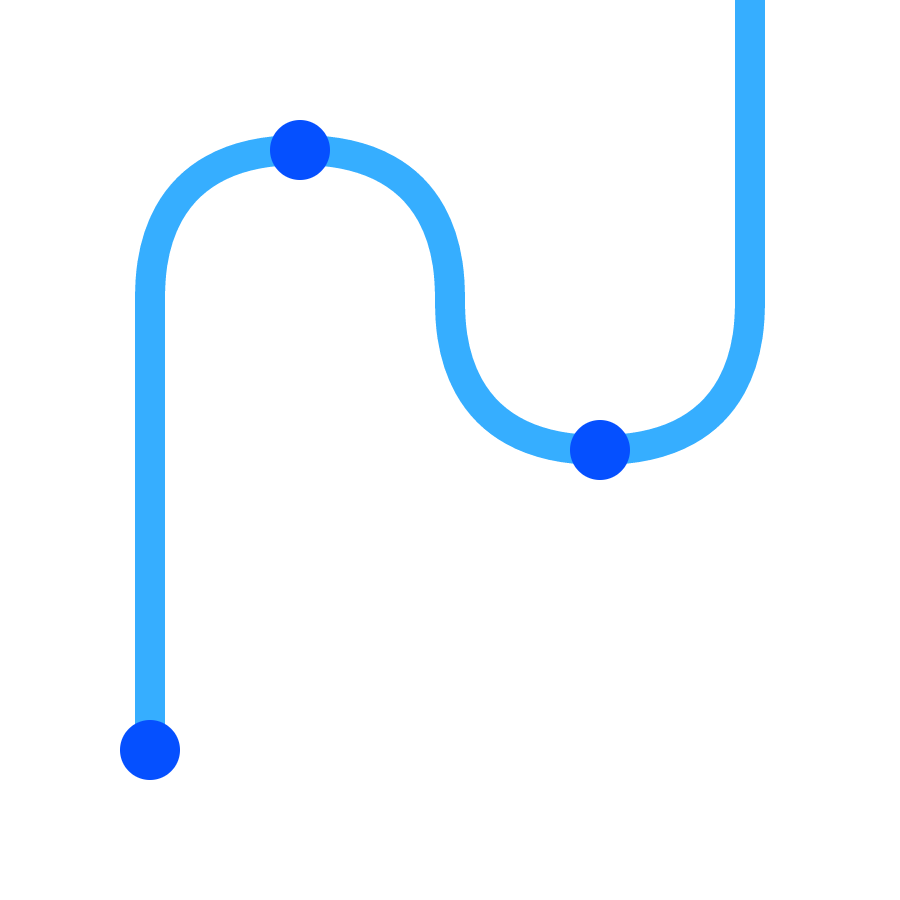}
{\xto \sim} \tikzpng[scale=\miniscale]{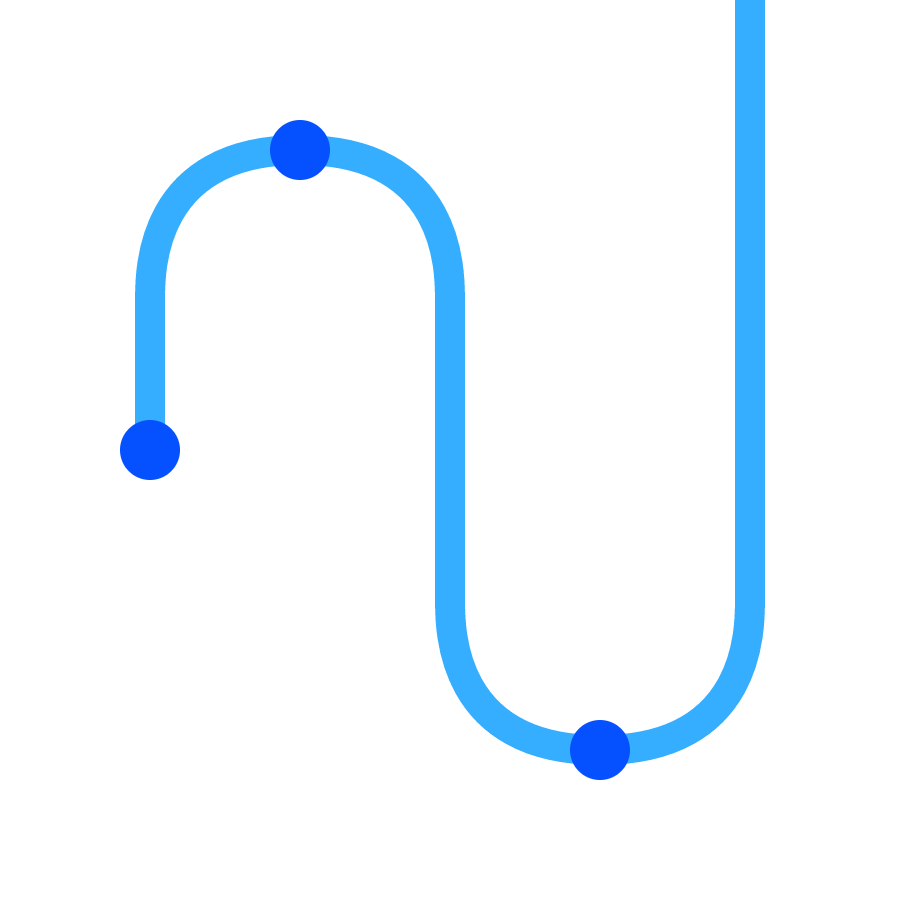}
{\xto \pi} \tikzpng[scale=\miniscale]{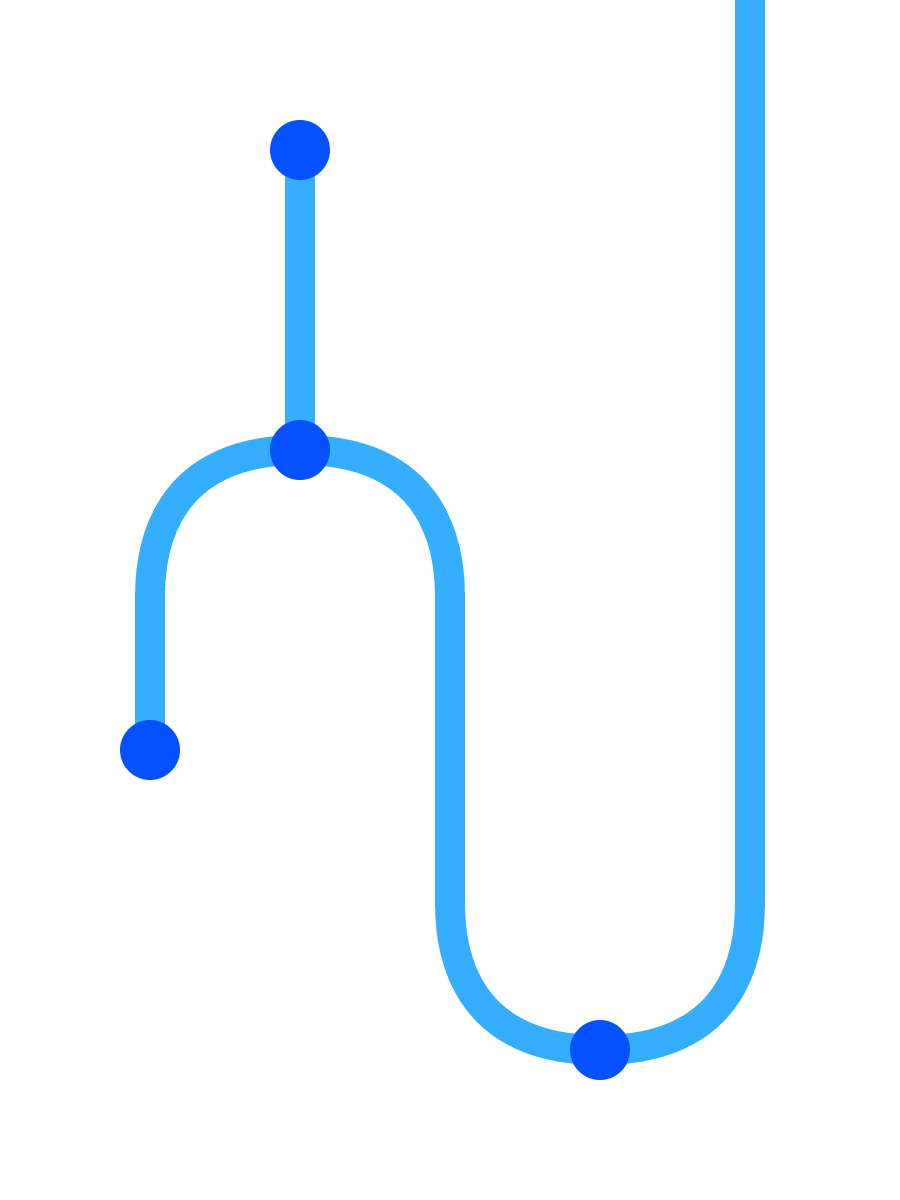}
{\xto \lambda} \tikzpng[scale=\miniscale]{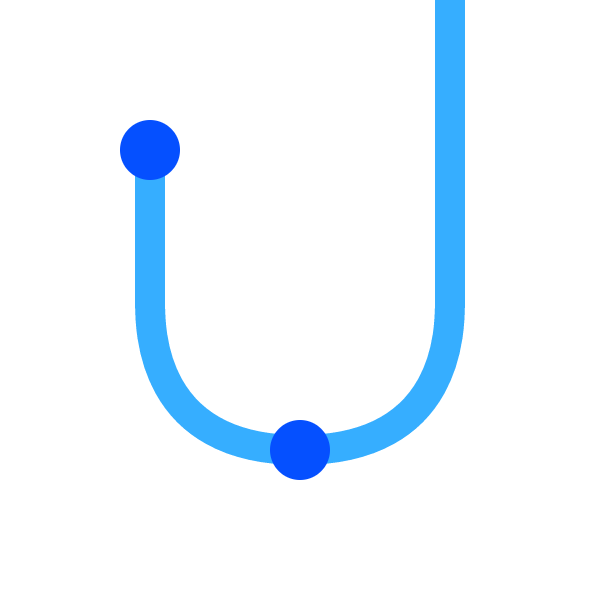}
\\
\hspace{-1cm}L_u &= \tikzpng[scale=\miniscale]{bigunit}
{\xto {\sn_2^\inv}} \tikzpng[xscale=-1, scale=\miniscale]{Ru2}
{\xto \sim} \tikzpng[xscale=-1, scale=\miniscale]{Ru3}
{\xto \pi} \tikzpng[xscale=-1, scale=\miniscale]{Ru4}
{\xto \rho} \tikzpng[xscale=-1, scale=\miniscale]{Ru5}
\\
\hspace{-1cm}R_f &= \tikzpng[yscale=-1, scale=\miniscale]{bigunit}
{\xto {\rho^\inv}} \tikzpng[xscale=-1, scale=\miniscale]{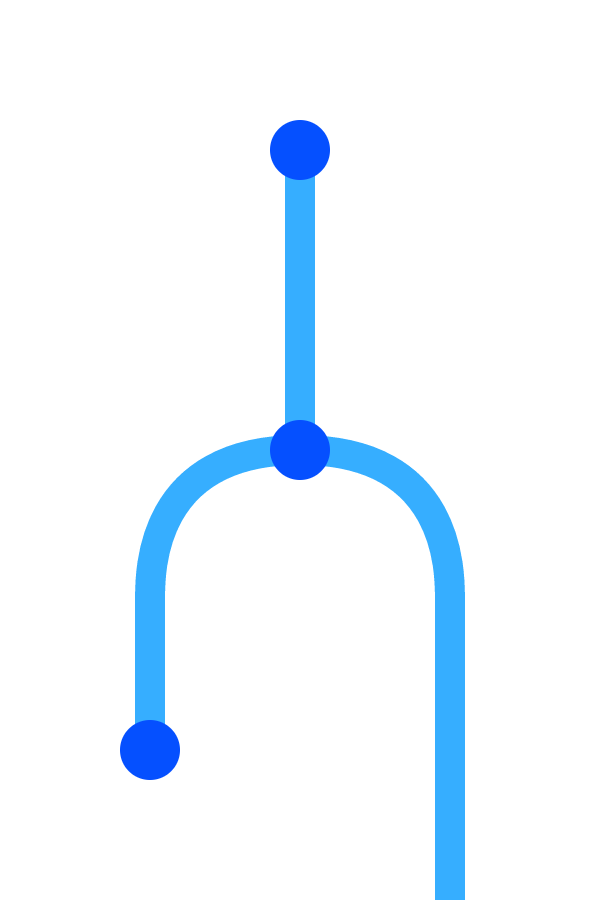}
{\xto {\pi^\inv}} \tikzpng[xscale=-1, scale=\miniscale]{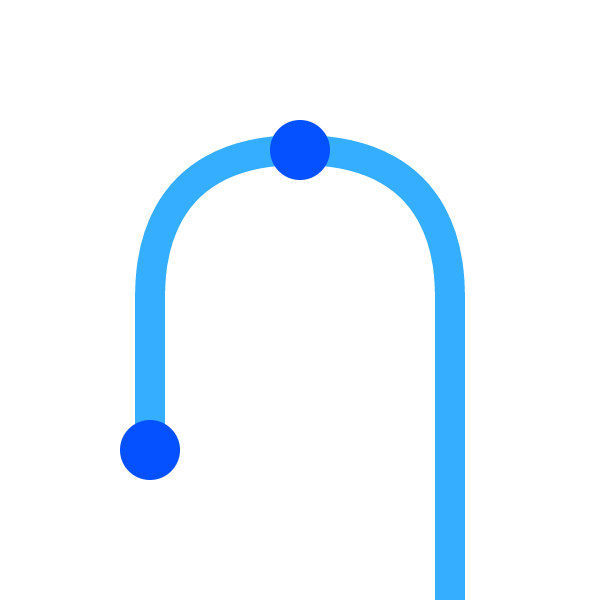}
\\
\hspace{-1cm}L_f &= \tikzpng[yscale=-1, scale=\miniscale]{bigunit}
{\xto {\lambda^\inv}} \tikzpng[scale=\miniscale]{Rf1}
{\xto {\pi ^\inv}} \tikzpng[scale=\miniscale]{Rf2}
\end{align*}
\end{itemize}
\end{definition}

The presentations \F and \E are \textit{equivalent}~\cite{CSPthesis}, in the sense that $\free\F \simeq \free\E$ as monoidal 2\-categories, a strong property which implies in particular that they have the same representation theory. In fact, \E is a \textit{simple homotopy extension} of \F.
\begin{lemma}
\label{lem:equivalentpresentations}
The presentations \F and \E are equivalent.
\end{lemma}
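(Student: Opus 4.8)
The plan is to exhibit $\E$ as a \emph{definitional extension} of $\F$: every piece of data that $\E$ adds to $\F$ is redundant, in that it either names a composite already available in $\free{\F}$, or is forced by a defining equation to equal such a composite. Concretely, I would factor the passage from $\F$ to $\E$ into a finite chain of elementary moves, each of a kind known to preserve equivalence of presentations of monoidal bicategories~\cite{CSPthesis}. Only two kinds of move are needed: \textbf{(i)} adjoining a generating 1\-morphism $g$ together with an invertible generating 2\-morphism $g \Rightarrow w$, where $w$ is a composite of existing generators; and \textbf{(ii)} adjoining a generating 2\-morphism $\phi$ together with a single equation identifying $\phi$ with a composite of existing 2\-morphisms. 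Move \textbf{(i)} enlarges a hom\-category by an object isomorphic to one already present, while move \textbf{(ii)} enlarges it by an object equal to one already present; in each case the affected hom\-categories gain only redundant data, so the inclusion of presented monoidal bicategories is a biequivalence.

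I would carry this out in three steps, following the evident dependency order of the definitions. First, adjoin the cap: form $\mathcal{F}_1$ from $\F$ by adding $\cap \colon \cat C \boxtimes \cat C \to \cat 1$ and the invertible 2\-morphism $\pi \colon \cap \Rightarrow f \circ m$; since $f \circ m$ is already a 1\-morphism of $\free{\F}$, this is an instance of move \textbf{(i)}. Second, adjoin the snakeorators: form $\mathcal{F}_2$ from $\mathcal{F}_1$ by adding $\sn_1, \sn_2, \sn_1^\inv, \sn_2^\inv$, each with its defining equation expressing it as a composite built from $\pi$, $\mu$ and $\nu$; these are instances of move \textbf{(ii)}. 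Third, form $\E$ from $\mathcal{F}_2$ by adding $R_m, L_m, R_u, L_u, R_f, L_f$, each with its defining equation, now built from the snakeorators, $\pi^{\pm 1}$, $\alpha^{\pm 1}$, $\lambda$, $\rho$ and interchangers; these too are instances of move \textbf{(ii)}. Composing the three resulting biequivalences $\free{\F} \to \free{\mathcal{F}_1} \to \free{\mathcal{F}_2} \to \free{\E}$ then proves the lemma.

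Two points deserve care. First, the inverse snakeorators are genuine two\-sided inverses: since each $\sn_i$ is defined as a composite of the invertible 2\-morphisms $\pi$, $\mu$, $\nu$, and each $\sn_i^\inv$ as the reverse composite of their inverses, the identities $\sn_i \, \sn_i^\inv = \id = \sn_i^\inv \, \sn_i$ follow from invertibility alone and impose no new relation on the data of $\F$. More generally, because the defining equations are acyclic---cap in terms of $\F$, the snakeorators in terms of $\F$ and $\pi$, and the $R$/$L$ family in terms of these---they may be used to eliminate every new generator, confirming that no relation among the original cells is secretly imposed. The main obstacle is to check that move \textbf{(i)} yields a biequivalence of the presented \emph{monoidal} bicategories, and not merely of their underlying bicategories: one must verify that the pseudo\-inverse sending $\cap \mapsto f \circ m$ and $\pi \mapsto \id$ extends to a monoidal pseudofunctor with coherently chosen structure. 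This is precisely the content of the elementary\-move machinery of~\cite{CSPthesis}, which I would invoke rather than reprove.
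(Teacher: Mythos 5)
Your proof is correct and follows essentially the same route as the paper, which simply observes that $\E$ is obtained from $\F$ by adjoining generators together with higher cells expressing them in terms of existing cells (a \emph{simple homotopy extension}). Your factorization into elementary moves $\free \F \to \free{\mathcal F_1} \to \free{\mathcal F_2} \to \free \E$, with the check that the defining equations are acyclic and that the inverse snakeorators impose no new relations, is a careful spelling-out of exactly that argument, invoking the same machinery of~\cite{CSPthesis}.
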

\begin{proof}
A technical proof is tedious to give. A clear intuitive argument can be given, which is easy to formalize: the presentation \E is constructed by taking \F and adding new generators, equipped with higher cells that show how these new generators can be expressed in terms of existing cells. 
\end{proof}

\noindent
The reason for constructing \E is that it has more convenient formal properties, which will allow us to construct our proof.

The 2\-morphisms $R_m$, $L_m$, $R_u$, $L_u$, $R_f$ and $L_f$ can be interpreted as \textit{rotating} their domain, either to the right or the left. For this reason we make the following definition.
\begin{definition}
In \free \E, a 2\-morphism is \emph{rotational} if it is composed only from $R_m$, $L_m$, $R_u$, $L_u$, $R_f$, $L_f$, $\sn_1$, $\sn_1^\inv$, $\sn_2$, $\sn_2^\inv$ and interchangers; we write $\stackrel R \twoheadrightarrow$ to denote a composite of rotational generators of length at least 0.
\end{definition}

\begin{lemma}
In \free \E, the snake maps satisfy the swallowtail equations:
\begin{align*}
\id \quad &= \quad\!\!\! \tikzpng[scale=\myscale]{cup} {\xto {\sn_1^\inv}} \tikzpng[scale=\myscale]{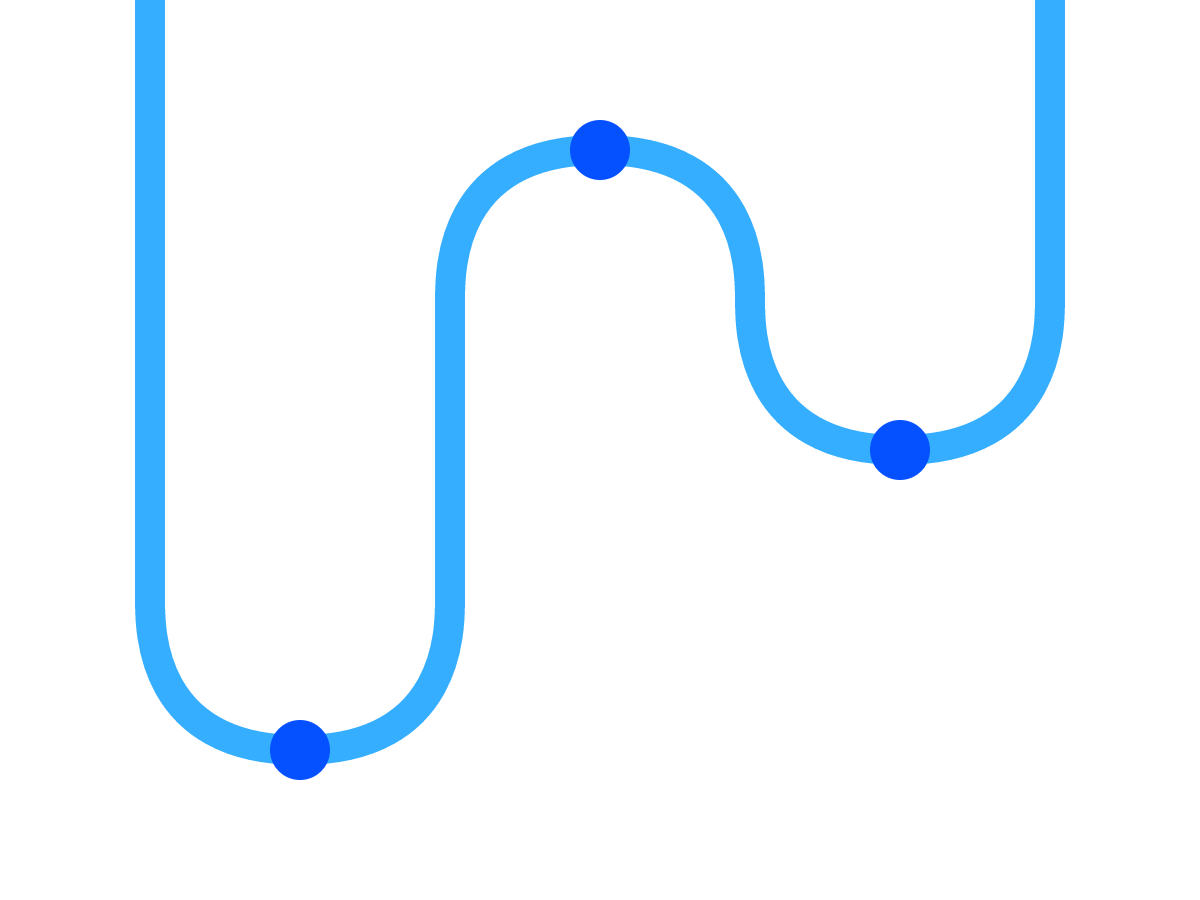} {\xto \sim} \tikzpng[xscale=-1, scale=\myscale]{swallowtail3} {\xto {\sn_2}} \tikzpng[scale=\myscale]{cup}
\\
\id \quad &= \quad\!\!\! \tikzpng[scale=\myscale, yscale=-1]{cup} {\xto {\sn_1^\inv}} \tikzpng[scale=\myscale, yscale=-1, xscale=-1]{swallowtail3} {\xto \sim} \tikzpng[xscale=1, scale=\myscale, yscale=-1]{swallowtail3} {\xto {\sn_2}} \tikzpng[scale=\myscale, yscale=-1]{cup}
\end{align*}
\end{lemma}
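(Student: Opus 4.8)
The plan is to reduce both snake swallowtail equations to the swallowtail equations of \F from Definition~\ref{def:sdfrobenius}, by unfolding the defining relations of the snakeorators in \E. Since $\sn_1$ and $\sn_2$ are built, by construction, entirely from $\pi$, $\pi^\inv$, $\mu$ and $\nu$, each snake swallowtail should collapse onto a genuine swallowtail of \F once the auxiliary $\pi$-steps have been eliminated.

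Concretely, for the first equation I would expand the two snake maps using their defining equations: the map $\sn_1^\inv$ unfolds as $\mu^\inv$ followed by $\pi^\inv$, while $\sn_2$ unfolds as $\pi$ followed by $\nu$. The composite $\cup \xto{\sn_1^\inv} \cdots \xto{\sim} \cdots \xto{\sn_2} \cup$ thereby becomes a longer sequence whose outermost steps are $\mu^\inv$ and $\nu$ and whose inner steps are $\pi^\inv$, an interchanger, and $\pi$. The key move is to use naturality of the interchanger to slide $\pi^\inv$ across it, bringing the two $\pi$-steps into adjacency, whereupon $\pi^\inv$ followed by $\pi$ cancels by invertibility of $\pi$. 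What survives is precisely the first swallowtail equation of \F, which equals $\id$ in \free\F, and hence in \free\E by Lemma~\ref{lem:equivalentpresentations}.

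For the second equation, the vertical mirror image relating $\cap$ to itself, I would repeat the identical unfolding, now invoking the second swallowtail equation of \F (the one stated on the multiplication form) together with the defining relation of $\pi$, which is exactly what connects $\cap$ to that form. The two cases are otherwise structurally the same.

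The step I expect to be the main obstacle is the region bookkeeping around the cancellation: one must check that, after unfolding, the two $\pi$-steps really do act on the same strand and are strictly mutually inverse, so that interchanger naturality applies and they cancel without disturbing the surrounding $\mu^\inv$ and $\nu$, and that the interchanger left behind coincides with the one appearing in the target swallowtail of \F. This local whiskering computation is precisely the kind of thing most reliably discharged in \emph{Globular}, where the authors presumably carry it out.
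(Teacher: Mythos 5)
Your proposal is correct and matches the paper's (very terse) argument: the paper likewise reduces the lemma to the swallowtail equations of \F, with the unfolding of $\sn_1^\inv = \mu^\inv;\pi^\inv$ and $\sn_2 = \pi;\nu$ and the cancellation of the $\pi$-steps via interchanger naturality carried out in the cited \emph{Globular} formalization, exactly as you anticipate. One small simplification: you do not need Lemma~\ref{lem:equivalentpresentations} to import the \F swallowtails into $\free\E$, since \E extends \F and so contains those equations outright.
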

\begin{proof}
Follows easily from the swallowtail equations that form part of the Frobenius presentation \F. See \glob, 5-cells \textit{``Pf: Swallowtail 1''} and \textit{``Pf: Swallowtail~2''}.
\end{proof}

\begin{lemma}
\label{lem:inversepairs}
In \free \E, the elements of each pair $(R_m, L_m)$, $(R_u,L_f)$, $(L_u, R_f)$ are mutually inverse, up to interchangers and snake maps; that is, the following equations hold:
\def\sc{0.15}
\begin{align*}
\hspace{-0.1cm}\id \quad&=\quad
{\scalepng[\sc]{bigmult}}
{\xto{L_m}}
\scalepng[\sc]{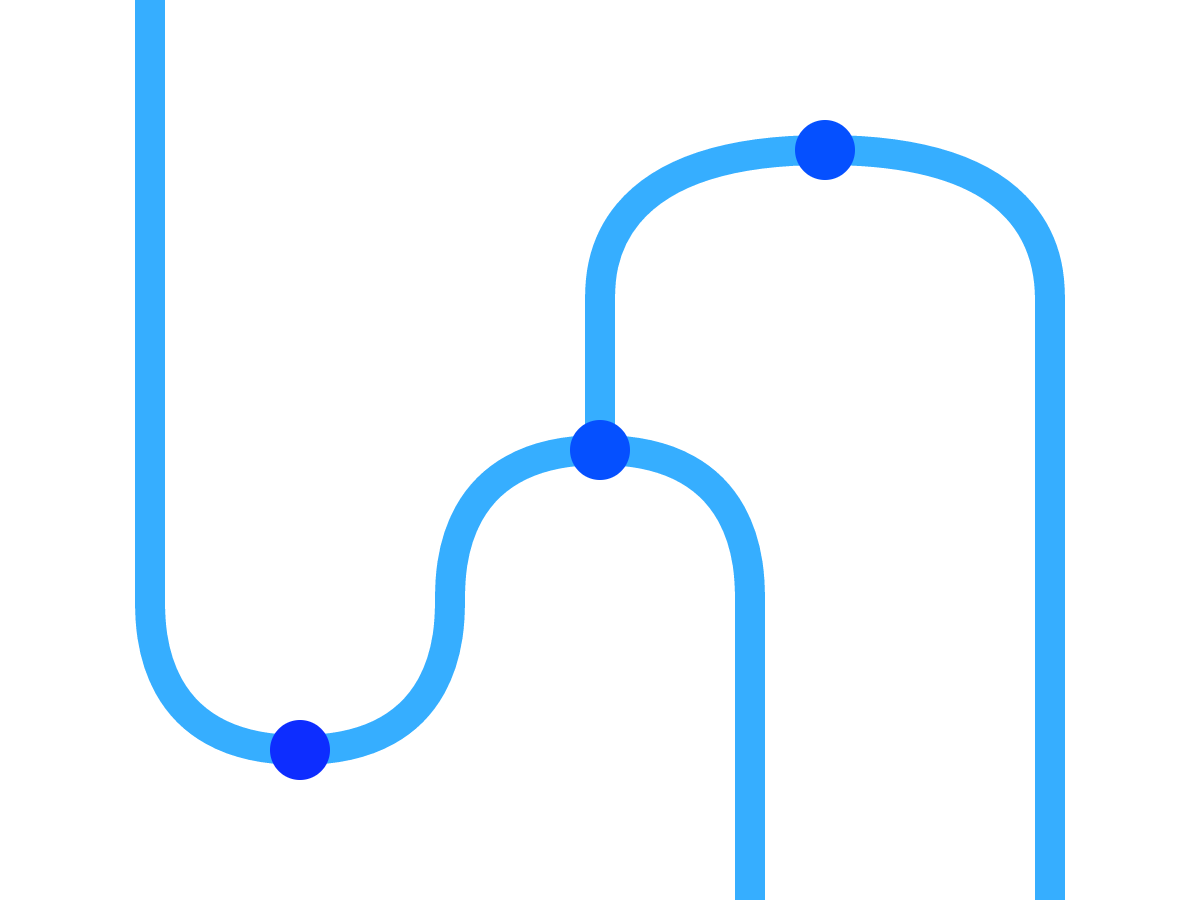}
{\xto{R_m}}
\scalepng[\sc]{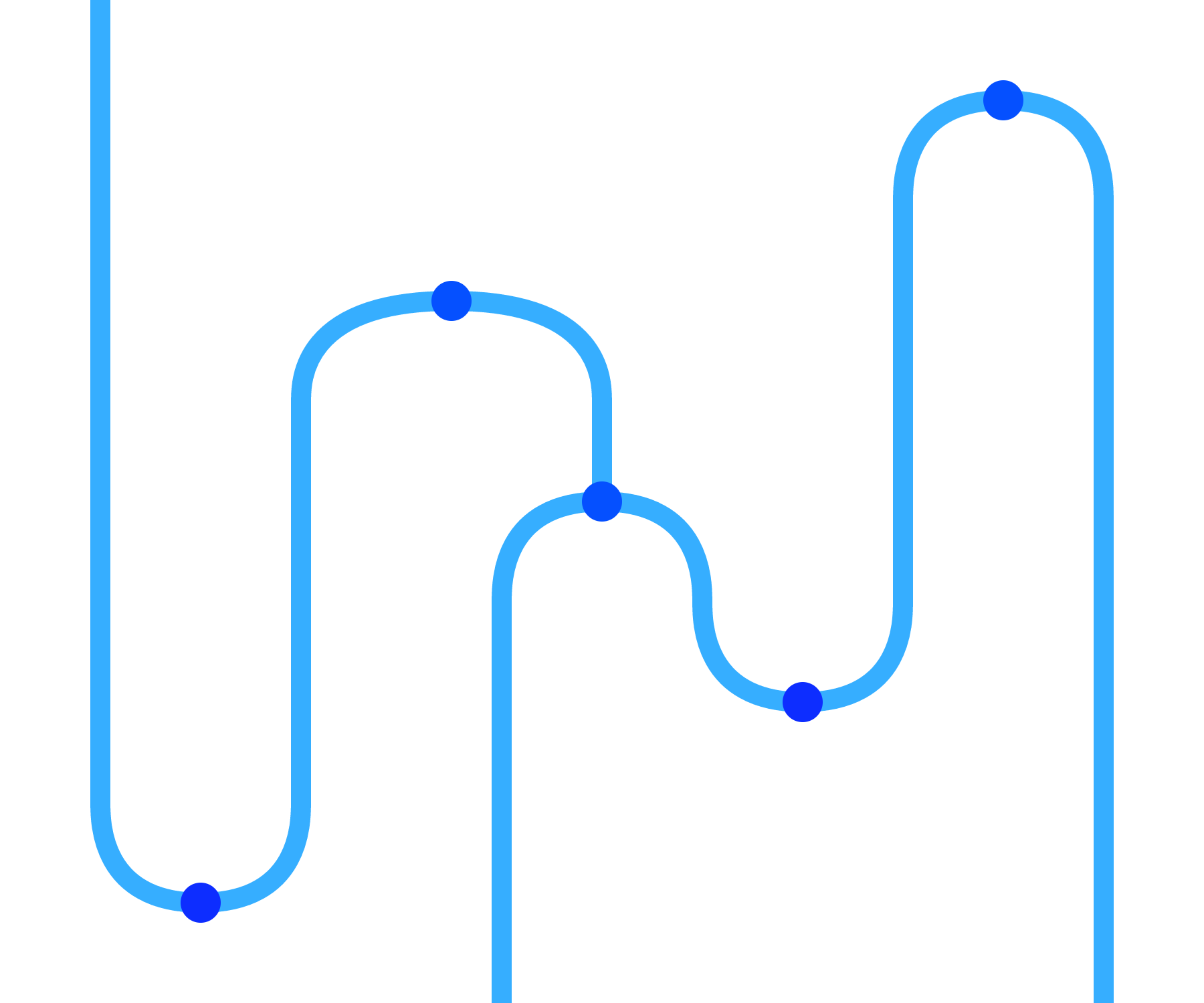}
{\xto{\sim}}
\scalepng[\sc]{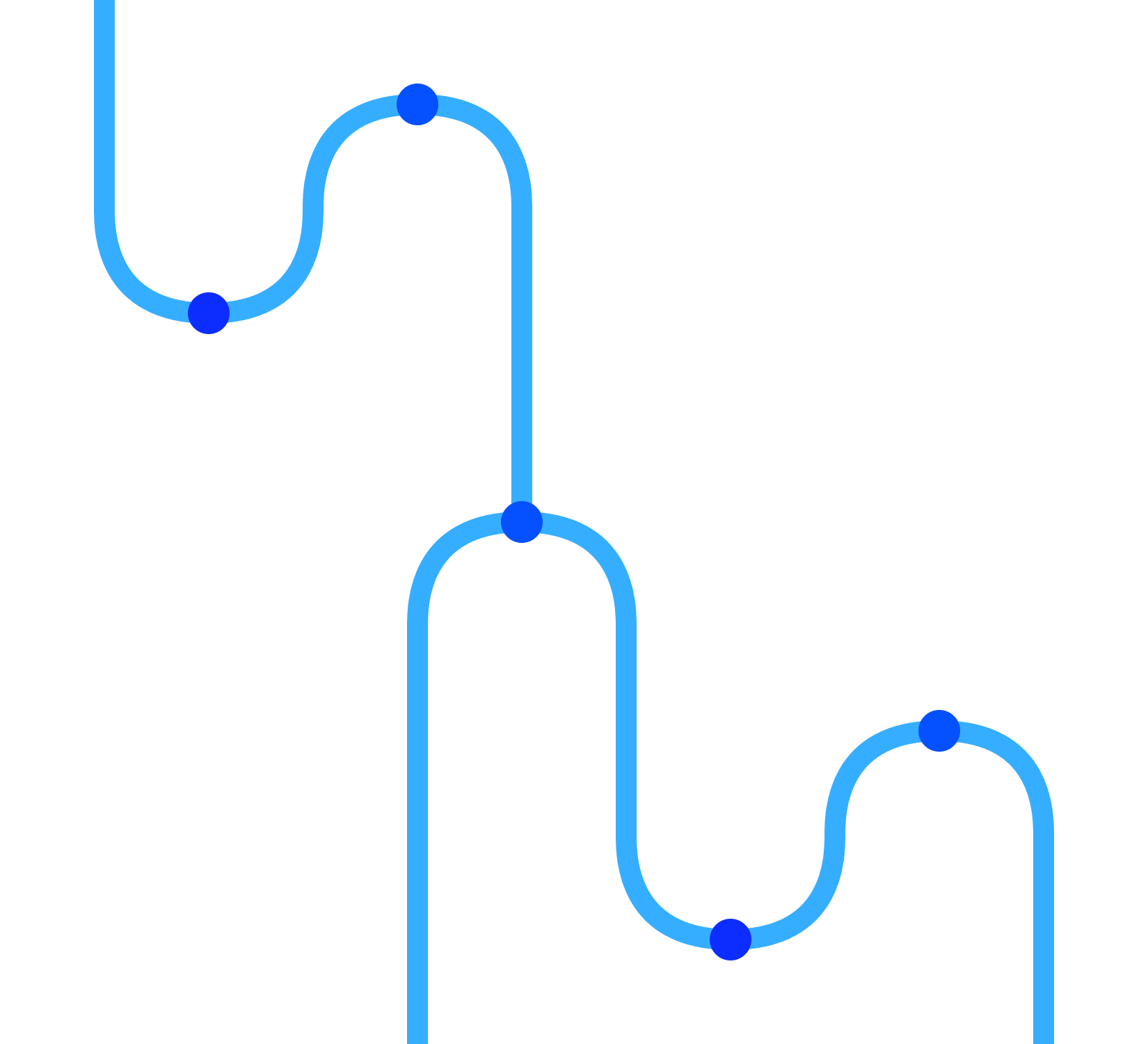}
{\xto{\sn_2}}
{\scalepng[\sc]{bigmult}}
\hspace{-0.2cm}
\\
\hspace{-1cm}\id\quad&=\quad
{\scalepng[\sc]{bigmult}}
{\xto{R_m}}
\scalepng[\sc]{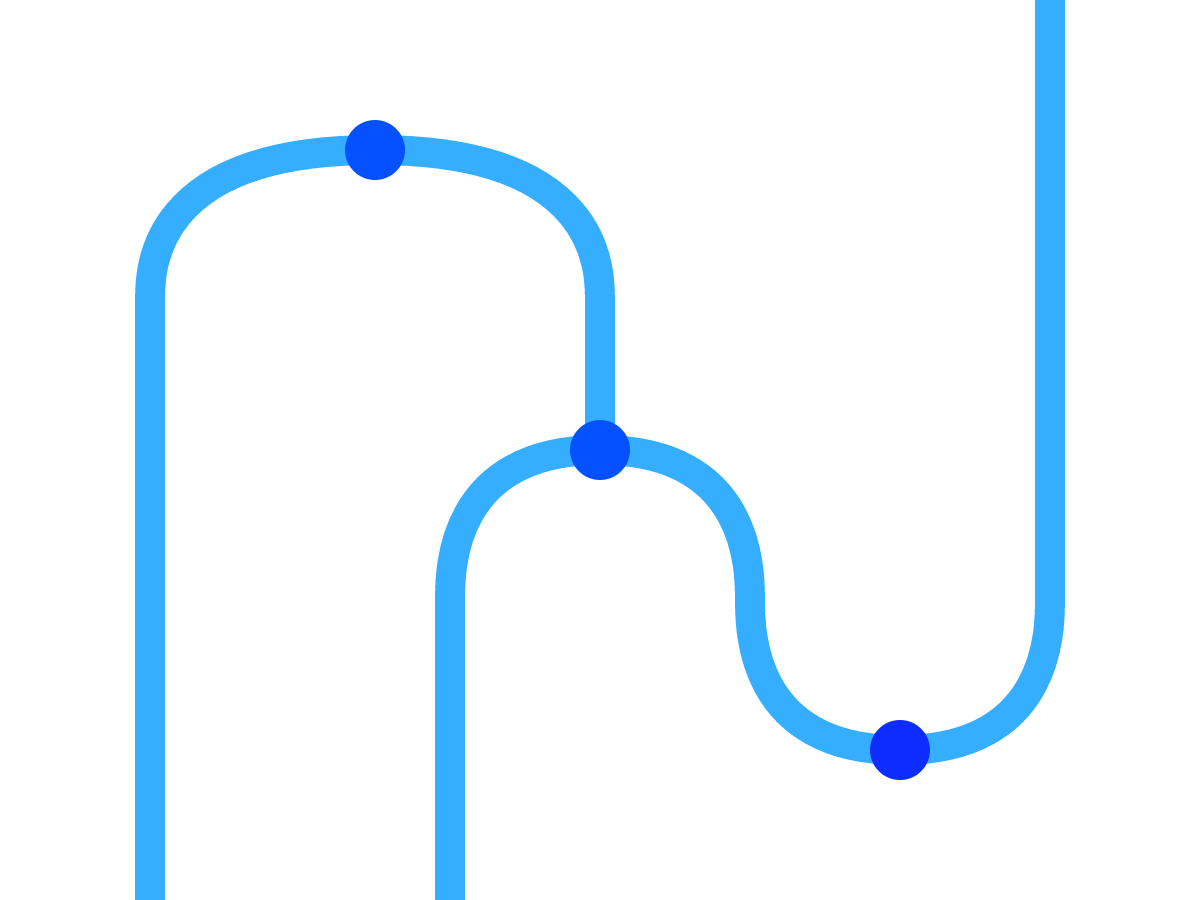}
{\xto{L_m}}
\scalepng[\sc]{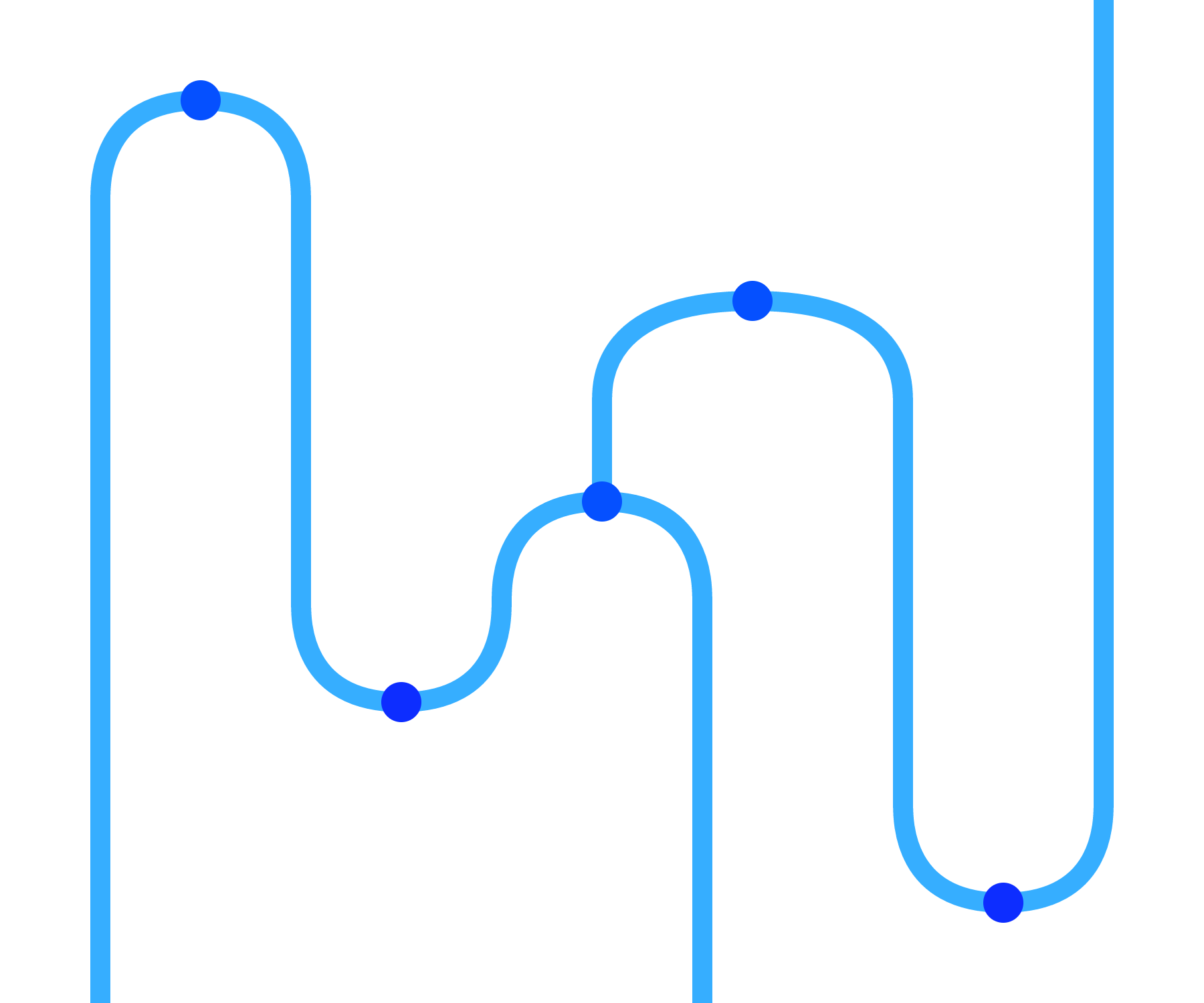}
{\xto{\sim}}
\scalepng[\sc]{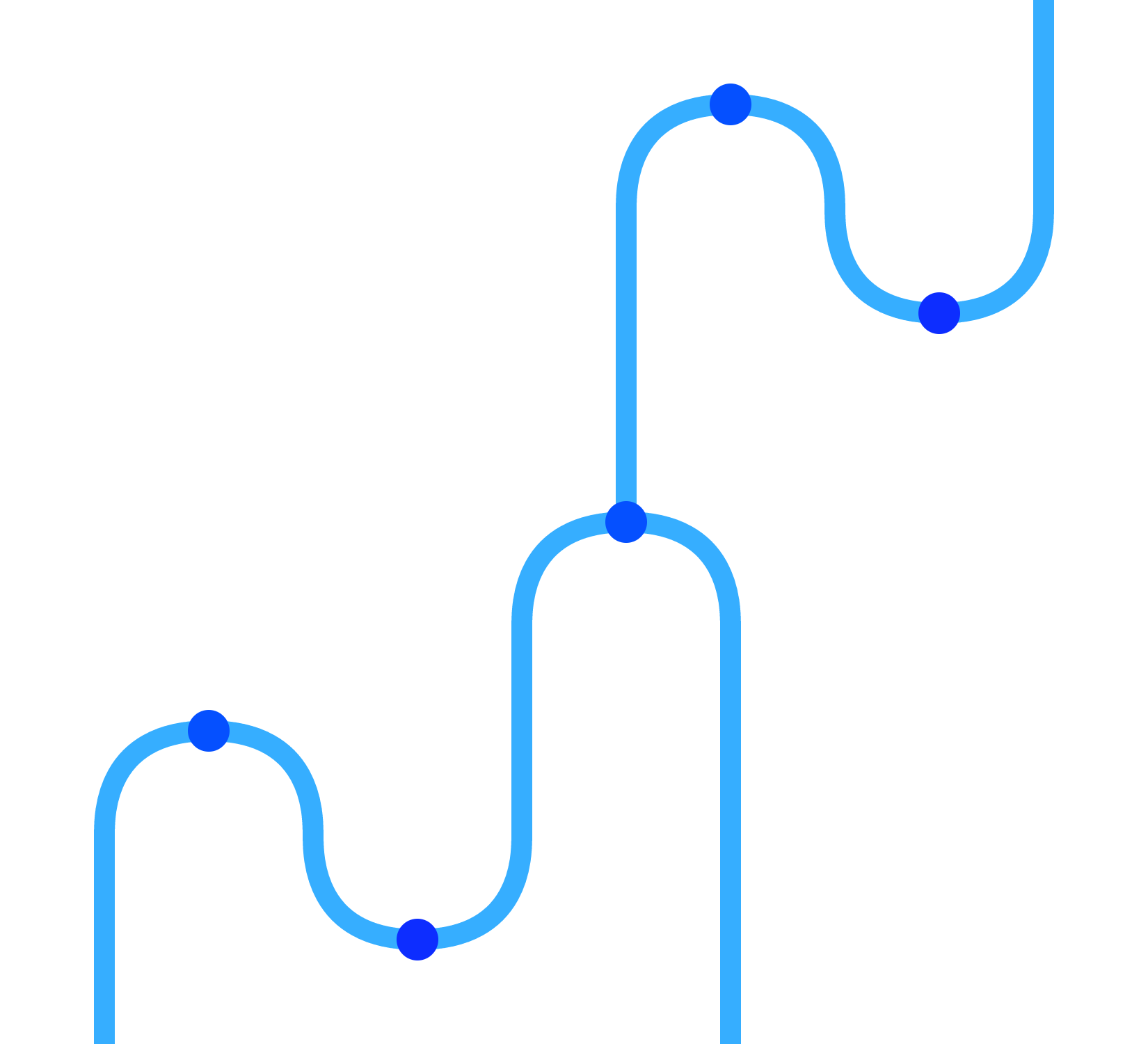}
{\xto{\sn_1}}
{\scalepng[\sc]{bigmult}}
\hspace{-0.2cm}
\\
\hspace{-1cm}\id\quad&=\quad
{\scalepng[\sc]{bigunit}}
{\xto{R_u}}
\scalepng[\sc]{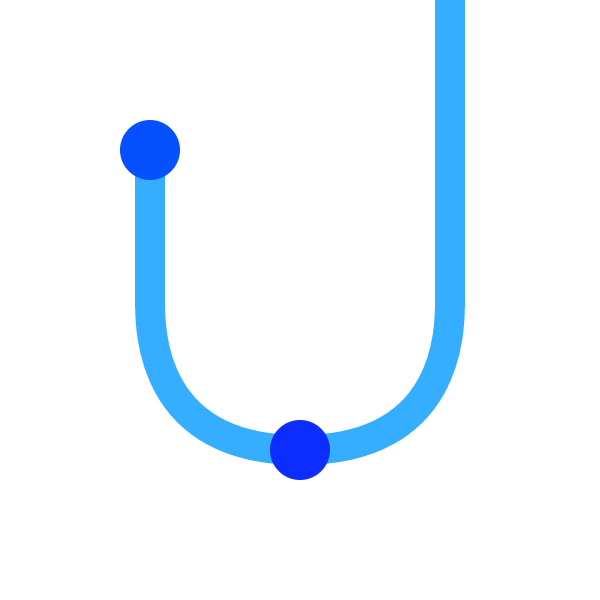}
{\xto{L_f}}
\scalepng[\sc]{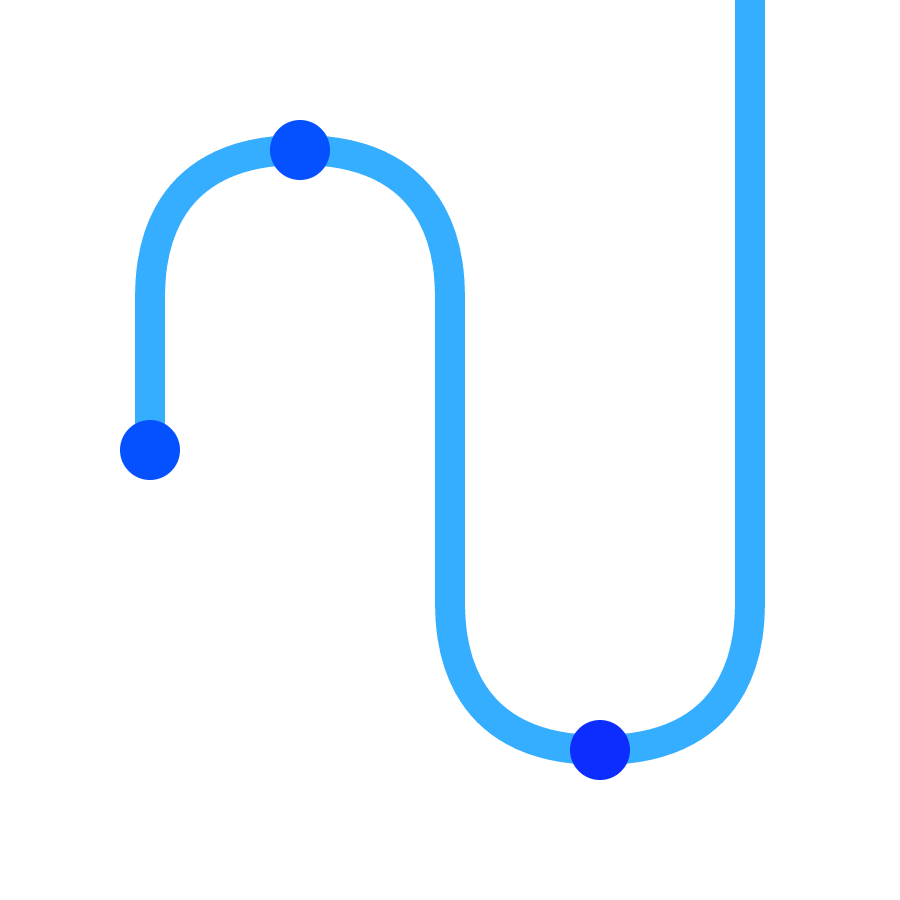}
{\xto{\sim}}
\scalepng[\sc]{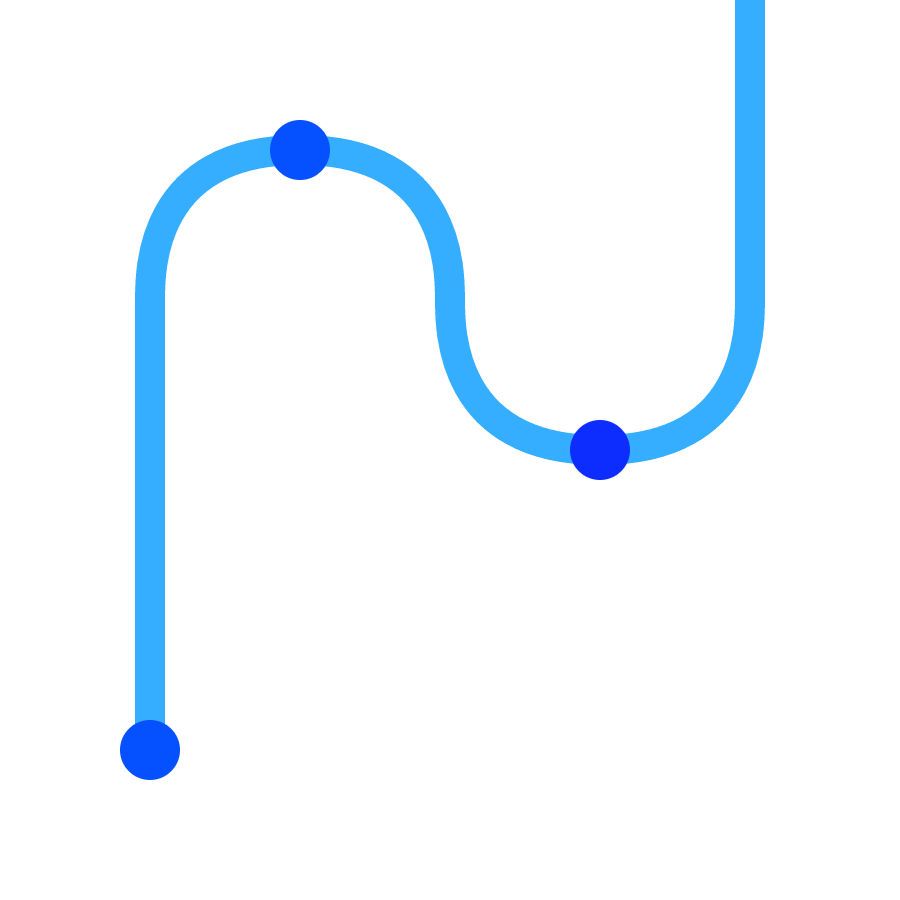}
{\xto{\sn_1}}
{\scalepng[\sc]{bigunit}}
\\
\hspace{-1cm}\id\quad&=\quad
{\scalepng[\sc]{bigunit}}
{\xto{L_u}}
\tikzpng[xscale=-1, scale=1.5]{RuLf2}
{\xto{R_f}}
\tikzpng[xscale=-1, scale=1.5]{RuLf3}
{\xto{\sim}}
\tikzpng[xscale=-1, scale=1.5]{RuLf4}
{\xto{\sn_2}}
{\scalepng[\sc, scale=1]{bigunit}}
\end{align*}
\end{lemma}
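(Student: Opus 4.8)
The plan is to prove each of the four equations by unfolding the defining equations of the extended presentation \E, cancelling the resulting inverse pairs of invertible cells, and then invoking the swallowtail equations of the previous lemma to dispose of the leftover snakeorators. Every rotational generator $R_m, L_m, R_u, L_u, R_f, L_f$ is by definition a composite of interchangers with the invertible cells $\pi^{\pm 1}, \alpha^{\pm 1}, \lambda, \rho$ and the inverse snakeorators $\sn_1^\inv, \sn_2^\inv$; once these definitions are substituted, I expect the long composite on each right-hand side to collapse almost entirely through cancellation.

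Take the first equation. Substituting the definitions exhibits $L_m$ as the composite $\sn_2^\inv, \sim, \pi, \alpha^\inv, \pi^\inv$, followed by $R_m$ as $\sn_1^\inv, \sim, \pi, \alpha, \pi^\inv$, and then closed off by the displayed interchanger and the trailing $\sn_2$. The two rotational factors are built from mutually inverse associator data: $L_m$ contributes an $\alpha^\inv$ where $R_m$ contributes an $\alpha$, each flanked by a $\pi$ and a $\pi^\inv$. Since the snakeorators and the $\pi/\alpha$ cells act on disjoint regions of the diagram, I would use interchangers to slide the cells into contact, at which point the $\pi\,\pi^\inv$ and $\alpha\,\alpha^\inv$ pairs annihilate. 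This is precisely why the statement claims inverseness only \emph{up to interchangers}.

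After these cancellations the only surviving non-interchanger cells are snakeorators: the inverse snakeorators drawn from the two rotational factors, together with the trailing snakeorator of the claimed equation. I would recognize this residual configuration as an instance of the swallowtail equations established in the preceding lemma, which collapse it to the identity. The second equation is handled identically with the roles of $\sn_1$ and $\sn_2$ exchanged, and the two unit/counit equations for the pairs $(R_u, L_f)$ and $(L_u, R_f)$ are shorter still: there the associators $\alpha^{\pm 1}$ are replaced by the unitors $\lambda, \rho$, which cancel against the $\lambda^\inv, \rho^\inv$ occurring in the definitions of $L_f$ and $R_f$, once more leaving a snakeorator configuration to be resolved by swallowtail.

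The main obstacle is not any individual identity but the three-dimensional bookkeeping: one must track the orientation of each strand and the exact placement of every $\pi, \alpha, \lambda, \rho$ across the mirror-image ($x$-flipped) pieces, so that the inverse pairs really are brought into contact by interchangers and the residual snakes are oriented correctly to invoke the swallowtail. Because such long chains of interchangers are error-prone by hand, I would verify each of the four computations in the proof assistant \emph{Globular}, as is done throughout this section.
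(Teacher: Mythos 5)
The paper itself gives no hand proof of this lemma: its entire proof is the citation of the formalized Globular 5\-cells \emph{``Pf: LmRm=1''}, \emph{``Pf: RmLm=1''}, \emph{``Pf: RuLf=1''} and \emph{``Pf: LuRf=1''}. So your closing plan to verify the four computations in \emph{Globular} is, in effect, the paper's actual proof, and the question is whether your hand-level cancellation story is a correct account of why the formal proofs go through. For the unit pairs it essentially is: unfolding $R_u = \sn_1^\inv; \sim; \pi; \lambda$ against $L_f = \lambda^\inv; \pi^\inv$, the $\lambda$ and $\lambda^\inv$ really are inverse cells in the same whiskering context, so $\pi;\lambda;\lambda^\inv;\pi^\inv$ collapses, and the remaining $\sn_1^\inv;\sim;\sn_1$ closes by naturality of the interchanger and invertibility of the snakeorator --- note that no swallowtail is needed here, contrary to what you claim at the end of your third paragraph.

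The genuine gap is in the $(R_m, L_m)$ cases. Your key step asserts that ``the snakeorators and the $\pi/\alpha$ cells act on disjoint regions,'' so that interchangers slide the $\alpha\,\alpha^\inv$ and $\pi\,\pi^\inv$ pairs into contact. Unfold the composite and this fails: the $\alpha^\inv$ inside $L_m$ reassociates $m''(1 \boxtimes m)$ into $m_{\mathrm{out}}(m_{\mathrm{in}} \boxtimes 1)$, and the $\sn_1^\inv$ opening $R_m$ then inserts its cup--cap pair precisely on the wire joining $m_{\mathrm{in}}$ to $m_{\mathrm{out}}$ --- a wire that does not exist in the source of $\alpha^\inv$ --- while the subsequent $\pi$ of $R_m$ creates a \emph{new} multiplication vertex there. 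Consequently the $\alpha$ of $R_m$ and the $\alpha^\inv$ of $L_m$ act on different pairs of vertices, separated by cells that cannot be commuted past either associator by naturality, so no sequence of interchangers brings them into cancelling position. What does cancel by naturality are the $\pi/\pi^\inv$ pairs between adjacent rotational cells and the $\pi$'s hidden in the snakeorator definitions ($\sn_i = \pi;\mu$ or $\pi;\nu$); after those cancellations one is left with an interleaved word in $\mu^{\pm 1}$, $\nu^{\pm 1}$, $\alpha^{\pm 1}$ --- roughly $\nu^\inv; \alpha^\inv; \mu^\inv; \alpha; \ldots$ --- whose contraction to the identity is the real content of the lemma. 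Since $\mu$ and $\nu$ are independent generators tied together only by the swallowtail equations, those equations enter in an essential, interleaved way, not as the final disposal of leftover snakeorators after an annihilation phase. Your two-phase outline (cancel inverse pairs, then swallowtail) therefore does not constitute a proof of the first two equations, which is presumably why the authors settled the lemma by machine verification rather than by hand.
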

\begin{proof}
See \glob, 5\-cells \emph{``Pf: LmRm=1''}, \emph{``Pf: RmLm=1''}, \emph{``Pf: RuLf=1''} and \emph{``Pf: LuRf=1''}.
\end{proof}

\begin{lemma}
\label{lem:killmunu}
In $\free \E$, the 2\-morphisms $\mu$ and $\nu$ can be decomposed as follows:
\begin{align*}
\mu \quad&{=}\quad
\tikzpng[scale=1.5]{mu}
{\xto {L_m,L_f}}
\tikzpng[scale=1.5]{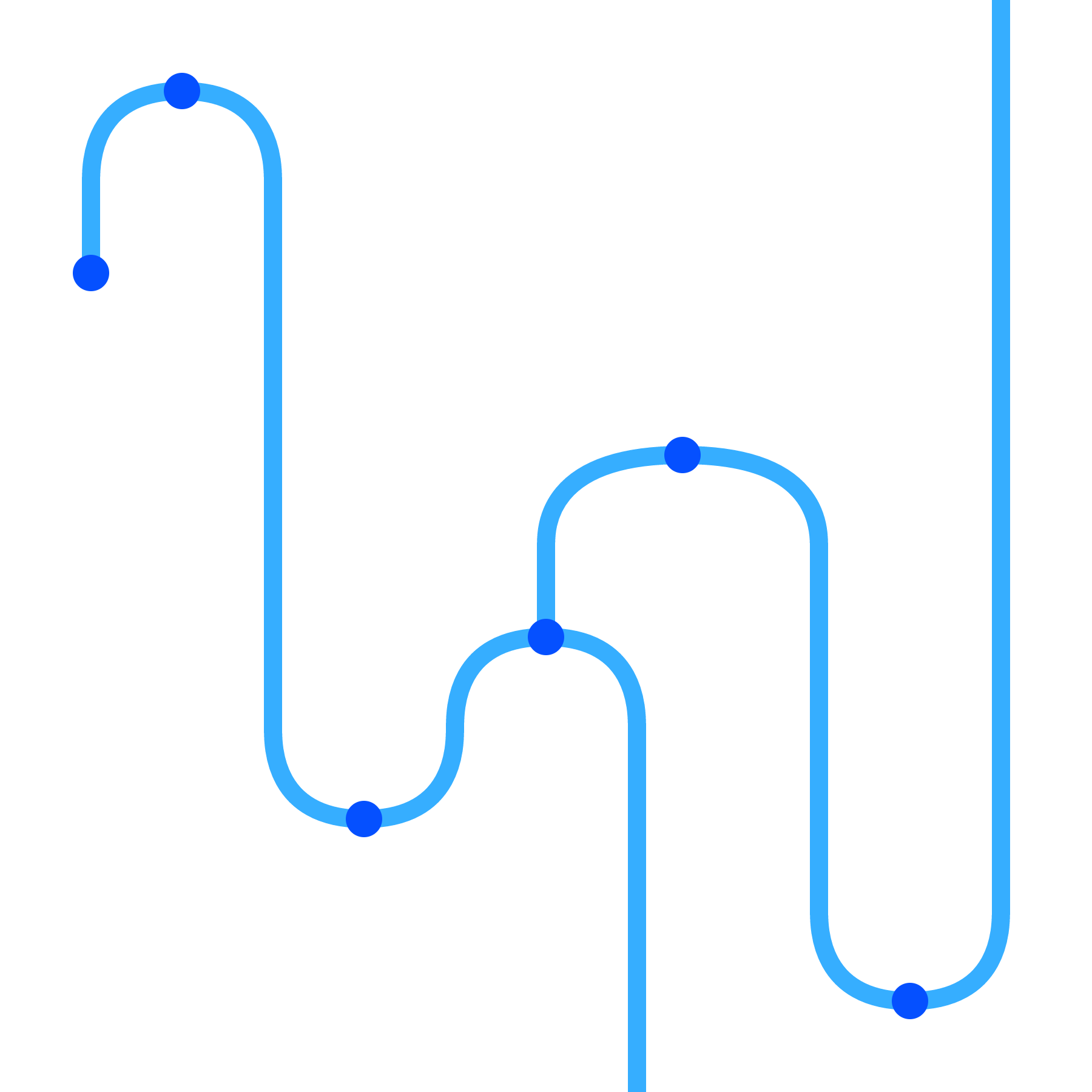}
{\xto {\sim}}
\tikzpng[scale=1.5]{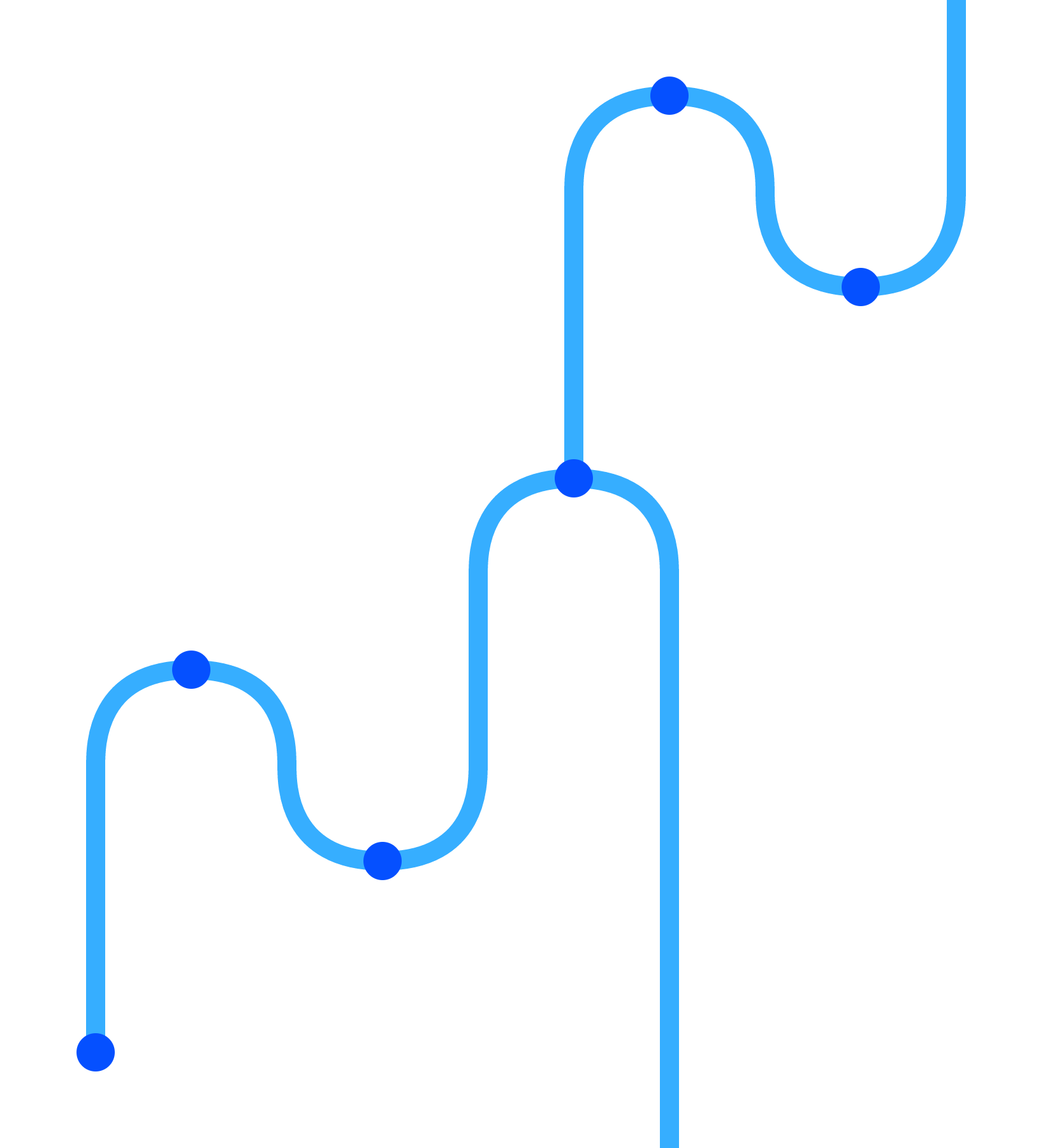}
\!\!\!\!{\xto {\sn_1}}
\tikzpng[scale=1.5]{lambda_source}
{\xto {\lambda}}
\tikzpng[scale=1.5]{bigidentity}
\\
\nu\quad&{=}\quad
\tikzpng[scale=1.5]{nu}
{\xto {R_m,R_f}}
\tikzpng[scale=1.5,xscale=-1]{mudecomp1}
{\xto {\sim}}
\tikzpng[scale=1.5,xscale=-1]{mudecomp2}
\!\!\!\!{\xto {\sn_2}}
\tikzpng[scale=1.5,xscale=-1]{lambda_source}
{\xto {\rho}}
\tikzpng[scale=1.5,xscale=-1]{bigidentity}
\end{align*}
\end{lemma}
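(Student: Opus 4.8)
The plan is to reduce the two claimed identities to equations between composites of the \emph{primitive} generating 2-cells of $\F$, by unfolding every compound generator that appears on the right-hand sides. Concretely, I would substitute the definitions of $L_m$, $L_f$ and $\sn_1$ (and, for the second line, of $R_m$, $R_f$ and $\sn_2$) in terms of $\mu$, $\nu$, $\pi$, $\alpha$, $\lambda$, $\rho$ and interchangers. After this substitution the right-hand side of each line becomes an explicit finite vertical composite of primitive 2-cells, and the assertion becomes a coherence equation in $\free\E$ whose only admissible ingredients are the pentagon, triangle and swallowtail equations of $\F$ together with the interchange and naturality laws of a monoidal bicategory.

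The observation driving the simplification is that, by its very definition, $\sn_1$ is precisely $\mu$ with a $\pi$-rotation prepended; equivalently, $\mu$ is recovered from $\sn_1$ by undoing that $\pi$-rotation in the relevant local region. The content of the first equation is therefore that the prefix $L_m, L_f$ followed by the interchangers manufactures exactly the rotation data that the $\sn_1$ step then re-absorbs, with $\lambda$ accounting for the residual unitor. I would make this precise by cancelling the rotation of the multiplication against its inverse via \autoref{lem:inversepairs}, then straightening the remaining snake using the swallowtail equations for the snake maps established above, after which the composite collapses to the short sequence built from $\mu$, $\pi^{\inv}$ and $\lambda$ that reproduces $\mu$ exactly.

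The second equation, for $\nu$, need not be proved independently: it is the left–right mirror image of the first. Reflecting every string diagram interchanges $L_m \leftrightarrow R_m$, $L_f \leftrightarrow R_f$, $\sn_1 \leftrightarrow \sn_2$, $\lambda \leftrightarrow \rho$ and $\mu \leftrightarrow \nu$, and exchanges $\alpha$ with $\alpha^{\inv}$ as usual; since the defining equations of $\F$ are stable under this reflection, the $\nu$-identity follows from the $\mu$-identity with no further work.

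The genuine difficulty here is combinatorial rather than conceptual: each primitive 2-cell acts only on a small band of strands, and the repeated interchangers mean that the cancellations above go through only if one tracks precisely which strands every cell touches. A careless expansion produces a composite of well over a dozen cells whose regions must be matched exactly before \autoref{lem:inversepairs} and the swallowtail equations can be brought to bear. This is exactly the kind of long but finite bookkeeping for which machine support is appropriate, so I would confirm the reduced equation in \emph{Globular}, in keeping with the verification of the surrounding lemmas.
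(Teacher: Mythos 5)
Your proposal is correct and takes essentially the same route as the paper: the paper's entire proof of this lemma is a citation to the machine-checked rewrites in \emph{Globular} (5-cells \emph{``Pf: Mu Decomposition''} and \emph{``Pf: Nu Decomposition''}), which is precisely the unfold-the-compound-generators-and-verify computation you sketch and ultimately defer to machine support yourself. Your one genuine addition, deriving the $\nu$ equation from the $\mu$ equation by left--right reflection (swapping $L_m \leftrightarrow R_m$, $L_f \leftrightarrow R_f$, $\sn_1 \leftrightarrow \sn_2$, $\lambda \leftrightarrow \rho$, $\mu \leftrightarrow \nu$ and $\alpha \leftrightarrow \alpha^{\inv}$, under which the presentation \E is stable), is a small economy the paper does not exploit, since it simply formalizes both equations separately.
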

\begin{proof}
See \glob, 5-cells \emph{``Pf: Mu Decomposition''} and \emph{``Pf: Nu Decomposition''}.
\end{proof}

For our result, an important process is the elimination of adjacent cup-cap pairs in a string diagram.
\begin{definition}
\label{def:eliminable}
For a 1\-morphism $X$ in \free \E, a chosen $(\cup, \cap)$ pair are \textit{eliminable} if they match one of the following patterns:
\tikzset{sc/.style={scale=0.512}}
\begin{calign}
\nonumber
\begin{tz}[sc]
\node [anchor=south west, inner sep=0pt] at (0,0) {\tikzpng[scale=2]{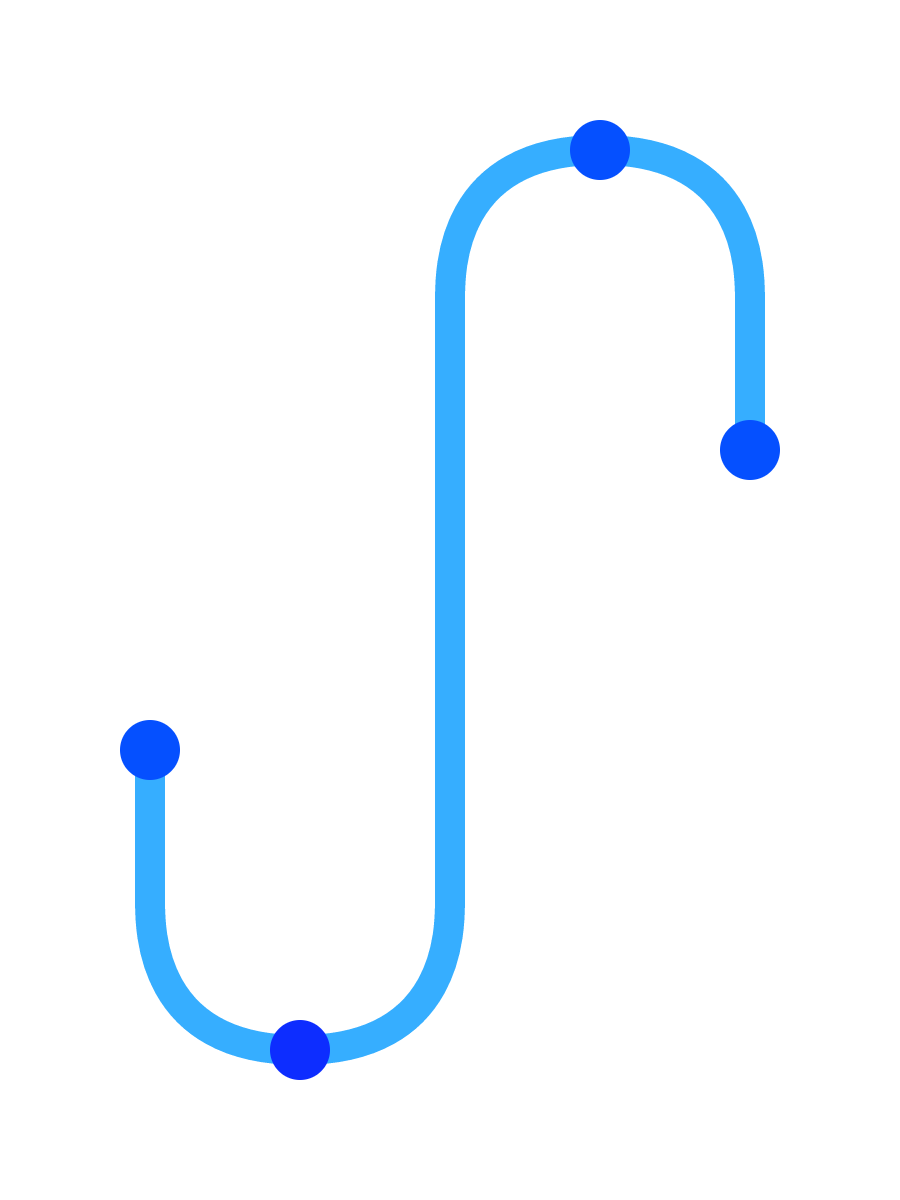}};
\draw [fill=white, draw=none] (0,1) rectangle +(1,1);
\draw [lightblue, line width=1.5pt, dotted] (0.5,0.95) to +(0,0.75);
\draw [fill=white, draw=none] (2,2) rectangle +(1,1);
\draw [lightblue, line width=1.5pt, dotted] (2.5,3) to +(0,-0.75);
\end{tz}
&
\begin{tz}[sc]
\node [anchor=south west, inner sep=0pt] at (0,0) {\tikzpng[scale=2, xscale=-1]{eliminable}};
\draw [fill=white, draw=none] (2,1) rectangle +(1,1);
\draw [lightblue, line width=1.5pt, dotted] (2.505,0.95) to +(0,0.75);
\draw [fill=white, draw=none] (0,2) rectangle +(1,1);
\draw [lightblue, line width=1.5pt, dotted] (0.51,3) to +(0,-0.75);
\end{tz}
\end{calign}
That is, they are directly connected by a central straight wire, and turn in opposite directions. The surrounding diagram may be nontrivial, and the $\cup$ and $\cap$ may not have adjacent heights.
\end{definition}

\begin{definition}
\label{def:srs}
For a 1\-morphism $X$ in \free \E, with a chosen eliminable cup and cap pair, the \emph{snake removal scheme} $X \to X'$ is the following sequence of 2\-morphisms:
\begin{enumerate}
\item Identify the obstructing generators, which are the components of the diagram lying \textit{between} the $\cup$ and $\cap$ in height.
\item Gather obstructing generators into maximal groups, with respect to whether they are left or the right of the central wire.
\item Interchange groups vertically beyond the $\cup$ or $\cap$.
\item Apply the $\sn_1$ or $\sn_2$ map.
\end{enumerate}
The following graphics explain the scheme. We begin with a central wire with an eliminable cup and cup. First, we identify the obstructing generators, and we gather them into groups as large as possible, which we draw in red to the left of the central wire and in green to the right, giving the first graphic below. Then we interchange red groups up and green groups down, giving in the third graphic. Finally we cancel the cup and cap, giving the final graphic.
\tikzset{sc/.style={scale=0.377, scale=1.333}}
\tikzset{leftbox/.style={fill=red, draw=lightblue, line width=1pt}}
\tikzset{rightbox/.style={fill=green, draw=lightblue, line width=1pt}}
\[
\begin{tz}[sc]
\node (1) [anchor=south west, inner sep=0pt] at (0,0) {\tikzpng[scale=2]{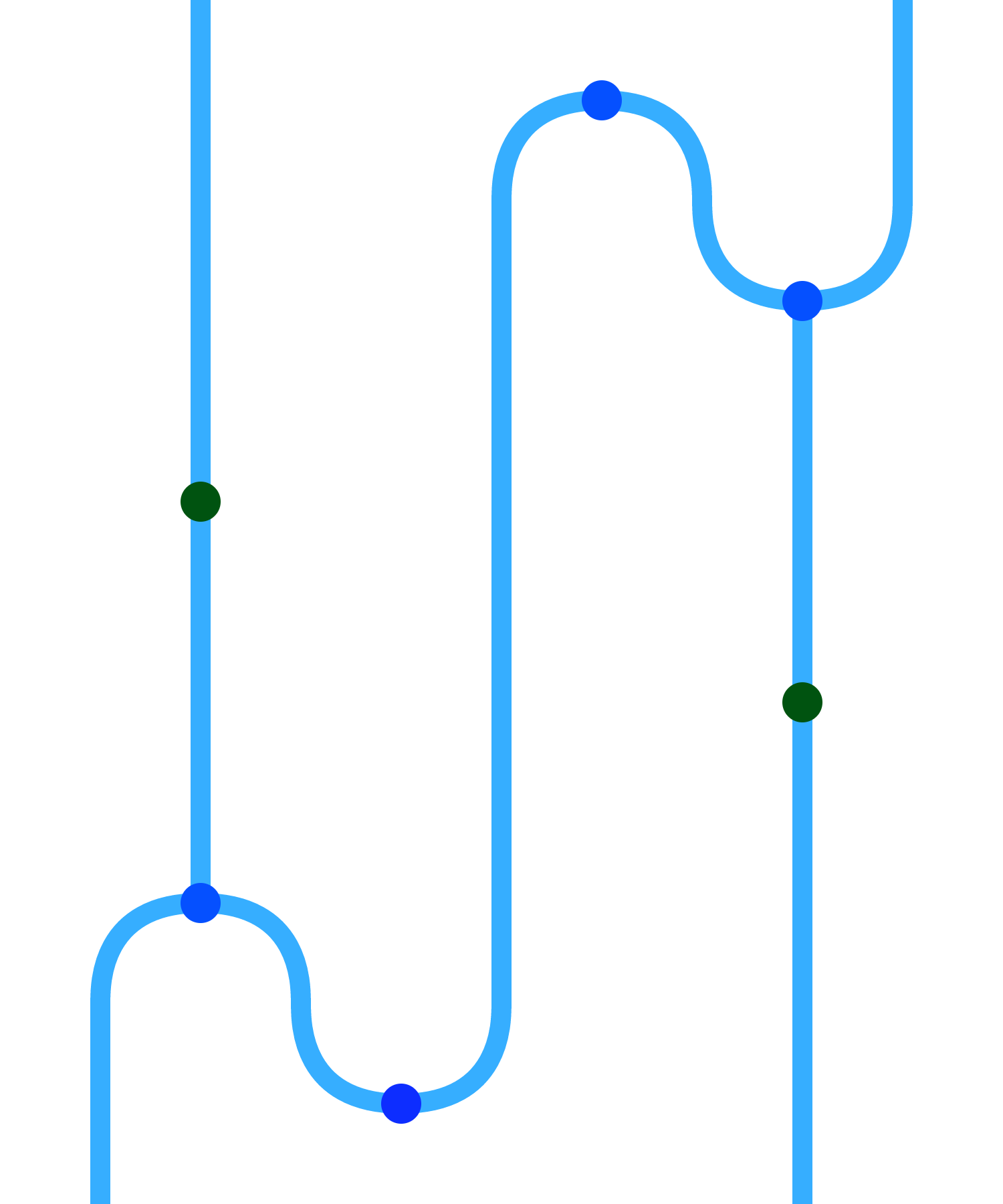}};
\draw [leftbox] (0.25,1.1) rectangle +(1.5,0.8);
\draw [leftbox] (0.25,3.1) rectangle +(1.5,.8);
\draw [rightbox] (3.25,2.1) rectangle +(1.5,.8);
\draw [rightbox] (3.25,4.1) rectangle +(1.5,.8);
\end{tz}
\ignore{{\xto\sim}
\begin{tz}[sc]
\node (1) [anchor=south west, inner sep=0pt] at (0,0) {\tikzpng[scale=2]{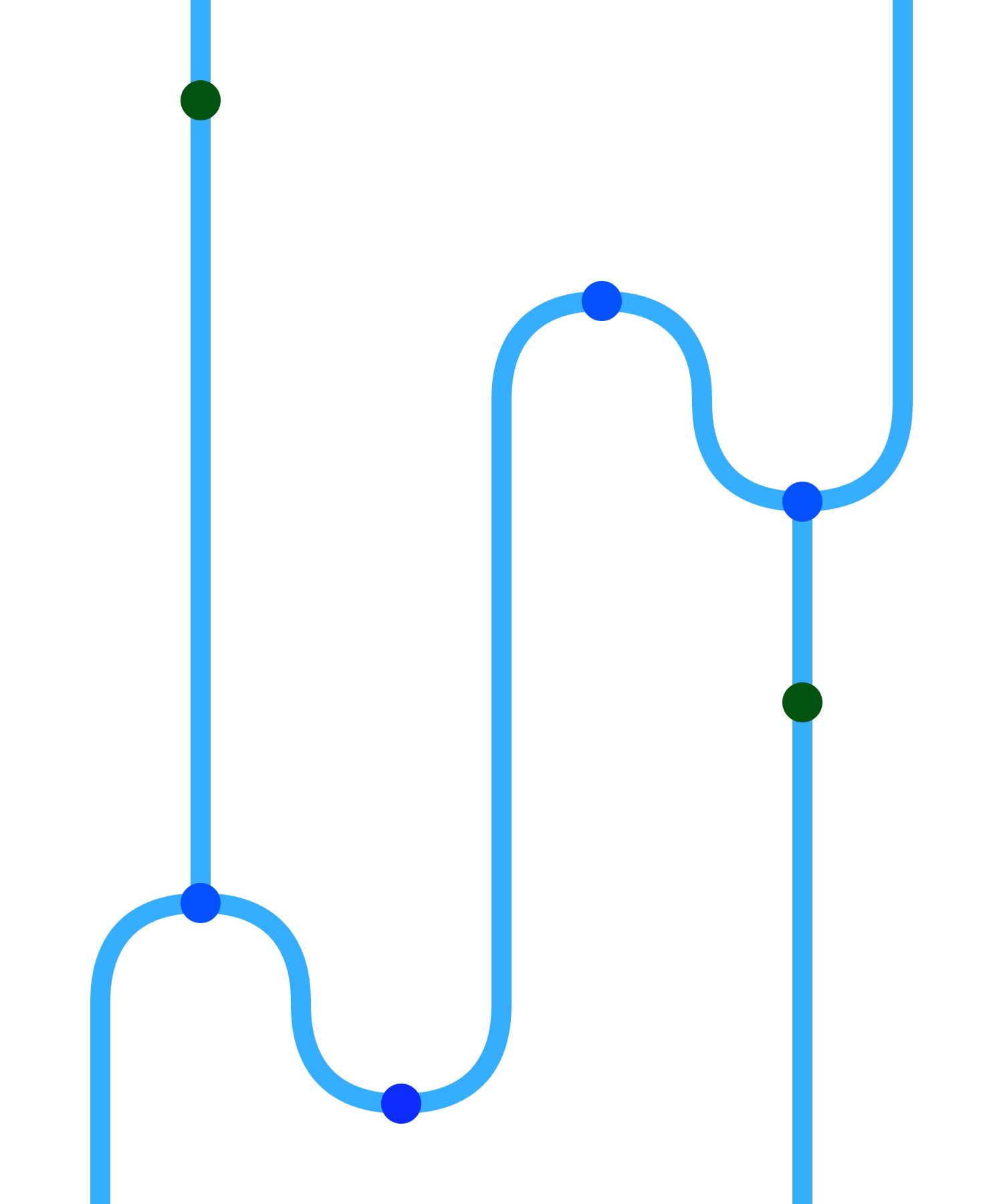}};
\draw [leftbox] (0.25,1.1) rectangle +(1.5,.8);
\draw [leftbox] (0.25,5.1) rectangle +(1.5,.8);
\draw [rightbox] (3.25,2.1) rectangle +(1.5,.8);
\draw [rightbox] (3.25,3.1) rectangle +(1.5,.8);
\end{tz}}
{\xto\sim}
\begin{tz}[sc]
\node (1) [anchor=south west, inner sep=0pt] at (0,0) {\tikzpng[scale=2]{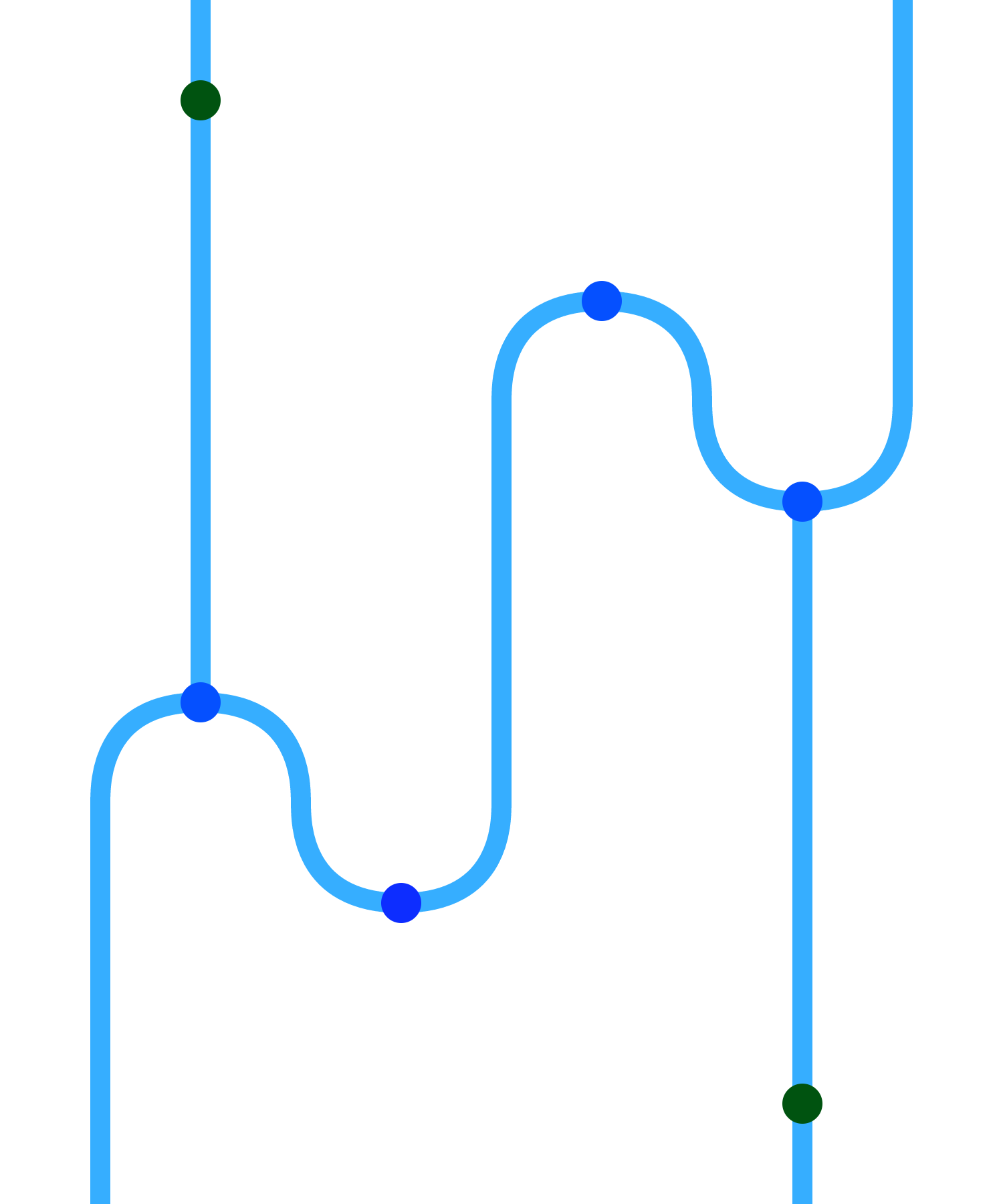}};
\draw [leftbox] (0.25,2.1) rectangle +(1.5,.8);
\draw [leftbox] (0.25,5.1) rectangle +(1.5,.8);
\draw [rightbox] (3.25,0.3) rectangle +(1.5,.8);
\draw [rightbox] (3.25,3.1) rectangle +(1.5,.8);
\end{tz}
{\xto\sim}
\begin{tz}[sc]
\node (1) [anchor=south west, inner sep=0pt] at (0,0) {\tikzpng[scale=2]{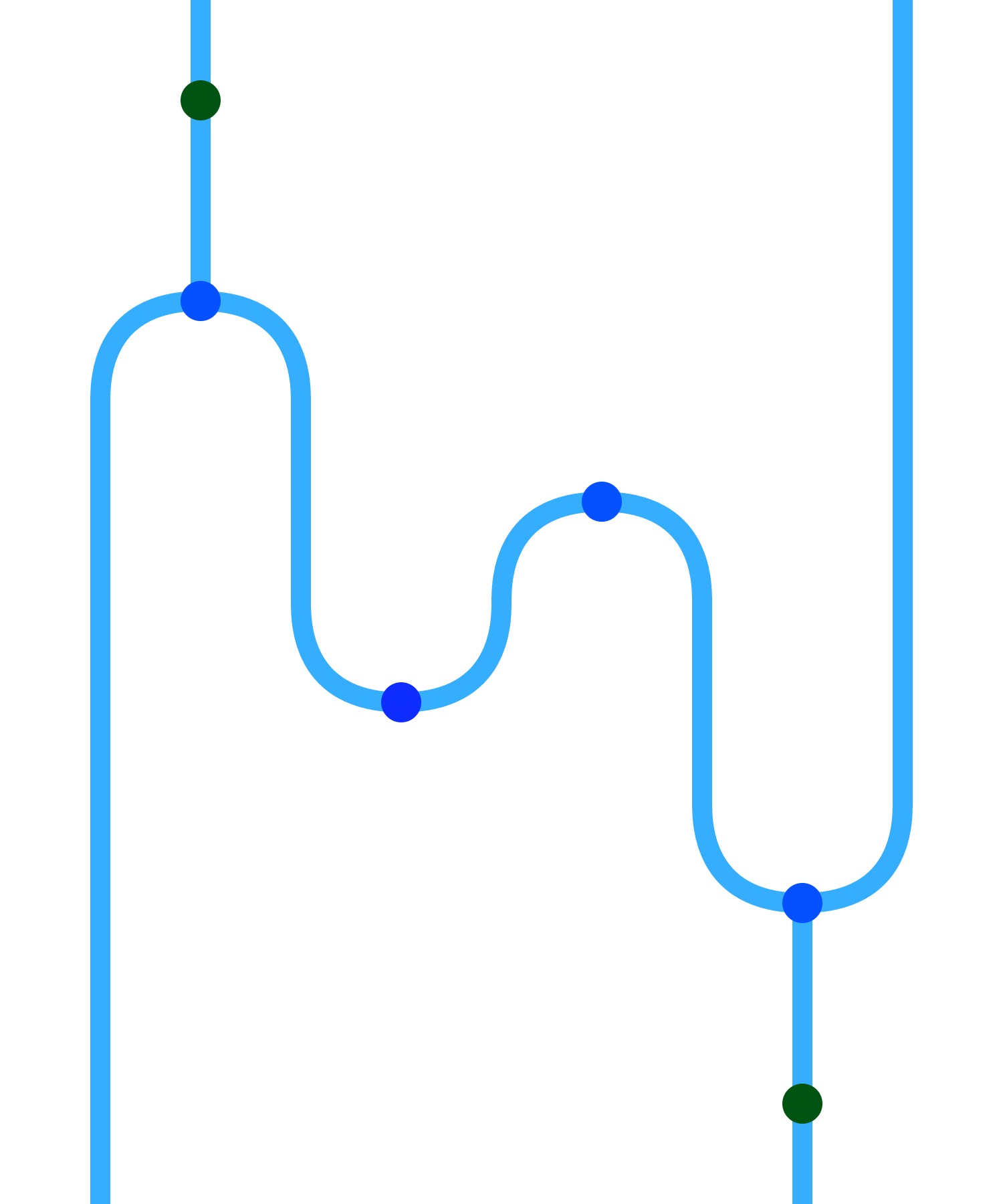}};
\draw [leftbox] (0.25,4.1) rectangle +(1.5,.8);
\draw [leftbox] (0.25,5.1) rectangle +(1.5,.8);
\draw [rightbox] (3.25,0.3) rectangle +(1.5,.8);
\draw [rightbox] (3.25,1.3) rectangle +(1.5,.8);
\end{tz}
{\xto{\sn_2}}
\begin{tz}[sc]
\node (1) [anchor=south west, inner sep=0pt] at (0,0) {\tikzpng[scale=2]{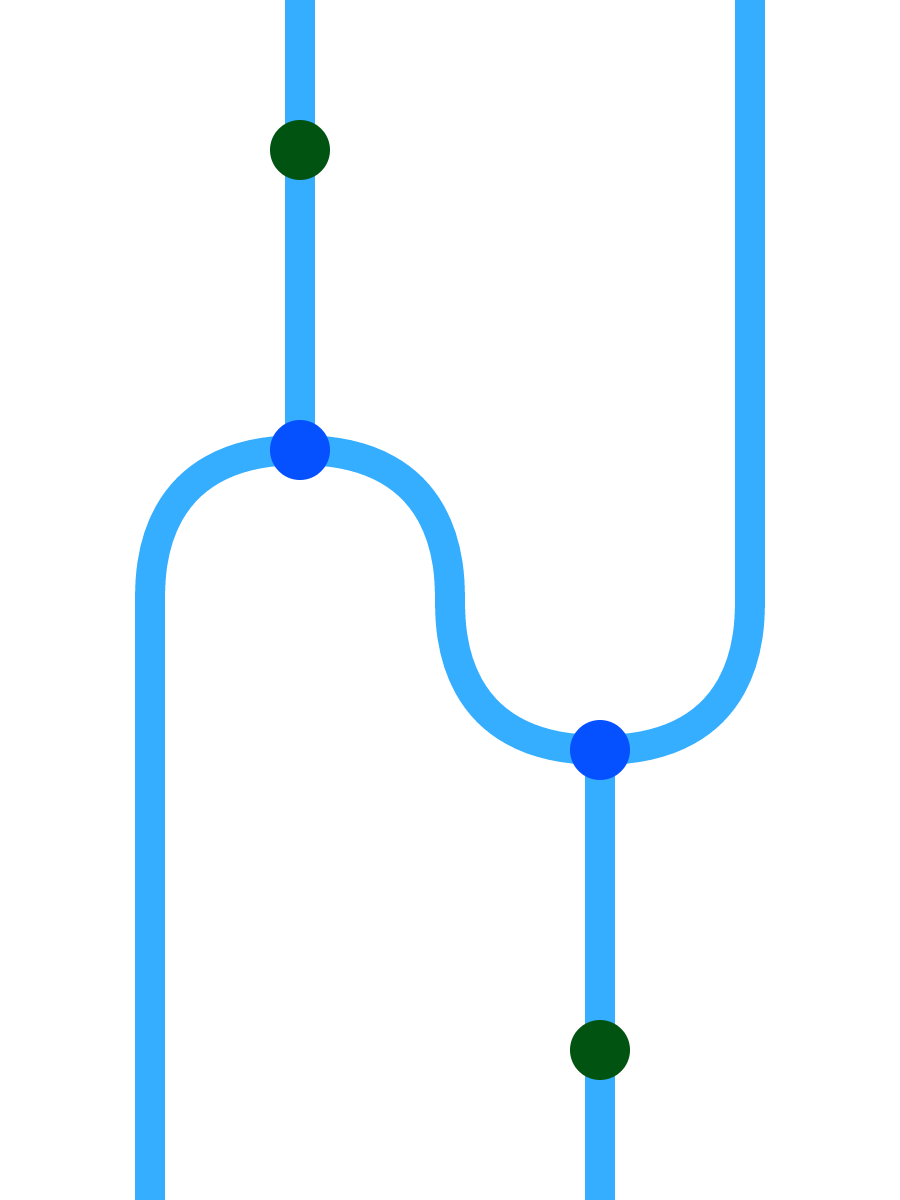}};
\draw [leftbox] (0.25,2.15) rectangle +(1.5,.75);
\draw [leftbox] (0.25,3.1) rectangle +(1.5,.75);
\draw [rightbox] (1.25,0.2) rectangle +(1.5,.75);
\draw [rightbox] (1.25,1.15) rectangle +(1.5,.75);
\end{tz}
\]
\end{definition}

\begin{lemma}
\label{lem:srsnatural}
The snake removal scheme is natural with respect to transformations on obstructing groups.
\end{lemma}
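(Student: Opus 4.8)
The claim is a \emph{naturality} statement, so I would begin by making the commuting square explicit. A \emph{transformation on obstructing groups} is a $2$\-morphism $\phi$ whose action is confined to the obstructing generators of $X$---the components lying strictly between the $\cup$ and the $\cap$ in height---and which therefore leaves the central wire, the $\cup$ and the $\cap$ untouched. Such a $\phi$ carries $X$ to a $1$\-morphism $\tilde X$ having exactly the same eliminable $(\cup,\cap)$ data, so the snake removal scheme of \autoref{def:srs} applies verbatim to both, giving $X \to X'$ and $\tilde X \to \tilde X'$. Writing $\phi'$ for the induced transformation on the images of the groups after the scheme, the lemma asserts that the square whose top and bottom edges are the two snake removals and whose left and right edges are $\phi$ and $\phi'$ commutes.

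My plan is to prove this step by step along the decomposition of the scheme recorded in \autoref{def:srs}: a composite of interchangers (step 3) followed by a single application of $\sn_1$ or $\sn_2$ (step 4). The essential input is that interchangers are natural: in a monoidal bicategory, passing a $2$\-morphism across a vertically separated component commutes with transforming that $2$\-morphism, which is precisely the naturality of the interchange $2$\-cell. First I would treat the interchange step. Here each obstructing group is moved past the $\cup$ or the $\cap$; since $\phi$ acts only on the groups while the $\cup$, $\cap$ and central wire are held fixed, interchanger naturality guarantees that moving a group and then transforming it agrees with transforming it and then moving it. Applying this to every group that travels up and every group that travels down yields commutativity for step 3.

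For step 4 I would argue by locality. After the interchange step, every obstructing group has been moved \emph{beyond} the $\cup$ or the $\cap$, so the region on which $\sn_1$ (or $\sn_2$) acts is disjoint in height from the support of the transported transformation. Commutativity of the snake step with the group transformation is then just the middle-four interchange law for $2$\-morphisms with disjoint support. Pasting the naturality squares for step 3 with this square for step 4 gives the full commuting square, completing the argument.

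The only genuine obstacle is bookkeeping rather than conceptual. One must check that $\phi$ really preserves the eliminable configuration---in particular that it keeps each group on its own side of the central wire, so that the maximal left/right grouping, and hence the \emph{same} scheme, is run on both $X$ and $\tilde X$---and then track precisely where each group lands, so that $\phi'$ is well defined and interchanger naturality is applied to the correct pairs. Once this is set up, each individual square is an instance of a structural axiom of the ambient monoidal bicategory, and no new computation is required.
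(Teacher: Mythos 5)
Your proposal is correct and is essentially the paper's own argument, expanded: the paper's proof simply observes that the obstructing blocks are acted on only by interchangers, which are natural with respect to transformations of their arguments, and your step-3/step-4 decomposition (interchanger naturality, then disjointness of the snakeorator's support from the groups) makes that one-line observation explicit. The bookkeeping point you raise is handled automatically, since a 2\-morphism supported on a group fixes that group's boundary and hence its position relative to the central wire.
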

\begin{proof}
In the snake removal scheme, obstructing blocks are acted on by interchangers only, which are natural  with respect to transformations of their arguments.
\end{proof}
\begin{example}
To illustrate this naturalness property, suppose that $P : X \to Y$ is a 2\-morphism in \free \E. Then the following equation holds, where we write SRS for the 2\-morphism constructed by the snake removal scheme:
\tikzset{sc/.style={scale=0.51}}
\begin{equation*}
\begin{tz}[xscale=3,yscale=2.5]
\node (A) at (0,0) {$\begin{tikzpicture}[sc]
    \node [anchor=south west, inner sep=0pt] (1) at (0,0) {\tikzpng[scale=2]{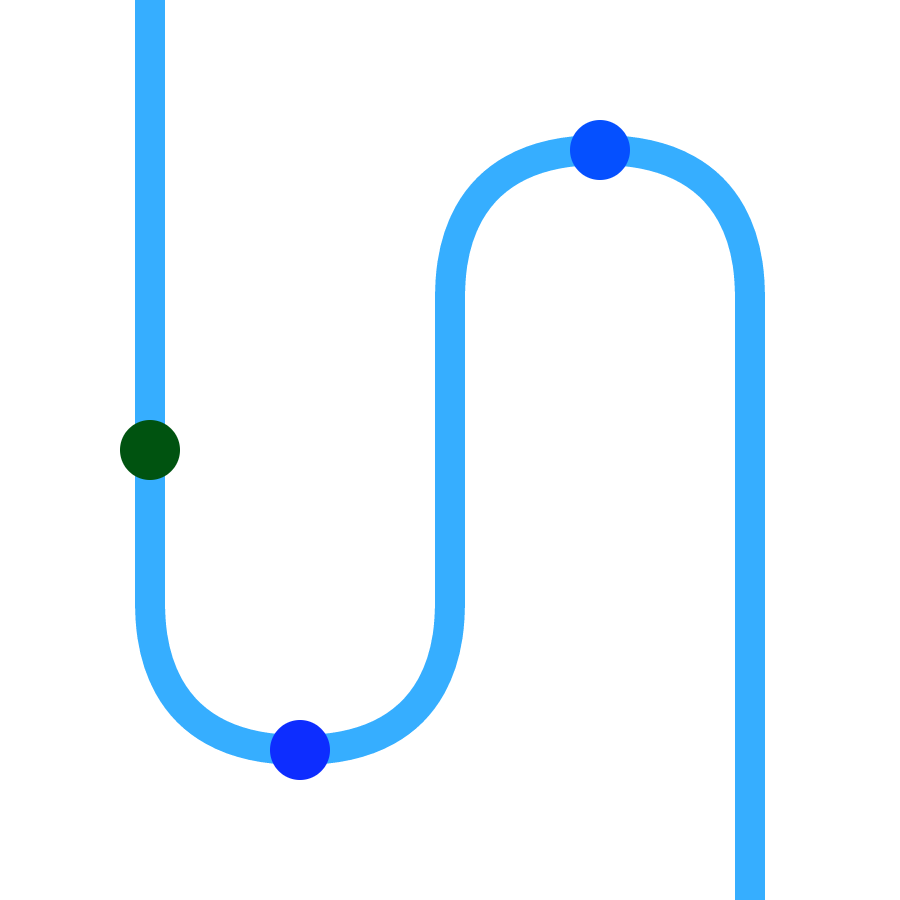}};
    \draw [fill=white, draw=lightblue, line width=1.2pt] (0,1) rectangle +(1,1);
    \node at (0.5,1.5) {$X$};
\end{tikzpicture}$};
\node (B) at (1,0) {$\begin{tikzpicture}[sc]
    \node [inner sep=0pt] (1) at (0,0) {\tikzpng[scale=2.5]{bigidentity}};
    \draw [fill=white, draw=lightblue, line width=1.2pt] (-0.5,-0.5) rectangle +(1,1);
    \node at (0,0) {$X$};
\end{tikzpicture}$};
\node (C) at (0,-1) {$\begin{tikzpicture}[sc]
    \node [anchor=south west, inner sep=0pt] (1) at (0,0) {\tikzpng[scale=2]{killsnakeexample1}};
    \draw [fill=white, draw=lightblue, line width=1.2pt] (0,1) rectangle +(1,1);
    \node at (0.5,1.5) {$Y$};
\end{tikzpicture}$};
\node (D) at (1,-1) {$\begin{tikzpicture}[sc]
    \node [inner sep=0pt] (1) at (0,0) {\tikzpng[scale=2.5]{bigidentity}};
    \draw [fill=white, draw=lightblue, line width=1.2pt] (-0.5,-0.5) rectangle +(1,1);
    \node at (0,0) {$Y$};
\end{tikzpicture}$};
\draw [->] (A) to node [above] {SRS} (B);
\draw [->] (A) to node [left] {$P$} (C);
\draw [->] (C) to node [below] {SRS} (D);
\draw [->] (B) to node [right] {$P$} (D);
\end{tz}
\end{equation*}
\end{example}

\begin{definition}[Connected, acyclic, simple]
A 1\-morphism in $\free \E$ is \emph{connected} or \emph{acyclic} when its string diagram graph  is connected or acyclic, respectively; it is \textit{simple} when it is connected and acyclic, and has a unique output wire.
\end{definition}

\begin{definition}[Twistedness]
Let $X$ be a simple 1\-morphism, and let $v$ be an $m$, $u$ or $f$ vertex in its string diagram. Then the \emph{twistedness} of $v$, written $\Tw(v)$, is the number of right turns minus the number of left turns along the shortest path from $v$ to the unique output. We say $X$ is \textit{untwisted} when for all such vertices $v$, $\Tw(v)=0$. We say $X$ is \emph{locally untwisted} when $\Tw(v)=0$ for $v$ in some subset of vertices.
\end{definition}

\newcommand\darkgreen[1]{{\color{green!70!black}#1}}
\newcommand\red[1]{{\color{red}#1}}
\begin{example}
Twistedness is best understood by example. The image below shows a simple 1\-morphism with four $m$ vertices (1, 3, 4, 5), one $u$ vertex (7), and two $f$ vertices (2, 6). The arrows show the direction of the shortest path to the output wire at the turning points, with red indicating a right turn, and green a left turn. For each of the numbered vertices, we compute the twistedness as the number of right turns minus the number of left turns.
\[
\ignore{
\tikzset{
  redcurvy/.pic={
    \draw [->, red, thick] (-0.125,-0.35) to [out=up, in=up, looseness=2.4] +(0.25,0);
  },
  greencurvy/.pic={
    \draw [->, green!70!black, thick] (-0.125,-0.35) to [out=up, in=up, looseness=2.4] +(0.25,0);
  }
}}\begin{tz}[scale=0.511]
\tikzset{number/.style={draw, line width=0.7pt, circle, inner sep=1pt, fill=blue!40, font=\scriptsize}}
\node [scale=0.2, anchor=south west] at (0,-1) {$\includegraphics{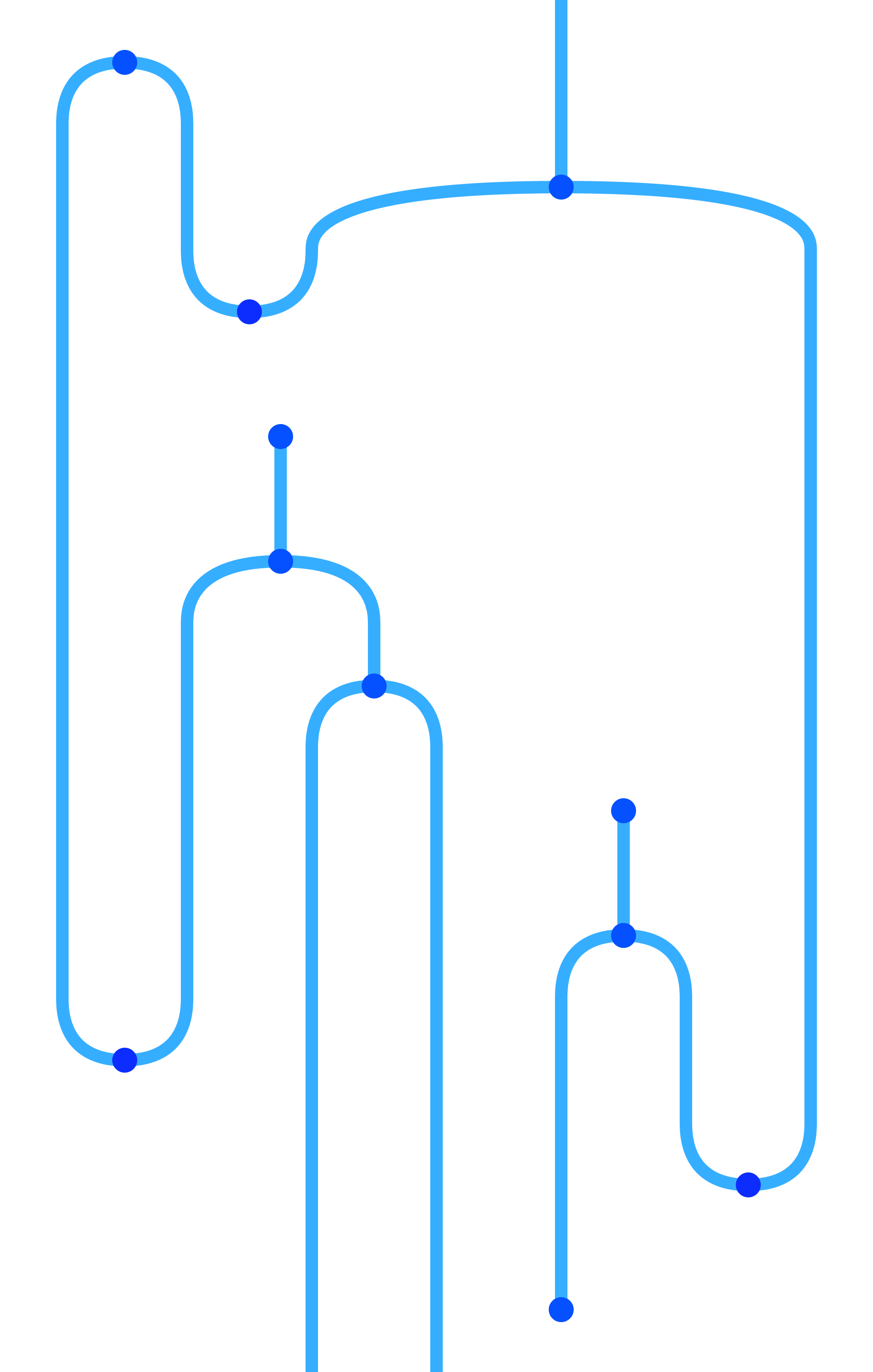}$};
\node [number] at (4.5,8.5) {$1$};
\node [number] at (2.25,6.5) {$2$};
\node [number] at (2.25,5.5) {$3$};
\node [number] at (3,4.5) {$4$};
\node [number] at (5,2.5) {$5$};
\node [number] at (5,3.5) {$6$};
\node [number] at (4.5,-0.5) {$7$};
\draw [->, red, thick] (1.3,2.1) to [out=down, in=down, looseness=2.4] +(-0.5,0);
\draw [->, red, thick] (0.76,8.8) to [out=up, in=up, looseness=2.4] +(0.5,0);
\draw [->, red, thick] (4.76,1.7) to [out=up, in=up, looseness=2.4] +(0.5,0);
\draw [->, green!70!black, thick] (5.76,1.18) to [out=down, in=down, looseness=2.4] +(0.5,0);
\draw [->, green!70!black, thick] (1.76,8.18) to [out=down, in=down, looseness=2.4] +(0.5,0);
\draw [->, green!70!black, thick] (2.65,4.64) to [out=up, in=up, looseness=2.] +(-0.75,0);
\node [green!70!black, font=\scriptsize, below=0.12cm] at (2.28,5.4) {$D$};
\node [green!70!black, font=\scriptsize, above=0.18cm] at (2.02,7.5) {$A$};
\node [red, font=\scriptsize, above=0.18cm] at (1.05,1.5) {$C$};
\node [red, font=\scriptsize, below=0.18cm] at (1.02,9.5) {$B$};
\node [red, font=\scriptsize, below=0.18cm] at (5.02,2.4) {$F$};
\node [green!70!black, font=\scriptsize, above=0.18cm] at (6,0.5) {$E$};
\begin{scope}[xshift=7.5cm, yshift=7.5cm]
\node [anchor=west] at (0,0) {$\Tw(1) = 0-0=0$};
\node [anchor=west] at (0,-1) {$\Tw(2) = 2-1=1$ (\red {$C$}, \red {$B$}, \darkgreen {$A$})};
\node [anchor=west] at (0,-2) {$\Tw(3) = 2-1=1$ (\red {$C$}, \red {$B$}, \darkgreen {$A$})};
\node [anchor=west] at (0,-3) {$\Tw(4) = 2-2=0$ (\darkgreen{$D$}, \red {$C$}, \red {$B$}, \darkgreen {$A$})};
\node [anchor=west] at (0,-4) {$\Tw(5) = 0-1 = {-}1$ (\darkgreen{$E$})};
\node [anchor=west] at (0,-5) {$\Tw(6) = 0-1 = {-}1$ (\darkgreen{$E$})};
\node [anchor=west] at (0,-6) {$\Tw(7) = 1-1 = 0$ (\red{$F$}, \darkgreen{$E$})};
\end{scope}
\end{tz}
\]
Since some $m$, $u$ or $f$ vertices have nonzero twistedness, this 1\-morphism is not untwisted. Note that for computing $\Tw(2)$,  $\darkgreen{D}$ does not count as a left turn, since we are passing top-to-bottom through vertex 3; but for computing $\Tw(4)$, $\darkgreen D$ does count as we are passing from bottom-right to bottom-left through vertex 3.
\end{example}

Untwisted simple 1\-morphisms have some good properties, which we now explore.
\begin{lemma}
\label{lem:nof}
An untwisted simple 1\-morphism has no $f$ vertices.
\end{lemma}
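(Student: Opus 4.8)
The plan is to prove the contrapositive at the level of a single vertex: every $f$ vertex of a simple 1-morphism has \emph{odd}, hence nonzero, twistedness, so the untwisted hypothesis rules such vertices out. Since the 1-morphism is connected and acyclic, its string diagram graph is a tree, so the shortest path from any $f$ vertex to the unique output is well defined, and the whole argument is a parity count on the vertical orientation of this path.

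First I would fix the two ends of the path. The morphism $f : \cat C \to \cat 1$ has domain $\cat C$ and empty codomain, so its single incident wire meets it from below; thus the path \emph{leaves} the $f$ vertex travelling downward. At the far end the output wire exits through the top boundary, so the path \emph{arrives} travelling upward.

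Next I would track how the vertical orientation (upward versus downward) of the path changes along its length, and match these changes against the turns counted by twistedness. The orientation reverses exactly at the counted turning points: at a $\cup$ or a $\cap$ the only admissible traversal is a U-turn, which flips the orientation; and at an $m$ vertex the orientation flips precisely when the path enters by one input and leaves by the other, the bottom-right-to-bottom-left passage singled out in the worked example, which is exactly the case recorded as a turn. Every other passage preserves the orientation: running along a plain wire, or passing straight through an $m$ vertex between an input and the output (the top-to-bottom passage, which the example explicitly does not count). The vertices $u$ and $f$ have degree one, so they occur only as endpoints and contribute no interior passage. Hence each reversal of vertical orientation is in bijection with one counted turn, be it a right turn or a left turn.

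Because the path starts downward and ends upward, the number of reversals is odd, so the total number of turns, right plus left, is odd. Since $\Tw(v)$ is the number of right turns minus the number of left turns, it differs from this total by twice the number of left turns and so shares its parity; thus $\Tw(v)$ is odd and in particular nonzero, contradicting untwistedness. I expect the main obstacle to be exactly the bookkeeping of the middle paragraph: one must argue carefully that the reversals of vertical orientation coincide with the turns that twistedness counts, treating the input-to-input passage through an $m$ vertex as a genuine turn while the straight input-to-output passage is not, which is precisely the subtlety that the worked example is designed to flag.
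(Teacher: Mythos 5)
Your proof is correct and takes essentially the same approach as the paper's: both rest on the parity observation that the path leaves an $f$ vertex travelling downward, reaches the unique output travelling upward, and flips vertical orientation exactly at the counted turns, so that $L+R$ --- and hence $\Tw(v)=R-L$, which has the same parity --- must be odd. The paper runs the argument directly from the untwisted hypothesis ($L=R$ forces $L+R$ even, contradicting the odd number of orientation reversals), while your contrapositive merely makes explicit the turn-versus-passage bookkeeping at $m$ vertices that the paper delegates to its worked example.
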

\begin{proof}
Suppose $v$ is an $f$ vertex. Then travelling along the shortest path to the unique output, suppose there are $L$ left turns and $R$ right turns. Since the diagram is untwisted $L=R$, and hence $L+R$ is an even number. Also, the only path out of an $f$ vertex begins travelling downwards. So the path begins travelling downwards, turns left or right an even number of times, and finishes at the unique output travelling upwards; a contradiction.
\end{proof}

\begin{lemma}
\label{lem:alwayscupcap}
In an untwisted simple 1\-morphism, every wire is either straight, or contains an eliminable $(\cup, \cap)$ pair. 
\end{lemma}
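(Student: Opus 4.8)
The plan is to argue thread-by-thread, first invoking \autoref{lem:nof} so that no $f$ vertices occur. I regard each wire as a maximal thread running through the diagram, turning around at each $\cup$ (a local minimum) and each $\cap$ (a local maximum), and terminating only at an $m$ or $u$ vertex or at the boundary. Travelling along such a thread, the vertical direction reverses exactly at the turning points, so these strictly alternate between caps and cups; in particular any two consecutive turning points form a $(\cup,\cap)$ pair. Moreover the segment joining two consecutive turning points contains no further turning points and no $m$ or $u$ vertex (these would end the thread), and by planarity nothing crosses it, so it is precisely a central straight wire with possibly nontrivial surroundings as in \autoref{def:eliminable}. Hence a consecutive pair is eliminable exactly when its two turns are in opposite directions — the straightenable zigzag — and fails to be eliminable only in the same-direction `C' configuration.

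Next I would record, along a non-straight thread $w$, the sequence of left/right turns at its turning points; $w$ is non-straight precisely when this sequence is nonempty. If the sequence ever changes sign, two consecutive turning points turn in opposite directions and the previous paragraph produces an eliminable $(\cup,\cap)$ pair. It therefore suffices to rule out every wire whose turn-sequence is nonempty and monotone (all turns the same direction), a class that includes the single-turn wires.

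The heart of the argument is a twistedness computation ruling out such monotone runs. Assuming the whole turn-sequence of $w$ is a monotone run of $k\geq 1$ equal turns, I would compare the shortest paths to the output from the two endpoints of $w$, which by \autoref{lem:nof} are $m$ or $u$ vertices (or the unique output). In the tree underlying a simple $1$\-morphism the output lies beyond exactly one endpoint, say beyond $v_2$, so the path from the other endpoint $v_1$ factors as the $k$ turns along $w$ followed by the path from $v_2$; since the $k$ turns contribute $\pm k$ and passage through any intervening $m$ vertex from input to output contributes no turn, this yields $\Tw(v_1)=\pm k+\Tw(v_2)$. Untwistedness forces both sides to vanish, whence $k=0$, a contradiction. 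The clean instance is a lone $\cup$ whose two legs feed multiplications $m_1,m_2$, where one computes $\Tw(m_1)=\pm 1+\Tw(m_2)$ directly.

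The one remaining case is a pure cup/cap thread meeting the boundary at both ends, with no $m$ or $u$ vertex available: a bottom-to-bottom rainbow would be its own connected component with no output, contradicting simplicity, while a bottom-to-top thread must have an even and hence positive number of turns, which planarity forbids from being monotone, as the thread would cross itself. I expect the main obstacle to lie in the twistedness bookkeeping of the third step: selecting the two vertices correctly, tracking signs and the junction transitions at the $m$ vertices where the shortest path enters and leaves $w$, and pinning down the exact correspondence between `opposite turns' in \autoref{def:eliminable} and a sign change in the left/right turn-sequence.
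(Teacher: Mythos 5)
Your proposal is correct and takes essentially the same route as the paper: the paper's proof also compares twistedness at the two endpoints of a chosen wire to conclude that the numbers $L$ of left turns and $R$ of right turns along it are equal, so either $L=R=0$ (the wire is straight) or there is an adjacent opposite-direction turn pair, which is eliminable in the sense of \autoref{def:eliminable}. Your rearrangement (sign change gives the pair directly, monotone runs are killed by the twistedness computation) is the contrapositive organisation of the same argument, and your explicit treatment of wires meeting the boundary is a detail the paper's terser proof leaves implicit.
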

\begin{proof}
Choose a wire $W$, and let $v$ be the $m$ or $u$ vertex at the end closest to the unique output, and let $v'$ be the $m$ or $u$ vertex at the other end. Write $L$ for the number of left-turns along the path from $v'$ to $v$, and write $R$ for the number of left turns. Since the composite is untwisted at every vertex, we must have $L=R$. Suppose $L=0$; then the wire is straight. Suppose $L>0$; then the must be at least one adjacent ``turn left--turn right'' or ``turn right--turn left'' pair along the path from $v'$ to $v$, and such a pair is eliminable.
\end{proof}

We now consider diagrams which arise from the pseudomonoid data only.
\begin{definition}
A 1\-morphism in \free \E is in \emph{pseudomonoid form} when it is simple and formed only from $m$ and $u$ generators; that is, just when it is connected and in the image of the obvious embedding $\free \P \to \free \E$.
\end{definition}

\begin{lemma}
\label{lem:pseudomonoiduntwisted}
A 1\-morphism in pseudomonoid form is untwisted.
\end{lemma}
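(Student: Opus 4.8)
The plan is to exploit the fact that a pseudomonoid-form diagram is built only from $m$ and $u$ generators, so it contains no cups and no caps; every wire then flows monotonically upward, and twistedness can only accumulate where a path reverses its vertical direction. In particular there are no $f$ vertices, so it suffices to prove that $\Tw(v) = 0$ for every $m$ vertex and every $u$ vertex $v$.

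First I would fix the combinatorial shape of the diagram. Being connected and acyclic, its string diagram graph is a tree; the unique output wire is its root, each $m$ vertex has two input wires entering from below and one output wire leaving above, and each $u$ vertex is a source with a single output above. The shortest path from a vertex $v$ to the output is therefore the unique path to the root: it leaves $v$ along $v$'s output wire, enters the vertex above as one of that vertex's two inputs, exits along that vertex's output, and repeats until reaching the root. Crucially, every $m$ vertex met along the way is traversed from an input to the output, that is, strictly from bottom to top.

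The key step is then to read off the turn count. As the worked example makes precise, a turn contributes to $\Tw(v)$ exactly at a local extremum of height along the path: at a cup (a local minimum), at a cap (a local maximum), or at a bottom-to-bottom passage through an $m$ vertex from one input to the other (a local maximum, such as the turn $D$ through vertex $3$). A path that is traversed strictly from bottom to top at every vertex and meets no cup or cap has no local height extrema at all, hence no right or left turns, so $\Tw(v) = 0 - 0 = 0$, and the 1\-morphism is untwisted.

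I expect the only genuinely delicate point to be the correct reading of the definition of \emph{turn}: one must confirm that a straight bottom-to-top passage through an $m$ vertex contributes nothing, in contrast to the input-to-input passage that does (this is exactly the distinction the example draws between $\Tw(2)$ and $\Tw(4)$). Once this is settled, the absence of cups and caps forces the path to the root to be height-monotone, so no such input-to-input passage can occur, and the result follows immediately, with no residual case analysis or computation to perform.
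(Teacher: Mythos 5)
Your proposal is correct and takes essentially the same approach as the paper: both arguments observe that a diagram built solely from $m$ and $u$ generators contains no $\cup$ or $\cap$, so from any vertex the path along output wires travels monotonically upward, must reach the unique output (every generator has an output leg), and therefore contains no left or right turns, giving $\Tw(v)=0$. Your added care about which passages through an $m$ vertex count as turns is a sound clarification of the same argument rather than a different method.
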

\begin{proof}
Clearly a 1\-morphism in pseudomonoid form is simple. From any $m$ or $u$ vertex, we build a path that travels upwards through the string diagram as far as possible. Suppose we do not reach the unique output at the top of the diagram: then we must have encountered some component with no output legs. But that is impossible, since the 1\-morphism is built solely from $m$ and $u$ vertices, which do have output legs. So from any vertex, there is a path to the unique output with no left or right turns; hence the diagram is untwisted.
\end{proof}

\begin{lemma}
\label{lem:straighten}
For every simple 1\-morphism $X$ in \free \E, there is a rotational 2\-morphism $\Omega_X: X \stackrel R \twoheadrightarrow \widetilde X$ where $\widetilde X$ is in pseudomonoid form.
\end{lemma}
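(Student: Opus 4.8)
The plan is to straighten $X$ in two phases: first use rotations to remove all twistedness, and then use the snake removal scheme to delete the cup-cap pairs that remain, leaving a diagram built only from $m$ and $u$ generators. Since the rotation generators, the snake maps $\sn_1,\sn_2,\sn_1^\inv,\sn_2^\inv$, and interchangers are all rotational, the composite of everything produced along the way will be the desired rotational $\Omega_X: X \stackrel R \twoheadrightarrow \widetilde X$.

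\emph{Phase 1 (untwisting).} I would induct on a complexity measure built from the total twistedness $T(X) = \sum_v |\Tw(v)|$, recomputed over the current $m$, $u$ and $f$ vertices at each step. If $T(X)=0$ we are untwisted and pass to Phase~2. Otherwise pick a vertex $v$ with $\Tw(v)\neq 0$, bring its legs into position using interchangers, and apply the rotation generator ($R_m$, $L_m$, $R_u$, $L_u$, $R_f$ or $L_f$, chosen by the type of $v$ and the sign of $\Tw(v)$) that acts at $v$ so as to remove a turn incident to $v$. Because the diagram is connected and acyclic, it is a tree, so the path from each vertex to the unique output is unique and the turn-counts are well defined. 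Note that, by \autoref{lem:nof}, an untwisted simple diagram has no $f$ vertices, so this process must also clear all $f$ vertices; concretely the $R_f$/$L_f$ rotations convert a counit into a cap-decorated configuration, changing the vertex set as we go. The output of this phase is an untwisted simple diagram $X'$ reached by a rotational 2\-morphism.

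\emph{Phase 2 (snake removal).} Now $X'$ is untwisted, so by \autoref{lem:nof} it has no $f$ vertices, and by \autoref{lem:alwayscupcap} every wire is either straight or contains an eliminable $(\cup,\cap)$ pair. Here I induct on the number of cups. If there are none, then $X'$ is simple and built from $m$ and $u$ only, hence in pseudomonoid form, and Phase~2 terminates. Otherwise choose an eliminable pair and apply the snake removal scheme of \autoref{def:srs}: this is a rotational 2\-morphism, it keeps the diagram simple, it strictly reduces the number of cups, and crucially it leaves the diagram untwisted, since an eliminable pair turns in opposite directions and so contributes no net turning to any path through it. Iterating deletes every cup and cap. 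Composing the Phase~1 and Phase~2 moves gives $\Omega_X$, and as each factor is a rotational generator, a snake map or an interchanger, $\Omega_X$ is rotational with $\widetilde X$ in pseudomonoid form.

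\emph{Main obstacle.} The delicate point is the monotonicity in Phase~1. Rotating $v$ removes a turn incident to $v$, but that turn lies on the path to the output of \emph{every} vertex lying beyond $v$, so a single naive rotation can change the twistedness of many vertices at once and may even raise $T$. The real work is therefore to organize the rotations — for instance processing the offending turns in order of distance from the output, or always rotating an extremal twisted vertex — and to exhibit a potential (perhaps $T$ together with a secondary count of $f$ vertices or turns) that is guaranteed to strictly decrease, even as rotations convert $f$ vertices into cup/cap data. The locality of the rotation generators and the naturalness recorded in \autoref{lem:srsnatural} are what make this bookkeeping manageable, but verifying that the chosen measure genuinely falls at each step is where the proof must be pinned down.
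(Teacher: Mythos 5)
Your proof follows the paper's argument essentially exactly: first apply the rotation generators repeatedly to reach an untwisted simple diagram (which by \autoref{lem:nof} has no $f$ vertices), then invoke \autoref{lem:alwayscupcap} and the snake removal scheme of \autoref{def:srs}, inducting on the number of cups and caps, with the same observation that eliminating an oppositely-turning pair preserves untwistedness, simplicity, and the absence of $f$ vertices. The termination issue you flag as the ``main obstacle'' is genuine, but the paper glosses over it just as briefly---it merely asserts that $R_m$, $R_u$, $R_f$ decrease twistedness locally and that repeated application yields an untwisted diagram---so your proposal, which at least names a candidate potential function and the reason it might fail to be monotone, is no less complete than the published proof on this point.
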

\begin{proof}
The generators $R_m$, $R_u$ and $R_f$ decrease the twistedness of the diagram locally, and $L_m$, $L_u$ and $L_f$ increase it. We apply these generators repeatedly to obtain a rotational 2\-morphism $X \twoheadrightarrow X'$, where $X'$ is untwisted. By \autoref{lem:nof}, $X'$ has no $f$ vertices.

The 1\-morphism $X'$ is untwisted and simple, so by \autoref{lem:alwayscupcap} every wire is either straight, or contains an eliminable cup-cap pair. We now proceed to eliminate every cup and cap, by induction on the total number of cups and caps. Suppose every wire is straight: then we are done. Otherwise, we apply the snake removal scheme of \autoref{def:srs} to obtain a rotational 2\-morphism $X' \twoheadrightarrow X''$. Note that $X''$ contains strictly fewer cups and caps than $X'$, and is simple since $X'$ is, and is untwisted since it was produced from an untwisted composite by removing an adjacent left-right moving pair. So by induction, we are done, and we have eliminated all $\cup$ and $\cap$ generators.

The result is in pseudomonoid form, since by construction it is a simple 1\-morphism with no $f$, $\cup$ or $\cap$ generators.
\end{proof}

We now define a stronger variant of pseudomonoid form.
\begin{definition}
A 1\-morphism in \free \E is in \emph{left-pseudomonoid form} when it is identical to $u$ or $m$, or is identical to the following composite, where $U,V$ are in left-pseudomonoid form:
\tikzset{sc/.style={scale=0.75}}
\tikzset{box/.style={draw=lightblue, fill=white, line width=1.7pt}}
\[
\begin{tz}[sc]
\node (1) [anchor=south west, inner sep=0pt] at (0,0) {\tikzpng[scale=3]{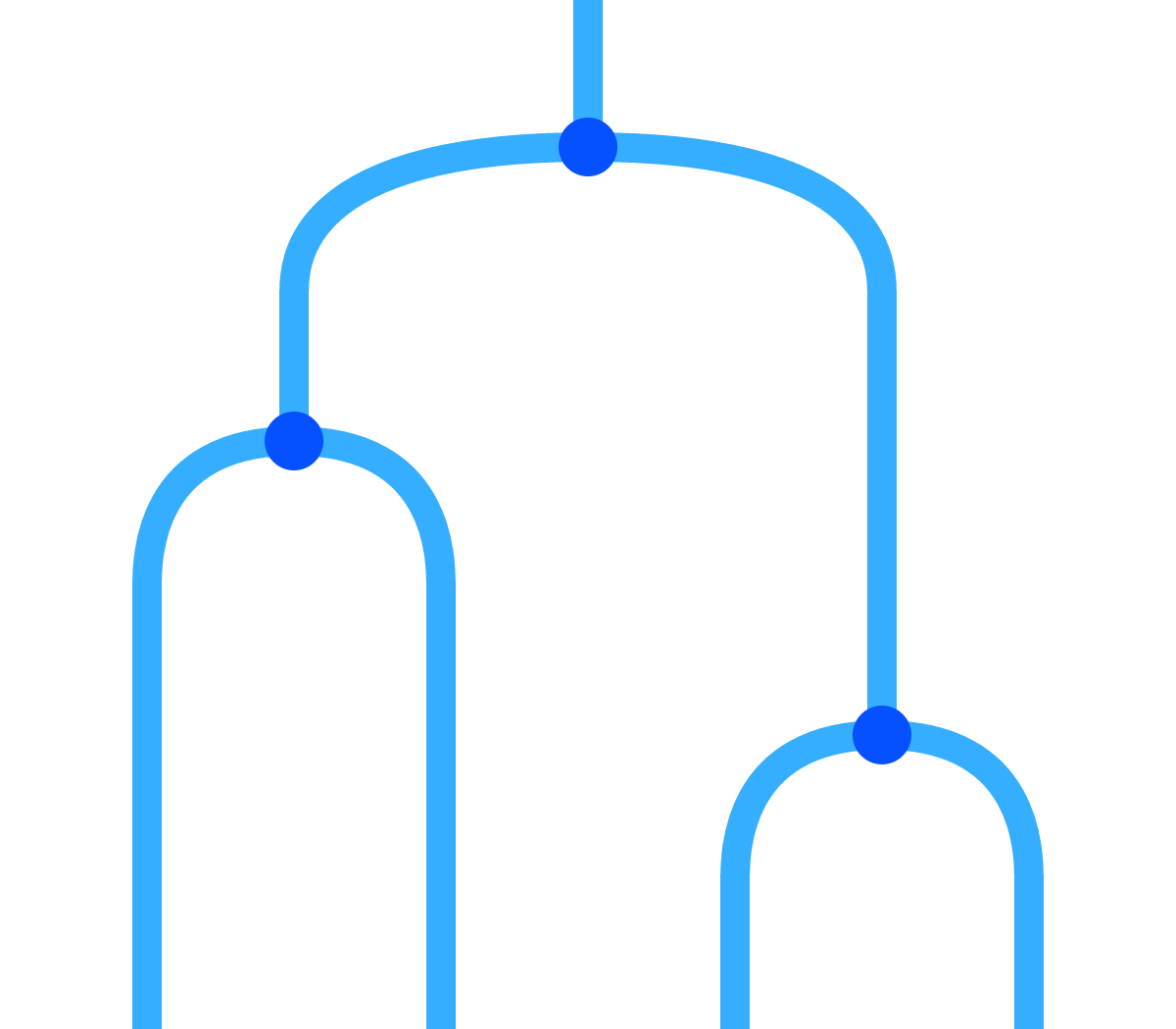}};
\draw [box] (0.25,1.5) rectangle +(1.5,1);
\draw [box] (2.25,0.5) rectangle +(1.5,1);
\node at (1,2){$U$};
\node at (3,1){$V$};
\node at (1.06,0.18) {$\cdots$};
\node at (3.06,0.18) {$\cdots$};
\end{tz}
\]
\end{definition}
\begin{lemma}
\label{lem:leftpseudomonoid}
For any 1\-morphism $X$ in \free \E in pseudomonoid form, there exists $\Theta_X: X \stackrel \sim \twoheadrightarrow \widehat X$ where $\widehat X$ is in left-pseudomonoid form.
\end{lemma}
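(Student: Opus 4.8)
The plan is to argue by induction on the number of $m$ and $u$ generators in $X$, peeling off the root generator and recursing on its two subtrees, using only interchangers throughout (as demanded by the $\stackrel\sim\twoheadrightarrow$ notation). First I would identify the \emph{root} of $X$: the unique generator whose output is the unique output wire of $X$. In the dependency order on generators (where $a$ precedes $b$ when an output wire of $a$ is an input of $b$), this root is the maximal element, since the output wire of every other generator must ultimately flow into it; hence in any generic composite the root is already the topmost generator, and no interchanger is needed to place it there.

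For the base case, if the root is $u$ then, since $u$ has no inputs and produces the only output, connectedness forces $X \equiv u$, which is already in left-pseudomonoid form; the same holds for a bare input wire. In either case I take $\Theta_X = \id$. For the inductive step the root is an $m$ vertex, whose left and right input wires are fed by subdiagrams $U$ and $V$. Because $X$ is acyclic and connected, deleting $m$ splits the rest of the diagram into exactly these two disjoint components, each of which inherits a unique output and is therefore itself simple and in pseudomonoid form.

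The key step is to \emph{unshuffle} $U$ and $V$. In the given composite their generators may be interleaved in height, but any generator of $U$ and any generator of $V$ are incomparable in the dependency order and occupy disjoint horizontal regions of the planar diagram, so whenever they lie at adjacent heights they may be exchanged by an interchanger. Sorting the binary ``$U$-versus-$V$'' sequence of heights by adjacent transpositions—each a valid interchanger swapping only a $U$-generator past a $V$-generator, and hence never requiring an associator—produces $X \stackrel\sim\twoheadrightarrow m\circ(U' \boxtimes V')$ with the two blocks separated in the order demanded by the definition, where $U'$ and $V'$ are themselves generic composites of the two subtrees. I would then apply the induction hypothesis to $U'$ and $V'$, obtaining interchanger composites into left-pseudomonoid forms $\widehat U$ and $\widehat V$; whiskering these by the untouched sibling block and by $m$ assembles them into a single interchanger composite $\Theta_X : X \stackrel\sim\twoheadrightarrow m\circ(\widehat U \boxtimes \widehat V)$, whose target is in left-pseudomonoid form by definition.

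The only genuine obstacle is the unshuffling step, and specifically the claim that a generator of $U$ and a generator of $V$ at adjacent heights are always interchangeable. This is where I would be careful: it rests on planarity together with the tree structure, which guarantee that such a pair is never stacked on a common wire (they meet only at $m$, above both), so that an honest interchanger—rather than an associator, which $\stackrel\sim\twoheadrightarrow$ would not permit—always suffices. Everything else is a routine structural induction.
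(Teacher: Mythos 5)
Your proposal is correct and follows essentially the same route as the paper's proof: induction on the number of $m$ and $u$ vertices, observing that the topmost vertex is the root $m$, applying interchangers to move all vertices of the left subtree above all vertices of the right subtree, and then invoking the inductive hypothesis on each block. The only difference is that you spell out justifications the paper leaves implicit---that the root is automatically the uppermost generator in a generic composite, and that acyclicity and planarity make the two subtree blocks horizontally disjoint so that adjacent-height swaps are honest interchangers---which is a welcome tightening rather than a new argument.
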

\begin{proof}
We argue by induction on the number of $m$ and $u$ vertices. Suppose $X \equiv u$; then we are done. Otherwise the uppermost vertex in $X$ is $m$, and we write $L$ and $R$ for the sets of vertices in the subtrees below the left and right legs of $m$ respectively. We apply interchangers to move all the vertices of $L$ above all the vertices of $R$. By the inductive hypothesis, we use interchangers to arrange $L$ and $R$ in left-pseudomonoid form, and we are done.
\end{proof}

\noindent
Left-pseudomonoid form is useful since it is unaltered by rotational 2\-morphisms, as we now explore.

\begin{lemma}
\label{lem:leftpseudomonoidunique}
Let $X,Y$ be in left-pseudomonoid form, with $X \stackrel R \twoheadrightarrow Y$. Then $X \equiv Y$.
\end{lemma}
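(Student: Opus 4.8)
The plan is to exhibit an invariant of simple 1-morphisms that is preserved by every rotational generator and that pins down a left-pseudomonoid form completely, so that $X$ and $Y$, being connected by a rotational 2-morphism and hence having equal invariants, must coincide. The key point is that the conclusion $X \equiv Y$ concerns \emph{identity} of composites, not mere isomorphism: the real content is that a rotational 2-morphism can never connect two \emph{distinct} left-pseudomonoid forms. I would therefore look for a quantity $\Phi(X)$ — morally the rooted planar \emph{parse tree} of $X$, whose internal nodes are the $m$-vertices and whose leaves are the $u$-vertices, rooted at the unique output — enjoying two properties: (i) $\Phi$ is a complete invariant for left-pseudomonoid forms, and (ii) $\Phi$ is unchanged by each of $R_m, L_m, R_u, L_u, R_f, L_f$, the snakeorators $\sn_1^{\pm}, \sn_2^{\pm}$, and interchangers. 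Given these, $X \stackrel R \twoheadrightarrow Y$ yields $\Phi(X) = \Phi(Y)$, and (i) then forces $X \equiv Y$.

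Property (i) is essentially immediate from the recursive definition of left-pseudomonoid form: that definition reads the parse tree off deterministically, stacking the left subtree $U$ above the right subtree $V$ at each $m$, so two left-pseudomonoid forms with the same parse tree are produced by the same recursion and are identical, the base case $u$ corresponding to a single leaf. As a sanity check that invariants of this flavour exist, note that the total vertex weight $W(X) = \#m + \#u + \#f$ is preserved by every rotational generator: unwinding the definitions, $R_m$ sends an $m$ to an $m$ together with a $(\cup,\cap)$ pair, $R_u$ sends a $u$ to an $f$ and a $\cup$, $R_f$ sends an $f$ to a $u$ and a $\cap$, and the snakeorators only create or destroy $(\cup,\cap)$ pairs, each of weight zero; interchangers change nothing. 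Hence $X$ and $Y$ already have equal leaf-counts. Interchangers and snakeorators also visibly preserve $\Phi$ itself, since interchangers only permute heights and an eliminable snake contracts to a bare wire, leaving the rooted tree untouched.

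The hard part, and the crux of the whole lemma, is property (ii) for the genuinely rotating generators $R_m, L_m, R_u, L_u, R_f, L_f$. The difficulty is that these pass through intermediate 1-morphisms carrying $f$-vertices, cups and caps, on which a naive parse tree is undefined. I therefore need to define $\Phi$ on \emph{all} simple 1-morphisms of $\free\E$, not just those in pseudomonoid form, treating an $f$-vertex as a rotated unit and a cup–cap zig-zag as a bent wire, using the biadjunction structure encoded by $\pi$, $\mu$, $\nu$ and the snakeorators. Once $\Phi$ is set up as this output-rooted planar tree read through the pivotal structure, preservation can be checked generator by generator: each rotation merely re-roots a single vertex by sliding it along a zig-zag, which leaves the globally output-rooted tree fixed. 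This is a finite, local verification, well suited to the same \textit{Globular}-style case analysis used for the neighbouring lemmas, and crucially it sidesteps any appeal to confluence or higher-dimensional rewriting, which the paper flags as problematic. I expect the delicate points to be pinning down $\Phi$ on diagrams containing $f$-vertices and cups/caps, and verifying its invariance under $R_f$ and $L_f$, where a counit is rotated into a unit; with that in place, combining (i) and (ii) closes the argument.
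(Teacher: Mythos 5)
Your proposal is correct and is essentially the paper's own argument: the paper defines exactly your invariant $\Phi$ as the \emph{structure tree} of an arbitrary simple 1\-morphism (traverse from the unique output, treating $u$ \emph{and} $f$ vertices as leaves and $m$ vertices as binary branchings, with cups and caps transparent as mere bends in wires), observes by inspection that no rotational generator changes it, and notes that two left-pseudomonoid forms with the same structure tree are identical. The "delicate point" you flag — defining $\Phi$ on diagrams with $f$ vertices and cups/caps — is resolved precisely by your own suggestion of reading $f$ as a leaf and zig-zags as bent wires, which is what makes the paper's generator-by-generator check immediate.
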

\begin{proof}
Given a simple 1\-morphism, its \emph{structure tree} is a binary tree, defined as follows. Start at the unique output, and travel along the path of the string diagram. If a $u$ or $f$ vertex is encountered, append a leaf to the tree. If an $m$ vertex is encountered, append a binary branching, with subtrees given by following the other two legs of the $m$ vertex in a similar way.

By inspection, none of the rotational 2\-morphisms change the structure tree of a diagram. Also, it is clear that if two left-pseudomonoid forms have the same structure tree, they are identical. This completes the proof.
\end{proof}

\ignore{ We now give a procedure to  arrange it in pseudomonoid form, by applying interchangers. Suppose  the uppermost vertex is a $u$; then we are done. Otherwise the uppermost vertex is an $m$, with subdiagrams $L$ and $R$ attached to the left and right input legs respectively. Note that $L$ and $R$ are not connected, since the original 1\-morphism was acyclic. We apply interchangers to arrange $L$ completely above $R$, and then apply this same procedure to each subdiagram respectively. The recursion terminates, since $L$ and $R$ have strictly fewer $m$ and $u$ vertices than the original 1\-morphism.}
\ignore{
We now follow the path in $X'$ from the unique output to the first $m$ or $u$ vertex encountered, which we label $v$. Since $X'$ is untwisted, this path has an equal number of cups and caps, and we eliminate them by applying $\sn_1$ and $\sn_2$ maps.

We now argue by induction on the total number of $m$ and $u$ vertices. Suppose $v$ is a $u$ vertex; then this is clearly the base case for the induction, and we are done, since there is only one wire in $X$ and it has been made straight. Alternatively, suppose $v$ is an $m$ vertex: then its two lower legs are connected to simple 1\-morphisms $Y$ and $Z$. Since $Y$ and $Z$ have strictly fewer vertices than $X$, by induction we construct $\Omega_Y : Y \to \widetilde Y$ and $\Omega_Z : Z \to \widetilde Z$ and apply them locally, and we are done.

The only complication that can arise is if the legs $Y$ and $Z$ are somehow entangled with one another. In this case, by acyclicity of $X$, we deduce that $Y$ and $Z$ are not connected to one another, and therefore that they can be separated using interchangers:
\[
\tikzset{sc/.style={scale=0.7}}
\hspace{-1cm}
\begin{tz}[sc]
\node [inner sep=0pt] (1) at (0,0) {\tikzpng[sc]{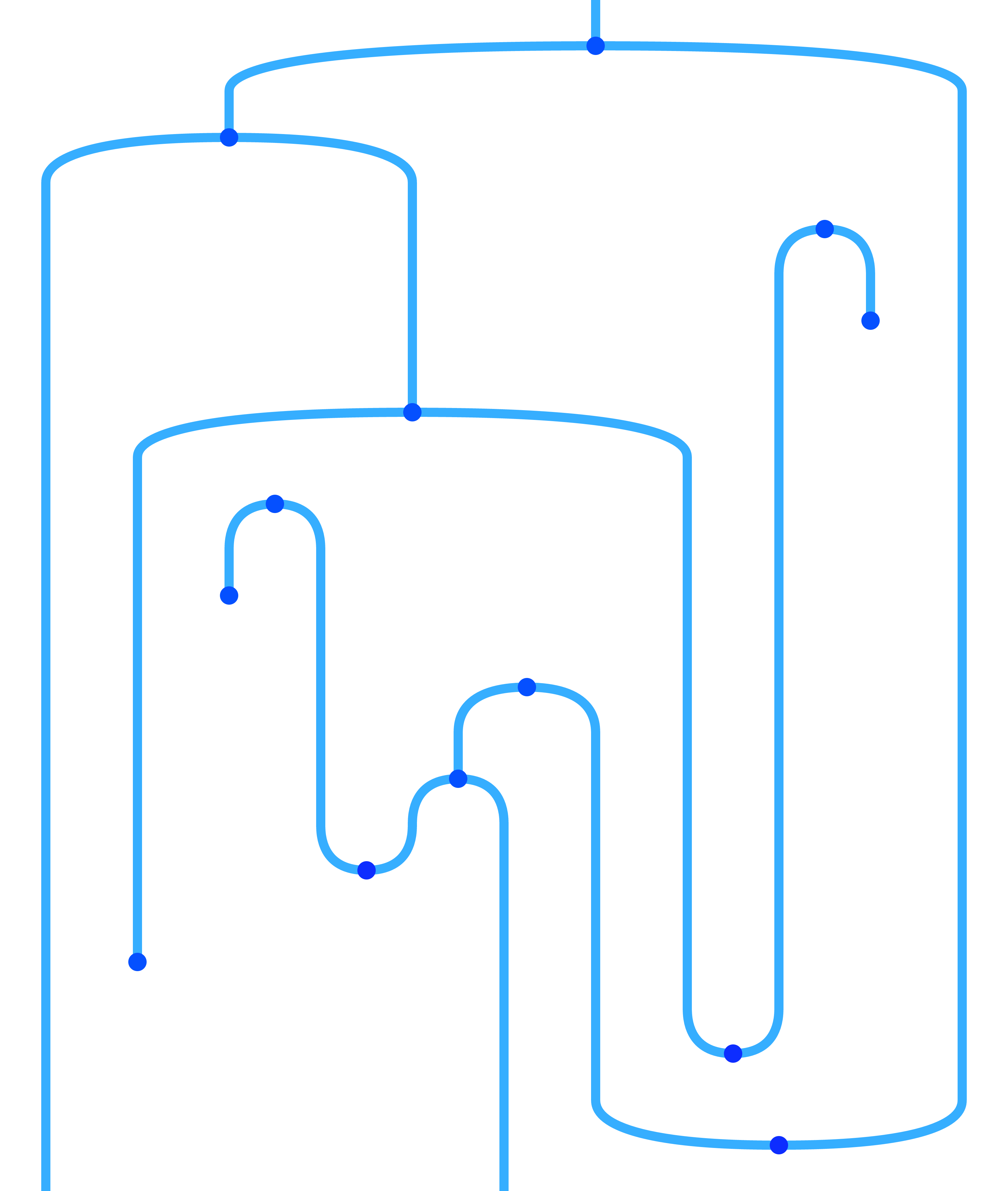}};
\node (2) [anchor=west, inner sep=0pt] at ([xshift=15pt] 1.east) {\tikzpng[sc]{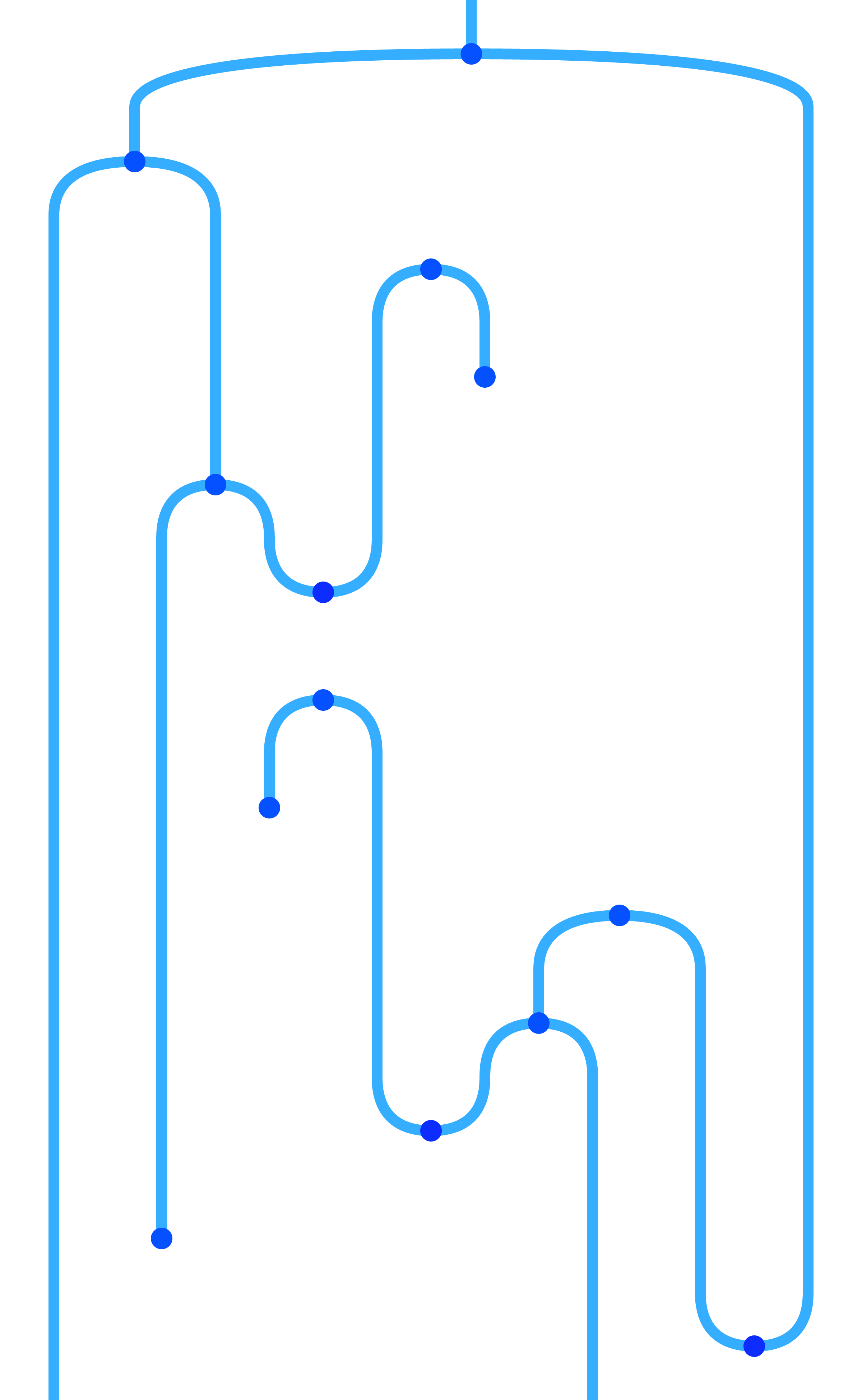}};
\node (3) [anchor=west, inner sep=0pt] at ([xshift=15pt] 2.east) {\tikzpng[sc]{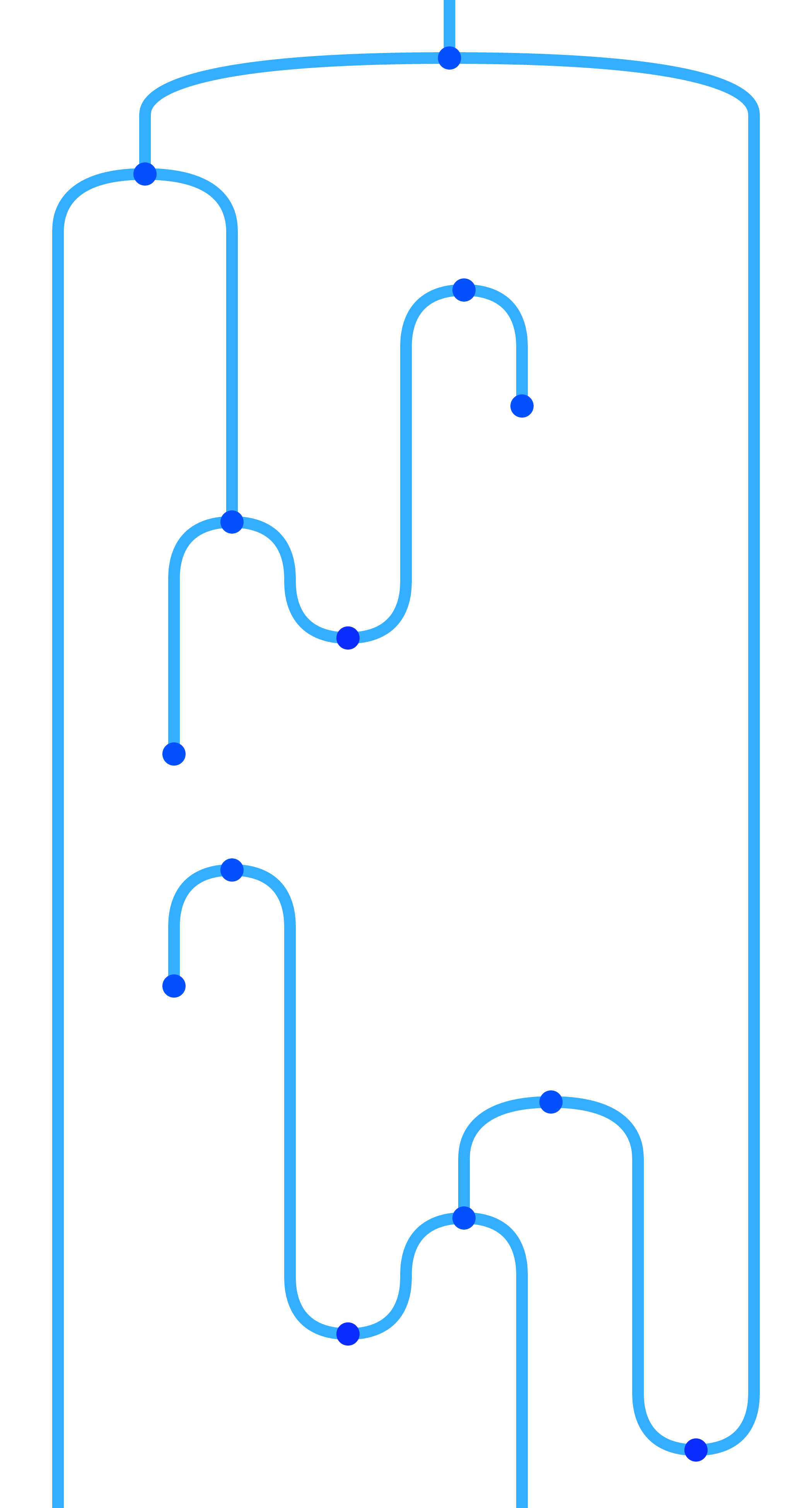}};
\node (4) [inner sep=0pt, anchor=west] at ([xshift=15pt] 3.east) {\tikzpng[sc]{disentangle3}};
\draw (5) [draw=lightblue, fill=white, line width=0.7pt] ([xshift=-0.55cm, yshift=-1.56cm] 4.center) node (A) {} rectangle +(1.36,1.32) node (B) {};
\node at ($(A)!0.5!(B)$) {$Z$};
\draw [draw=lightblue, fill=white, line width=0.7pt] ([xshift=-0.82cm, yshift=-0.04cm] 4.center) node (A) {} rectangle +(1.08,1.31) node (B) {};
\node at ($(A)!0.5!(B)$) {$Y$};
\node (6) [anchor=west, inner sep=0pt] at ([xshift=15pt] 4.east) {\tikzpng{disentangle4}};
\draw [->>] (1) to node [above] {$\sim$} (2);
\draw [->>] (2) to node [above] {$\sim$} (3);
\node at ($(3.east)!0.5!(4.west)$) {$\equiv$};
\draw [->>] (4) to node [above] {$\stackrel{\displaystyle\Omega_Y}{\Omega_Z}$} (6);
\end{tz}
\hspace{-1cm}
\]
This completes the proof.}

We give a name for a certain subset of the generators as follows.
\begin{definition}
In \free \E, the \emph{pseudomonoid 2\-cells} are $\alpha$, $\lambda$, $\rho$ and their inverses.
\end{definition}

\noindent
We now give the central proposition, showing how pseudomonoid 2\-cells can be ``untwisted'' to act in pseudomonoid form. The details of the proof are quite complex, and the proof assistant \texttt{globular.science} was essential for their development and presentation.
\begin{proposition}[Untwisting]
\label{prop:untwistalgebraic}
Let $P:X \to Y$ be a 2\-morphism in \free \E such that $X$ is simple. Then $P$ is equal to a 2\-morphism in which the pseudomonoid 2\-cells act on untwisted diagrams only.
\end{proposition}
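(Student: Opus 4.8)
The plan is to pick a generic decomposition $X = X_0 \xrightarrow{g_1} X_1 \to \cdots \xrightarrow{g_n} X_n = Y$ of $P$ into generating 2\-morphisms, and to treat the algebraic generators (the occurrences of $\alpha,\lambda,\rho$ and their inverses) one at a time, ``rotating'' the whole diagram into pseudomonoid form around each one so that it acts on an untwisted diagram. The first thing I would establish is a preliminary invariant: every intermediate 1\-morphism $X_j$ is again \emph{simple}. Each generating 2\-morphism of $\free\E$ is invertible and corresponds to a local move that preserves connectivity, acyclicity and the unique output wire, so simplicity of $X$ propagates along $P$. This is exactly what licenses the use of \autoref{lem:straighten} at an arbitrary point of the composite, and it is the sort of step I would want to check carefully rather than wave through.

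Now fix an algebraic generator $g\colon X_{i-1}\to X_i$. By \autoref{lem:straighten} I choose straightenings $\Omega_{X_{i-1}}\colon X_{i-1}\stackrel R\twoheadrightarrow \widetilde X_{i-1}$ and $\Omega_{X_i}\colon X_i\stackrel R\twoheadrightarrow \widetilde X_i$ into pseudomonoid form, which is untwisted by \autoref{lem:pseudomonoiduntwisted}. Since rotational generators are invertible up to interchangers and snake maps (\autoref{lem:inversepairs}), the inverses $\Omega_{X_{i-1}}^{\inv},\Omega_{X_i}^{\inv}$ exist, so I may insert the identities $\Omega_{X_{i-1}}\cdot\Omega_{X_{i-1}}^{\inv}$ and $\Omega_{X_i}\cdot\Omega_{X_i}^{\inv}$ around $g$ and regroup, giving
\[
g \;=\; \Omega_{X_{i-1}} \cdot \bar g \cdot \Omega_{X_i}^{\inv},
\qquad
\bar g \;:=\; \Omega_{X_{i-1}}^{\inv}\cdot g\cdot \Omega_{X_i}\colon \widetilde X_{i-1}\to\widetilde X_i .
\]
The flanking factors are rotational, hence contain no algebraic generators and impose no constraint; the content is therefore entirely in showing that $\bar g$ may be taken to act on untwisted diagrams only.

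The heart of the argument is to ``slide'' the algebraic generator through the straightening rotations. I would do this by induction on twistedness, using a family of commutation relations between each of $\alpha,\lambda,\rho$ (and inverses) and each rotational generator $R_m,L_m,R_u,L_u,R_f,L_f,\sn_1,\sn_2$: each relation should say that an algebraic move adjacent to a rotation equals that rotation adjacent to an algebraic move on a diagram of strictly smaller twistedness. Two facts keep this under control. First, the naturality of the rotational moves with respect to transformations on their arguments (\autoref{lem:srsnatural}) lets the rotations in $\Omega_{X_{i-1}},\Omega_{X_i}$ that do not meet the active region pass the algebraic generator freely. Second, rotations do not alter the structure tree of a diagram (as exploited in \autoref{lem:leftpseudomonoidunique}), so the endpoints $\widetilde X_{i-1},\widetilde X_i$ being in pseudomonoid form forces the straightened move $\bar g$ to be an algebraic generator acting between pseudomonoid forms, and every intermediate diagram of $\bar g$ stays in pseudomonoid form and so is untwisted by \autoref{lem:pseudomonoiduntwisted}. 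Note that this keeps the proposition self\-contained: it does not presuppose coherence for pseudomonoids, which is invoked only afterwards.

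Finally I substitute the rewritten steps back into $P$. Because the inserted $\Omega$ and $\Omega^{\inv}$ factors, the snake maps, and the untouched non\-algebraic generators $\mu,\nu,\pi$ carry no algebraic content, the only algebraic generators remaining are those inside the various $\bar g$, each of which acts on an untwisted diagram; consecutive inserted rotations simply amalgamate into longer rotational composites. The result is a 2\-morphism equal to $P$ with all algebraic generators acting on untwisted diagrams. I expect the main obstacle to be precisely the sliding step of the third paragraph: assembling the complete list of commutation relations between algebraic and rotational generators — the local higher\-cell identities best discharged in \emph{Globular} — and verifying that each slide strictly reduces twistedness, so that the induction terminates with every algebraic generator sitting in a genuinely untwisted context.
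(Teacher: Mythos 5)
Your overall architecture --- make each algebraic generator act in a straightened context via commutation relations with the rotational generators, then globalize by naturality --- contains the right family of ideas, but the central step of your third paragraph has a genuine gap. You define $\bar g := \Omega_{X_{i-1}}^{\inv}\cdot g\cdot \Omega_{X_i}$ and then claim that, because rotational moves preserve the structure tree and the endpoints $\widetilde X_{i-1}, \widetilde X_i$ are in pseudomonoid form, ``every intermediate diagram of $\bar g$ stays in pseudomonoid form.'' This is false twice over. First, $\bar g$ by construction passes back through the original twisted diagrams $X_{i-1}$ and $X_i$ (the composite begins with $\Omega_{X_{i-1}}^{\inv}$, which un-straightens), so its intermediate stages are exactly as twisted as what you started with. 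Second, the structure-tree argument of \autoref{lem:leftpseudomonoidunique} constrains only the endpoints of \emph{purely rotational} composites, and cannot be applied across $\bar g$ at all, since algebraic generators change the structure tree --- that is their whole point: $\alpha$ rebalances it, and $\lambda$, $\rho$ delete a leaf. So nothing in your sketch shows that $\bar g$ may be replaced by an algebraic generator acting on untwisted diagrams; that replacement is the entire content of the proposition, and the conjugation by $\Omega$'s is an empty manipulation without it. (You also cannot appeal to \autoref{prop:coherencerotations} or \autoref{lem:pseudomonoidinterchanger} to collapse $\bar g$, since those apply only to composites with no algebraic cell in the middle.)

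What fills the gap, and what the paper actually does, is the step you defer to \emph{Globular}: explicit equations expressing each algebraic generator applied in a twisted position as a rotational composite, followed by an algebraic generator in a strictly less twisted position, followed by another rotational composite --- with the crucial wrinkle, absent from your sketch, that the generator changes identity under rotation: $\alpha$ trades with $\alpha^\inv$, and $\lambda$ trades with $\rho$ (the six squares \textit{``Pf: Left/Right Alpha''}, \textit{``Pf: Left/Right Lambda''}, \textit{``Pf: Left/Right Rho''}). Note these are not one-for-one ``slides'' of $g$ past a single rotational generator: both sides carry rotational composites, and a case analysis on how the path to the unique output passes through the vertex (which is how twistedness is measured, so it determines which equation applies) is needed to see that local twistedness strictly decreases. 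Only after this local reduction does your naturality idea become available: once $\gamma$ acts on a locally untwisted region, the straightening $\Omega$ does not disturb that region, so $\Omega_Y \circ \gamma = \gamma \circ \Omega_X$ commutes by \autoref{lem:srsnatural}, replacing $\gamma$ by the same generator acting on a completely untwisted diagram. Your closing appeal to naturality is exactly the paper's second step, but it cannot be reached by the conjugation trick alone. Your preliminary observation that intermediate 1\-morphisms remain simple is correct and worth recording, and you are right that the proposition must not presuppose coherence for pseudomonoids.
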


\begin{proof}
We show that the following equations hold in \free \E, where $R$\ denotes some composite of rotational generators:
\begin{calign}
\nonumber
\begin{tz}[xscale=1.3, scale=1.5]
\node [inner sep=1pt] (1) at (0,0) {\tikzpng[scale=\myscale]{untwistalpha1}};
\node [inner sep=1pt] (2) at (0,-1.5) {\tikzpng[scale=\myscale]{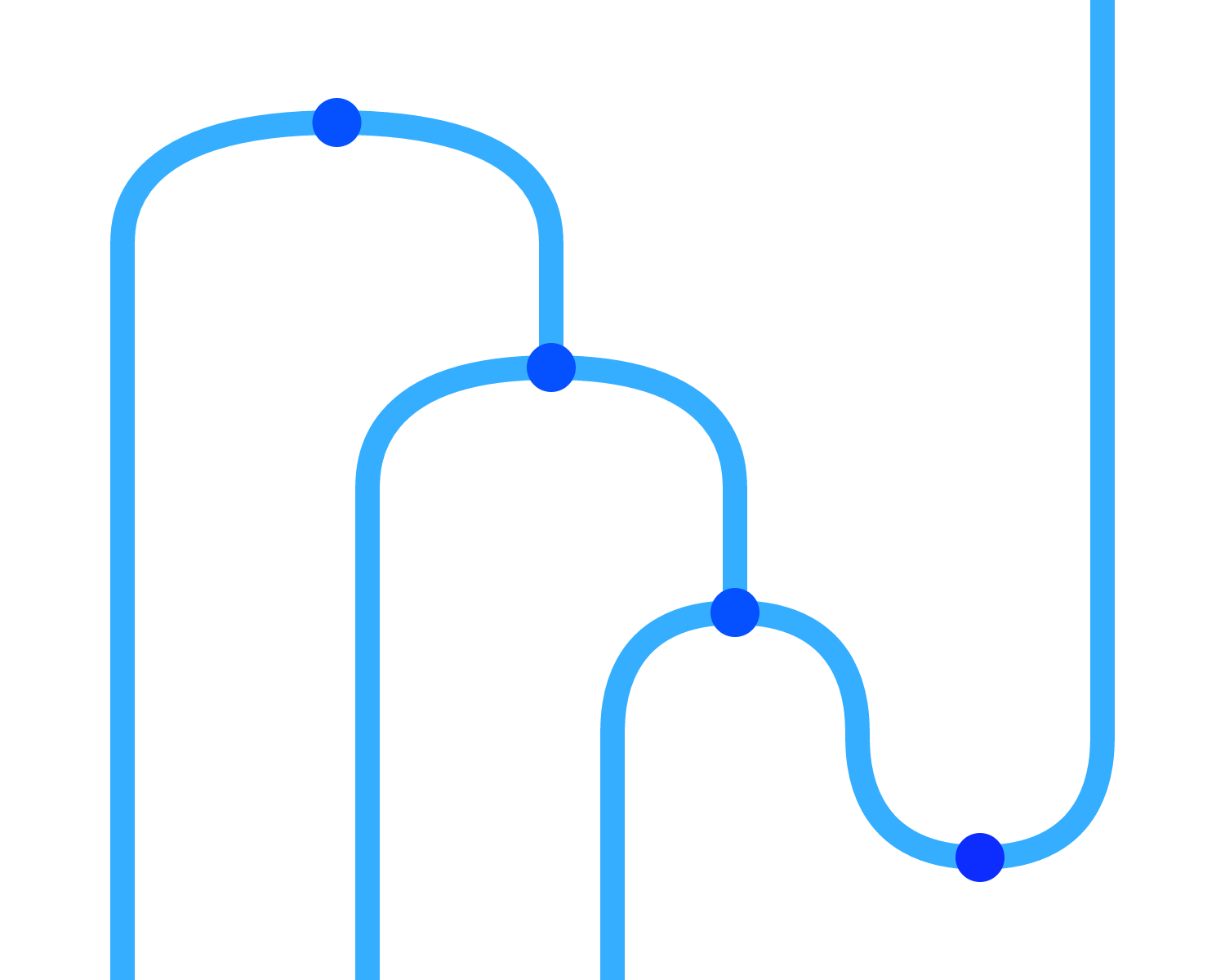}};
\node [inner sep=1pt] (3) at (2,-1.5) {\tikzpng[scale=\myscale]{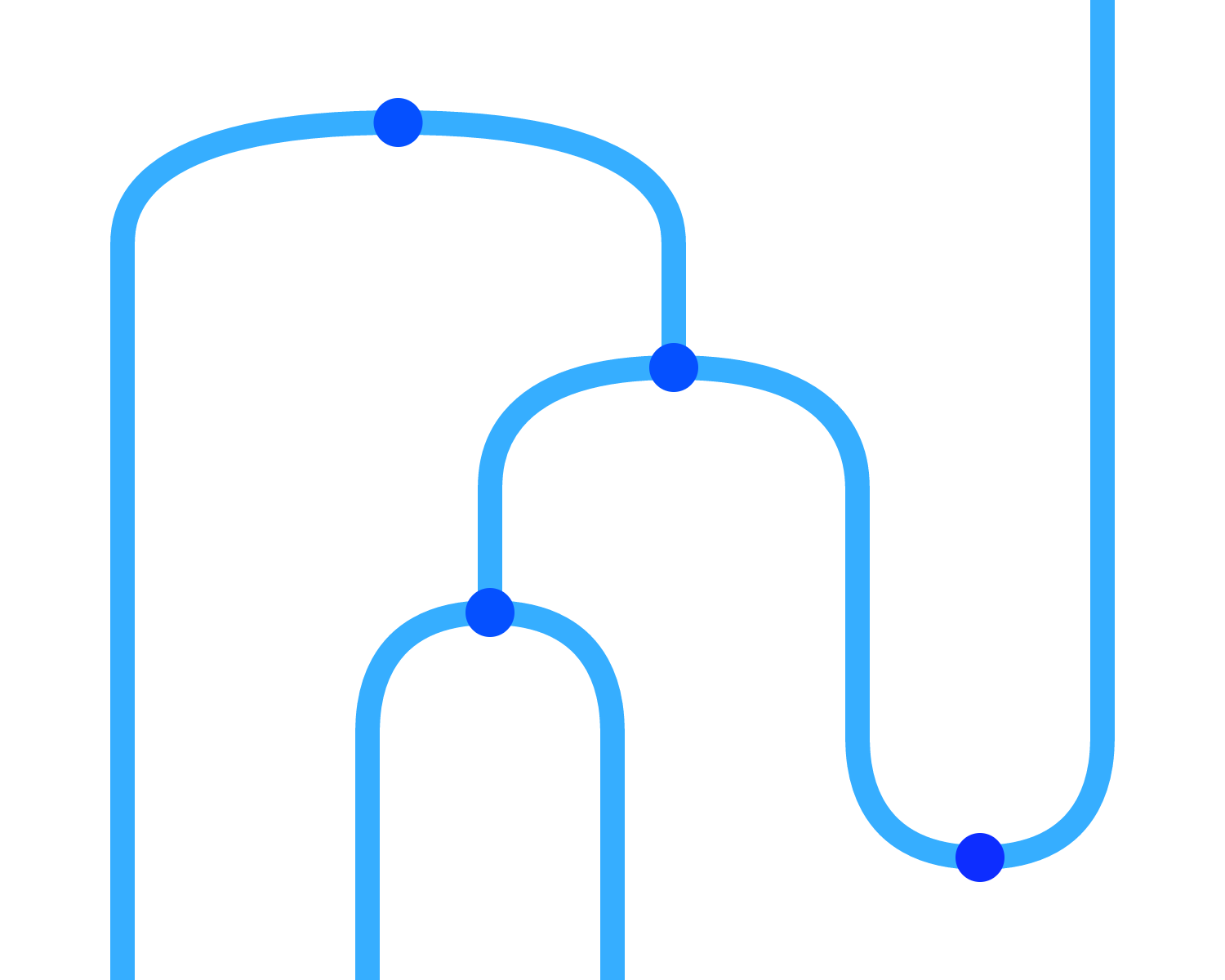}};
\node [inner sep=1pt] (4) at (2,0) {\tikzpng[scale=\myscale]{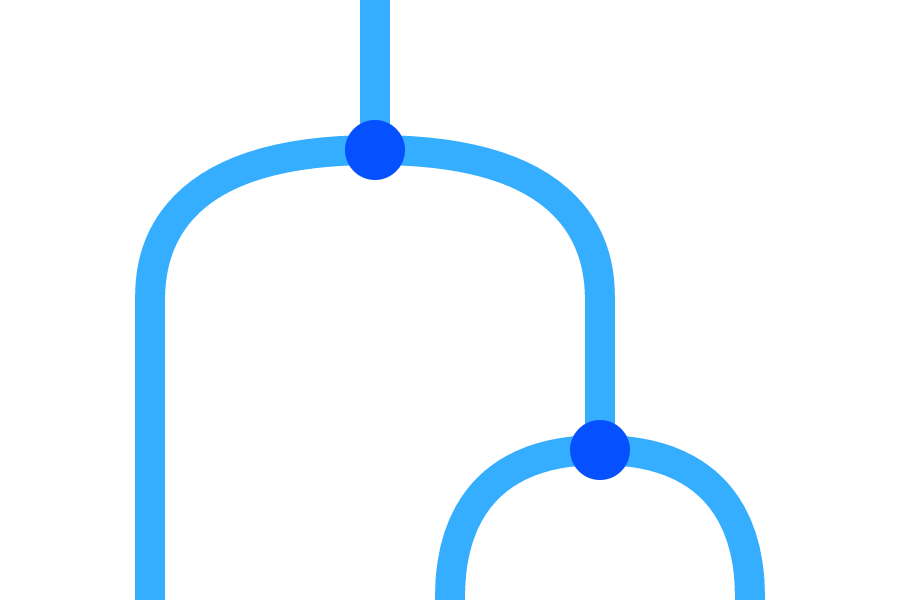}};
\draw [->>] (1) to node [right] {$R$} (2);
\draw [->] (1) to node [above] {$\alpha$} (4);
\draw [->] (2) to node [below] {$\alpha ^\inv$} (3);
\draw [->>] (3) to node [left] {$R$} (4);
\end{tz}
&
\begin{tz}[xscale=1.3, scale=1.5]
\node [inner sep=1pt] (1) at (0,0) {\tikzpng[xscale=1, scale=\myscale]{untwistalpha1}};
\node [inner sep=1pt] (2) at (0,-1.5) {\tikzpng[xscale=-1, scale=\myscale]{untwistalpha3}};
\node [inner sep=1pt] (3) at (2,-1.5) {\tikzpng[xscale=-1, scale=\myscale]{untwistalpha2}};
\node [inner sep=1pt] (4) at (2,0) {\tikzpng[scale=\myscale]{untwistalpha4}};
\draw [->>] (1) to node [right] {$R$} (2);
\draw [->] (1) to node [above] {$\alpha$} (4);
\draw [->] (2) to node [below] {$\alpha^\inv$} (3);
\draw [->>] (3) to node [left] {$R$} (4);
\end{tz}
\\
\nonumber
\begin{tz}[xscale=1.3, scale=1.5]
\node [inner sep=1pt] (1) at (0,0) {\tikzpng[scale=\myscale]{lambda_source}};
\node [inner sep=1pt] (2) at (0,-1.5) {\tikzpng[scale=\myscale]{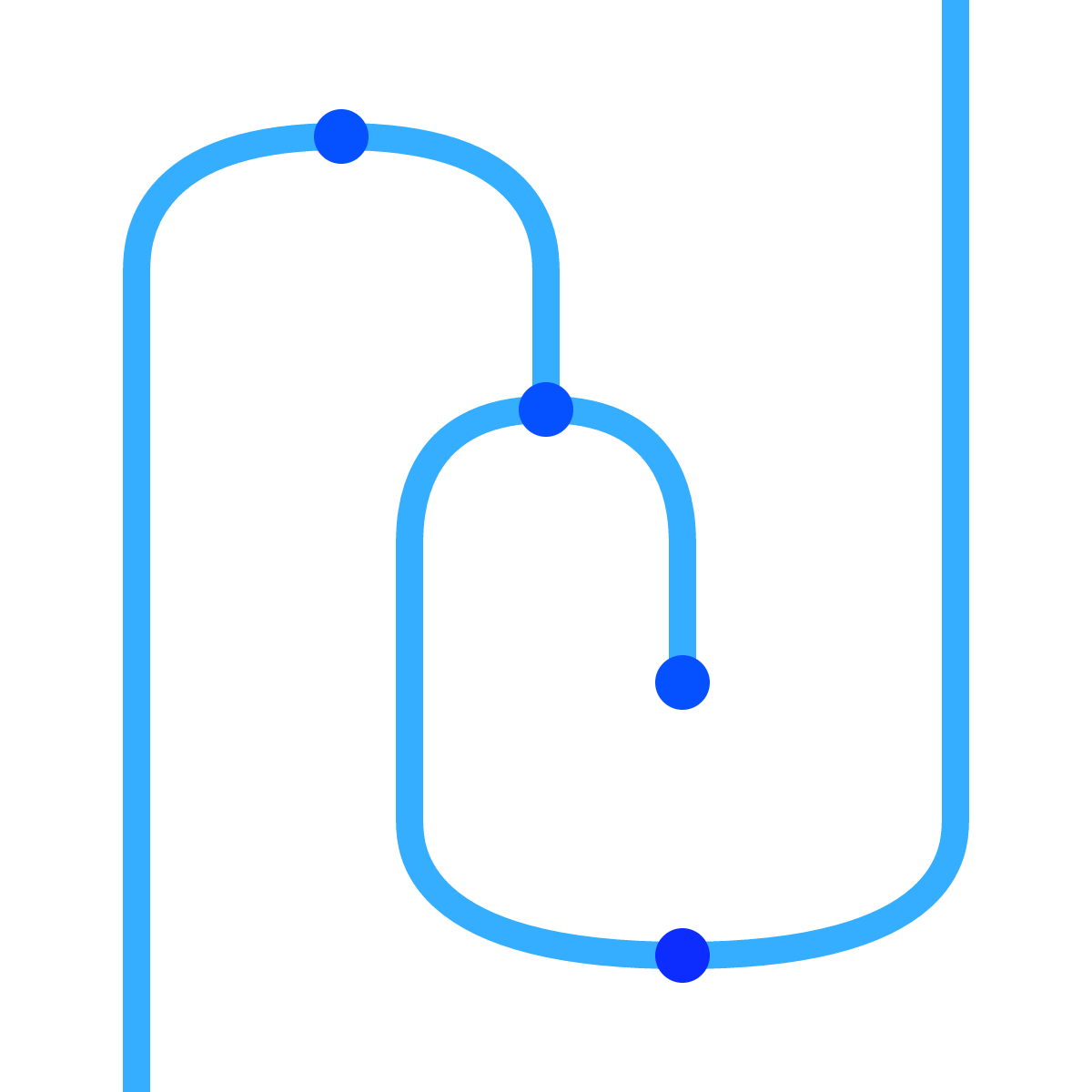}};
\node [inner sep=1pt] (3) at (2,-1.5) {\tikzpng[yscale=-1, scale=\myscale]{snake}};
\node [inner sep=1pt] (4) at (2,0) {\tikzpng[scale=\myscale]{bigidentity}};
\draw [->>] (1) to node [right] {$R$} (2);
\draw [->] (1) to node [above] {$\lambda$} (4);
\draw [->] (2) to node [below] {$\rho$} (3);
\draw [->>] (3) to node [left] {$R$} (4);
\end{tz}
&
\begin{tz}[xscale=1.3, scale=1.5]
\node [inner sep=1pt] (1) at (0,0) {\tikzpng[scale=\myscale]{lambda_source}};
\node [inner sep=1pt] (2) at (0,-1.5) {\tikzpng[xscale=-1, scale=\myscale]{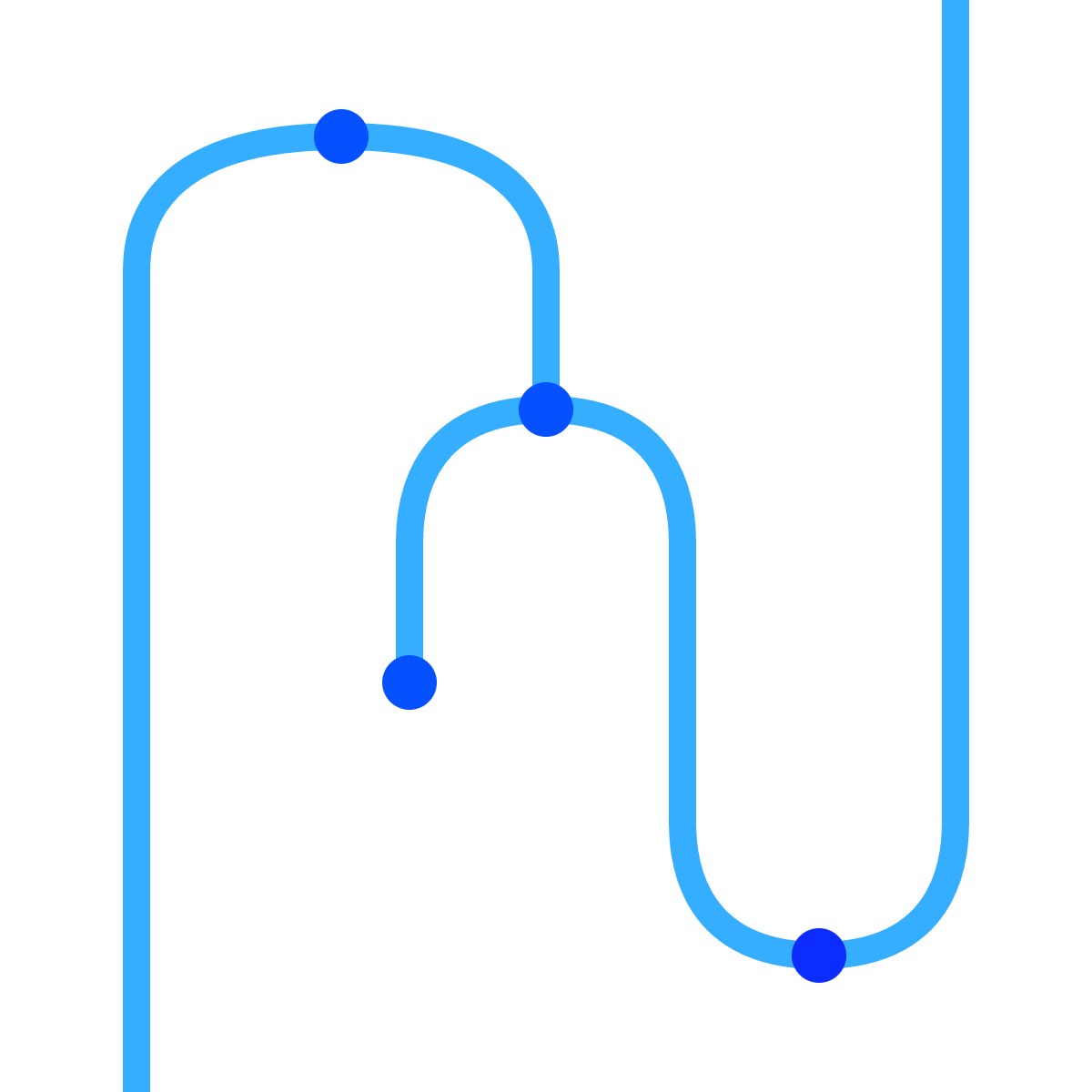}};
\node [inner sep=1pt] (3) at (2,-1.5) {\tikzpng[scale=\myscale]{snake}};
\node [inner sep=1pt] (4) at (2,0) {\tikzpng[scale=\myscale]{bigidentity}};
\draw [->>] (1) to node [right] {$R$} (2);
\draw [->] (1) to node [above] {$\lambda$} (4);
\draw [->] (2) to node [below] {$\rho$} (3);
\draw [->>] (3) to node [left] {$R$} (4);
\end{tz}
\\
\nonumber
\begin{tz}[xscale=1.3, scale=1.5]
\node [inner sep=1pt] (1) at (0,0) {\tikzpng[scale=\myscale]{rho_source}};
\node [inner sep=1pt] (2) at (0,-1.5) {\tikzpng[scale=\myscale]{unitunfurl2}};
\node [inner sep=1pt] (3) at (2,-1.5) {\tikzpng[scale=\myscale]{snake}};
\node [inner sep=1pt] (4) at (2,0) {\tikzpng[scale=\myscale]{bigidentity}};
\draw [->>] (1) to node [right] {$R$} (2);
\draw [->] (1) to node [above] {$\rho$} (4);
\draw [->] (2) to node [below] {$\lambda$} (3);
\draw [->>] (3) to node [left] {$R$} (4);
\end{tz}
&
\begin{tz}[xscale=1.3, scale=1.5]
\node [inner sep=1pt] (1) at (0,0) {\tikzpng[scale=\myscale]{rho_source}};
\node [inner sep=1pt] (2) at (0,-1.5) {\tikzpng[xscale=-1, scale=\myscale]{unitunfurl1}};
\node [inner sep=1pt] (3) at (2,-1.5) {\tikzpng[scale=\myscale]{snake}};
\node [inner sep=1pt] (4) at (2,0) {\tikzpng[scale=\myscale]{bigidentity}};
\draw [->>] (1) to node [right] {$R$} (2);
\draw [->] (1) to node [above] {$\rho$} (4);
\draw [->] (2) to node [below] {$\lambda$} (3);
\draw [->>] (3) to node [left] {$R$} (4);
\end{tz}
\end{calign}
See \glob, \textit{``Pf: Left Alpha''}, \textit{``Pf: Right Alpha''}, \textit{``Pf: Right Lambda''}, \textit{``Pf: Left Lambda''}, \textit{``Pf: Right Rho''} and \textit{``Pf: Left Rho''}. These equations imply a partner set of equations where $\alpha^\inv$, $\lambda^\inv$ or $\rho^\inv$ appear at the top of each diagram.

These equations tell us that whenever an pseudomonoid 2\-cell acts on a diagram, we can replace it by a sequence where we twist the diagram locally, apply a pseudomonoid 2\-cell (perhaps not identical to the original one), and then twist back. A simple case analysis shows that this is sufficient to rewrite $P$ such that any pseudomonoid 2\-cell is acting on a locally untwisted part of the diagram. Here we analyze the possible ways that the equations above can be used to locally modify the source of an $\alpha$, $\alpha^\inv$, $\lambda$, $\lambda^\inv$, $\rho$ or $\rho^\inv$ generator to reduce local twistedness, indicating in each picture the path to the single output wire with a dot, and giving the local twistedness as a label adjacent to each vertex:
\def\extrascale{0.6}
\begin{gather}
\scriptsize
\begin{tz}[scale=1.27,scale=\extrascale]
\node [anchor=south west, inner sep=0pt, scale=\extrascale] at (0,0) {\includegraphics[scale=0.5]{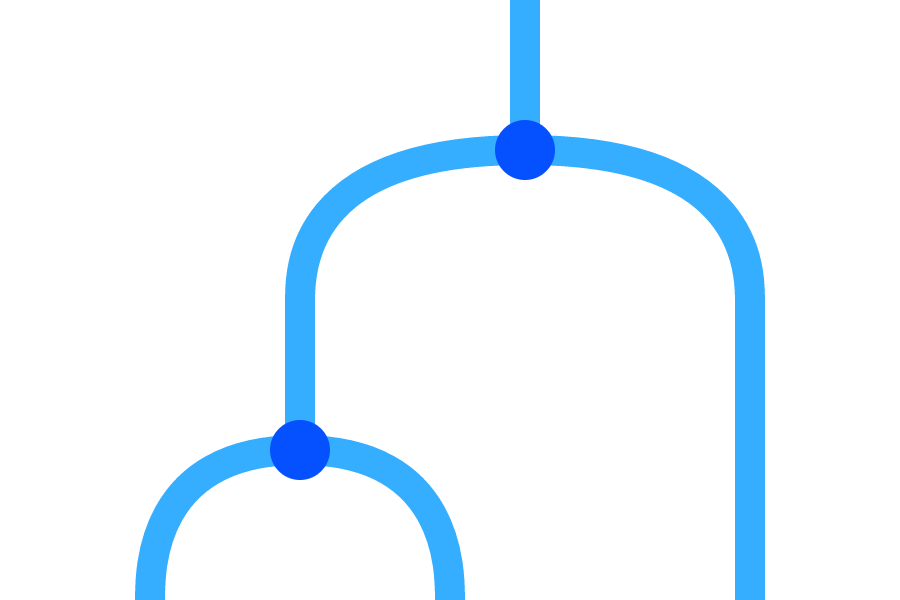}};
\path [use as bounding box] (current bounding box.south west) rectangle (current bounding box.north east);
\node [anchor=south east] at (1,0.5) {$n$};
\node [anchor=south east] at (1.75,1.5) {$n$};
\node [circle, draw=none, fill=lightblue, inner sep=1pt] at (1.75,2.15) {};
\end{tz}
\leadsto
\begin{tz}[scale=1.27, scale=\extrascale]
\node [anchor=south east, inner sep=0pt, xscale=-1, scale=\extrascale] at (0,0) {\includegraphics[scale=0.5]{img/untwistalpha1}};
\path [use as bounding box] (current bounding box.south west) rectangle (current bounding box.north east);
\node [anchor=south west] at (2,0.5) {$n{-}1$};
\node [anchor=south west] at (1.25,1.5) {$n{-}1$};
\node [circle, draw=none, fill=lightblue, inner sep=1pt] at (2.5,-0.15) {};
\end{tz}
\leadsto
\begin{tz}[scale=1.27,scale=\extrascale]
\node [anchor=south west, inner sep=0pt, scale=\extrascale] at (0,0) {\includegraphics[scale=0.5]{img/untwistalpha1}};
\path [use as bounding box] (current bounding box.south west) rectangle (current bounding box.north east);
\node [anchor=south east] at (1,0.5) {$n{-}1$};
\node [anchor=south east] at (1.75,1.5) {$n{-}1$};
\node [circle, draw=none, fill=lightblue, inner sep=1pt] at (1.5,-0.15) {};
\end{tz}
\leadsto
\begin{tz}[scale=1.27, scale=\extrascale]
\node [anchor=south east, inner sep=0pt, xscale=-1, scale=\extrascale] at (0,0) {\includegraphics[scale=0.5]{img/untwistalpha1}};
\path [use as bounding box] (current bounding box.south west) rectangle (current bounding box.north east);
\node [anchor=south west] at (2,0.5) {$n{-}2$};
\node [anchor=south west] at (1.25,1.5) {$n{-}1$};
\node [circle, draw=none, fill=lightblue, inner sep=1pt] at (0.5,-0.15) {};
\end{tz}
\leadsto
\begin{tz}[scale=1.27,scale=\extrascale]
\node [anchor=south west, inner sep=0pt, scale=\extrascale] at (0,0) {\includegraphics[scale=0.5]{img/untwistalpha1}};
\path [use as bounding box] (current bounding box.south west) rectangle (current bounding box.north east);
\node [anchor=south east] at (1,0.5) {$n{-}2$};
\node [anchor=south east] at (1.75,1.5) {$n{-}2$};
\node [circle, draw=none, fill=lightblue, inner sep=1pt] at (1.75,2.15) {};
\end{tz}
\hspace{-1pt}
\\[10pt]
\scriptsize
\begin{tz}[scale=1.27, scale=\extrascale]
\node [anchor=south east, inner sep=0pt, xscale=-1, scale=\extrascale] at (0,0) {\includegraphics[scale=0.5]{img/untwistalpha1}};
\path [use as bounding box] (current bounding box.south west) rectangle (current bounding box.north east);
\node [anchor=south west] at (2,0.5) {$n$};
\node [anchor=south west] at (1.25,1.5) {$n$};
\node [circle, draw=none, fill=lightblue, inner sep=1pt] at (1.25,2.15) {};
\end{tz}
\leadsto
\begin{tz}[scale=1.27,scale=\extrascale]
\node [anchor=south west, inner sep=0pt, scale=\extrascale] at (0,0) {\includegraphics[scale=0.5]{img/untwistalpha1}};
\path [use as bounding box] (current bounding box.south west) rectangle (current bounding box.north east);
\node [anchor=south east] at (1,0.5) {$n$};
\node [anchor=south east] at (1.75,1.5) {$n{-}1$};
\node [circle, draw=none, fill=lightblue, inner sep=1pt] at (2.5,-0.15) {};
\end{tz}
\leadsto
\begin{tz}[scale=1.27, scale=\extrascale]
\node [anchor=south east, inner sep=0pt, xscale=-1, scale=\extrascale] at (0,0) {\includegraphics[scale=0.5]{img/untwistalpha1}};
\path [use as bounding box] (current bounding box.south west) rectangle (current bounding box.north east);
\node [anchor=south west] at (2,0.5) {$n{-}1$};
\node [anchor=south west] at (1.25,1.5) {$n{-}1$};
\node [circle, draw=none, fill=lightblue, inner sep=1pt] at (1.5,-0.15) {};
\end{tz}
\leadsto
\begin{tz}[scale=1.27,scale=\extrascale]
\node [anchor=south west, inner sep=0pt, scale=\extrascale] at (0,0) {\includegraphics[scale=0.5]{img/untwistalpha1}};
\path [use as bounding box] (current bounding box.south west) rectangle (current bounding box.north east);
\node [anchor=south east] at (1,0.5) {$n{-}1$};
\node [anchor=south east] at (1.75,1.5) {$n{-}1$};
\node [circle, draw=none, fill=lightblue, inner sep=1pt] at (0.5,-0.15) {};
\end{tz}
\leadsto
\begin{tz}[scale=1.27,scale=\extrascale]
\node [anchor=south west, inner sep=0pt, scale=\extrascale] at (0,0) {\includegraphics[scale=0.5]{img/untwistalpha1}};
\path [use as bounding box] (current bounding box.south west) rectangle (current bounding box.north east);
\node [anchor=south east] at (1,0.5) {$n{-}2$};
\node [anchor=south east] at (1.75,1.5) {$n{-}2$};
\node [circle, draw=none, fill=lightblue, inner sep=1pt] at (1.75,2.15) {};
\end{tz}
\hspace{-1pt}
\\[10pt]
\scriptsize
\begin{tz}[scale=1.27,scale=\extrascale]
\node [anchor=south west, inner sep=0pt, scale=\extrascale] at (0,0) {\includegraphics[scale=0.5]{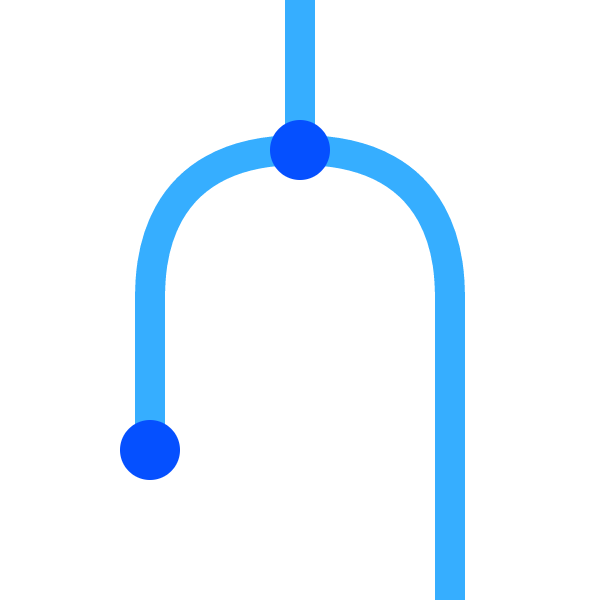}};
\path [use as bounding box] (current bounding box.south west) rectangle (current bounding box.north east);
\node [anchor=south east] at (0.5,0.5) {$n$};
\node [anchor=south east] at (1,1.5) {$n$};
\node [circle, draw=none, fill=lightblue, inner sep=1pt] at (1,2.15) {};
\end{tz}
\,\,\leadsto\,\,
\begin{tz}[scale=1.27,scale=\extrascale]
\node [anchor=south east, inner sep=0pt, scale=\extrascale, xscale=-1] at (0,0) {\includegraphics[scale=0.5]{img/lambda_source}};
\path [use as bounding box] (current bounding box.south west) rectangle (current bounding box.north east);
\node [anchor=south west] at (1.5,0.5) {$n{-}2$};
\node [anchor=south west] at (1,1.5) {$n{-}1$};
\node [circle, draw=none, fill=lightblue, inner sep=1pt] at (.5,-0.15) {};
\end{tz}
\,\,\leadsto\,\,
\begin{tz}[scale=1.27,scale=\extrascale]
\node [anchor=south west, inner sep=0pt, scale=\extrascale] at (0,0) {\includegraphics[scale=0.5]{img/lambda_source}};
\path [use as bounding box] (current bounding box.south west) rectangle (current bounding box.north east);
\node [anchor=south east] at (0.5,0.5) {$n{-}2$};
\node [anchor=south east] at (1,1.5) {$n{-}2$};
\node [circle, draw=none, fill=lightblue, inner sep=1pt] at (1,2.15) {};
\end{tz}
\qquad
\begin{tz}[scale=1.27,scale=\extrascale]
\node [anchor=south west, inner sep=0pt, scale=\extrascale] at (0,0) {\includegraphics[scale=0.5]{img/lambda_source}};
\path [use as bounding box] (current bounding box.south west) rectangle (current bounding box.north east);
\node [anchor=south east] at (0.5,0.5) {$n{+}1$};
\node [anchor=south east] at (1,1.5) {$n$};
\node [circle, draw=none, fill=lightblue, inner sep=1pt] at (1.5,-0.15) {};
\end{tz}
\,\,\leadsto\,\,
\begin{tz}[scale=1.27,scale=\extrascale]
\node [anchor=south east, inner sep=0pt, scale=\extrascale, xscale=-1] at (0,0) {\includegraphics[scale=0.5]{img/lambda_source}};
\path [use as bounding box] (current bounding box.south west) rectangle (current bounding box.north east);
\node [anchor=south west] at (1.5,0.5) {$n{-}1$};
\node [anchor=south west] at (1,1.5) {$n{-}1$};
\node [circle, draw=none, fill=lightblue, inner sep=1pt] at (1,2.15) {};
\end{tz}
\,\,\leadsto\,\,
\begin{tz}[scale=1.27,scale=\extrascale]
\node [anchor=south west, inner sep=0pt, scale=\extrascale] at (0,0) {\includegraphics[scale=0.5]{img/lambda_source}};
\path [use as bounding box] (current bounding box.south west) rectangle (current bounding box.north east);
\node [anchor=south east] at (0.5,0.5) {$n{-}1$};
\node [anchor=south east] at (1,1.5) {$n{-}2$};
\node [circle, draw=none, fill=lightblue, inner sep=1pt] at (1.5,-0.15) {};
\end{tz}
\end{gather}
These modifications involve the left-hand equations given above, and their partners. The twisting can be locally increased in a similar way. Thus, we can ensure that every pseudomonoid 2\-cell acts on a part of the diagram that is locally untwisted.

However, we must show that $P$ is equal to 2\-morphism in which the  \textit{entire} diagram on which the pseudomonoid 2\-cell acts is untwisted. Suppose $\gamma : X \to Y$ is a single pseudomonoid 2\-cell within the composite $P$, acting on a locally untwisted part of the diagram. Then by \autoref{lem:srsnatural}, the following equation holds:
\begin{equation}
\begin{tz}[yscale=0.7]
\node (1) at (0,0) {$X$};
\node (2) at (2,0) {$Y$};
\node (3) at (0,-2) {$\widetilde X$};
\node (4) at (2,-2) {$\widetilde Y$};
\draw [->] (1) to node [above] {$\gamma$} (2) {};
\draw [->] (1) to node [left] {$\Omega_X$} (3) {};
\draw [<-] (2) to node [right] {$\Omega_Y ^\inv$} (4) {};
\draw [->] (3) to node [below] {$\gamma$} (4) {};
\end{tz}
\end{equation}
We use this to replace every pseudomonoid 2\-cell acting on a locally untwisted part of the diagram, with a pseudomonoid 2\-cell acting on a completely untwisted diagram.
\end{proof}

\subsection{Coherence}

We now give the coherence theorems.
\tikzset{box/.style={fill=white, draw=black, line width=1.7pt}}
\tikzset{string/.style={draw=black, line width=1.7pt}}
\begin{proposition}[Coherence for snakeorators]
\label{prop:coherencesnake}
In \free \E, let $P,Q:X \to Y$ be composites of snakeorators, inverse snakeorators and interchangers, such that $X$ is a connected 1\-morphism. Then $P=Q$.
\end{proposition}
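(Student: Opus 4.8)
The plan is to treat the snakeorators as the reducing moves of a terminating rewriting system and to establish a confluence property, using connectedness of $X$ to exclude the degenerate configurations that would otherwise produce inequivalent composites. Since $\sn_1$, $\sn_2$ and the interchangers are all invertible, $Q$ is invertible, so it suffices to show that any composite $R : X \to X'$ of these generators is determined by its source and target; equivalently, that $R = Q^\inv \circ P$ is forced once its endpoints are fixed. To each 1-morphism I would attach the number of $(\cup,\cap)$ pairs it contains: the snakeorators $\sn_1$, $\sn_2$ strictly decrease this count, the inverse snakeorators $\sn_1^\inv$, $\sn_2^\inv$ increase it, and interchangers leave it fixed. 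This yields a well-founded measure and a notion of \emph{reduction}, namely a composite using snakeorators and interchangers only.

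The first technical step is a valley factorization: every composite $P : X \to Y$ is equal to one in which all reductions precede all expansions, that is $X \twoheadrightarrow N \twoheadleftarrow Y$, where $N$ is reached downward by snakeorators and interchangers and $Y$ is recovered upward by inverse snakeorators and interchangers. To obtain this I would repeatedly commute each expansion past a following reduction: an inverse snakeorator immediately followed by the matching snakeorator cancels, and otherwise the two moves act on disjoint or interleaved cup–cap data, so they may be interchanged using the naturality of the snake removal scheme (\autoref{lem:srsnatural}) together with the naturality of interchangers. The finitely many local rearrangements needed here are exactly the identities among the snake maps that are verified in Globular.

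The heart of the argument is confluence of the reduction relation: a connected 1-morphism has a unique normal form under snake removal, and any two reductions to it are equal. Termination is immediate from the measure, so it suffices to check local confluence, namely that two eliminable $(\cup,\cap)$ pairs may be removed in either order with equal result. Each overlap is a finite local configuration, and the corresponding diamond is settled using the snake removal scheme of \autoref{def:srs}, its naturality, and the swallowtail equations satisfied by the snake maps. This is where connectedness of $X$ is essential: it forbids free loops and independent subdiagrams whose circulation around one another would realise a nontrivial clasp, ensuring that every reduction sequence from $X$ terminates at one and the same normal form $N(X)$. Moreover $N(X) \equiv N(Y)$, since snakeorators, inverse snakeorators and interchangers all preserve the snake-equivalence class of a connected diagram.

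Finally, given $P, Q : X \to Y$, both admit valley factorizations through the common normal form $N := N(X) \equiv N(Y)$. By confluence the downward legs both equal the canonical reduction $r_X : X \twoheadrightarrow N$, and the upward legs both equal $r_Y^\inv$, whence $P = r_Y^\inv \circ r_X = Q$. I expect the main obstacle to be the local-confluence diamonds, and in particular pinning down precisely where connectedness is invoked to rule out the braided-loop configurations; the remaining steps are bookkeeping supported by the naturality lemmas and the Globular-verified local equations.
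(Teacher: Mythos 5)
Your proposal is a rewriting-theoretic reorganization of the statement, but it has two genuine gaps, and the first one swallows essentially the entire content of the proposition. The valley factorization step is not a routine preprocessing move: when an inverse snakeorator creates a cup--cap pair that is annihilated later in the composite, the two moves are in general neither a matching cancellation nor disjoint. First, the created cup may be annihilated against a \emph{different} cap by a $\sigma_2$ (and symmetrically for the cap); this configuration overlaps your expansion without matching it, and converting it into a matching annihilation requires the swallowtail equations --- this is the paper's case (B), handled in \autoref{fig:BtoA}. Second, even in the matching case, arbitrary interchangers and snakeorators acting on the rest of the diagram intervene between creation and annihilation, and before you can cancel the pair you must show the composite is equal to one in which the cup and cap move only as a single unit. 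In the paper this is the heart of the proof: a two-level induction on the maximum vertical separation between the created cup and cap and on the multiplicity of that maximum, with a case analysis ($4$ ways for the separation to increase, $6$ to decrease, $24$ local-maximum types) each resolved by interchanger naturality as in \autoref{fig:reduceseparation}. Your sentence asserting that non-matching expansion--reduction pairs ``may be interchanged using the naturality of the snake removal scheme together with the naturality of interchangers'' asserts exactly this content without argument; note that for a source with no eliminable pairs (\autoref{def:eliminable}), valley factorization already forces such a composite to be a pure interchanger composite, so proving the factorization \emph{is} the theorem, not a lemma on the way to it.

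The confluence half has a second, structural problem: it conflates confluence (uniqueness of normal forms) with coherence (equality of the reduction \emph{composites} as 2\-morphisms). To conclude that the two downward legs of your valleys are equal you need every local diamond to commute as an equation of 2\-morphisms \emph{and} a coherent Newman-type lemma to globalize this --- machinery for higher-dimensional rewriting modulo invertible cells that the authors deliberately avoid as underdeveloped, citing~\cite{Mimram_2014} in \autoref{sec:related}. Moreover your claim $N(X) \equiv N(Y)$ on the nose is too strong: the normal form is unique only up to interchangers (two independent eliminable pairs can be removed in either height order, yielding composites differing by interchange), so you are really rewriting modulo interchange, and you must separately prove that interchanger-only composites between fixed connected endpoints agree --- your clasp/free-loop observation correctly locates where connectedness enters, matching the paper's use of it, but this is an additional coherence input rather than a consequence of confluence. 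The paper sidesteps all of this by first normalizing: composing with the snake removal scheme of \autoref{def:srs} reduces to the case $X \equiv Y$, $Q = \id$, with $X$ free of eliminable pairs, and then inducting directly on the number of inverse snakeorators. Your instinct that the swallowtail equations and \autoref{lem:srsnatural} are the right local inputs is sound, but as written the proof defers its hardest steps to assertions that would fail without the paper's separation induction and case (B) argument.
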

\begin{proof}

We begin by reducing the problem to a simpler special case. We define $S:Y \to Y'$ by applying the snake-removal scheme of \autoref{def:srs} to each wire of $Y$, and we define $P' =  S \circ P$ and $Q' = S \circ Q$. Clearly $P'=Q'$ just when $P=Q$. Next, note that $Q'$ is invertible; we define $P'' = P' \circ Q' {}^\inv : Y' \to Y'$, and clearly $P'' = \id_{Y'}$ just when $P'=Q'$. So without loss of generality, we may assume that $X \equiv Y$ and $Q=\id$, and that $X$ has no eliminable cup-cap pairs in the sense of \autoref{def:eliminable}; our task is then to show that $P=\id$.

The composite $P$ is formed from $\sn_1$, $\sn_2$, $\sn_1^\inv$, $\sn_2^\inv$ and interchanger maps. The source of the $\sn_1^\inv$ and $\sn_2^\inv$ maps is the identity, so by naturality, they can all be moved to the beginning of the composite. Similarly, the $\sn_1$ and $\sn_2$ maps can all be moved to the end of the composite. So the rearranged composite is of the following form: a sequence of inverse snakeorators, followed by a sequence of interchangers, followed by a sequence of snakeorators.
\begin{figure}
\tikzset{bluestring/.style={draw=lightblue, line width=2.4pt}}
\tikzset{dottedbluestring/.style={bluestring, dash pattern=on 1pt off 2pt}}
\tikzset{white/.style={draw=white, line width=3.5pt}}
\tikzset{extrascale/.style={scale=0.7}}
\begin{calign}
\nonumber
\begin{tz}[scale=4]
\node (1) [inner sep=0pt] at (0,0) {\begin{tz}[scale=0.89, extrascale]
    \node [inner sep=0pt, anchor=south west] at (0,0) {\tikzpng[scale=3.5, extrascale]{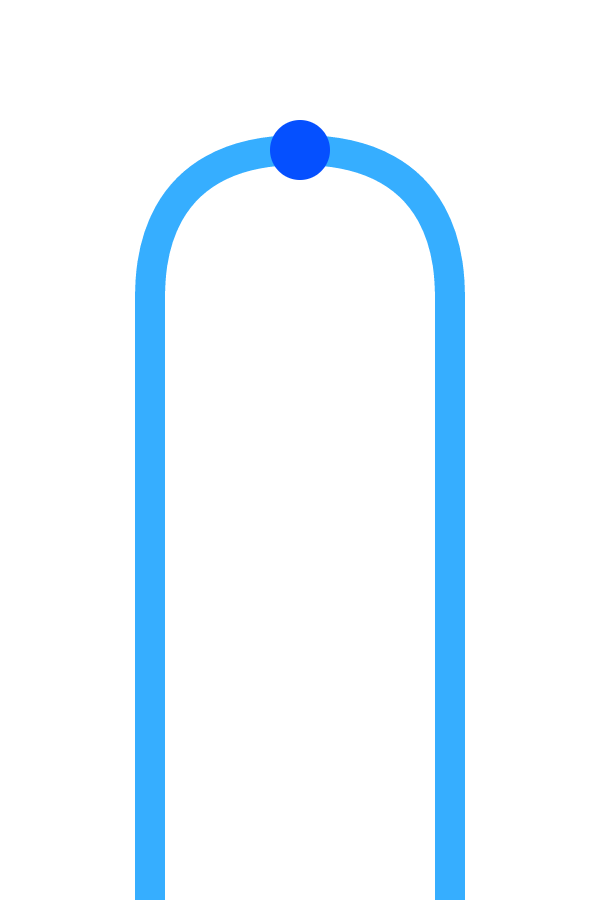}};
    \draw [white] (0.5,1) node (a1) {} to +(0,1) node (a2) {};
    \draw [dottedbluestring] (a1.center) to (a2.center);
\end{tz}};
\node (2) [inner sep=0pt, anchor=west] at ([xshift=0.2cm] 1.east) {\begin{tz}[scale=0.890, extrascale]
    \node [inner sep=0pt, anchor=south west] at (0,0) {\tikzpng[scale=3.5, extrascale]{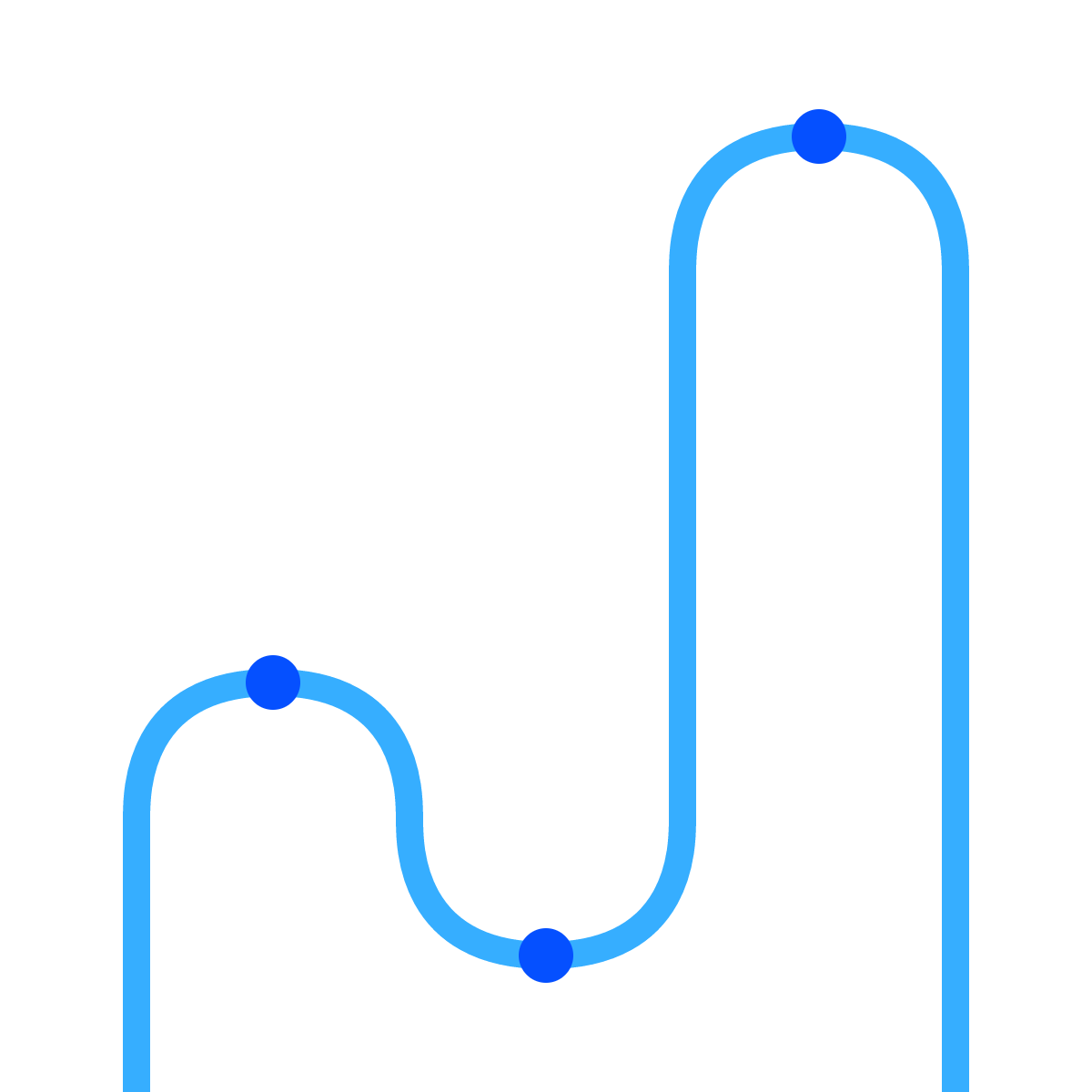}};
    \draw [white] (2.5,1.75) to +(0,1);
    \draw [dottedbluestring] (2.5,1.75) to +(0,1);
\end{tz}};
\node (3) [inner sep=0pt, anchor=west] at ([xshift=0.2cm] 2.east) {\begin{tz}[scale=0.890, xscale=-1, extrascale]
    \node [inner sep=0pt, anchor=south east] at (0,0) {\tikzpng[scale=3.5, xscale=-1, extrascale]{sw2}};
    \draw [white] (2.5,1.75) to +(0,1);
    \draw [dottedbluestring] (2.5,1.75) to +(0,1);
\end{tz}};
\node (5) [inner sep=0pt, anchor=north] at ([yshift=-0.3cm] 3.south) {\begin{tz}[scale=0.890, extrascale]
    \node [inner sep=0pt, anchor=south west] at (0,0) {\tikzpng[scale=3.5, extrascale]{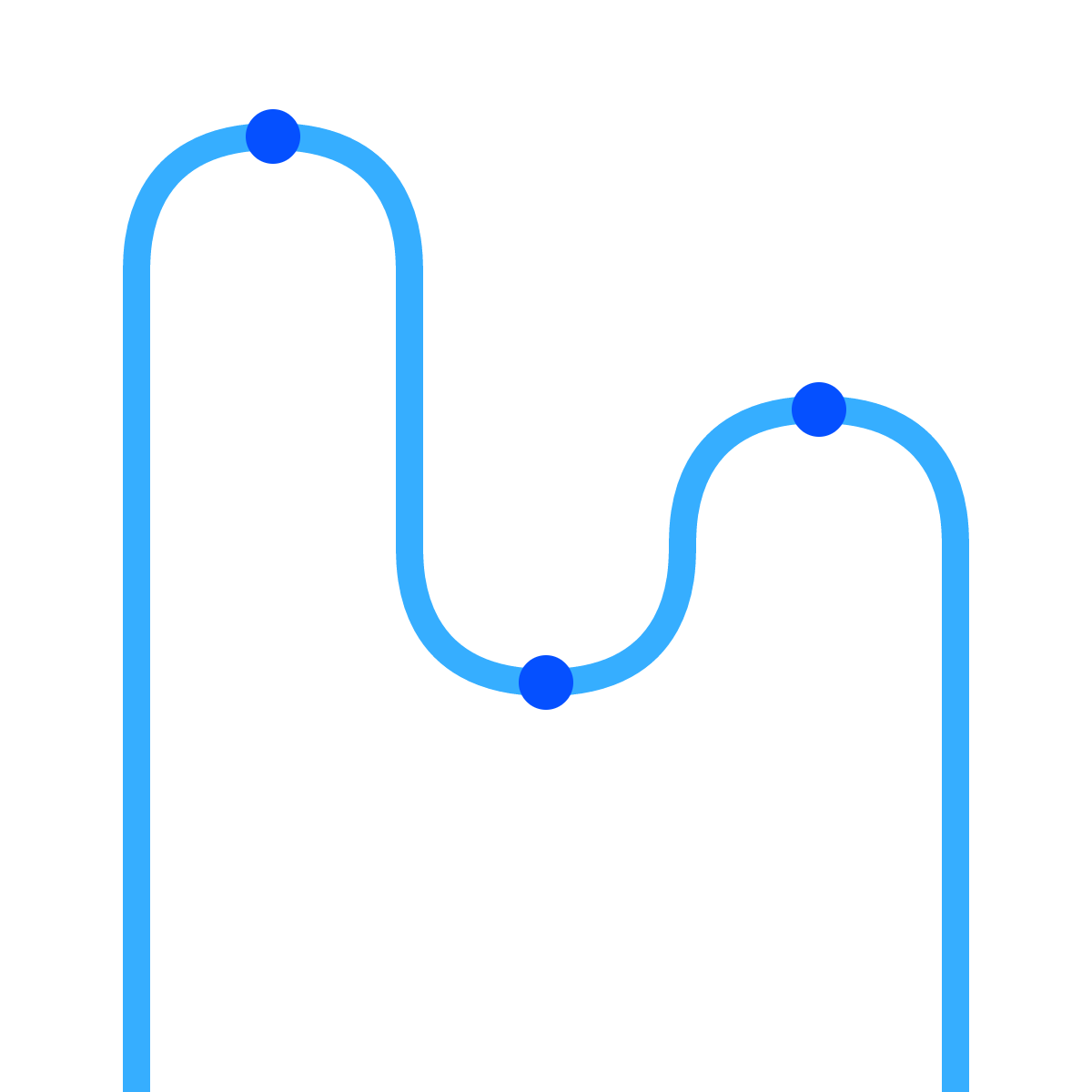}};
    \draw [white] (3.5,0.5) to +(0,1);
    \draw [dottedbluestring] (3.5,0.5) to +(0,1);
\end{tz}};
\node (6) [inner sep=0pt, anchor=west] at ([xshift=0.2cm] 5.east) {\begin{tz}[scale=0.890, extrascale]
    \node [inner sep=0pt, anchor=south west] at (0,0) {\tikzpng[scale=3.5, xscale=-1, extrascale]{sw4}};
    \draw [white] (3.5,0.5) to +(0,1);
    \draw [dottedbluestring] (3.5,0.5) to +(0,1);
\end{tz}};
\node (4) [inner sep=0pt, anchor=west] at ([xshift=0.35cm] 3.east) {\begin{tz}[scale=0.890, extrascale]
    \node [inner sep=0pt, anchor=south west] at (0,0) {\tikzpng[scale=3.5, extrascale]{sw1}};
    \draw [white] (1.5,1) to +(0,1);
    \draw [dottedbluestring] (1.5,1) to +(0,1);
\end{tz}};
\draw [->] (1) to node [above] {$\sigma_1^\inv$} (2);
\draw [->] (2) to node [above] {$\sim$} (3);
\draw [->] (3) to node [above] {$\sigma_2$} (4);
\draw [->, shorten <=8pt, shorten >=1pt] (3) to node [right, pos=0.6] {$\sim$} (5);
\draw [->, shorten <=-15pt] (5) to node [above left, pos=0.4] {$\sigma_2$} (4.south west);
\draw [<-, shorten <=8pt, shorten >=1pt] (4) to node [right, pos=0.6] {$\sigma_1$} (6);
\draw [->] (5) to node [below=3pt] {$\sim$} (6);
\node at (1.95,-0.25) {$(*)$};
\node at (2.15,-0.50) {$(**)$};
\end{tz}
\end{calign}
\caption{Turning $\sigma_2$ into $\sigma_1$ in a general composite\label{fig:BtoA}}
\end{figure}

We argue by induction on the number of inverse snakeorator generators. This proof will have two inductive arguments; we call this first one $I_1$. Suppose there are no inverse snakeorators; then we are done. Otherwise, write $F$ for the final inverse snakeorator present in the rearranged composite. Suppose for the rest of the proof that it is a $\sigma_1^\inv$ generator; for a $\sigma_2^\inv$ generator, a similar argument applies.

The final inverse snakeorator $F$ produces a single cup and cap, at least one of which will be annihilated later in the composite, since by assumption $X$ has no eliminable cup-cap pairs. There are two ways that this can happen: either (A) they are annihilated with each other by a $\sigma_1$ generator, or (B) one of them is annihilated with a neighbouring cup or cap by a $\sigma_2$ generator. Suppose case (B) holds; then we will show our composite is equal to one in which (A) in fact holds. The argument is illustrated in \autoref{fig:BtoA}. We begin with some composite that contains a cap. At some point lower down the left-hand wire, a $\sigma_1 ^\inv$ generator acts to create a new cup and cap pair. Later in the composite, following the application of interchangers, and snakeorators to other cups and caps not displayed, a $\sigma_2$ generator is applied to the right-hand cup and cap. (We are supposing that it is the newly-created \textit{cup} which is annihilated first; if the newly-created cap is annihilated first, a reflected argument applies.) In triangle $(*)$, we observe by naturality that the composite is equal to one in which the $\sigma_2$ is applied higher up, adjacent to the leftmost cap. In triangle $(**)$, we apply one of the swallowtail equations to replace the $\sigma_2^\inv$ with a $\sigma_1^\inv$ that annihilates the same cup and cap created in the first arrow of the diagram.

We assume therefore that case (A) holds. Local to the cup and cap produced by $F$, the diagram will look like this:
\begin{equation}
\label{eq:snakeoratorblocked}
\tikzpng[scale=2]{bigidentity}
{\xto{\sigma_1^\inv}}
\tikzpng[scale=2, xscale=-1]{snake}
{\xto[->>]{\sim, \sigma_1, \sigma_2}}
\tikzpng[scale=2, xscale=-1]{snake}
{\xto{\sigma_1}}
\tikzpng[scale=2]{bigidentity}
\end{equation}
The central arrow represents some composite of interchangers, $\sigma_1$ and $\sigma_2$ generators. By assumption, the $\sigma_1$ and $\sigma_2$ maps will act only on the rest of the diagram, not the displayed cups and caps, but the interchangers may affect any part of the diagram. Since this central arrow is not the identity, we cannot immediately apply the invertibility equation to cancel the $\sigma_1^\inv, \sigma_1$ pair from the beginning and end of the composite.

During the course of this central arrow, the chosen cup and cap may become separated vertically, with the cap always remaining above the cup.\footnote{The cap could only go below the cup if an additional inverse snakeorator $\sigma_2^\inv$ was applied to the central connecting wire, but by assumption $\sigma$ is the final inverse snakeorator present in the composite.} Since the initial and final separation is clearly 0, this separation must have some maximum value $n \in \N$ over the course of the central arrow. We now show that our composite is equal to one for which  this maximum value is 0. We give an inductive argument $I_2$, jointly on the value $n$ of this maximum, and its multiplicity $m$ over the course of the central arrow (since the separation may reach its maximum value of $n$ more than once over the composite.)

There are 4 ways that the separation between the cup and cap can increase, by applying interchangers as follows, where the boxes $A$ and $B$ indicate arbitrary composites, the box $C$ is a composite of height 1 \ (i.e. a generator padded on the left and right), and the black wires are arbitrary types:
\begin{calign}
\label{eq:increasesep1}
\begin{tz}[scale=0.63]
\node [inner sep=0pt, anchor=south west] at (0,0) {\tikzpng[scale=2.5]{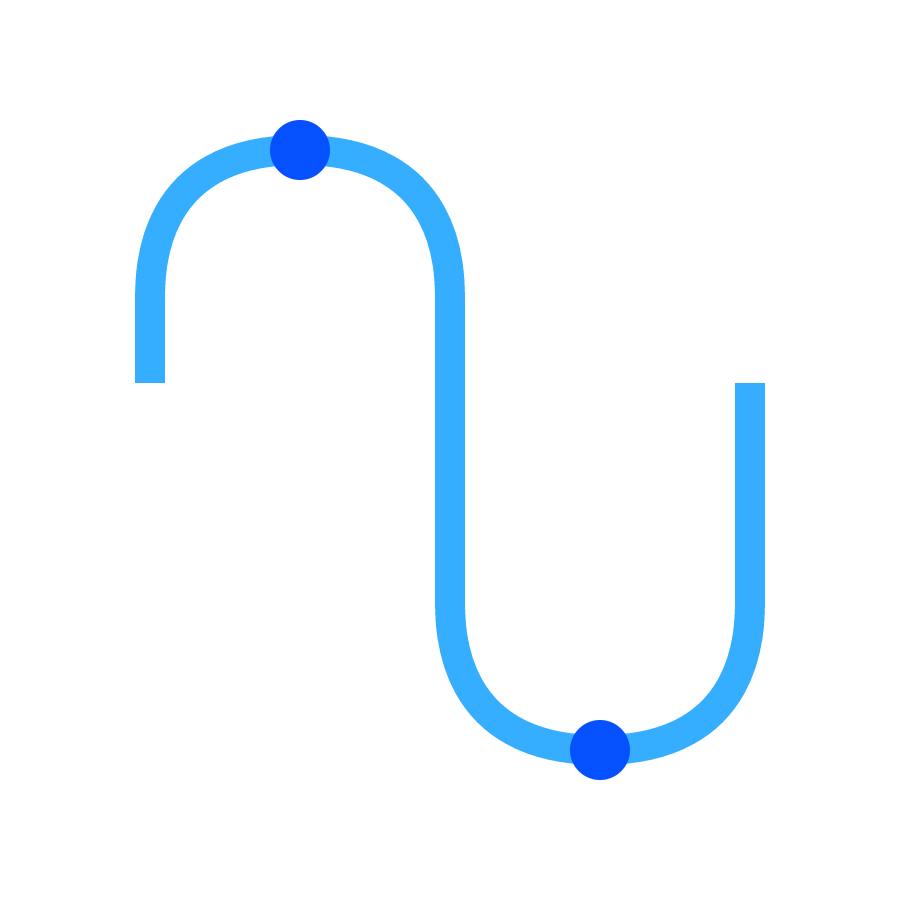}};
\draw [string] (-0.5,2) to +(0,2.25);
\draw [string] (3,2) to +(0,2.25);
\draw [string] (0,0) to +(0,1);
\draw [string] (3.5,0) to +(0,1);
\draw [box] (-1,1) rectangle +(2,1);
\draw [box] (2,1) rectangle +(2,1);
\draw [box] (-1,3) rectangle +(1,0.75);
\node at (0,1.5) {$A$};
\node at (3,1.5) {$B$};
\node at (-0.5,3.375) {$C$};
\end{tz}
\,\,\to\,\,
\begin{tz}[scale=0.63]
\node [inner sep=0pt, anchor=south west] at (0,0) {\tikzpng[scale=2.5]{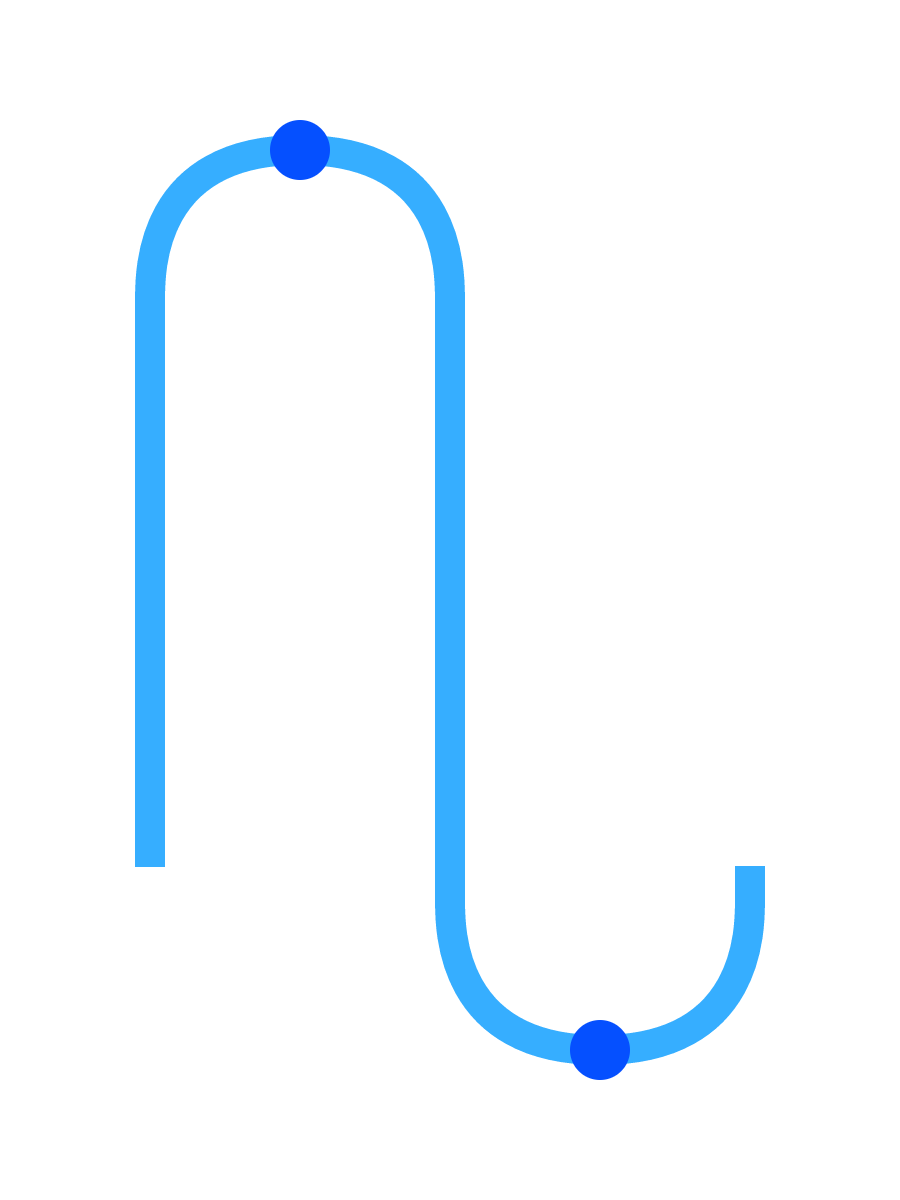}};
\draw [string] (-0.5,2) to +(0,2.25);
\draw [string] (3,2) to +(0,2.25);
\draw [string] (0,0) to +(0,1);
\draw [string] (3.5,0) to +(0,1);
\draw [box] (-1,1) rectangle +(2,1);
\draw [box] (2,1) rectangle +(2,1);
\draw [box] (-1,2.5) rectangle +(1,0.75);
\node at (0,1.5) {$A$};
\node at (3,1.5) {$B$};
\node at (-0.5,2.875) {$C$};
\end{tz}
\\
\begin{tz}[scale=0.63,xscale=-1]
\node [inner sep=0pt, anchor=south east] at (0,0) {\tikzpng[scale=2.5]{snake-s1}};
\draw [string] (0,2) to +(0,2.25);
\draw [string] (3.5,2) to +(0,2.25);
\draw [string] (-.5,0) to +(0,1);
\draw [string] (3.5,0) to +(0,1);
\draw [box] (-1,1) rectangle +(2,1);
\draw [box] (2,1) rectangle +(2,1);
\draw [box] (-0.5,3) rectangle +(1,0.75);
\node at (0,1.5) {$B$};
\node at (3,1.5) {$A$};
\node at (0,3.375) {$C$};
\end{tz}
\,\,\to\,\,
\begin{tz}[scale=0.63, xscale=-1]
\node [inner sep=0pt, anchor=south east] at (0,0) {\tikzpng[scale=2.5]{snake-s2-ls}};
\draw [string] (0,2) to +(0,2.25);
\draw [string] (3.5,2) to +(0,2.25);
\draw [string] (-0.5,0) to +(0,1);
\draw [string] (3,0) to +(0,1);
\draw [box] (-1,1) rectangle +(2,1);
\draw [box] (2,1) rectangle +(2,1);
\draw [box] (-0.5,2.5) rectangle +(1,0.75);
\node at (0,1.5) {$B$};
\node at (3,1.5) {$A$};
\node at (0,2.875) {$C$};
\end{tz}
\\
\begin{tz}[scale=0.63, yscale=-1]
\node [inner sep=0pt, anchor=north west] at (0,0) {\tikzpng[scale=2.5]{snake-s1}};
\draw [string] (0,2) to +(0,2.25);
\draw [string] (3.5,2) to +(0,2.25);
\draw [string] (-0.5,0) to +(0,1);
\draw [string] (3,0) to +(0,1);
\draw [box] (-1,1) rectangle +(2,1);
\draw [box] (2,1) rectangle +(2,1);
\draw [box] (-0.5,3) rectangle +(1,0.75);
\node at (0,1.5) {$A$};
\node at (3,1.5) {$B$};
\node at (0,3.375) {$C$};
\end{tz}
\,\,\to\,\,
\begin{tz}[scale=0.63, yscale=-1]
\node [inner sep=0pt, anchor=north west] at (0,0) {\tikzpng[scale=-2.5]{snake-s2-ls}};
\draw [string] (0,2) to +(0,2.25);
\draw [string] (3.5,2) to +(0,2.25);
\draw [string] (-0.5,0) to +(0,1);
\draw [string] (3,0) to +(0,1);
\draw [box] (-1,1) rectangle +(2,1);
\draw [box] (2,1) rectangle +(2,1);
\draw [box] (-0.5,2.5) rectangle +(1,0.75);
\node at (0,1.5) {$A$};
\node at (3,1.5) {$B$};
\node at (0,2.875) {$C$};
\end{tz}
\\
\label{eq:increasesep4}
\begin{tz}[scale=0.63, scale=-1]
\node [inner sep=0pt, anchor=north east] at (0,0) {\tikzpng[scale=2.5]{snake-s1}};
\draw [string] (-0.5,2) to +(0,2.25);
\draw [string] (3,2) to +(0,2.25);
\draw [string] (0,0) to +(0,1);
\draw [string] (3.5,0) to +(0,1);
\draw [box] (-1,1) rectangle +(2,1);
\draw [box] (2,1) rectangle +(2,1);
\draw [box] (-1,3) rectangle +(1,0.75);
\node at (0,1.5) {$B$};
\node at (3,1.5) {$A$};
\node at (-0.5,3.375) {$C$};
\end{tz}
\,\,\to\,\,
\begin{tz}[scale=0.63, scale=-1]
\node [inner sep=0pt, anchor=north east] at (0,0) {\tikzpng[scale=-2.5]{snake-s2-ls}};
\draw [string] (-0.5,2) to +(0,2.25);
\draw [string] (3,2) to +(0,2.25);
\draw [string] (0,0) to +(0,1);
\draw [string] (3.5,0) to +(0,1);
\draw [box] (-1,1) rectangle +(2,1);
\draw [box] (2,1) rectangle +(2,1);
\draw [box] (-1,2.5) rectangle +(1,0.75);
\node at (0,1.5) {$B$};
\node at (3,1.5) {$A$};
\node at (-0.5,2.875) {$C$};
\end{tz}
\end{calign}
In each case the separation increases by 1 unit. After this increase, arbitrary interchangers can take place in the diagram, not involving the cup and cap; this will leave the separation invariant. Finally, the separation will decrease, giving the end of the local maximum. It could decrease by the reverse of one of the processes (\ref{eq:increasesep1}--\ref{eq:increasesep4}) (4~ways); or by a snakeorator being applied in a region between the cup and cap in height, on the left or the right of the central blue wire (2 ways).

So in total there are 4 ways for the separation to increase, and 6~ways for it to decrease, giving a total of 24 types of  local maximum. For each case, suppose the separation increases from $n$ to $n+1$, and then decreases from $n+1$ to at most $n$. Then we can show directly that this is equal to a composite in which the separation is at most $n$ throughout. While the number of cases is large, they are all handled straightforwardly and in a similar way.

We analyze one case in detail. Suppose that the separation increases from $n$ to $n+1$ by method $\eqref{eq:increasesep1}$; followed by 2\-morphisms $P,Q$ applied between the cup and cap in height, to the left and right of the central wire respectively; followed by the separation decreasing from $n+1$ to at most $n$ by method \eqref{eq:increasesep4}. (2\-morphisms applied above the cap or below the cup may be moved later in the composite by naturality, and we neglect them.) Then by the argument in \autoref{fig:reduceseparation}, we can perform the operations in a different order, under which the separation is at most $n$. Commutativity of this diagram, and others like it for the different cases, follows from naturality of the interchanger in a monoidal bicategory.
\begin{figure*}
\def\boxscale{0.9}
\[
\hspace{-2cm}
\begin{tz}[xscale=4.3, yscale=4.1, scale=0.95]
\node (1) [inner sep=5pt, scale=\boxscale] at (0,0) {\begin{tz}[scale=0.63]
\node [inner sep=0pt, anchor=south west] at (0,0) {\tikzpng[scale=2.5]{snake-s1}};
\draw [string] (-0.5,2) to +(0,2.25);
\draw [string] (3,2) to +(0,2.25);
\draw [string] (0,0) to +(0,1);
\draw [string] (3.5,0) to +(0,1);
\draw [box] (-1,1) rectangle +(2,1);
\draw [box] (2,1) rectangle +(2,1);
\draw [box] (-1,3) rectangle +(1,0.75);
\node at (0,1.5) {$A$};
\node at (3,1.5) {$B$};
\node at (-0.5,3.375) {$C$};
\end{tz}};
\node (2) [inner sep=5pt, scale=\boxscale] at (1,0) {\begin{tz}[scale=0.63]
\node [inner sep=0pt, anchor=south west] at (0,0) {\tikzpng[scale=2.5]{snake-s2-ls}};
\draw [string] (-0.5,2) to +(0,2.25);
\draw [string] (3,2) to +(0,2.25);
\draw [string] (0,0) to +(0,1);
\draw [string] (3.5,0) to +(0,1);
\draw [box] (-1,1) rectangle +(2,1);
\draw [box] (2,1) rectangle +(2,1);
\draw [box] (-1,2.375) rectangle +(1,0.75);
\node at (0,1.5) {$A$};
\node at (3,1.5) {$B$};
\node at (-0.5,2.75) {$C$};
\end{tz}};
\node (3) [inner sep=5pt, scale=\boxscale] at (2,0) {\begin{tz}[scale=0.63, scale=-1]
\node [inner sep=0pt, anchor=north east] at (0,0) {\tikzpng[scale=-2.5]{snake-s2-ls}};
\draw [string] (-0.5,2) to +(0,2.25);
\draw [string] (3,2) to +(0,2.25);
\draw [string] (0,0) to +(0,1);
\draw [string] (3.5,0) to +(0,1);
\draw [box] (-1,1) rectangle +(2,1);
\draw [box] (2,1) rectangle +(2,1);
\draw [box] (-1,2.5) rectangle +(1,0.75);
\node at (0,1.5) {$B'$};
\node at (3,1.5) {$A'$};
\node at (-0.5,2.875) {$C'$};
\end{tz}
};
\node (4) [inner sep=5pt, scale=\boxscale] at (3,0) {\begin{tz}[scale=0.63, scale=-1]
\node [inner sep=0pt, anchor=north east] at (0,0) {\tikzpng[scale=2.5]{snake-s1}};
\draw [string] (-0.5,2) to +(0,2.25);
\draw [string] (3,2) to +(0,2.25);
\draw [string] (0,0) to +(0,1);
\draw [string] (3.5,0) to +(0,1);
\draw [box] (-1,1) rectangle +(2,1);
\draw [box] (2,1) rectangle +(2,1);
\draw [box] (-1,3) rectangle +(1,0.75);
\node at (0,1.5) {$B'$};
\node at (3,1.5) {$A'$};
\node at (-0.5,3.375) {$C'$};
\end{tz}
};
\node (5) [inner sep=5pt, scale=\boxscale] at (0,-1) {\begin{tz}[scale=0.63]
\node [inner sep=0pt, anchor=south west] at (0,0) {\tikzpng[scale=2.5]{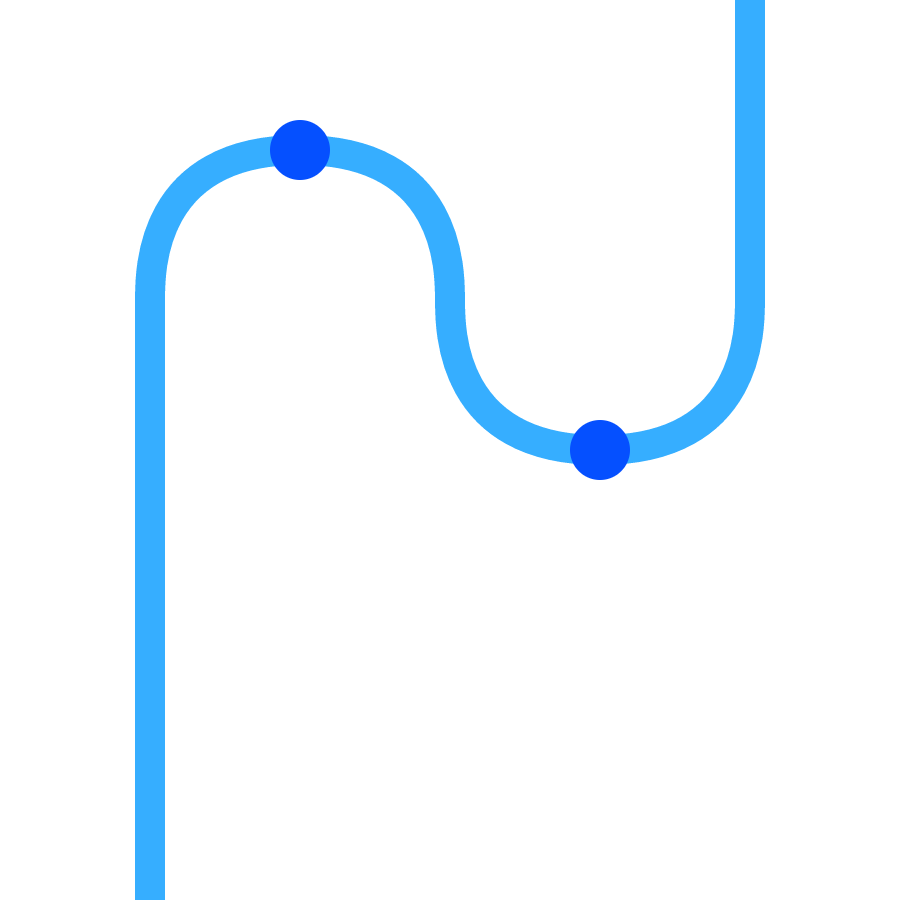}};
\draw [string] (-0.5,1) to +(0,3.25);
\draw [string] (3,3) to +(0,1.25);
\draw [string] (0,-0.5) to +(0,1);
\draw [string] (3.5,-0.5) to +(0,4);
\draw [box] (-1,0) rectangle +(2,1);
\draw [box] (2,2.875) rectangle +(2,1);
\draw [box] (-1,3) rectangle +(1,0.75);
\node at (0,0.5) {$A$};
\node at (3,3.375) {$B$};
\node at (-0.5,3.375) {$C$};
\end{tz}};
\node (6) [inner sep=5pt, scale=\boxscale] at (1,-1) {\begin{tz}[scale=0.63]
\node [inner sep=0pt, anchor=south west] at (0,0) {\tikzpng[scale=2.5]{snake-longl}};
\draw [string] (-0.5,0) to +(0,4.25);
\draw [string] (3,3) to +(0,1.25);
\draw [string] (0,-1.5) to +(0,1);
\draw [string] (3.5,-1.5) to +(0,5);
\draw [box] (-1,-1) rectangle +(2,1);
\draw [box] (2,2.875) rectangle +(2,1);
\draw [box] (-1,0.375) rectangle +(1,0.75);
\node at (0,-0.5) {$A$};
\node at (3,3.375) {$B$};
\node at (-0.5,0.75) {$C$};
\end{tz}};
\node (7) [inner sep=5pt, scale=\boxscale] at (2,-1) {\begin{tz}[scale=0.63, xscale=-1, yscale=-1]
\node [inner sep=0pt, anchor=north east] at (0,0) {\tikzpng[scale=2.5, scale=-1]{snake-longl}};
\draw [string] (-0.5,0) to +(0,4.25);
\draw [string] (3,3) to +(0,1.25);
\draw [string] (0,-1.5) to +(0,1);
\draw [string] (3.5,-1.5) to +(0,5);
\draw [box] (-1,-1) rectangle +(2,1);
\draw [box] (2,2.875) rectangle +(2,1);
\draw [box] (-1,0.375) rectangle +(1,0.75);
\node at (0,-0.5) {$B'$};
\node at (3,3.375) {$A'$};
\node at (-0.5,0.75) {$C'$};
\end{tz}};
\node (8) [inner sep=5pt, scale=\boxscale] at (3,-1) {\begin{tz}[scale=0.63, xscale=-1, yscale=-1]
\node [inner sep=0pt, anchor=north east] at (0,0) {\tikzpng[scale=2.5, scale=-1]{snake-longl}};
\draw [string] (-0.5,0) to +(0,4.25);
\draw [string] (3,3) to +(0,1.25);
\draw [string] (0,-1.5) to +(0,2);
\draw [string] (3.5,-1.5) to +(0,5);
\draw [box] (-1,0) rectangle +(2,1);
\draw [box] (2,2.875) rectangle +(2,1);
\draw [box] (-1,2.875) rectangle +(1,0.75);
\node at (0,0.5) {$B'$};
\node at (3,3.375) {$A'$};
\node at (-0.5,3.25) {$C'$};
\end{tz}};
\draw [->] (1) to node [below] {\eqref{eq:increasesep1}} node [above] {$\sim$} (2);
\draw [->] (2) to node [above] {$P,Q$} (3);
\draw [->] (3) to node [below] {$\eqref{eq:increasesep4} ^\inv$} node [above] {$\sim$} (4);
\draw [->] (1) to node [left] {$\sim$} (5);
\draw [->] (5) to node [below] {$\sim$} (6);
\draw [->] (6) to node [below] {$P,Q$} (7);
\draw [->] (7) to node [below] {$\sim$} (8);
\draw [->, shorten <=-2pt, shorten >=-2pt] (8) to node [right] {$\sim$} (4);
\end{tz}
\hspace{-2cm}
\]
\caption{\label{fig:reduceseparation} Reducing the local maximum separation between a cup and cap}
\end{figure*}

By this argument we have reduced the multiplicity of local maxima of the separation by 1; or, if the multiplicity was already 1, we have reduced the global maximum. By the hypothesis of the inductive argument $I_2$, we are done.

We have demonstrated that the composite in \eqref{eq:snakeoratorblocked} is equal to one in which we apply $\sigma_1 ^\inv$, act only on the cups and caps by interchanging them as a single unit, and then apply $\sigma_1$. By naturality of the interchanger, the $\sigma_1$ and $\sigma_1^\inv$ can now be cancelled. Thus we have shown that our composite is equal to one with strictly fewer inverse snakeorators. By the hypothesis of the inductive argument $I_1$, we are done.
\end{proof}

\begin{proposition}[Coherence for rotational 2\-morphisms]
\label{prop:coherencerotations}
In \free \E, let $X$ be a simple 1\-morphism, and let \mbox{$P,Q : X \stackrel R \twoheadrightarrow Y$} be rotational 2\-morphisms. Then $P=Q$.
\end{proposition}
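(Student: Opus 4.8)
The plan is to reduce this statement to coherence for snakeorators (\autoref{prop:coherencesnake}) by straightening everything into left-pseudomonoid form. Two preliminary facts set this up. First, since every rotational generator preserves the structure tree of a simple 1-morphism (as observed in the proof of \autoref{lem:leftpseudomonoidunique}), the target $Y$ is again simple, with the same structure tree as $X$. Second, every rotational 2-morphism is invertible with a rotational inverse: interchangers are invertible, the snakeorators $\sn_1,\sn_2$ are inverse to $\sn_1^\inv,\sn_2^\inv$, and by \autoref{lem:inversepairs} each of $R_m,L_m,R_u,L_u,R_f,L_f$ has a two-sided inverse assembled from rotational generators (up to snake maps and interchangers).

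Next I would straighten. By \autoref{lem:straighten} and \autoref{lem:leftpseudomonoid} there are rotational 2-morphisms $\Phi_X: X \stackrel R \twoheadrightarrow \widehat X$ and $\Phi_Y: Y \stackrel R \twoheadrightarrow \widehat Y$ with $\widehat X,\widehat Y$ in left-pseudomonoid form. Since $\Phi_X,\Phi_Y$ are invertible, $P=Q$ holds exactly when $P':=\Phi_Y \circ P \circ \Phi_X^\inv$ and $Q':=\Phi_Y \circ Q \circ \Phi_X^\inv$ agree. Both are rotational 2-morphisms $\widehat X \to \widehat Y$; as $\widehat X$ and $\widehat Y$ are in left-pseudomonoid form and joined by a rotational 2-morphism, \autoref{lem:leftpseudomonoidunique} forces $\widehat X \equiv \widehat Y =: W$. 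Thus $P'$ and $Q'$ are parallel rotational endomorphisms of $W$, and, using invertibility once more, it suffices to prove that an arbitrary rotational endomorphism $R: W \to W$ of a left-pseudomonoid form equals the identity.

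For this final step I would induct on the number of occurrences of the rotation generators $R_m,L_m,R_u,L_u,R_f,L_f$ in $R$. In the base case $R$ is a composite of snakeorators and interchangers from $W$ to $W$; since $W$ is connected, \autoref{prop:coherencesnake} identifies all such parallel composites, so $R=\id$. For the inductive step I would track the total twistedness of the diagram along the composite. This quantity starts and ends at $0$, since $W$ is untwisted by \autoref{lem:pseudomonoiduntwisted}; it is raised by one by each $L$-generator, lowered by one by each $R$-generator, and left fixed by snakeorators and interchangers (an eliminable cup-cap pair contributes one left and one right turn, so its removal is twist-neutral). Because the total twistedness returns to $0$, at a local extremum some $L$-generator is matched by an $R$-generator of the corresponding type with no net twist change in between. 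I would then commute the intervening cells past the data created by these generators, using naturality of the interchanger exactly as in the separation-reduction argument of \autoref{prop:coherencesnake} (cf. \autoref{fig:reduceseparation}), running an inner induction on the extremal twistedness value and its multiplicity in the style of the argument $I_2$ there, until the $L$- and $R$-generators become adjacent; finally I cancel them by the inverse-pair equations of \autoref{lem:inversepairs}. This strictly lowers the number of rotation generators, completing the outer induction.

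The main obstacle is this cancellation step. As in \autoref{prop:coherencesnake}, the difficulty is not any single move but the bookkeeping: one must check that whichever interchangers, snakeorators and further rotation generators lie between a matched $L/R$ pair can always be pushed aside by naturality without disturbing the twistedness elsewhere, so that the pair can be brought together and annihilated. Verifying that each of these commutation cases holds, and that \autoref{lem:inversepairs} applies on the nose modulo the relevant snake maps, is the technical heart, and is precisely the kind of identity one discharges in \emph{Globular}; everything else is routine reduction.
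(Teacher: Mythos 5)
Your proposal follows the same skeleton as the paper's proof: reduce by invertibility of rotational 2\-morphisms, straighten the source with \autoref{lem:straighten}, cancel the rotation generators in pairs via \autoref{lem:inversepairs} at the cost of extra snakeorators, and dispatch the residual composite of snake maps and interchangers with \autoref{prop:coherencesnake}. (Your detour through left-pseudomonoid form and \autoref{lem:leftpseudomonoidunique} is sound but unnecessary here; the paper first reduces to $Y \equiv X$ and $Q = \id$, after which pseudomonoid form suffices --- left-pseudomonoid form is reserved for the later \autoref{lem:pseudomonoidinterchanger}.) However, your matching step has a genuine gap: \emph{total} twistedness is a single integer that does not record \emph{which vertex} a rotation generator acts on, so the raise and the lower flanking a local maximum need not form an inverse pair. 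Concretely, consider a composite of the shape $L_m$ at vertex $v$, then $L_m$ at a distinct vertex $w$, then $R_m$ at $v$, then $R_m$ at $w$: the twist profile is $0,1,2,1,0$, and the unique maximum pairs $L_m(w)$ with $R_m(v)$, to which \autoref{lem:inversepairs} simply does not apply; bringing them adjacent achieves nothing, and your inner induction stalls. The phrase ``of the corresponding type'' also hides a second subtlety for unit-like vertices: the mutually inverse pairs are $(R_m,L_m)$, $(R_u,L_f)$ and $(L_u,R_f)$ --- note $R_u$ is \emph{not} inverse to $L_u$ --- and the equations of \autoref{lem:inversepairs} are stated for specific orders and source configurations, so even a same-site raise/lower pair must be checked to be of a cancellable type.

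The repair is exactly the device the paper uses: do the bookkeeping \emph{per vertex} rather than globally. Since no rotational cell creates or destroys $m$, $u$ or $f$ vertices, and rotations at distinct vertices act on disjoint local data, one fixes a vertex $v$ and commutes every rotation acting on $v$ to the beginning of the composite by naturality. Because the source and target are untwisted \emph{at $v$}, equal numbers of twist-raising and twist-lowering rotations act there, so the resulting word in the rotations at $v$ contains an adjacent inverse pair, which cancels by \autoref{lem:inversepairs}; iterating over all vertices leaves a composite of snakeorators and interchangers, and \autoref{prop:coherencesnake} finishes as in your base case. Alternatively, you could salvage your global count by adding the missing case to your inner induction: when the raise and lower at an extremum act at different vertices, commute them past one another (they are disjoint), which strictly lowers the extremal value or its multiplicity. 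Either patch works, but as written the assertion that an extremum yields a matched cancellable pair is false, and it is precisely the load-bearing step.
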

\begin{proof}
The rotational 2\-morphisms are invertible, so $P=Q$ just when $Q^\inv \circ P = \id_X$; so it is enough to consider the case that $Y=X$ and $Q=\id_X$. Also, by \autoref{lem:straighten}, there is a rotational isomorphism $\Omega_X : X \to \widetilde X$, where $\widetilde X$ is in pseudomonoid form; so without loss of generality we may suppose that $X$ is in pseudomonoid form, and therefore untwisted by \autoref{lem:pseudomonoiduntwisted}.

Let $v$ be any $m$ vertex in $X$. Since $R_m$ and $L_m$ maps act locally on multiplication vertices, and no rotational cells can introduce or eliminate $m$ vertices, we can move all $R_m$ and $L_m$ instances acting on $v$ to the beginning of the composite $P$. Since the source and target of $P$ is untwisted, an equal number of $R_m$ and $L_m$ 2\-morphisms must act on it, since they change the twistedness locally; we cancel adjacent $(R_m,L_m)$ and $(L_m,R_m)$ pairs using \autoref{lem:inversepairs}, at the cost of introducing additional snakeorators. Similarly, each $u$ vertex in  $X$ will be rotated by a succession of $R_u$, $L_u$, $R_f$ and $L_f$ maps; as for the $m$ vertices, we move these to the beginning of the composite, and eliminate them pairwise. Hence we obtain $P=P': X \to X$, where $P'$ is formed purely of interchangers and snake maps. But then by \autoref{prop:coherencesnake} we conclude $P'=\id$, and hence $P=Q=\id$.
\end{proof}

\begin{lemma}
\label{lem:pseudomonoidinterchanger}
In \free \E, suppose $X,Y$ are 1\-morphisms in pseudomonoid form, and \mbox{$X \stackrel R \twoheadrightarrow Y$}. Then $X \stackrel \sim \twoheadrightarrow Y$.
\end{lemma}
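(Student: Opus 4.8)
The plan is to use left-pseudomonoid form as a canonical representative reachable by interchangers alone, and then to exploit the rigidity of that form under rotations provided by \autoref{lem:leftpseudomonoidunique}. First I would apply \autoref{lem:leftpseudomonoid} to each of $X$ and $Y$: since both are in pseudomonoid form, there are interchanger composites $\Theta_X : X \stackrel \sim \twoheadrightarrow \widehat X$ and $\Theta_Y : Y \stackrel \sim \twoheadrightarrow \widehat Y$, with $\widehat X$ and $\widehat Y$ in left-pseudomonoid form. Interchangers are invertible 2\-morphisms whose inverses are again interchangers, so I also have interchanger composites $\Theta_X^\inv : \widehat X \stackrel \sim \twoheadrightarrow X$ and $\Theta_Y^\inv : \widehat Y \stackrel \sim \twoheadrightarrow Y$ in the reverse directions.

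Next I would produce a single rotational 2\-morphism directly between the two canonical forms. Writing $P : X \stackrel R \twoheadrightarrow Y$ for the assumed rotational 2\-morphism, and using the fact that interchangers themselves count as rotational generators, the composite $\Theta_Y \circ P \circ \Theta_X^\inv : \widehat X \to \widehat Y$ is a composite of rotational generators, so $\widehat X \stackrel R \twoheadrightarrow \widehat Y$. Both of its endpoints lie in left-pseudomonoid form, so \autoref{lem:leftpseudomonoidunique} applies and forces $\widehat X \equiv \widehat Y$; that is, the two canonical forms are literally identical as composites.

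Finally I would assemble the desired interchanger-only path. Since $\widehat X \equiv \widehat Y$, the composite $\Theta_Y^\inv \circ \Theta_X : X \to Y$ is built entirely from interchangers, and this establishes $X \stackrel \sim \twoheadrightarrow Y$ as required.

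I do not expect a genuine obstacle here: the substantive content is carried entirely by \autoref{lem:leftpseudomonoid} and \autoref{lem:leftpseudomonoidunique}, and the remaining work is bookkeeping. The only points to verify are that a pure interchanger composite is indeed rotational (so that $\Theta_Y \circ P \circ \Theta_X^\inv$ legitimately witnesses a rotation between the two left-pseudomonoid forms), and that the inverse of an interchanger composite is again an interchanger composite (so that the bridges between a pseudomonoid form and its left-pseudomonoid form can be traversed in either direction). If anything requires care, it is simply keeping the direction of each arrow consistent so that the final concatenation type-checks as one interchanger composite $X \to Y$.
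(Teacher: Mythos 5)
Your proof is correct and takes essentially the same route as the paper: both pass to left-pseudomonoid forms via \autoref{lem:leftpseudomonoid}, use the fact that the conjugate $\Theta_Y \circ P \circ \Theta_X^\inv$ is rotational so that \autoref{lem:leftpseudomonoidunique} forces $\widehat X \equiv \widehat Y$, and then read off the interchanger composite $\Theta_Y^\inv \circ \Theta_X : X \to Y$. The one (harmless) difference is that the paper additionally invokes \autoref{prop:coherencerotations} to record that this interchanger path is moreover \emph{equal} to the given rotational 2\-morphism $P$ --- the strengthened form in which the lemma is actually used to obtain $P_4 = P_5$ in the proof of \autoref{thm:frobeniuscoherence} --- whereas your existence-only conclusion matches the literal statement, and that equality follows at once from the same proposition since both paths are rotational 2\-morphisms out of the simple 1\-morphism $X$.
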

\begin{proof}
Consider the following diagram:
\begin{equation}
\begin{tz}[xscale=2]
\node (1) at (0,0) {$X$};
\node (2) at (1,0) {$Y$};
\node (3) at (0.5,-1) {$\widehat X = \widehat Y$};
\draw [->>] (1) to node [above] {$R$} (2);
\draw [->>] (1) to node [below left=-1pt, pos=0.3] {$\Theta_X$} (3.145);
\draw [<<-] (2) to node [below right=-1pt, pos=0.3] {$\Theta_Y ^\inv$} (3.35);
\end{tz}
\end{equation}
The 2\-morphisms $\Theta_X$ and $\Theta_Y$ are defined by \autoref{lem:leftpseudomonoid}, and are composed purely of interchangers. The equality at the bottom follows from \autoref{lem:leftpseudomonoidunique}. By \autoref{prop:coherencerotations}, the diagram commutes. Since the lower path $X \stackrel \sim \twoheadrightarrow \widehat X = \widehat Y \stackrel \sim \twoheadrightarrow Y$ is composed purely of interchangers, the result follows.
\end{proof}

We now prove our main result.

\begin{customthm}{\ref{thm:frobeniuscoherence}}
(Coherence for Frobenius structures.) \em
Let $P, Q: X \to Y$ be 2\-morphisms in \free \F, such that $X$ is connected and acyclic, with at least one boundary wire. Then $P = Q$.
\end{customthm}
\begin{proof}
Let $X$ be a simple 1\-morphism in \free \F, and let $P, Q : X \to Y$ be 2\-morphisms. Since the 2\-morphisms of \free \F are invertible, then $P=Q$ just when $Q ^\inv \circ P = \id$; so it is enough to consider the case that $X \equiv Y$ and $Q = \id$. We also suppose for now that $X$ and $Y$ have exactly one output wire. We then proceed as follows.
\begin{itemize}
\item Define $P_1 : X \to X$ as the image of $P$ under the embedding $\free \F \to \free \E$. It is enough for us to prove $P_1=\id$, since by \autoref{lem:equivalentpresentations}, the presentations \F and \E are equivalent.
\item Define $P_2$ by taking $P_1$ and eliminating all instances of $\mu$ and $\nu$ using \autoref{lem:killmunu}.
\item Define $P_3$ by taking $P_2$ and ensuring all pseudomonoid 2\-cells act on untwisted diagrams only, using \autoref{prop:untwistalgebraic}.
\item Define $P_4 := \Omega_X \circ P_3 \circ \Omega_X ^\inv: \widetilde X \to \widetilde X$.
\end{itemize}
Clearly $P_1 = P_2 = P_3$, and $P_3 = \id$ just when $P_4 = \id$.

We now consider the structure of $P_4$. Its source and target are in pseudomonoid form, and $P_4$ itself is built from rotational 2\-morphisms, along with pseudomonoid 2\-cells acting on diagrams of pseudomonoid form. That is, $P_4$ is of the following form, where $X_i$ and $Y_i$ are of pseudomonoid form, $\gamma_i$ are pseudomonoid 2\-cells, and $R$ indicates a composite rotational 2\-morphism:
\[
P_4 \quad= \quad X_0 \stackrel R \twoheadrightarrow Y_0 \stackrel {\gamma_0} \to X_1 \stackrel R \twoheadrightarrow Y_1 \stackrel {\gamma_1} \to \cdots \stackrel R \twoheadrightarrow Y_n = X_0
\]
By applying \autoref{lem:pseudomonoidinterchanger}, we can build $P_5$ as follows, with ${P_4=P_5}$:
\[
P_5 \quad= \quad X_0 \stackrel \sim \twoheadrightarrow Y_0 \stackrel {\gamma_0} \to X_1 \stackrel \sim \twoheadrightarrow Y_1 \stackrel {\gamma_1} \to \cdots \stackrel \sim \twoheadrightarrow Y_n = X_0
\]
This composite lies purely in the image of the pseudomonoid presentation \P. But then by the coherence theorem for pseudomonoids we conclude $P_5=\id$; see the work of Lack~\cite{Lack_2000}, and the thesis of Houston~\cite[Section~6]{houston-thesis}.

In the statement of the theorem, we required only that $X$ had at least one input or output, but we assumed for our argument above that X had exactly one output. To reduce to the case of a unique output, one can compose with cups and caps appropriately; the swallowtail equations ensure this gives a bijection of hom-sets.
\end{proof}

\subsection{Adjoints}

We now consider the case that the generators $m,u$ have right adjoints. This is required to obtain the correct relationship to linear logic, as we discuss in the next section.

\begin{definition}
The \emph{right-adjoint Frobenius presentation} $\F^*$ is defined to be the Frobenius presentation \F, with the following additional data:
\begin{itemize}
\item Additional 1\-morphisms $m^*$ and $u^*$:
\begin{calign}
\tikzpng[scale=3]{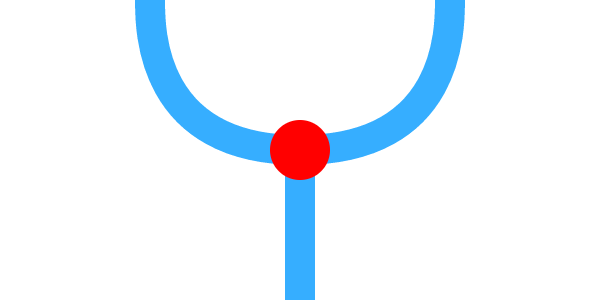}
&
\tikzpng[scale=3]{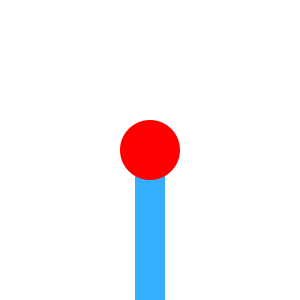}
\end{calign}
\item Additional 2\-morphisms $\eta, \epsilon, \phi, \psi$:
\begin{align}
\tikzpng[scale=2,yscale=2]{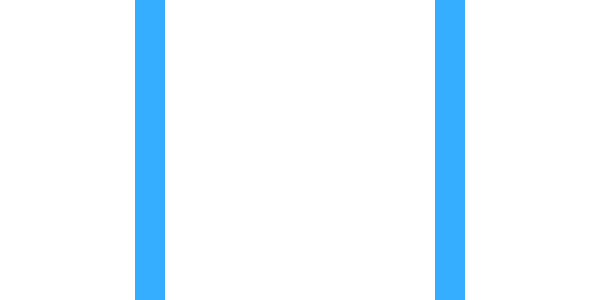}
&{\xto \eta}
\tikzpng[scale=2]{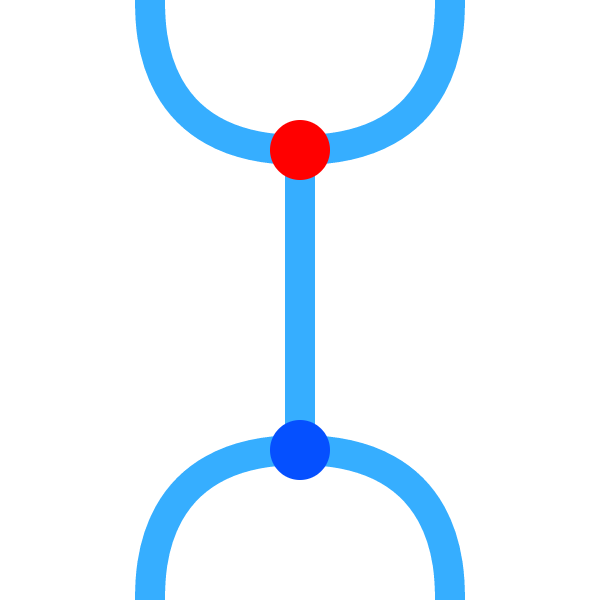}
&
\tikzpng[scale=2]{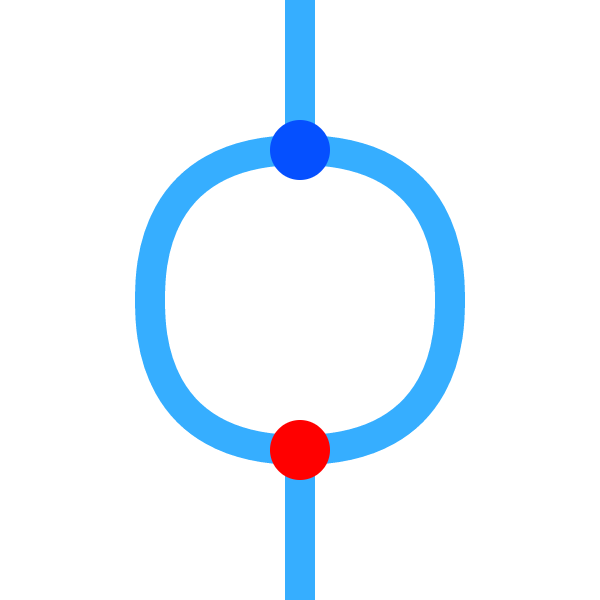}
&{\xto \epsilon}
\tikzpng[scale=2,yscale=2]{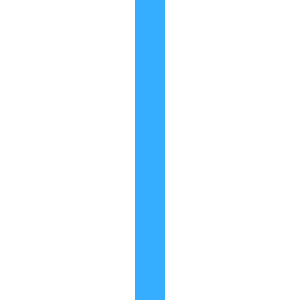}
\\
&{\xto \phi}
\hspace{0.233cm}\tikzpng[scale=2]{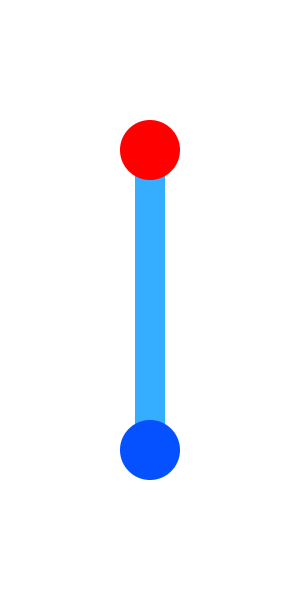}
&
\tikzpng[scale=2]{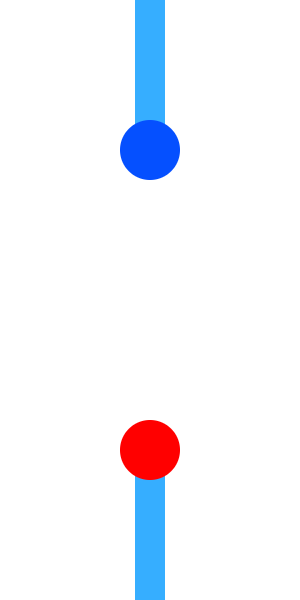}\hspace{0.253cm}
&{\xto \psi}
\hspace{-1pt}\tikzpng[scale=2,yscale=2]{identity}
\end{align}
\item Additional equations, stating that $m \dashv m^*$ and $u \dashv u^*$:
\begin{align}
\id\,\,&=\,\,
\tikzpng[scale=2]{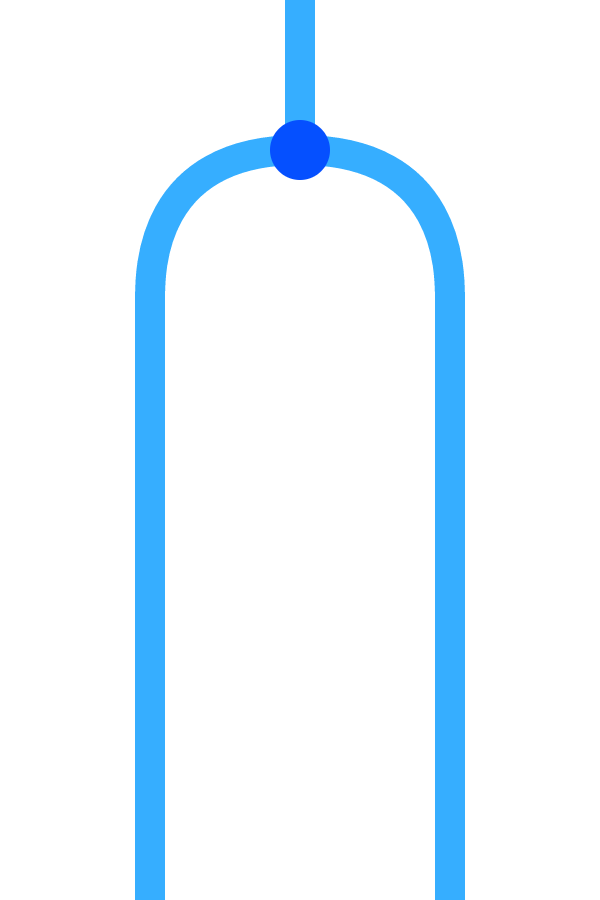}
{\xto {\eta}}
\tikzpng[scale=2]{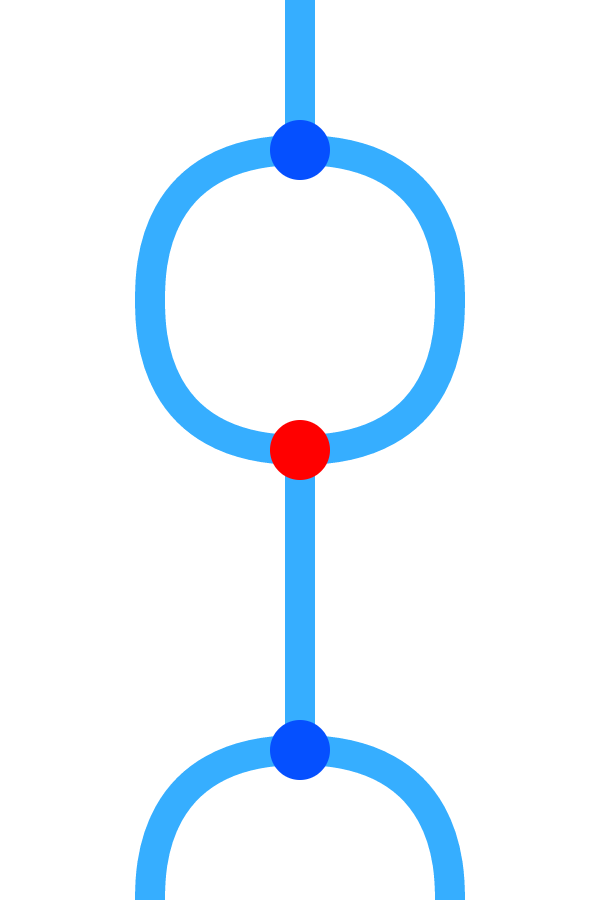}
{\xto \epsilon}
\tikzpng[scale=2]{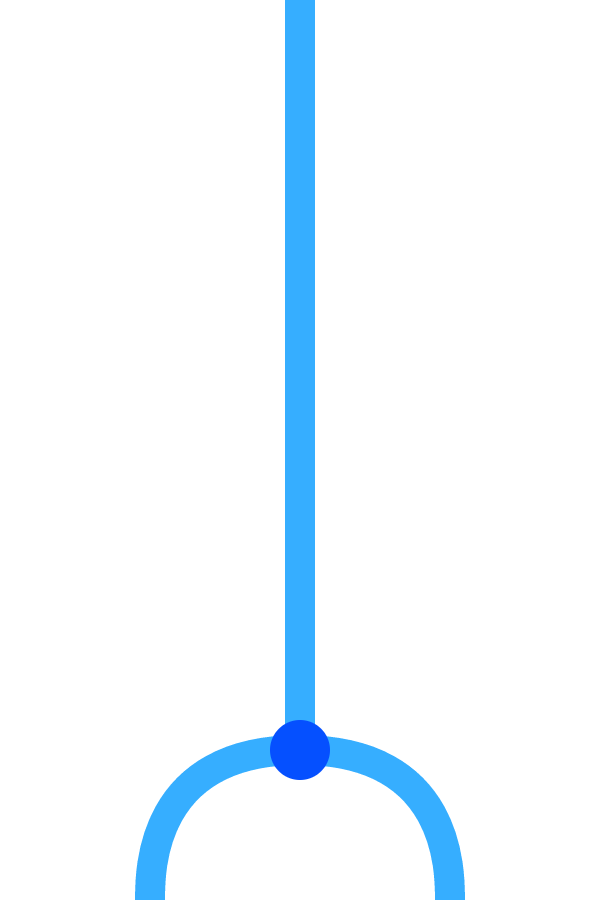}
&
\id\,\,&=\,\,
\tikzpng[scale=2]{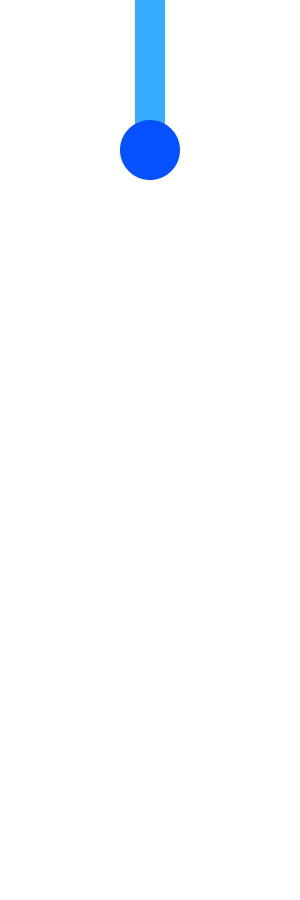}
{\xto {\phi}}
\tikzpng[scale=2]{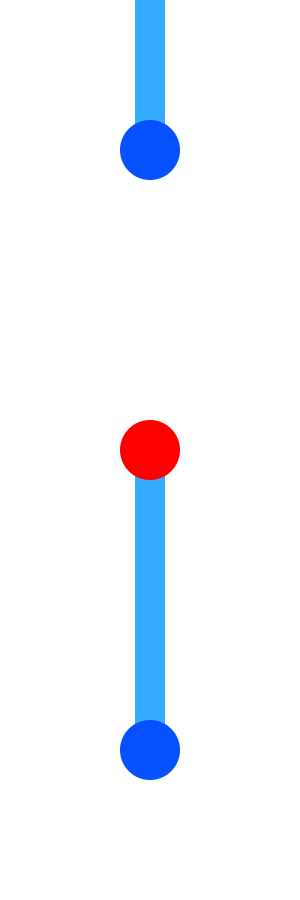}
{\xto \psi}
\tikzpng[scale=2]{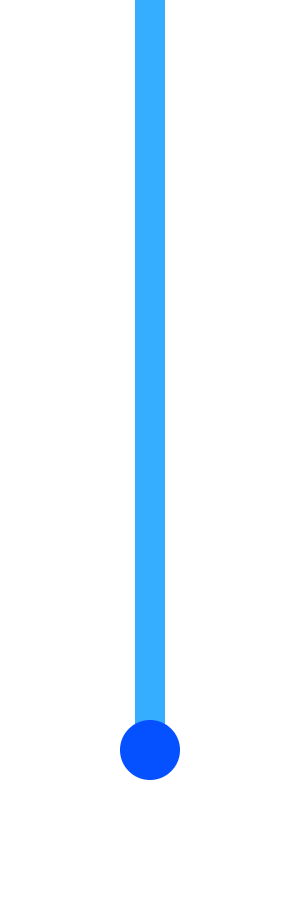}
\\
\id\,\,&=\,\,
\tikzpng[scale=2]{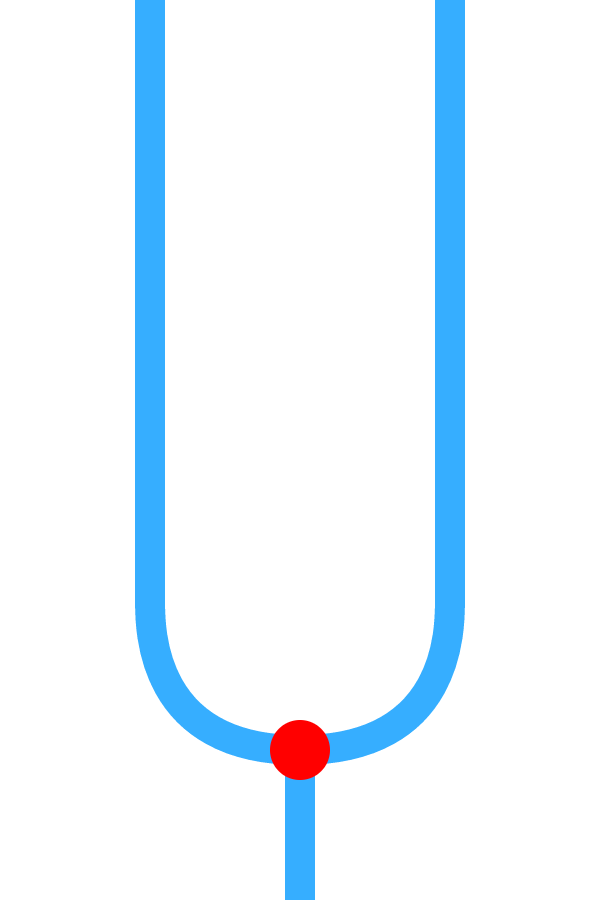}
{\xto {\eta}}
\tikzpng[scale=2]{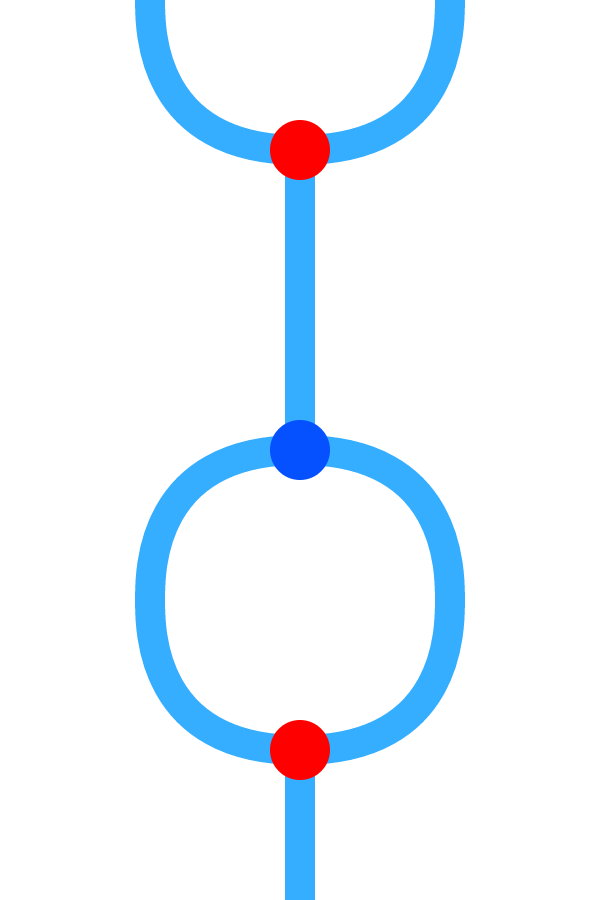}
{\xto \epsilon}
\tikzpng[scale=2]{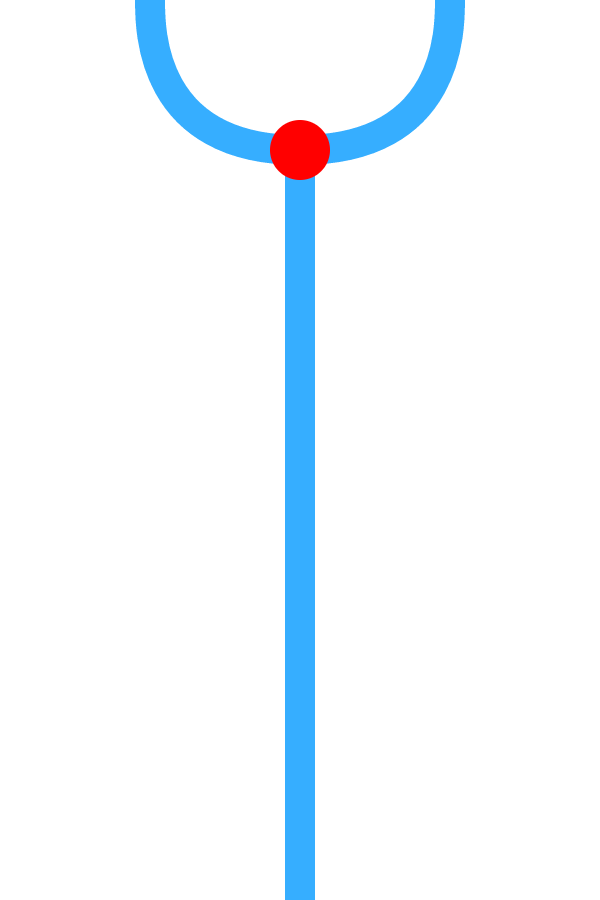}
&
\id\,\,&=\,\,
\tikzpng[scale=2]{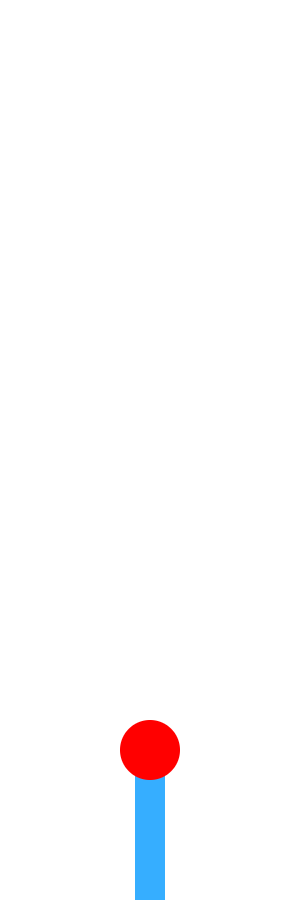}
{\xto {\phi}}
\tikzpng[scale=2]{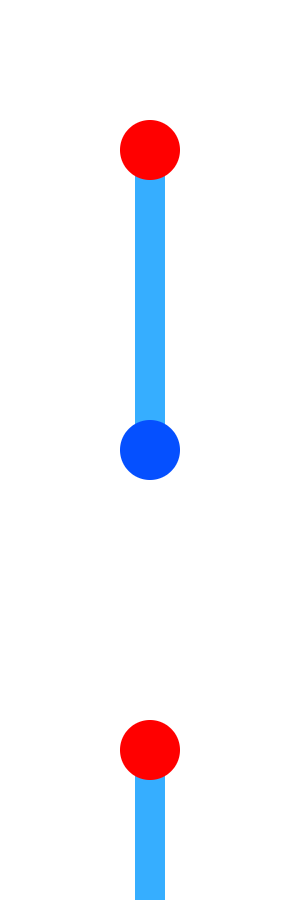}
{\xto \psi}
\tikzpng[scale=2]{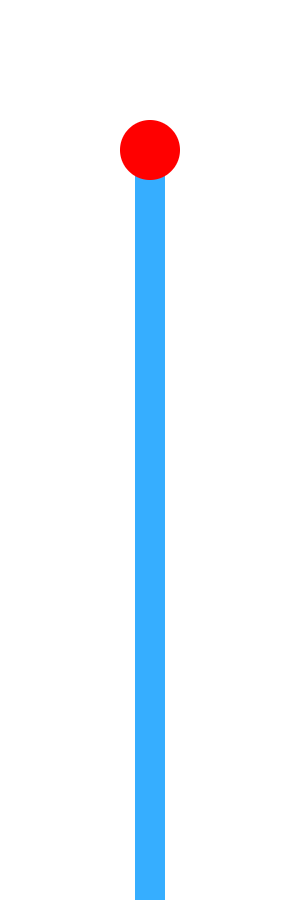}
\end{align}
\end{itemize}
\end{definition}

\begin{definition}
In $\free {\F^*}$, define the following composites:
\begin{align}
\tikzpng[scale=2,yscale=-1]{adjointtensor}
\,\,&:=\,\,
\tikzpng[scale=2,yscale=1, scale=0.8]{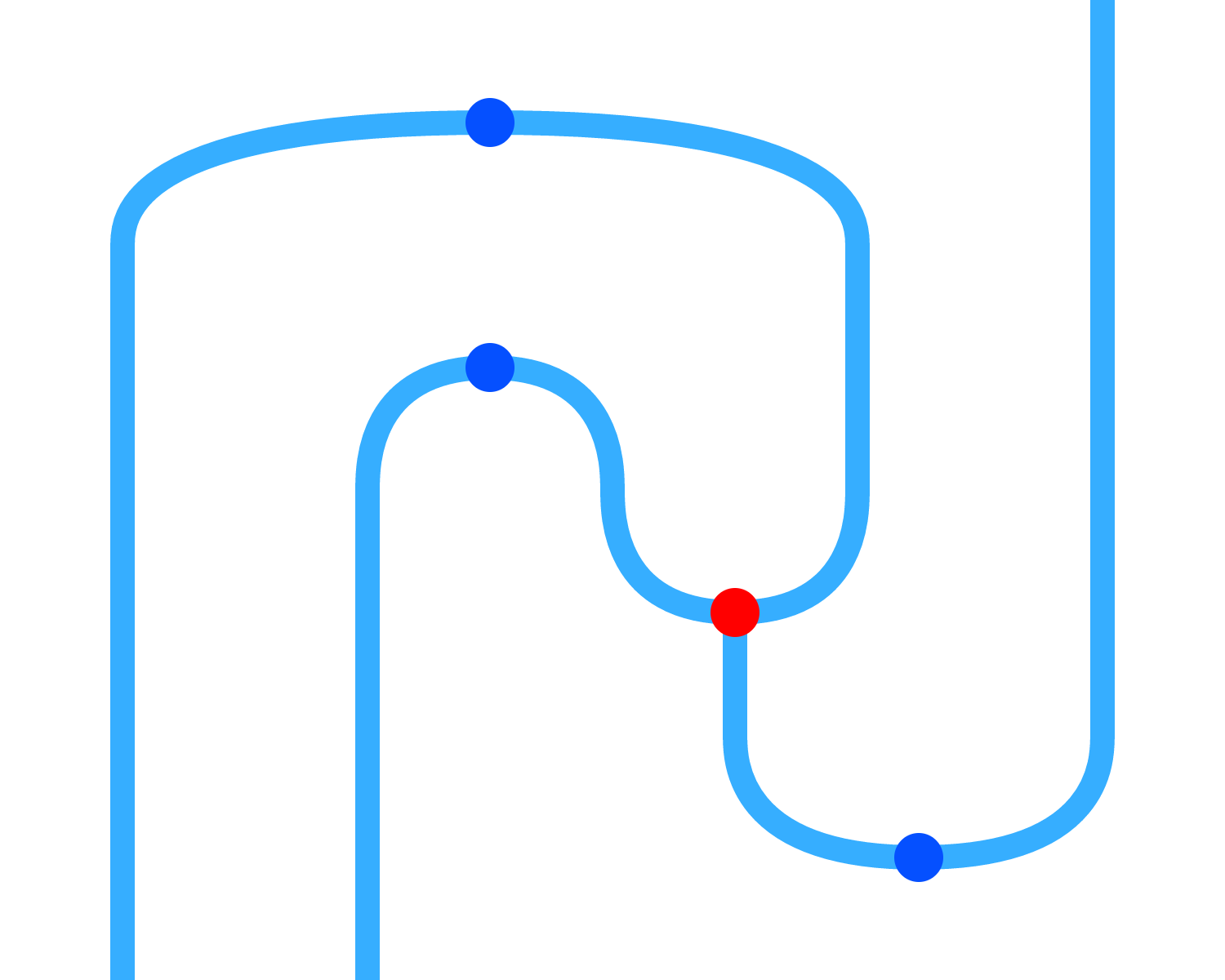}
&
\tikzpng[scale=2,yscale=-1]{adjointunit}
\,\,&:=\,\,
\tikzpng[scale=2,yscale=1]{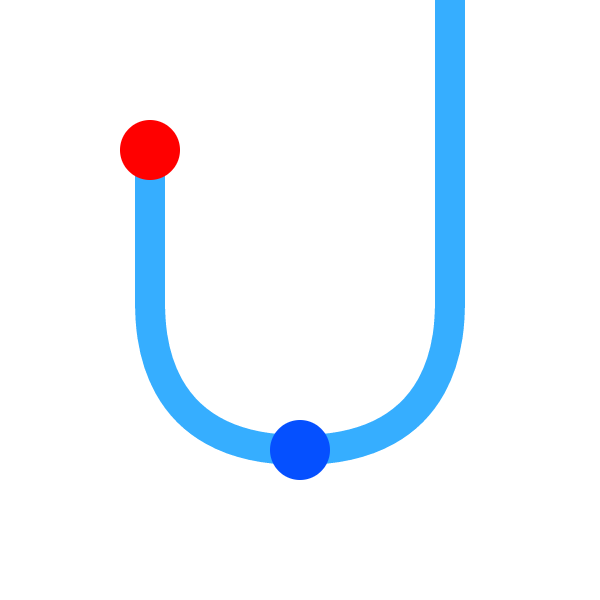}
\end{align}
\end{definition}

\begin{lemma}
\label{lem:additionaladjunctions}
In \free{\F^*}, we have the following adjunctions:
\begin{align*}
\tikzpng[scale=3]{mult}
\,\,&\dashv\,\,
\tikzpng[scale=3]{adjointtensor}
&
\tikzpng[scale=3]{unit}
\,\,&\dashv\,\,
\tikzpng[scale=3]{adjointunit}
\\
\tikzpng[scale=3, yscale=-1]{adjointtensor}
\,\,&\dashv\,\,
\tikzpng[scale=3, yscale=-1]{mult}
&
\tikzpng[scale=3, yscale=-1]{adjointunit}
\,\,&\dashv\,\,
\tikzpng[scale=3, yscale=-1]{unit}
\end{align*}
\end{lemma}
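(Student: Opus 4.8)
The plan is to split the four adjunctions into two definitional ones and two that are obtained by reflection. The adjunctions $m \dashv m^*$ and $u \dashv u^*$ require essentially no work: their units and counits are the generating 2\-morphisms $\eta,\epsilon$ and $\phi,\psi$, and the two triangle identities needed in each case are exactly the four additional equations imposed in the presentation $\F^*$. So for these I would simply point to the cited zigzag equations.

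The remaining two adjunctions carry the real content, and the strategy is to produce them by reflecting the first two vertically through $180^\circ$. Reflection carries the adjunction $m \dashv m^*$, with unit $\id \Rightarrow m^*\circ m$ and counit $m\circ m^* \Rightarrow \id$, to data $\id \Rightarrow \bar m\circ\bar{m^*}$ and $\bar{m^*}\circ\bar m \Rightarrow \id$, where a bar denotes the upside-down morphism; reading off the shape of these 2\-morphisms shows they are precisely the unit and counit of an adjunction $\bar{m^*}\dashv\bar m$, with left and right adjoint interchanged. The only subtlety is that vertical reflection is not a native operation of the monoidal bicategory: it must be realized through the self-duality of \cat C furnished by the Frobenius structure. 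The morphism $\cup$, together with the cap available up to the isomorphism $\pi$ of \E, exhibits \cat C as self-dual, with the snake equations serving as the duality zigzags; and the composites named \emph{dualtensor} and \emph{dualunit} in the preceding definition are chosen exactly so that the upside-down $m^*$ and $u^*$ are the mates of $m^*$ and $u^*$ under this duality.

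Concretely I would (i) fix the cup/cap duality data and record the snake identities they satisfy; (ii) present \emph{dualtensor} and \emph{dualunit} as the pastings of $m^*$ and $u^*$ with cups and caps, and likewise describe the upside-down $m$ and $u$; (iii) define the reflected unit and counit as the corresponding bends of $\eta,\epsilon$ and $\phi,\psi$ around those cups and caps; and (iv) verify that the two resulting triangle identities collapse to identities. An equivalent high-level framing of steps (ii)--(iv) is that adjunctions compose and that the cup/cap duality data themselves participate in adjunctions, so the composite \emph{dualtensor}, being built from $m^*$ and duality pieces each possessing an adjoint, automatically has an adjoint, which a diagram chase identifies with the upside-down $m$.

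The main obstacle I expect is step (iv): the doubled bends introduced by the reflection must be straightened and cancelled, and this is exactly where the swallowtail equations of the Frobenius presentation \F\ (Definition~\ref{def:sdfrobenius}) are needed, together with the original triangle identities for $m\dashv m^*$ to annihilate the core. Each individual move is routine --- a snake straightening or a single swallowtail application --- but the bookkeeping of which strand bends in which direction, and of taking the mates consistently on the two sides of each zigzag, is delicate; this is the step I would verify most carefully, and, in keeping with the rest of the section, confirm in \emph{Globular}.
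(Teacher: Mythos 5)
Your proposal matches the paper's (deliberately terse) proof: the first row of adjunctions is immediate because the zigzag equations for $m \dashv m^*$ and $u \dashv u^*$ are part of the defining data of $\F^*$, and the second row is obtained exactly as you describe, by conjugating these adjunctions with the cup/cap self-duality data so that the rotated composites \emph{dualtensor} and \emph{dualunit} inherit adjoints given by the rotated $m$ and $u$. The paper records this only as ``constructed straightforwardly from the available data,'' so your mate/bending construction --- including the expectation that snake straightenings and swallowtail moves do the bookkeeping in the triangle identities --- is a faithful expansion of the intended argument.
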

\begin{proof}
The first row of adjunctions are explicit in the definition of $\F^*$. The second row are constructed straightforwardly from the available data.
\end{proof}

\subsection{Graphical calculus for Frobenius pseudomonoids}

As a result of the coherence theorem, we can relax our conventions for the graphical calculus for a Frobenius pseudomonoid. For any number of input and output wires (as long as there is at least one in total), we may draw a generic vertex that connects them:
\[
\tikzpng[scale=3, yscale=-1]{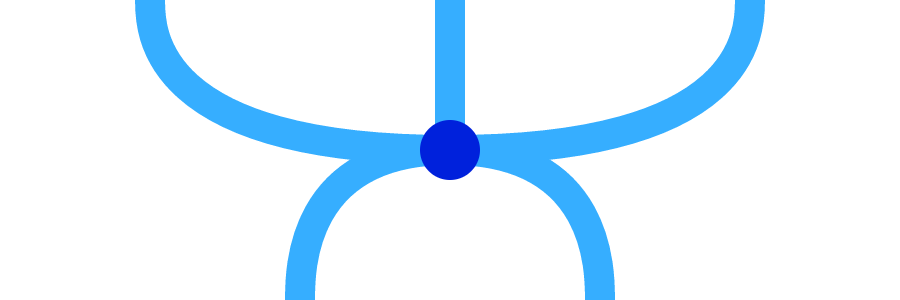}
\]
By the coherence theorem, we are guaranteed that all acyclic, connected ways of forming a composite with this type will be canonically isomorphic, so we may as well draw this simpler vertex to represent the entire isomorphism class. Furthermore, given any two connected acyclic 1\-morphisms with boundary in \free \F, there is a canonical choice of morphism between them, which in the surface calculus we can show as a pointlike operation, such as expression~\eqref{eq:surfaceexample}. In a sense, this is a categorification of the well-known `spider theorem' of Frobenius algebras~\cite{Coecke_2008}.

The graphical calculus of $\free {\F^*}$ involves red nodes, which do not admit a coherence theorem similar to the one we have been studying earlier in this section. Intuitively, while the blue nodes are `fluid' and allow a rich family of deductions, which we can reason about with the coherence theorem, the red nodes are relatively `rigid' and do not share the same topological properties. The coherence theorem can of course still be used for that part of $\free{\F^*}$ which is in the image of the obvious embedding $\free \F \to \free {\F^*}$.

\subsection{$*$-Autonomous categories}

From the work of Street~\cite{Street_2004}, it is known that the following definition of non-symmetric $*$\-autonomous category agrees with that of Barr~\cite{Barr_1995}, up to a Cauchy-completeness assumption. Here $\bicat{Prof}$ is the symmetric monoidal bicategory of categories, profunctors and natural transformations~\cite{borceux}.

\begin{definition}
\label{sa-street}
A \textit{$*$-autonomous category} is a pseudomonoid $(\cat C, m, u, \alpha, \lambda, \rho)$ in \bicat{Prof}, equipped with a morphism $f: \cat C \to \cat 1$ such that $f \circ m : \cat C \times \cat C \to \cat 1$ is a biexact pairing, and such that $m$ and $u$ have right adjoints.
\end{definition}

\begin{proposition}
\label{prop:dataequivalent}
The data of a $*$\-autonomous category is equivalent to that of an $\F^*$\-structure in \bicat{Prof}.
\end{proposition}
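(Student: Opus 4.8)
The plan is to recognize this proposition as a repackaging of Street's theorem~\cite{Street_2004}, and to verify it by comparing the two bundles of data block by block. First I would unpack what an $\F^*$\-structure in $\bicat{Prof}$ actually is: by \autoref{def:sdfrobenius} together with the adjunction data defining $\F^*$, it consists of a pseudomonoid $(\cat C, m, u, \alpha, \lambda, \rho)$; a Frobenius structure on it, namely the morphisms $\cup$ and $f$ together with the invertible 2\-morphisms $\mu, \nu$ subject to the swallowtail equations; and right adjoints $m \dashv m^*$ and $u \dashv u^*$, witnessed by $\eta, \epsilon, \phi, \psi$. On the other side, by \autoref{sa-street}, a $*$\-autonomous category consists of a pseudomonoid, a form $f : \cat C \to \cat 1$ such that $f \circ m$ is a biexact pairing, and right adjoints for $m$ and $u$.

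I would then split the comparison into three corresponding blocks. The pseudomonoid data $(\cat C, m, u, \alpha, \lambda, \rho)$ is literally identical on the two sides, so nothing is needed there. The right\-adjoint data also matches directly: the 2\-cells $\eta, \epsilon$ and $\phi, \psi$ of $\F^*$ are precisely adjunctions $m \dashv m^*$ and $u \dashv u^*$, and by \autoref{lem:additionaladjunctions} they generate exactly the adjoints demanded in \autoref{sa-street}. Since adjoints in a bicategory are determined up to canonical equivalence, presenting them as chosen generating data or as the mere property of their existence carries the same information.

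The substantive step is to identify the remaining Frobenius block with the biexactness condition, which is Street's characterization: a pseudomonoid equipped with a form $f$ is a Frobenius pseudomonoid exactly when $f \circ m$ is a biexact pairing. Concretely, in the forward direction the copairing $\cup$ and the pairing $\cap := f \circ m$ assemble into a dual pair exhibiting $\cat C$ as self\-dual, with $\mu$ and $\nu$ supplying the left\- and right\-handed Frobenius isomorphisms, and the two swallowtail equations furnishing exactly the coherence that upgrades this to a \emph{biexact} pairing. In the reverse direction, a biexact pairing comes equipped with a specified copairing, which I would take as $\cup$, and the comparison isomorphisms between the two factorizations of the pairing supply $\mu$ and $\nu$; the snake identities of the induced dual pair then yield the swallowtail equations. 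These two passages are mutually inverse, so the Frobenius block and the biexactness condition carry equivalent data, and assembling the three blocks gives the claimed equivalence.

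The main obstacle I expect is this last identification: one must check that the generating 2\-morphisms $\mu, \nu$ together with the two swallowtail equations correspond precisely — with nothing left over and nothing missing — to \emph{biexactness}, rather than to mere one\-sided exactness. Biexactness requires the pairing $\cap$ to be exact in each variable, and it is exactly the presence of \emph{both} Frobenius isomorphisms $\mu$ and $\nu$, the left and the right version, that secures exactness on both sides. Verifying that the swallowtail coherence matches the snake identities of the induced dual pair, with the correct orientations, is where the care is needed; everything else is bookkeeping against Street's result.
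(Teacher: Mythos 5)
Your overall route is the paper's route: both unpackings share the pseudomonoid and right-adjoint blocks verbatim, and the substantive content in each case is the identification of the Frobenius block with biexactness of $f \circ m$. The paper likewise dispatches the direction from $\F^*$-structures to $*$-autonomous categories by observing that the invertible 2-morphisms $\mu$ and $\nu$ make $f \circ m$ biexact, and handles the converse by noting that the biexactness condition of \autoref{sa-street} induces an equivalence $\cat C \to \cat C^\op$, along which all structure on $\cat C^\op$ is transported back to $\cat C$. Your phrase ``a biexact pairing comes equipped with a specified copairing'' compresses exactly this transport step: the copairing $\cup : \cat 1 \to \cat C \boxtimes \cat C$ is not given but must be manufactured by composing the canonical Hom-profunctor coevaluation with a pseudo-inverse of the induced equivalence; the paper makes this two-objects-to-one-object translation the centrepiece of its (equally sketchy) forward direction, and it deserves to be explicit rather than hidden in the word ``equipped.''

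There is, however, one concrete error in your handling of the swallowtail equations, and it runs in both directions. First, biexactness does not rest on them: the isomorphisms $\mu$ and $\nu$ alone exhibit $(\cup,\, f \circ m)$ as a self-duality of $\cat C$, which already makes the curried legs of the pairing equivalences; indeed the paper remarks immediately after \autoref{def:sdfrobenius} that the swallowtail equations are redundant, so they cannot be ``exactly the coherence that upgrades this to a biexact pairing.'' Second, and more seriously, your converse claim that ``the snake identities of the induced dual pair then yield the swallowtail equations'' is false as stated: a dual pair with invertible snake 2-cells need not satisfy the swallowtail equations, and one must renormalize the duality data --- replacing, say, $\nu$ by a corrected isomorphism that absorbs the cusp --- to enforce them. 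This is precisely the coherence result the paper cites~\cite{GlobularSwallowtailCoherence, piotr-thesis} when asserting redundancy. With that renormalization step inserted, your block-by-block argument does reconstruct the paper's proof; without it, the reverse direction produces only a Frobenius pseudomonoid-without-swallowtail, not literally an $\F^*$-structure.
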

\begin{proof}
The reverse direction immediate, since given an $\F^*$\-structure in $\bicat {Prof}$, the composite $f \circ m$ is clearly biexact, thanks to the invertible 2\-morphisms $\mu$ and $\nu$.

For the forward direction, we sketch the main idea. In \autoref{sa-street}, the biexact pairing condition means that if one composes the right leg of $f \circ m:\cat C \times \cat C \to \cat 1$ with the Hom-profunctor of type $\cat 1 \to \cat C \times \cat C ^\op$, the resulting profunctor of type $\cat C \to \cat C ^\op$ is an equivalence. Therefore, this definition is in terms of \textit{two} objects, \cat C and $\cat C ^\op$, and structures defined on them. But since $\cat C$ and $\cat C ^\op$ are equivalent, it stands to reason that we can transport any structures defined on $\cat C^\op$ across to \cat C, by composing with the equivalence.\footnote{This is similar to how, given a monoid on a set $S$ and a bijection $\phi:S \to T$ for some set $T$, we may compose with the bijection and its inverse to obtain a monoid on $T$ in a canonical way.} This yields a definition in terms of just \textit{one} object,  \cat C itself. This is an algorithmic procedure, and the result is an $\F^*$-algebra in \bicat{Prof}.
\end{proof}

\section{3\-dimensional proofs for linear logic}
\label{sec:3dproofs}

\subsection{Overview}

In this section we use the coherence theorem of \autoref{sec:coherence} to develop a theory of surface proofs for multiplicative linear logic with units, and  without the exchange rule. Linear logic is traditionally developed algebraically in terms of \textit{sequent proofs}, which can be represented using geometrical objects called \textit{proof nets}~\cite{blute-coherence, girard}. Our surface proofs share the geometrical character of proof nets, while also sharing some important properties of sequent proofs. We list here some properties of our surface proof calculus.
\begin{enumerate}
\item Correctness is local; any well-typed composite produces a valid proof-theoretic object, with no global property (such as the \textit{long-trip criterion}~\cite{girard}) to  be verified.
\item Equivalence is local, unlike for proof nets where the main dynamical rewiring step involves non-local jumps, and requires the re-validation of a global property~\cite[Section~3.1]{blute-coherence}.
\item Equivalence is broad, establishing some  proof equivalences in fewer steps than for proof nets; sometimes in just one step. (Our coherence result of \autoref{sec:coherence} is critical here.)
\item It works well with the unital fragment of the logic, which is problematic for traditional proof nets, requiring decorations in the form of \textit{thinning links}~\cite{blute-coherence}.
\item It is compositional, in the sense that the surface for a  composite proof is just the union of the surfaces for any partition.
\item It is close to categorical semantics, with a surface giving rise straightforwardly to a morphism in a $*$-autonomous category.
\item The equivalence relation on diagrams is essentially geometrical, and may not be easy to decide in all cases.
\end{enumerate}

We propose that traditional proof nets are essentially 2\-dimensional \textit{projections} of the 3\-dimensional surface geometry. From this perspective, we can make sense of some of the features of proof nets: the long-trip criterion can be interpreted as a non-local check that the 2\-dimensional shadow is consistent with a valid 3\-dimensional geometry, and the thinning link decorations indicate the depth at which a unit is attached in the 3\-dimensional geometry. We illustrate this in \autoref{fig:comparison}, which gives a sequent calculus deduction along with its surface calculus and proof net representations.

We offer the following additional criticisms of our scheme.
\begin{enumerate}
\item When we say in point 4 above that it works well with the unital fragment of the logic, we mean that the notion of proof equivalence is straightforward and local, not that one obtains an efficient method for deciding proof equality~\cite{houston-pspace}.
\item We do not offer a decision procedure for proof equality in terms of our 3\-dimensional notation.
\item We work in a non-symmetric variant of the logic which lacks the exchange rule; adding a symmetry would  mean that the surfaces self-intersect, which would complicate the formalism, although we expect our results could be extended.
\item As complex geometrical objects, our surfaces are not as easy to manipulate with pen-and-paper as traditional proof nets; to become practical, advances in proof assistant technology (such as~\cite{globular}) would be required.
\end{enumerate}
Furthermore, we note that there are many questions which we do not investigate here, such as the decision problem for equivalence of surfaces, and the behaviour of cut elimination, which would be required to for this formalism to yield a fully-fledged logical system. These are questions for future work.

\input{3dgenerators.extra}
\input{3dequations.extra}
\input{3dcompound.extra}
\clearpage


\subsection{The 2\-dimensional calculus}
\label{sec:2dcalculus}
\tikzset{extrascale/.style={scale=0.8}}

In this section we develop the 2\-dimensional string diagram calculus for sequents. This calculus is the Joyal-Street calculus for monoidal categories~\cite{Joyal_1991}, directed from left to right. We use the standard 2-sided sequent calculus for nonsymmetric multiplicative linear logic with units~\cite{blute-coherence}: our sequents are pairs $\Gamma \vdash \Delta$, where $\Gamma$ and $\Delta$ are ordered lists (separated with ``,'') of expressions in the following grammar, where $V=\{A,B,C,\ldots\}$ is a set of atomic variables:
\[
S::= I \,\,|\,\, \bot \,\,|\,\, V \,\,|\,\, S \otimes S \,\,|\,\, S \parr S \,\,|\,\, S^* \,\,|\,\, \ls S
\]
This syntax includes left and right negation; we assume isomorphisms $\ls(S^*) \simeq S \simeq (\ls S)^*$ which, for simplicity, we generally suppress at the syntactic level. Atomic variables are represented as black dots, pointing in different directions depending on their side of the sequent:
\begin{calign}
\nonumber
\begin{tz}
\node (1) [inner sep=0pt, draw=none] at (0,0) {\tikzpng[scale=3, rotate=-90, extrascale]{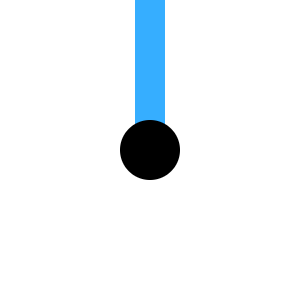}};
\draw [lightblue, dotted, line width=1.5pt] (1.east) to +(0.3,0);
\node [left] at (0,0) {$A$};
\end{tz}
&
\begin{tz}
\node (1) [inner sep=0pt] at (0,0) {\tikzpng[scale=3, rotate=90, extrascale]{blackdot}};
\draw [lightblue, dotted, line width=1.5pt] (1.west) to +(-0.3,0);
\node [right] at (0,0) {$A$};
\end{tz}
\\\nonumber
A \vdash \cdots & \cdots \vdash A
\end{calign}
The two sides of a sequent are represented graphically by trees, which are drawn connected together at their roots. The basic connective ``,'' is denoted as a blue vertex with zero or more branches to the left or right, as follows:
\tikzset{dot/.style={circle, draw=none, fill=black, inner sep=1.0pt}}
\begin{calign}
\nonumber
\begin{tz}[scale=0.76, extrascale]
\node [anchor=south west, inner sep=0pt] at (0,0) {\tikzpng[scale=3, rotate=90, extrascale]{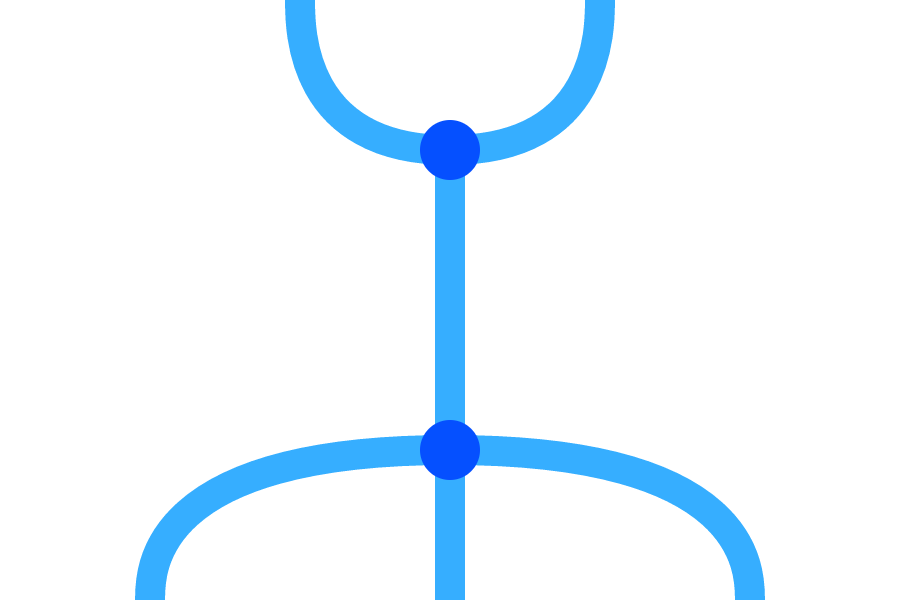}};
\node [dot] at (0,1) {};
\node [left] at (0,1) {$B$};
\node [dot] at (0,2) {};
\node [left] at (0,2) {$A$};
\node [dot] at (2,2.5) {};
\node [right] at (2,2.5) {$C$};
\node [dot] at (2,1.5) {};
\node [right] at (2,1.5) {$D$};
\node [dot] at (2,0.5) {};
\node [right] at (2,0.5) {$E$};
\end{tz}
&
\begin{tz}[scale=0.76, extrascale]
\node [anchor=south west, inner sep=0pt] at (0,0) {\tikzpng[scale=3, rotate=90, extrascale]{multform}};
\node [dot] at (2,1.5) {};
\node [right] at (2,1.5) {$A$};
\node [dot] at (2,0.5) {};
\node [right] at (2,0.5) {$B$};
\end{tz}
&
\begin{tz}[scale=0.76, extrascale]
\node [anchor=south west, inner sep=0pt] at (0,0) {\tikzpng[scale=3, rotate=-90, extrascale]{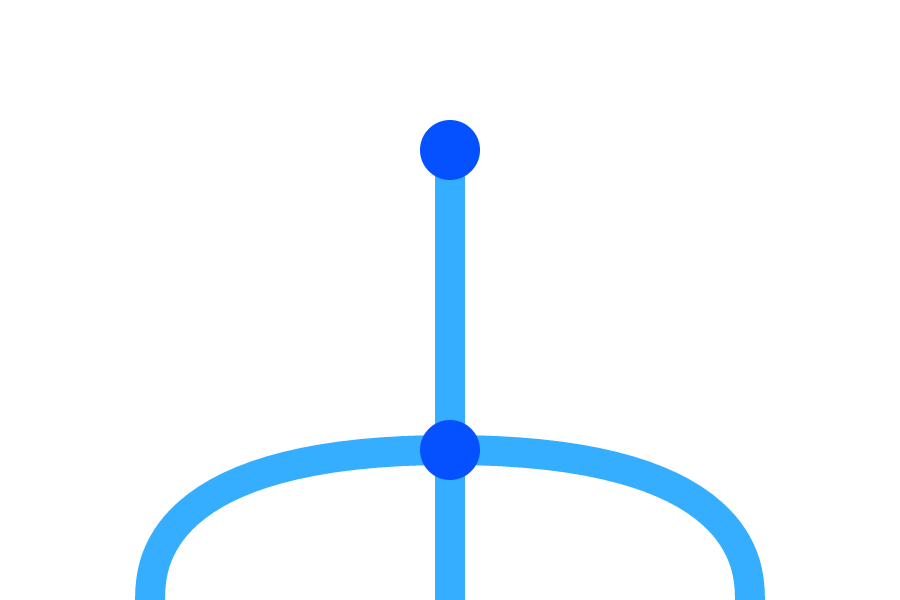}};
\node [dot] at (0,2.5) {};
\node [left] at (0,2.5) {$A$};
\node [dot] at (0,1.5) {};
\node [left] at (0,1.5) {$B$};
\node [dot] at (0,0.5) {};
\node [left] at (0,0.5) {$C$};
\end{tz}
\\*
\nonumber
A,B \vdash C, D, E
&
\vdash A,B
&
A, B, C \vdash
\end{calign}
The connectives $\otimes$ and $\parr$, which are always binary, are drawn in blue on their natural side (left for $\otimes$, right for $\parr$), and in red on the other side, as we show with the following examples:
\begin{calign}
\nonumber
\begin{tz}[scale=0.76, extrascale]
\node [anchor=south west, inner sep=0pt] at (0,0) {\tikzpng[scale=3, rotate=-90, extrascale]{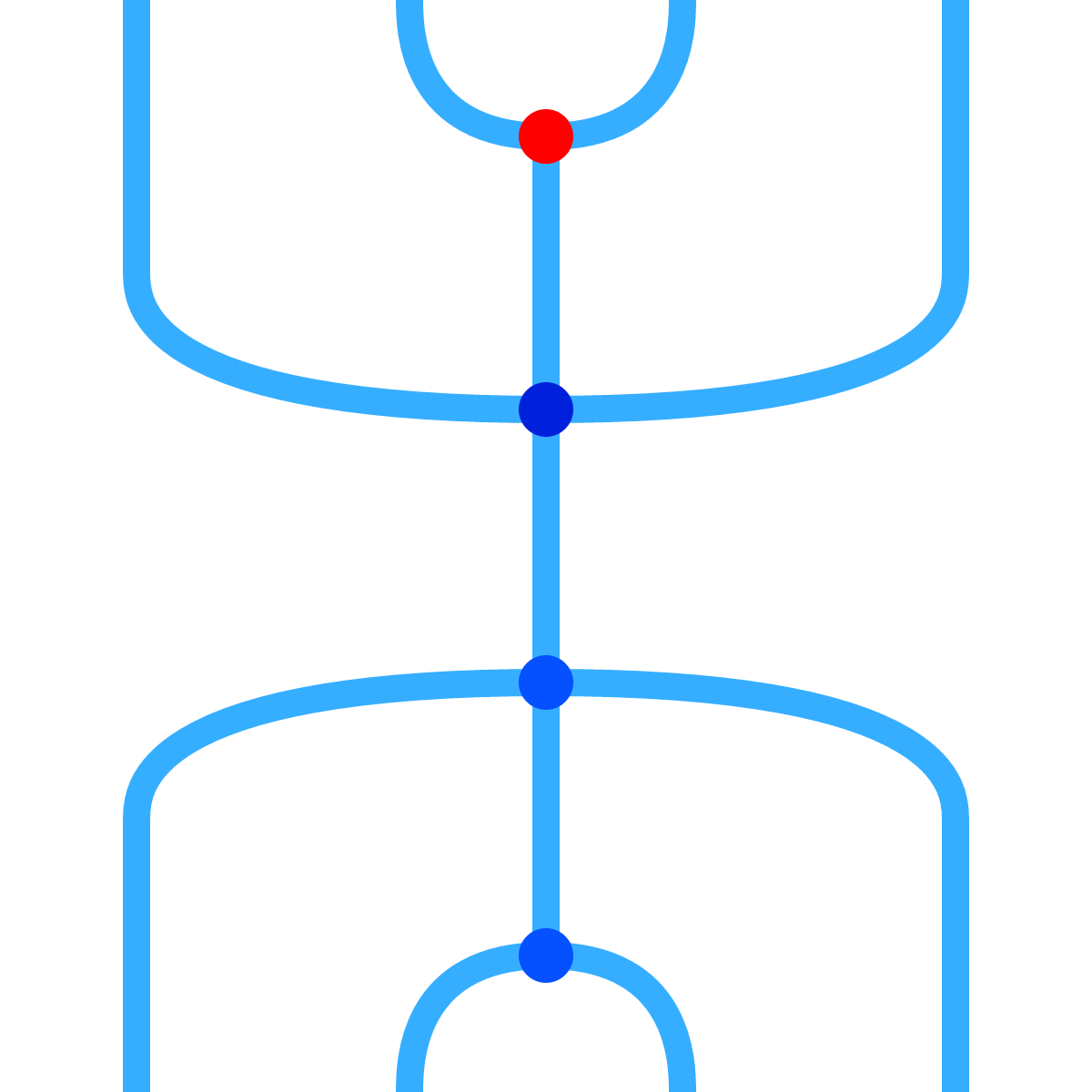}};
\node [dot] at (0,3.5) {};
\node [left] at (0,3.5) {$A$};
\node [dot] at (0,2.5) {};
\node [left] at (0,2.5) {$B$};
\node [dot] at (0,1.5) {};
\node [left] at (0,1.5) {$C$};
\node [dot] at (0,0.5) {};
\node [left] at (0,0.5) {$D$};
\node [dot] at (4,3.5) {};
\node [right] at (4,3.5) {$E$};
\node [dot] at (4,2.5) {};
\node [right] at (4,2.5) {$F$};
\node [dot] at (4,1.5) {};
\node [right] at (4,1.5) {$G$};
\node [dot] at (4,0.5) {};
\node [right] at (4,0.5) {$H$};
\end{tz}
&
\begin{tz}[scale=0.76, extrascale]
\node [anchor=south west, inner sep=0pt] at (0,0) {\tikzpng[scale=3, rotate=-90, extrascale]{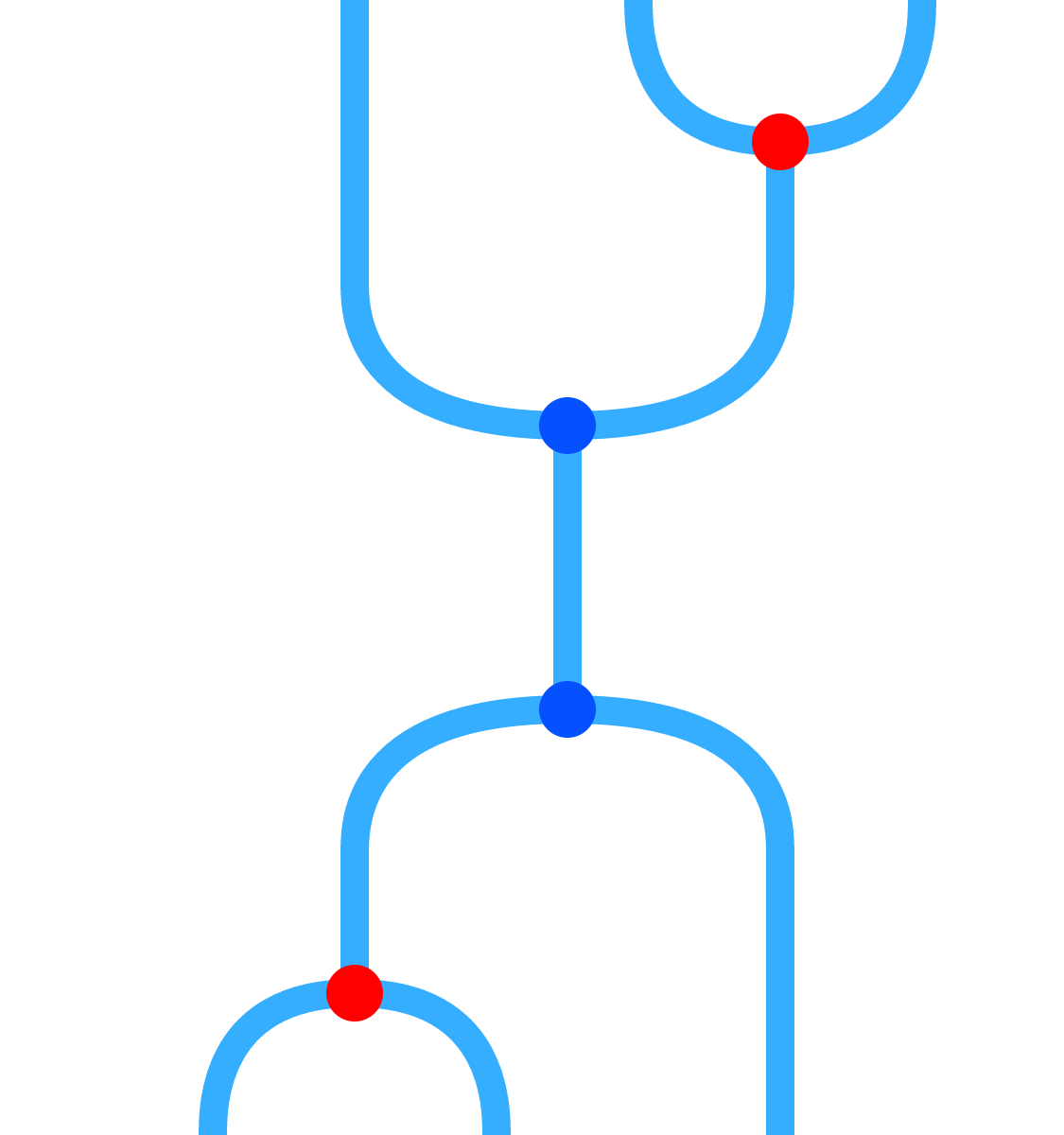}};
\node [dot] at (0,3) {};
\node [left] at (0,3) {$A$};
\node [dot] at (0,2) {};
\node [left] at (0,2) {$B$};
\node [dot] at (0,1) {};
\node [left] at (0,1) {$C$};
\node [dot] at (4,2.5) {};
\node [right] at (4,2.5) {$D$};
\node [dot] at (4,1.5) {};
\node [right] at (4,1.5) {$E$};
\node [dot] at (4,0.5) {};
\node [right] at (4,0.5) {$F$};
\end{tz}
\\*\nonumber
A, (B \otimes C), D \vdash E, (F \otimes G), H
&
(A \parr B) , C \vdash D \parr (E \otimes F)
\end{calign}
Note that a blue dot with a binary branching is therefore an overloaded notation; this is a deliberate feature.

The units $I$ and $\bot$ are represented by blue dots on their natural side (left for $I$, right for $\bot$), and red dots on the other side, as shown:
\begin{calign}
\nonumber
\begin{tz}[scale=0.76, extrascale]
\node [anchor=south west, inner sep=0pt] at (0,0) {\tikzpng[scale=3, rotate=-90, extrascale]{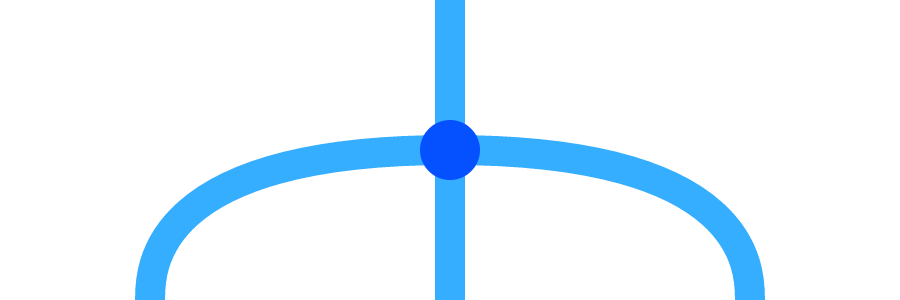}};
\node [dot] at (0,2.5) {};
\node [left] at (0,2.5) {$A$};
\node [dot, fill=darkblue] at (0,1.5) {};
\node [left] at (0,1.5) {};
\node [dot] at (0,0.5) {};
\node [left] at (0,0.5) {$B$};
\node [dot] at (1,1.5) {};
\node [right] at (1,1.5) {$C$};
\end{tz}
&
\begin{tz}[scale=0.76, extrascale]
\node [anchor=south west, inner sep=0pt] at (0,0) {\tikzpng[scale=3, rotate=90, extrascale]{comma}};
\node [dot, fill=red] at (0,1) {};
\node [left] at (0,1) {};
\node [dot] at (0,2) {};
\node [left] at (0,2) {$A$};
\node [dot] at (2,2.5) {};
\node [right] at (2,2.5) {$B$};
\node [dot, fill=red] at (2,1.5) {};
\node [right] at (2,1.5) {};
\node [dot, fill=darkblue] at (2,0.5) {};
\node [right] at (2,0.5) {};
\end{tz}
\\
\nonumber
A,I,B \vdash C
&
A, \bot \vdash B, I, \bot
\end{calign}
We represent $(-)^*$ as turning right by a half-turn, and ${}^* \hspace{-0.5pt} (-)$ as turning left by a half-turn, as shown:
\begin{calign}
\nonumber
\begin{tz}[scale=0.76, extrascale]
\node [anchor=south west, inner sep=0pt] at (0,0) {\tikzpng[scale=3, rotate=90, extrascale]{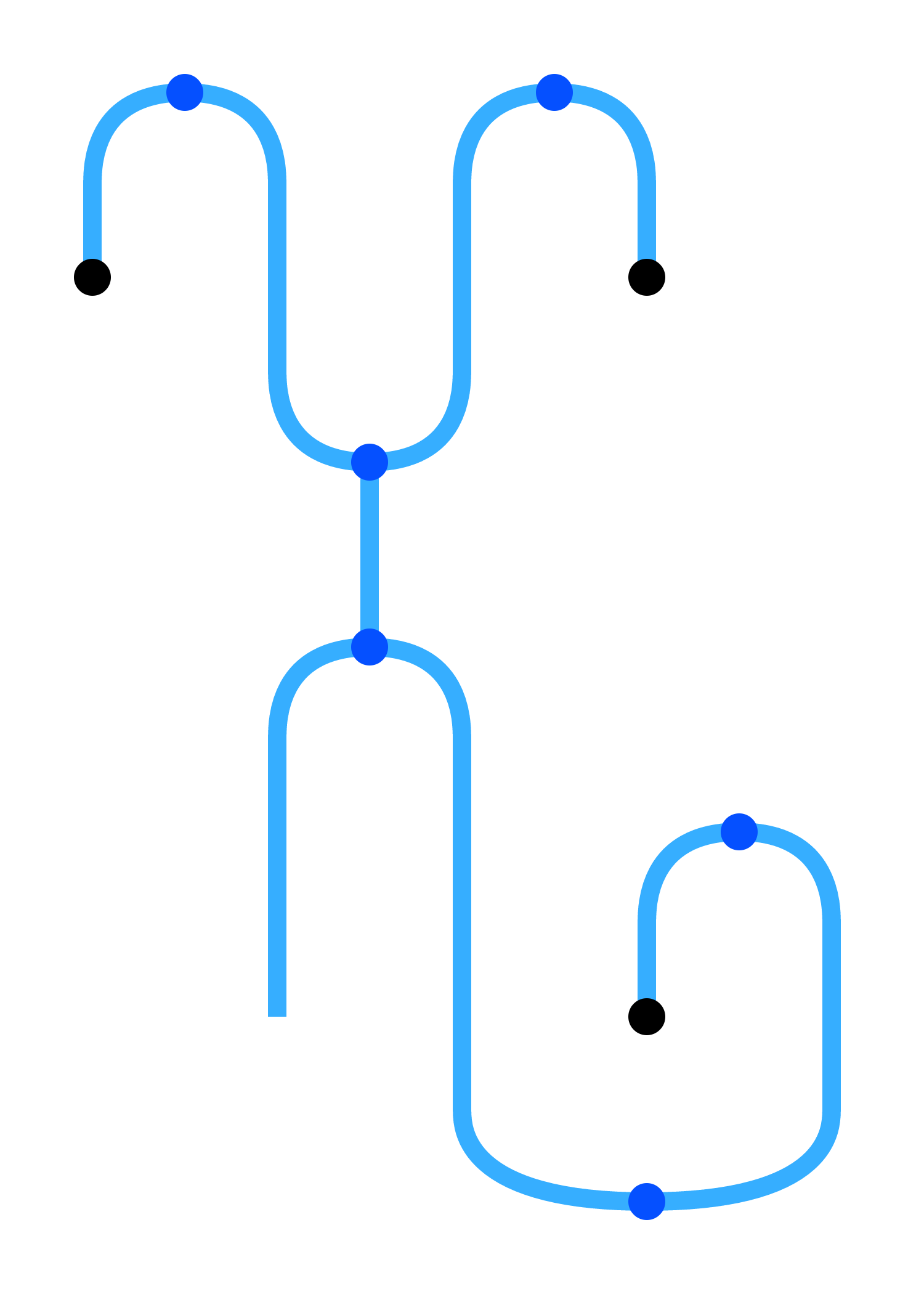}};
\node [right] at (1.5,3.5) {$A$};
\node [right] at (1.5,0.5) {$B$};
\node [dot] at (5.5,1.5) {};
\node [right] at (5.5,3.5) {$C$};
\node [right] at (5.5,1.5) {$D$};
\end{tz}
\\
\nonumber
A^* , \ls B \vdash \ls \,\ls \,C, D
\end{calign}
Diagrams built from  sequents in this way are of a simple kind; as graphs, they are all acyclic and connected. In general we can allow arbitrary well-typed composites of these components; such diagrams represent 1\-morphisms in the monoidal bicategory \free {\F^*}, described in \autoref{sec:coherence}.

\subsection{The 3\-dimensional calculus}

Diagrams in the 3\-dimensional calculus are volumes, surfaces,  lines and points embedded in $\R^3$. Formally they are expressions in the graphical calculus for Gray categories, which is by now well-developed~\cite{barrett-graydiagrams, bartlett-wire, hummon-thesis, CSPthesis}.\footnote{For us, these 3-dimensional diagrams play the role of a formal syntax. While this is nontraditional, it is adequate for our purposes.} This body of work establishes the soundness and completeness of the graphical calculus for reasoning about Gray categories~\cite[Theorem 2.26]{barrett-graydiagrams}, and in particular for semistrict monoidal 2\-categories. However, the technicalities are quite substantial, and the class of piecewise-linear diagrams about which the appropriate completeness theorem is proved is rather technical in flavour; for convenience, and since the work in this section serves primarily to illustrate an application of the main  results of \autoref{sec:coherence}, we work with a looser class of diagrams, for which a completeness result is not yet established, but can reasonably be expected to exist.

We now describe the 3\-dimensional calculus in more detail. Diagrams consist of \textit{sheets}, bounded on the left and right by \textit{edges}, which are bounded above and below by \emph{vertices}. (Sheets can also be bounded by the sides of the diagram, and edges can also be bounded by the top or bottom of the diagram.) Diagrams are immersed in 3\-dimensional space, meaning that sheets can exist in front or behind other sheets, and wires on sheets of different depths can cross; however, components never intersect. Here is an example:
\begin{equation*}
\tikzset{blob/.style={draw, circle, fill=black, inner sep=2pt}}
\tikzset{every picture/.style={scale=0.7}}
\begin{tz}[scale=1.5]
\draw [surface] (\xoff,\yoff) node (1) {} to (3+\xoff,\yoff) node (2) {} to (3+\xoff,3+\yoff) node (3) {} to (\xoff,3+\yoff) node (4) {} to (1.center);
\draw [black] (2,\yoff) to [out=up, in=down] (1,1+\yoff) node (x) {} to (1,3+\yoff);
\node [blob] at (x.center) {};
\draw [surface] (-\xoff,-\yoff) node (5) {} to (3-\xoff,-\yoff) node (6) {} to (3-\xoff,3-\yoff) node (7) {} to (-\xoff,3-\yoff) node (8) {} to (5.center);
\draw [black] (1,-\yoff) to (1,\yoff) to [out=up, in=down] (2,1+\yoff) to (2,2) to [out=up, in=down] (2,3-\yoff);
\node [blob] at (2,2) {};
\end{tz}
\end{equation*}
Here we have front and back sheets, each containing an edge, which contains a vertex. Towards the bottom of the picture, the wires appear to cross, although they do not intersect in 3\-dimensional space: this is called an \emph{interchanger}.

For our application to linear logic, we use two types of vertex: \emph{coherent vertices}, arising from invocations of the coherence result of \autoref{sec:coherence}; and \emph{adjunction vertices}, arising from the adjunctions listed in \autoref{lem:additionaladjunctions}.
\begin{itemize}
\item
\textbf{Coherent vertices.} Say that a 2\-dimensional calculus diagram is \textit{simple} when it is connected and acyclic with nonempty boundary, and in the blue fragment of the calculus, not involving red nodes or black atomic variable nodes. Then any two simple diagrams can be connected by a coherent vertex, denoted as follows:
\begin{calign}
\label{eq:surfaceexample}
\begin{tz}[scale=1.5]
\draw [surface] (-\xoff,2) node (1) {} to (0.8,2) node (2) {} to [out=45, in=left, looseness=0.4] (3-\xoff,2+\yoff) node (3) {} to (3-\xoff,\yoff) node (4) {} to [out=left, in=45, in looseness=0.6] (2,0) node (5) {} to (1,0) node (6) {} to [out=145, in=right] (-\xoff, \yoff) node (7) {} to (1.center) to (2.center);
\draw [surface] (2.center) to (1.5,0.8) node (8) {} to (2.2,2-\yoff) node (9) {} to (2.center);
\draw [surface] (6.center) to (8.center) to (9.center) to [out=-165, in=right] (\xoff,2-2*\yoff) node (14) {} to (\xoff,-\yoff) node (13) {} to [out=right, in=-155] (6.center);
\draw [surface] (9.center) to (3+\xoff,2-\yoff) node (15) {} to (3+\xoff,-\yoff) node (16) {} to [out=left, in=-35] (5.center) to (8.center) to (9.center);
\end{tz}
&
\begin{tz}
\node [inner sep=0pt] (1) at (0,1.5) {\tikzpng[scale=2, rotate=90]{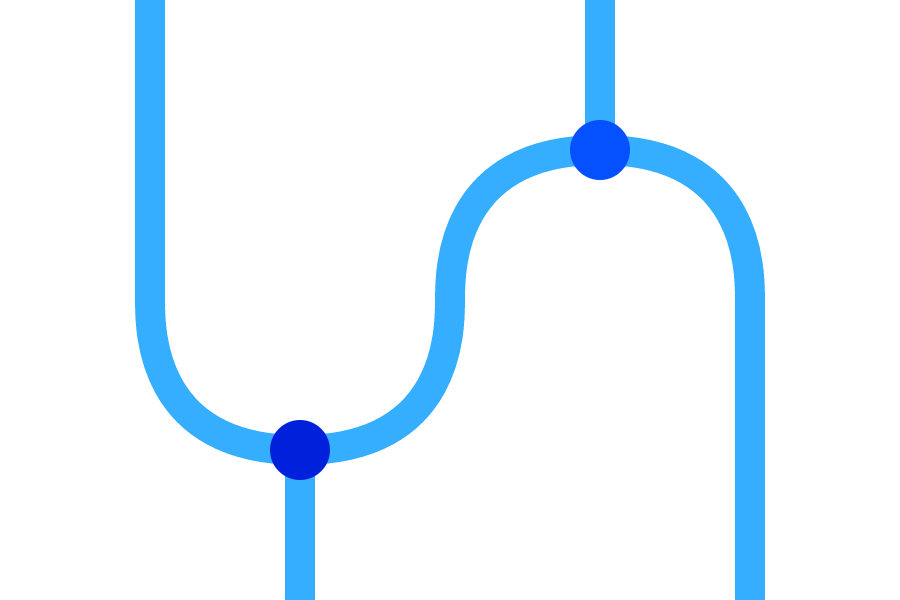}};
\node [inner sep=0pt] (2) at (0,0) {\tikzpng[scale=2, rotate=90]{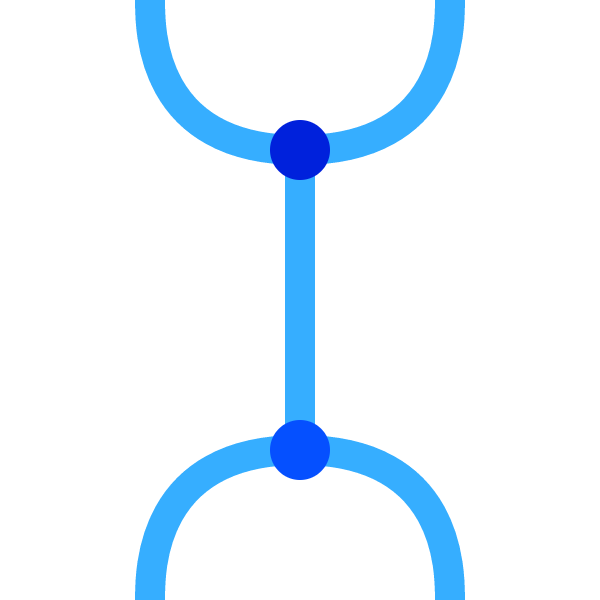}};
\draw [->, shorten <=-2pt, shorten >=-2pt] (1) to (2);
\end{tz}
\end{calign}
On the left we give the surface representation, and on the right we give the 2\-dimensional calculus representation of the upper and lower boundaries. The coherent vertex is the point in the middle of the surface diagram where 4 edges meet.
\item
\textbf{Adjunction vertices.} Listed in \autoref{fig:3dgenerators}, these introduce and eliminate red and black edges in the surface calculus.
\end{itemize}

We now define equivalence in the graphical language, giving {intuitive} interpretations of each generating relation in italics. 
\begin{definition}
\label{def:surfaceequivalence}
Two surface diagrams are \textit{equivalent} when they are related by the least equivalence relation generated by the following:
\begin{itemize}
\item \textbf{Coherence.} Let $P,Q$ be surface diagrams built from interchangers and coherent vertices, whose upper and lower boundaries are simple 2\-dimensional diagrams; then $P=Q$. \emph{(All acyclic equations of coherent vertices hold.)}
\item \textbf{Adjunction.} The equations listed in \autoref{fig:3dequations} hold. \emph{(Bent wires can be pulled straight.)}
\item \textbf{Isotopy.} The equations of a monoidal bicategory hold. \emph{(If two diagrams are ambient isotopic, they are equivalent.)}
\item \textbf{Locality.} Suppose surface diagrams $P,Q$ differ only with respect to subdiagrams $P',Q'$, with $P'=Q'$. Then $P=Q$. \emph{(Equivalence applies locally in the interior of a diagram.)}
\end{itemize}
\end{definition}

\noindent
It is a fair summary of this definition to say that two diagrams are equivalent just when one can be \textit{deformed} into the other.

Our presentation here is informal, but we emphasize that our definition of the surface calculus and its equivalence relation can be made completely precise in terms of the formal development of \autoref{sec:coherence}: the diagrams represent composite 2\-morphisms in $\free {\F^*}$, and two diagrams are equivalent just when their corresponding 2\-morphisms are equal.

\subsection{Interpreting the sequent calculus}

We saw in \autoref{sec:2dcalculus} how individual sequents in multiplicative linear logic can be interpreted as 2\-dimensional diagrams. We now see how proofs can be interpreted as 3\-dimensional surface diagrams. We view these surfaces as directed from top to bottom, just like traditional sequent calculus proofs; so for a particular surface, its \textit{hypothesis} is the upper boundary, and its \textit{conclusion} is the lower boundary.

We use a basis for the sequent calculus with a symmetry between introduction and elimination for $\otimes$, $\parr$, $I$ and $\bot$; the rules $\otimes$\-R, $\parr$\-L, $I$\-R and $\bot$\-L are derivable (see \autoref{ex:additionalrules}.) Furthermore, we include only CUT rules with minimal overlapping contexts; the more general CUT rules are derivable using negation. (These two features account for the differences between our presentations and others in the literature~\cite{Abrusci:1991}.) The interpretation of AXIOM and CUT rules are given recursively in \autoref{fig:3dcompound}, with black wires standing for atomic variables and green wires standing for general variables; the interpretation of the remaining rules, which we call the \emph{core fragment} of the logic, is given in \autoref{fig:sequentscore}.

We now establish the main theorem that controls proof equivalence for this 3\-dimensional notation.
We now prove our main theorem regarding the application to linear logic.
\begin{theorem}
\label{thm:maintheorem}
Two sequent proofs in multiplicative linear logic have equal interpretations in the free $*$-autonomous category just when their surfaces diagrams are equivalent.
\end{theorem}
\begin{proof}
The generators in \autoref{fig:3dgenerators}, and the equations in \autoref{fig:3dequations}, are precisely the definition of an $\F^*$\-structure rendered  in the surface calculus. (In fact they are redundant, since they include additional adjunction equations which \autoref{lem:additionaladjunctions} shows are already derivable.) Under the equivalence of \autoref{prop:dataequivalent}, it is essentially immediate that this reduces to the standard interpretation of the sequent calculus in a $*$-autonomous category~\cite{seely-linearlogic}: CUT is by composition and AXIOM is the identity morphism, mediated by the adjunctions $V_* \dashv V^*$ of profunctors induced by a variable defined by a functor $V:\cat 1 \to \cat C$; $(\parr,\bot)$ is the monoidal product of the  $*$\-autonomous category; the negations come from the biclosed structure; and $(\otimes,I)$ is the dual monoidal structure induced by negation.
\end{proof}

It is interesting to analyze the different contributions to proof equivalence made by each part of \autoref{def:surfaceequivalence} of surface equivalence. \textbf{Coherence} tells us that any two proofs built in the virtual part of the logic given in \autoref{fig:sequentscore} are equal. \textbf{Adjunction} tells us that AXIOM and CUT cancel each other out, both for atomic and compound variables. \textbf{Isotopy} tells us that that `commutative conversion' is possible, where we exchange the order of multiple independent hypotheses. \textbf{Locality} tells us that we can apply our equations in the context of a larger proof, in the manner of deep inference~\cite{guglielmi-deepinference}.

We give a formal statement of coherence for the virtual fragment of the logic, since it is a result of independent interest.
\begin{corollary}
If two sequent proofs in the virtual fragment of the logic given in \autoref{fig:sequentscore} have the same hypotheses and conclusion, then they are equal in the free $*$\-autonomous category.
\end{corollary}

\noindent
While well-known to experts, and derivable from existing results~\cite[Proposition~2.1.9]{Lamarche_2006}, we cannot find this statement explicitly elsewhere in the literature. Furthermore, our proof method is certainly novel.


We comment on some interesting features of the translation between the sequent calculus and the surface calculus. The fundamental simplicity of the surface calculus is clear, from the minimality of the data in \autoref{fig:3dgenerators}, as compared to Figures~\ref{fig:3dcompound} and~\ref{fig:sequentscore}. Partly this is achieved by the greater degree of locality: for example, the cut rules for $\ls V$ and $V^*$ are both interpreted using the same surface generators, composed in different ways. But more significantly, the entire virtual fragment of the sequent calculus is interpreted in the \emph{trivial} part of the surface calculus, which could be regarded as significantly reducing the bureaucracy of proof analysis, to use Girard's phrasing~\cite{girard}. To make the most of these advantages, we suggest that the surface calculus can  serve {directly} as a toolkit for logic, not just as a way to visualize sequent calculus proofs.

\input{sequents-core.extra}
\input{timesRproof.extra}

\subsection{Examples}
\label{sec:examples}

We now look in detail at a number of examples: we derive the surface form of the missing $\otimes$\-R rule; we analyze equivalence of a proof involving units; and we investigate the classic triple-dual problem.

\begin{example}[Additional rules]
\label{ex:additionalrules}
Presentations of multiplicative linear logic usually include the rules $\otimes$-R, $\parr$-L, $I$-R, $\bot$-L, which are missing from \autoref{fig:3dcompound} and \autoref{fig:sequentscore}; however, they are derivable. We analyze $\otimes$-R in detail in \autoref{figure:tensorright}. On the left-hand side, we derive the rule in our chosen basis for the the sequent calculus. In the middle image, we interpret it in the surface calculus, using the rules we have described. In the third image, we simplify the surface calculus interpretation using the rules in \autoref{fig:3dequations}. From this simplified diagram, we see that it does not in fact involve the variables, the nontrivial generators being applied in the central part of diagram only. Simple interpretations of the other 3 rules can be derived similarly.
\input{unitproof.extra}
\end{example} 

\begin{example}
[Equivalence of two proofs with units]
The example is given in \autoref{figure:proofswithunits}. We present two distinct sequent proofs of the tautology $A, B \vdash \bot \parr (A \otimes B)$, along with their corresponding surface proofs. The heights are aligned to help understand how the surface proofs have been constructed. We make use of the $\otimes$-R rule derived in the previous example.

It can be seen by inspection that the surface proofs are equivalent, as follows. Starting with the surface on the left, we allow the $\bot$-introduction vertex to move up and to the left; this is an application of \textbf{Coherence} and \textbf{Locality}. We also allow the $B$-introduction vertex at the top of the diagram to move down, behind both the $A$-introduction and $\bot$-introduction vertices; this is an application of \textbf{Isotopy}.
\end{example}

\begin{example}[Triple-dual problem]
\label{ex:tripleunit}
Starting with the identity \mbox{$A \multimap X \to A \multimap X$}, we can uncurry on the left to obtain a morphism \mbox{$A \otimes (A \multimap X) \to X$}, and curry on the right to obtain a morphism $p_A : A \to X \multimapinv (A \multimap X)$; in a similar way, we can also define a morphism $q_A : A \to (X \multimapinv A) \multimap X$. Then the \textit{triple-dual problem}, originally due to Kelly and Mac Lane~\cite{Kelly_1979} and generalized here to the non-symmetric setting, is to determine whether the following equation holds:
\begin{equation}
\label{eq:tripleunit}
\begin{aligned}
\begin{tikzpicture}[xscale=1.5,yscale=2]
\node (1) at (0,0) {$X \multimapinv ((X \multimapinv A) \multimap X)$};
\node (2) at (4,0) {$X \multimapinv A$};
\node (3) at (4,-1) {$X \multimapinv ((X \multimapinv A) \multimap X)$};
\draw [->] (1) to node [above] {$X \multimapinv q_A$} (2);
\draw [->] (2) to node [right] {$p_{X \multimapinv A}$} (3);
\draw [->] (1) to node [below] {$\id$} (3);
\end{tikzpicture}
\end{aligned}
\end{equation}
We analyze this equation in the case that $X=\bot$; then $q_A$ and $p_A$ are the isomorphisms $\ls(A^*) \simeq A \simeq (\ls A)^*$. We give the surfaces for the clockwise and anticlockwise paths of \eqref{eq:tripleunit}:
\input{tripledual.extra}

\noindent
We conclude in this case that the proofs are equivalent by the \textbf{Coherence} rule. This deduction can be readily identified by eye, and so is a single-step deduction in a natural sense. We contrast this with the treatments of Blute et al~\cite[Section~4.2]{blute-coherence} and Hughes~\cite[Example~2]{hughes-simplestar} in terms of proof nets, where the proofs require several deductive steps, and the overall deduction is far from immediate.
\end{example}


\bibliographystyle{abbrv}
\bibliography{references}

\ignore{

}

\end{document}